\definecolor{White}{rgb}{1,1,1} %
\definecolor{Black}{rgb}{0,0,0} %
\definecolor{LightGray}{rgb}{.8,.8,.8} %
\colorlet{ChannelColor}{LightGray} %
\colorlet{ChannelTextColor}{Black} %
\colorlet{ReadoutColor}{White} %
\newcommand{\tnote}[1]{}
\newcommand{\hnote}[1]{}
\newcommand{\znote}[1]{}
\renewcommand{\Game}{\mathcal{G}}
\newcommand{\cV}{\mathcal V}
\newcommand{\cQ}{\mathcal Q}
\newcommand{\cA}{\mathcal A}
\newcommand{\strat}{\mathcal S}
\newcommand{\cL}{\mathcal L} %
\newcommand{\cG}{\mathcal G} %
\newcommand{\cH}{\mathcal H} %
\newcommand{\Fp}{\mathbb{F}}
\newcommand{\N}{\mathbb{N}}
\newcommand{\sA}{{\mathsf{A}}}
\newcommand{\sB}{{\mathsf{B}}}
\newcommand{\sC}{{\mathsf{C}}}
\newcommand{\sE}{{\mathsf{E}}}
\newcommand{\sF}{{\mathsf{F}}}
\newcommand{\sM}{{\mathsf{M}}}
\newcommand{\sP}{{\mathsf{P}}}
\newcommand{\sR}{{\mathsf{R}}}
\newcommand{\sS}{{\mathsf{S}}}
\newcommand{\sV}{{\mathsf{V}}}
\newcommand{\sX}{{\mathsf{X}}}
\renewcommand{\hat}[1]{\widehat{#1}} %
\renewcommand{\tilde}[1]{\widetilde{#1}} %
\newcommand{\Id}{\mathds{1}}
\newcommand{\comp}[1]{\overline{#1}}
\newcommand{\Protocol}{\mathcal{P}}
\newcommand{\reg}[1]{{\textsf{#1}}}
\newcommand{\ctl}{\text{CTL-}}
\newcommand{\tgt}{\text{TGT-}}
\newcommand{\CKT}{\textsc{CKT}}
\newcommand{\compr}{\sharp}
\newcommand{\TMSIM}{\textsc{TMSIM}}
\newcommand{\DESC}{\textsc{DESC}}
\newcommand{\CKTSIM}{\textsc{CKTSIM}}
\newcommand{\type}{\mathsf{type}}
\newcommand{\wire}{\mathsf{wire}}
\newcommand{\complex}{\mathbb{C}} %
\let\epsilon=\varepsilon %
\newcommand{\eps}{\epsilon} %
\newcommand{\ip}[2]{ \langle #1 \microspace | \microspace #2 \rangle}
\newcommand{\microspace}{\mspace{.5mu}} %
\newcommand{\ket}[1]{\ensuremath{\lvert\microspace #1
    \microspace\rangle}} %
\newcommand{\bra}[1]{\ensuremath{\langle\microspace #1
    \microspace\rvert}} %
\newcommand{\ketbra}[2]{\lvert #1 \rangle \! \langle #2 \rvert} %
\newcommand{\paren}[1]{(#1)}
\newcommand{\Paren}[1]{\left(#1\right)}
\newcommand{\Brac}[1]{\left[#1\right]}
\newcommand{\Abs}[1]{\left\lvert#1\right\rvert}
\newcommand{\set}[1]{\{#1\}}
\newcommand{\Set}[1]{\left\{#1\right\}}
\newcommand{\norm}[1]{\lVert#1\rVert}
\newcommand{\Norm}[1]{\left\lVert#1\right\rVert}
\newcommand{\class}[1]{\textup{#1}\xspace} %
\newcommand{\EXP}{\class{EXP}} %
\newcommand{\NEXP}{\class{NEXP}} %
\newcommand{\QMA}{\class{QMA}} %
\newcommand{\QMIP}{\class{QMIP}} %
\newcommand\QMIP*{\ensuremath{\class{QMIP}^*}} %
\newcommand{\PSPACE}{\class{PSPACE}} %
\newcommand{\MIP}{\class{MIP}} %
\newcommand\MIP*{\ensuremath{\class{MIP}^*}} %
\newcommand{\QIP}{\class{QIP}} %
\newcommand{\NTIME}{\class{NTIME}}
\newtheorem{theorem}{Theorem}[section] %
\newtheorem{lemma}[theorem]{Lemma} %
\newtheorem{proposition}[theorem]{Proposition} %
\newtheorem{corollary}[theorem]{Corollary} %
\newtheorem{definition}[theorem]{Definition} %
\newtheorem{conjecture}[theorem]{Conjecture} %
\newtheorem{claim}[theorem]{Claim} %
\newcommand{\E}{\mathop{\mathbb{E}}\displaylimits} %
\newcommand{\ind}{\mathrm{ind}}
\newcommand{\setft}[1]{\mathrm{#1}} %
\newcommand{\Density}{\setft{D}} %
\newcommand{\Lin}{\setft{L}} %
\DeclareMathOperator{\Tr}{Tr}
\DeclareMathOperator{\poly}{poly}
\DeclareMathOperator{\supp}{supp}
\title{Quantum proof systems for iterated exponential time, \\ and beyond}
\date{}
\author{Joseph Fitzsimons\thanks{Singapore University of Technology and Design, 8 Somapah Road, Singapore 487372 and Centre for Quantum Technologies, National University of Singapore, 3 Science Drive 2, Singapore 117543.} \and Zhengfeng Ji\thanks{Centre for Quantum Software and Information, School of Software, Faculty of Engineering and Information Technology, University of Technology Sydney, NSW, Australia} \and Thomas Vidick\thanks{California Institute of Technology, USA.} \and Henry Yuen\thanks{UC Berkeley, USA and University of Toronto, Canada.}}
\begin{document}

\maketitle

\begin{abstract}
We show that any language in nondeterministic time $\exp(\exp(\cdots\exp(n)))$, where the number of iterated exponentials is an arbitrary function $R(n)$, can  be decided by a multiprover interactive proof system with a classical polynomial-time verifier and a constant number of quantum entangled provers, with completeness $1$ and soundness $1 - \exp(-C\exp(\cdots\exp(n)))$, where the number of iterated exponentials is $R(n)-1$ and $C>0$ is a universal constant. The result was previously known for $R=1$ and $R=2$; we obtain it for any time-constructible function $R$.

The result is based on a compression technique for interactive proof systems with entangled provers that significantly simplifies and strengthens a protocol compression result of Ji (STOC'17). As a separate consequence of this technique we obtain a different proof of Slofstra's recent result (unpublished) on the uncomputability of the entangled value of multiprover games.

Finally, we show that even minor improvements to our compression result would yield remarkable consequences in computational complexity theory and the foundations of quantum mechanics: first, it would imply that the class $\MIP^*$ contains all computable languages; second, it would provide a negative resolution to a multipartite version of Tsirelson's problem on the relation between the commuting operator and tensor product models for quantum correlations.
\end{abstract}

\newpage

\section{Introduction}

The combined study of interactive proof systems and quantum
entanglement has led to multiple discoveries at the intersection of
theoretical computer science and quantum physics.
On the one hand, the study has revealed that quantum entanglement, a
fundamental physical phenomenon, can be harnessed in interactive
protocols to accomplish an array of novel computing and cryptographic
tasks, ranging from the certified generation of random numbers to
improved protocols for multi-party cryptography and
classically-verifiable quantum computation.
On the other hand, interactive proof systems, a cornerstone of modern
complexity theory and cryptography, have provided a powerful lens
through which to examine the counter-intuitive properties of quantum
entanglement.
This lens has enabled researchers to develop sophisticated ways of
exploring phenomena such as the monogamy of entanglement, embezzlement
of quantum states, and more.

We investigate a central question in this area: what is the
\emph{computational complexity} of interactive proof systems with
multiple quantum entangled provers?
The starting point for this question dates back to the seminal result
of Babai, Fortnow and Lund, who showed that the set of languages that
can be decided by a (classical) multiprover interactive proof system,
denoted by $\MIP$, equals the set of languages that can be decided
in nondeterministic exponential time (denoted by
$\NEXP$)~\cite{BabForLun91CC}.
It is not difficult to show that $\MIP \subseteq \NEXP$, but the reverse
containment is nontrivial and the work of~\cite{BabForLun91CC} was an
influential stepping stone towards the PCP
Theorem~\cite{AroSaf98JACM,AroLunMotSudSze98JACM}.

A long line of work, starting with that of Cleve et
al.~\cite{cleve2004consequences}, has explored the setting of
interactive proof systems where a classical polynomial-time verifier
interacts with provers that are \emph{quantum} and may share \emph{entanglement}.
This gives rise to the complexity class $\MIP*$, which is the set of
all languages decidable by such proof systems.\footnote{The $^*$ in
  $\MIP^*$ refers to the entanglement.}
Quantum entanglement is a resource that allows isolated parties to
generate correlations that cannot be reproduced by (classical) shared
randomness alone; however, entanglement does not allow for
instantaneous communication.
A central question raised by~\cite{cleve2004consequences} is whether
$\MIP^* = \MIP$, or equivalently, whether $\MIP^* = \NEXP$.

A richer set of correlations gives additional power to provers in
an interactive proof system, making the relationship between $\MIP*$
and $\MIP$ non-obvious.
On the one hand, a multiprover interactive proof system that is sound
against ``cheating'' classical provers may no longer be sound against
``cheating'' entangled provers; this prevents one from automatically
concluding that $\MIP \subseteq \MIP*$.
On the other hand, a proof system may require ``honest provers'' to
use quantum entanglement in order to satisfy the completeness
property.
Entanglement thus allows one to consider a
broader set of protocols, putting in question the inclusion $\MIP* \subseteq
\MIP$.

The quest to pin down the computational power of proof systems with
entangled provers has led to a number of surprising discoveries.
The best lower bound that is currently known is that $\NEXP = \MIP
\subseteq \MIP*$, a nontrivial result that follows from a more general
technique of ``immunization'' of classical proof systems against
malicious entangled provers~\cite{IV12,NatarajanV17twoprover}.
Surprisingly, there are no meaningful upper bounds known for $\MIP*$.
In a striking result, Slofstra gave evidence that the complexity of
$\MIP*$ might be very different from its classical counterpart: he
proved that it is \emph{undecidable} to determine whether an
interactive proof system with two provers has an entangled strategy
that is accepted with probability $1$ (in other words, whether there
is a \emph{perfect} entangled
strategy)~\cite{slofstra2016tsirelson,slofstra2017set}.
In contrast, the complexity of determining whether such a proof system
has a perfect \emph{classical} strategy is exactly equal to
$\NEXP$.
Another recent result of 
Ji~\cite{ji2017compression} points in the same direction: Ji showed that any language in non-deterministic doubly-exponential time can be decided by a classical polynomial-time verifier interacting with $k=11$ provers, with completeness $1$ and soundness that is exponentially close to $1$.\footnote{Due to the vanishing gaps neither Slofstra's nor Ji's result directly separates $\MIP^*$ from $\MIP$, though they do separate the zero-error and exponentially-small error variants respectively.}

In this work we explore the expanse of complexity-space that
entangled-prover interactive proof systems can reach.
We focus on the ``small gap'' regime: we consider the problem of
distinguishing between the cases when a multiprover proof system has a
perfect entangled strategy, or when all entangled provers are rejected
with probability at least $\eps$, where $\eps$ is a quantity that may
go to $0$ quickly with the size of the verifier in the proof system.
Our results smoothly interpolate between the hardness result
of~\cite{IV12,NatarajanV17twoprover,ji2017compression} and Slofstra's undecidability
result.
For clarity we restrict our attention to \emph{hyper-exponential} time
functions, i.e.
time-constructible functions of the form $t(n) = \Lambda_R(n)$, where
$\Lambda_0(n)=n$ and for any integer-valued function $R=R(n)\geq 0$,
$\Lambda_{R+1}(n)=2^{\Lambda_R(n)}$. For a multiprover game $\Game$, the \emph{entangled value} $\omega^*(\Game)$ is the maximum success probability of quantum provers sharing entanglement in the game. 

\begin{theorem}\label{thm:main}
  Let $k \geq 15$ be an integer.
  Let $t:\N \to \N$ be a hyper-exponential function.
  There are universal constants $C,c>0$ such that given the description
  of polynomial-size circuits for the verifier in a $k$-prover
  game $\Game$, the problem of distinguishing between
  \[
    \omega^*(\Game) = 1 \qquad \text{or} \qquad \omega^*(\Game) \leq 1 -
    \frac{C}{(t(n))^c}\;
  \]
is hard for nondeterministic $2^{t(n)}$
  time.
\end{theorem}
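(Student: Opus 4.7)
The plan is to derive Theorem~\ref{thm:main} by combining the entangled-prover scaled-up Babai--Fortnow--Lund protocol with $R$ iterated applications of the compression theorem promised in the abstract. In the form I would use it, the compression theorem takes a polynomial-size succinct description of the verifier in a $k$-prover game $\Game$ with completeness $1$ and soundness $1-\eps$, whose natural verifier may run in time $T$, and returns in polynomial time the description of a new $k$-prover game $\Game'$ on the same number of provers, with completeness $1$, soundness $1-\eps'$ where $\eps' \geq \eps/\poly(T)$, and whose new verifier runs in time $\polylog(T)$.

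First, for $L \in \NTIME(2^{t(n)})$ I would invoke the scaling-up of $\NEXP = \MIP$ to the time bound $2^{t(n)}$, yielding a $k$-prover game $\Game_0$ with perfect completeness and a constant soundness gap whose verifier runs in time $\poly(2^{t(n)}) = \Lambda_{R+1}(\poly(n))$. The crucial point is that this verifier is uniformly generated, so its circuit admits a polynomial-size succinct description produced directly from the input $x$ and the nondeterministic $2^{t(n)}$-time machine for $L$. (Since perfect-completeness classical $\MIP$ protocols are automatically complete against entangled provers, this works equally well for $\MIP^*$.)

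Second, I would apply the compression theorem $R$ times in succession to these succinct descriptions. After the $i$-th application, the verifier's running time drops from $\Lambda_{R+1-i}(\poly(n))$ to $\Lambda_{R-i}(\poly(n))$; perfect completeness is preserved, and the soundness gap is divided by a $\poly(\Lambda_{R+1-i}(\poly(n)))$ factor. After $R$ iterations the final verifier runs in polynomial time, and the dominant contribution to the soundness loss comes from the first compression, so the cumulative soundness gap is of order $1/\poly(\Lambda_{R+1}(\poly(n))) = 1/t(n)^{O(1)}$, matching the $C/t(n)^c$ bound claimed in the theorem. The overall reduction has taken polynomial time and produced a constant-prover game with a polynomial-time verifier, so hardness transfers from $\NTIME(2^{t(n)})$ to the promise gap problem.

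The main obstacle is the compression step itself. The new verifier in $\Game'$ must force the provers to simulate the much larger original verifier of $\Game$ on questions and answers that the provers themselves produce, while reading only polylogarithmically many bits of the interaction. This seems to require a PCP-style introspection layer together with an entangled-prover self-testing gadget that certifies that the provers' claimed inputs to, and outputs of, the simulated verifier are consistent with an honest strategy in $\Game$. The key quantitative challenge is that each compression step must incur only polynomial-in-$T$ soundness degradation, so that after $R$ applications the product of losses is still $1/t(n)^{O(1)}$ rather than something much worse like $1/\Lambda_R(n)^{\Lambda_R(n)}$; simultaneously, perfect completeness must survive every application, since any non-zero honest error would accumulate over $R$ iterations and destroy the $\omega^*(\Game)=1$ branch of the theorem.
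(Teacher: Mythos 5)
Your high-level architecture --- a scaled-up entangled-sound base protocol plus $R$ successive applications of the compression theorem, with perfect completeness preserved and the soundness loss dominated by the compression of the largest game --- is the same as the paper's, and you correctly isolate the key quantitative constraint. However, there is a genuine gap in how the iteration is realized as a polynomial-time reduction. You describe literally applying the compression reduction $R$ times ``in succession to these succinct descriptions.'' The compression theorem (Theorem~\ref{thm:main-compression}) takes as input a \emph{Gate Turing Machine for an entire family} $\{\Game_N\}$ and outputs the compressed family; to iterate, one must exhibit a GTM for the family of games produced at each intermediate level. The intermediate games themselves have hyper-exponential size, so they can only exist implicitly, and if the syntactic map $G \mapsto G'$ from one level's GTM to the next incurs even a polynomial blowup in description length, then after $R(n)$ iterations (and $R(n)$ may grow with $n$, since $t$ is an arbitrary hyper-exponential function) the final description is no longer polynomial in $n$. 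The paper resolves this with a \emph{self-referential} verifier $\cV_{RC}(n,n_0)$ (Figure~\ref{fig:vrc}): it runs a clock Turing machine for $n$ steps and either executes the Natarajan--Vidick verifier (Theorem~\ref{thm:nv}) or executes the compression of $\cV_{RC}(2^n,n_0)$ --- an exponentially larger copy of \emph{itself}. There is then a single uniformly generated family, converted once into a succinct (GTM) description via oblivious Turing-machine simulation (Lemma~\ref{lem:gtm-combine}), and the ``$R$ iterations'' occur only in the analysis, as a downward induction on the tower height (Proposition~\ref{prop:vrc}). Without this fixed-point construction your step 2 is not a well-defined polynomial-time reduction.

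Two smaller points. First, your justification of the base case addresses only completeness: a classical $\MIP$ protocol with perfect completeness remains complete against entangled provers, but its \emph{soundness} against entangled provers is exactly the nontrivial part and requires the immunized protocols of~\cite{IV12,NatarajanV17twoprover} (the paper uses Theorem~\ref{thm:nv}). Second, the actual compression theorem degrades the gap as $\eps \mapsto (\eps/\poly(N))^{\alpha}$ for a universal $\alpha \geq 1$, not $\eps/\poly(N)$; the accumulated exponent $\alpha^{R}$ is still harmless because $\alpha^{R} \ll \Lambda_{R-1}(n) = \log t(n)$, but this check is needed and your ``dominant contribution from the first compression'' claim relies on it. Finally, the compression output is an ENL game with a quantum first message; one further (black-box) application of the compression of~\cite{ji2017compression} is used to obtain a genuine nonlocal game, which is where the prover count rises to $15$.
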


The ``base case'' for Theorem~\ref{thm:main}, corresponding to $R=0$
and $t(n)=n$, is the result that
$\NEXP\subseteq\MIP^*$~\cite{IV12,NatarajanV17twoprover}, where
$\MIP^*$ is the class of languages that can be decided using an
entangled-prover interactive proof system, with completeness
$\frac{2}{3}$ and soundness $\frac{1}{3}$ (the completeness-soundness
gap can be amplified from inverse polynomial to constant using
hardness amplification
techniques~\cite{bavarian2017hardness}).
The first step, $R=1$ and $t(n)=2^n$, follows from Ji's result~\cite{ji2017compression} mentioned earlier, albeit using a game with $k=11$ provers.

A corollary of both our and Ji's earlier result is that the ``honest strategy'' for the provers
(i.e.
those satisfying the completeness property) in the games constructed through the reduction from Theorem~\ref{thm:main}
provably {require} the provers to share entanglement.
Moreover, it is often possible to obtain lower bounds on the dimension
of entanglement required to achieve close to optimal success
probability; this is the case for our result, as described below.

The proof of Theorem~\ref{thm:main} is based on a compression
technique that significantly simplifies and extends the approach
pioneered in~\cite{ji2017compression}.
Our generalized compression result can be recursively composed with
itself in order to obtain the statement of Theorem~\ref{thm:main} for
any integer-valued $R(n)\geq 1$. 

The starting point of the compression approach
of~\cite{ji2017compression} is to extend the notion of a \emph{history
  state}.
The concept of a history state was first introduced by Kitaev in order
to efficiently encode any polynomial-time quantum computation as the ground
state of a local Hamiltonian, in a way that is also efficiently
verifiable~\cite{kitaev2002classical}.
The compression result of~\cite{ji2017compression} as well as the one
in this paper constructs a game to verify history states that encode
the execution of a (different) multiprover game, including the actions
of the provers (which in general are not efficiently computable).
The verification is performed by executing a ``games'' version of the
traditional verification procedure for history states, that consists
in randomly sampling a local Hamiltonian term and measuring its
energy.

There are two key ideas behind our generalized compression technique.
The first is to ensure that the game $\Game$ that verifies the history
state of a multiprover game $\Game'$ can be executed using a circuit
that is logarithmic in the size of $\Game'$, provided that $\Game'$ is
specified in a sufficiently uniform and succinct manner.
The second idea is to compose the first idea with itself, i.e.
consider the history state for the computation performed by the
history state verification procedure.
At this point there are a number of delicate issues to consider,
including identifying the right model for specifying verifiers,
verifiers of verifiers, etc.; we give more details in
Section~\ref{sec:overview}.

On a more informal note, we observe that the kind of compression
achieved here may be thought of as a ``bootstrapping'' of Kitaev's
history state technique, in a similar sense to the composition
technique from the PCP literature that ``bootstraps'' an efficient PCP
into a super-efficient one.\footnote{The analogy only goes so far:
  composition in PCPs reduces the answer size; here, we reduce the
  query size.}
The fact that history states are ground states of local Hamiltonians
is a statement about the local verifiability of arbitrary quantum
computation.
Our result goes further by making the following observations.
First, not only is the verification procedure local, it is also
exceedingly efficient --- it can be executed in time logarithmic in
the size of the original computation.
Second, it is possible to consider a history state for the
verification procedure itself.
Third, and most strikingly, the latter history state can be verified
with the same complexity as the verification procedure, without
reference to the size of the original computation.
This last step crucially relies on \emph{rigidity} properties of
entanglement which acts as a ``leash'' on quantum systems.
It is sufficient to only control the leash-holder: if the leash-holder
manages to hold the dog tightly enough, then there is no longer any
reason to worry about the (hyper-exponential-size) dog itself.

It is worth noting that such ``PCP composition on steroids'' has no
classical analogue.
A classical PCP verifier runs in polynomial time and uses
polynomially many random bits to verify an exponentially long proof.
Encoding the computation performed by such a verifier in a way that
can be verified using, say, a classical multiprover interactive proof
system, again requires a polynomial-sized verifier flipping
polynomially many bits.
This is because the only way to ``verify the verification procedure''
is to, at least with some probability, access some of the original
proof bits.
In the quantum case, it is possible to leverage entanglement between
provers to avoid the need for the ``inner'' verifier (to borrow some
terminology from the PCP literature) to make any query at all to the
original proof qubits.

\medskip \vspace{15pt}

Before proceeding we formulate another consequence of compression that
highlights the versatility of our approach.
As already mentioned, it was recently shown by Slofstra that the
problem of determining whether a given multiprover game has a perfect
entangled strategy is undecidable.
Slofstra's result proceeds by an ingenious (and intricate) reduction
to the word problem in finitely presented groups, which is known to be
undecidable.
The proof of the latter itself involves a sophisticated embedding of
the computation of an arbitrary Turing Machine (in fact, a Minsky
machine) in an instance of the word problem in a suitable finitely
presented
group~\cite{novikov1955on,boone1958the,karlampovivc1982finitely}.

We give a different proof of Slofstra's undecidability
result, by directly constructing an interactive proof system from a
Turing machine.
Arguably, our result provides an intuitive reason for \emph{why} the
problem is undecidable, showing in a precise sense how smaller and
smaller gaps can be leveraged to verify that the provers are
performing an increasingly complex computation.
More precisely, the main idea for our proof is to design a family of
games $\{\Game_n\}_{n\geq 1}$ such that for any $n\geq 1$ the verifier in the game $\Game_n$ verifies
if a Turing machine provided as input halts within $n$ steps, and if
it does not, executes a game with the provers that verifies that,
either the provers hold a quantum proof that the Turing machine halts
within $2^n$ steps, or they hold a history state for the verification of a
quantum proof that either the Turing machine halts within $2^{2^n}$ steps,
or...
Somewhat more formally, we obtain the following (see
Theorem~\ref{thm:undecidable} for a more complete statement).

\begin{theorem}\label{thm:main-undecidable}
  For all deterministic Turing machines $M$, there exists a
  multiprover game $\Game_M$ (that can be computed from the
  description of $M$) such that if $M$ halts in finite time then
  $\omega^*(\Game_M)<1$, whereas if $M$ does not halt then
  $\omega^*(\Game_M) = 1$.
  Furthermore, there exists a universal constant $\eta >
  0$ such that for any non-halting $M$, any strategy for the provers that succeeds with
  probability at least $1 - \eps$ in $\Game_M$, for some $\eps\geq 0$, 
  requires the use of an entangled state of local dimension at least
  $2^{\Omega(\eps^{-\eta})}$.
\end{theorem}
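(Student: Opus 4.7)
The plan is to construct $\Game_M$ by ``self-compressing'' an infinite tower of halting tests, using the compression result underlying Theorem~\ref{thm:main} as the inductive step and a version of Kleene's second recursion theorem to close the recursion in a fixed point. Informally, the verifier for $\Game_M$ carries a level parameter $n$; it first simulates $M$ for $n$ steps, and if $M$ halts it plays an unsatisfiable subgame (guaranteeing rejection), while otherwise it plays the compressed version of $\Game_M$ itself at the ``next'' level $2^n$. Because compression shrinks the verifier description, the naive infinite unrolling of this construction admits a finite fixed point.

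The first step is to formalize the compression transformation $\mathsf{C}$ as a deterministic procedure that, given a succinct uniform description of a verifier $V$, outputs a succinct description of a verifier $V'$ of size $\poly(\log |V|)$ such that $\omega^*(\Game_{V'}) = 1$ whenever $\omega^*(\Game_V) = 1$, and $\omega^*(\Game_{V'}) \leq 1 - \delta(|V|)$ whenever $\omega^*(\Game_V) < 1$, with $\delta$ an explicit inverse-polynomial factor. Next, I would invoke the recursion theorem to obtain a verifier template $V^\star$ depending on $M$ and on a level $n$ satisfying: $V^\star(M,n)$ simulates $M$ for $n$ steps and rejects if $M$ halts, and otherwise runs $\mathsf{C}(V^\star(M, 2^n))$. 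The game $\Game_M$ is the one whose verifier is $V^\star(M, n_0)$ for a fixed starting level $n_0$; the self-reference is consistent precisely because compression makes the size of the verifier invariant under one step of unrolling.

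Completeness and soundness are analyzed by unwinding the recursion. If $M$ never halts, then at every level the halting test passes and the verifier plays an honest compressed subgame, for which a history-state strategy achieving value $1$ exists: in any actual interaction the provers answer only finitely many queries, and at each finite depth the computation being verified is a finite piece of $M$'s execution whose history state is well-defined. Hence $\omega^*(\Game_M) = 1$. Conversely, if $M$ halts at some step $T$, then the tower $n_0, 2^{n_0}, 2^{2^{n_0}}, \ldots$ exceeds $T$ after $R = O(\log^* T)$ iterations; at that depth the innermost verifier rejects with certainty, and propagating the soundness guarantee of $\mathsf{C}$ through $R$ nested compressions yields $\omega^*(\Game_M) \leq 1 - \delta'(T) < 1$.

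The main obstacle I expect is the dimension lower bound, which requires a quantitative rigidity statement propagated through arbitrarily many recursion levels. The plan is to argue inductively: any strategy of value at least $1-\eps$ in $\Game_M$ must, by the rigidity properties underlying the compression theorem, approximately implement the honest strategy for the inner compressed subgame up to error $\poly(\eps)$, which in turn must approximately implement the honest strategy for the next inner subgame, and so on. Iterating $R$ times leaves a near-honest strategy at depth $R$ provided the accumulated error stays below a universal constant, and tracking the polynomial degradation per level gives $R \geq c \, \eps^{-\eta}$ for some universal $\eta, c > 0$ determined by the exponents appearing in the compression theorem. Since being near-honest at depth $R$ requires the provers to hold a history state encoding the verifier's computation up through level $R$, whose local dimension is at least $2^{\Omega(R)}$, we obtain the claimed $2^{\Omega(\eps^{-\eta})}$ bound. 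The delicate quantitative step will be to ensure that the rigidity loss per level is polynomial rather than exponential, since an exponential loss would only yield $R = \Omega(\log \log (1/\eps))$ and a much weaker dimension bound.
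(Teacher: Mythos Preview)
Your overall construction --- a verifier that simulates $M$ for $n$ steps, rejects if $M$ has halted, and otherwise plays the compressed version of itself at level $2^n$, with the self-reference closed by a fixed point --- matches the paper's (it uses the Gate Turing Machine formalism where you invoke Kleene, and these are equivalent), as does your soundness argument. There are, however, two genuine gaps.

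First, your completeness argument does not work. You assert that when $M$ does not halt ``a history-state strategy achieving value $1$ exists,'' reasoning that at each finite depth the verified computation is finite. But the recursion never bottoms out, so there is no finite-dimensional strategy of value exactly $1$ --- the ``Furthermore'' clause you are about to prove \emph{rules this out}. The paper instead uses the quantitative \emph{lower} bound in the compression theorem, $\omega^*(\Game^\compr_n) \geq 1 - (1-\omega^*(\Game_N))/(p(N)+1)$, iterates it, and takes a limit: since $\prod_{j}(p(\Lambda_j(1))+1)$ diverges, $\omega^*(\Game_1)=1$ emerges as a supremum, not an attained value. Second, your dimension-bound parametrics are wrong on both ends. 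The rigidity loss at level $r$ is not $\poly(\eps)$ but $\poly(p(\Lambda_{r+1}(1));\eps)$, polynomial in the game size at that level, which grows tower-exponentially in $r$; so the number of usable levels is of order $\log^*(1/\eps)$, not $\eps^{-\eta}$. Conversely, the dimension certified at depth $r$ is not $2^{\Omega(r)}$ but $2^{p(\Lambda_{r+1}(1))}$, tower-exponential in $r$. What drives the bound in the paper is the single-level Stabilizer Check rigidity (item~4 of Theorem~\ref{thm:compression}), which forces each of seven provers to hold $p(N)$ qubits with maximally mixed reduced state: the exponent is the game size at the relevant level, not the recursion depth.
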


The game $\Game_M$ in Theorem~\ref{thm:main-undecidable} is a game
with $15$ provers that can be efficiently computed from $M$; the
undecidability result follows immediately.
In addition, as stated in the theorem our game can be used as a form
of dimension test for the strategies of the provers.
Up to the value of the constant $\eta$ the bound
$2^{\Omega(\eps^{-\eta})}$ matches the best bound known, for a
three-prover game considered in~\cite{ji2018three}.

\subsection{Proof overview}
\label{sec:overview}

We provide a detailed overview for the proof of
Theorem~\ref{thm:main}.
In Section~\ref{sec:intro-compression} we sketch our main
``compression'' result and expand on the compression technique from~\cite{ji2017compression}.
The following sections sketch the proof of the compression theorem.
We start by describing a method to succinctly describe the actions of
a verifier in a multiprover game in Section~\ref{sec:intro-gtm}.
In Section~\ref{sec:intro-history} we describe the main steps of the
proof: (1) design a history state associated with the execution of a
multiprover game, (2) design a game that verifies the history state
with the help of an additional trusted prover, and finally (3) design
a game in which the honest prover has been merged into existing
provers.
This last step, prover merging, is described in more detail in
Section~\ref{sec:intro-merge}. In Section~\ref{sec:rec-comp} we sketch how the compression theorem can be applied recursively to show Theorem~\ref{thm:main} and
Theorem~\ref{thm:main-undecidable}.

\subsubsection{Protocol compression}
\label{sec:intro-compression}

The main workhorse of this paper is a compression theorem for quantum
multiprover interactive protocols that simplifies and strengthens the
compression result of~\cite{ji2017compression}.
To state the result, we first review the notion of $k$-prover
``extended nonlocal (ENL) game'', which is a type of quantum
multiprover game introduced in~\cite{johnston2016extended}.
A $k$-prover ENL game is a three-turn interaction between a quantum
verifier and $k$ quantum provers sharing entanglement.
The game (or ``protocol'') proceeds in three stages.
First, the provers send a quantum register $\reg{C}$ to the verifier.
Second, the verifier measures the register $\reg{C}$ to obtain an
outcome $t$.~\footnote{Our definition of ENL game is slightly more
  general than that in~\cite{johnston2016extended}, where the sampling
  of questions is classical and does not depend on $\reg{C}$.}
The verifier then computes a classical query $Q=(q_1,\ldots,q_k)$ that
it distributes to the provers.
Third, the provers respond with classical answers $a=(a_1,\ldots,a_k$)
to their respective questions.
In general, each prover's answer is determined by performing a
measurement on the prover's share of a quantum state that may be entangled with $\reg{C}$.
Finally, the verifier makes an accept/reject decision based on the
outcome $t$, its internal randomness, and the provers' answers.
The maximum acceptance probability of an ENL game $\Game$ is denoted
$\omega^*(\Game)$, and is also called the (entangled) \emph{value} of
$\Game$.

The whole interaction between verifier and provers in an ENL game can
be represented as a quantum circuit of a special form that we call a
\emph{protocol circuit}, as depicted in Figure~\ref{fig:qip}.
A protocol circuit starts with the application of a quantum circuit
$C_Q$ on registers $\reg{C}$ (which holds the provers' first message),
$\reg{V}$ (the verifier's private workspace), and $\reg{M}$ (which
holds the messages exchanged between the verifier and provers).
The circuit $C_Q$ implements the verifier's measurement on register
$\reg{C}$, and the verifier's choice of questions to the provers.
The circuit $C_Q$ is followed by an arbitrary unitary transformation
for each prover $i$, applied on the component $\reg{M}_i$ of the
message register that the prover has access to, as well as its private
workspace $\reg{P}_i$ (that contains the prover's part of shared
entangled state).
Finally, the last step in the protocol circuit is the application of a
circuit $C_A$ that acts on $\reg{C}$, $\reg{V}$ and $\reg{M}$ and
computes the verifier's decision in the game, that is written on a
specially designated ``output qubit''.

\begin{figure}[H]
  \begin{mdframed}[style=figstyle]
    \begin{center}
      \begin{tikzpicture}[scale=1, biggate/.style={draw, minimum
          height = 2cm, minimum width = 1.3cm, fill=ChannelColor}]

        \node (C) at (-2,1) {$\sC$}; \node (V) at (-2,.5)
        {$\phantom{\sC}$}; \node (V2) at (-2,0) {$\phantom{\sC}$};
        \node (V3) at (-2,.25) {$\sV$}; \node (M) at (-2,-.5) {$\sM$};
        \node (P) at (-2,-1.8) {$\phantom{\sC}$}; \node (P2) at
        (-2,-1.3) {$\phantom{\sC}$}; \node (P3) at (-2,-1.55) {$\sP$};

        \node (OutC) at (4,1) {}; \node (OutV) at (4,.5) {}; \node
        (OutV2) at (4,0) {}; \node (OutM) at (4,-.5) {}; \node (OutP)
        at (4,-1.8) {}; \node (OutP2) at (4,-1.3) {};

        \draw (C.east)--(OutC); \draw (V.east)--(OutV); \draw
        (V2.east)--(OutV2); \draw (M.east)--(OutM); \draw
        (P.east)--(OutP); \draw (P2.east)--(OutP2);

        \node[biggate] (CQ) at (-.8,.25) {$C_Q$}; \node[biggate] (PR)
        at (1,-1.25) {$P$}; \node[biggate] (CA) at (2.8,.25) {$C_A$};
      \end{tikzpicture}
    \end{center}
  \end{mdframed}
  \caption{The protocol circuit of an extended nonlocal game.}
  \label{fig:qip}
\end{figure}
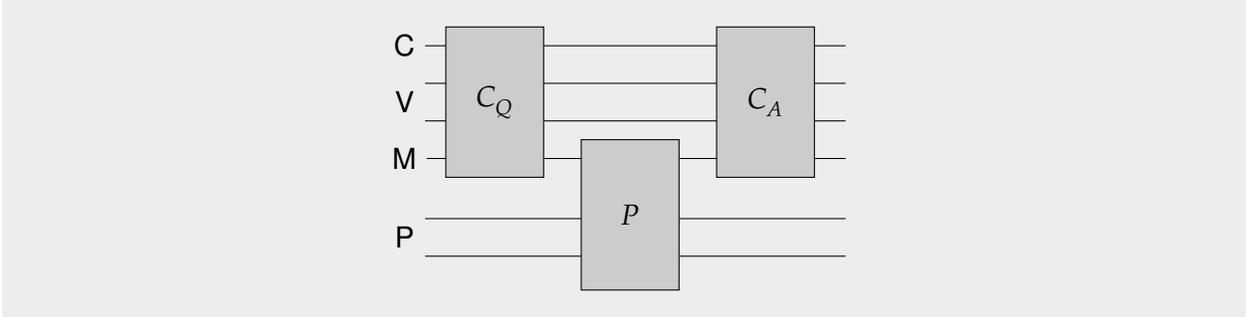

The compression theorem applies to families of ENL games $
\{\Game_N\}$ that have \emph{succinct descriptions}.
By this we mean, not only that the protocol circuit associated with
$\Game_N$ has size polynomial in $N$, but moreover there exists a
deterministic Turing machine $G$ (called a \emph{Gate Turing Machine} (GTM)) that on input $(N,t)$, where
$N$ and $t$ are two integers written in binary, runs in polynomial time and returns the description of the $t$-th gate of the protocol
circuit associated with $\Game_N$ (and a special symbol if $t$ is
larger than the circuit size).
If the $t$-th gate is an action of the prover, the GTM
returns another special symbol. 

\begin{theorem}[Compression Theorem]\label{thm:main-compression}
  Let $k\geq 7$ be an integer and let $ \{ \Game_N \}$ be a succinctly
  described family of $k$-prover ENL games with GTM $G$.
  Then there exists a family of $k$-prover ENL
  games $ \set{ \Game^{\compr}_n}$ such that for all
  integer $n\geq 1$ and $N=2^n$, it holds that
  \begin{equation}\label{eq:compression-00}
	\omega^*(\Game^{\compr}_n) \,\leq\, 1 - \frac{(1 -
      \omega^*(\Game_N))^\alpha}{\poly(N)}\;,
  \end{equation}
  where $\alpha \geq 1$ is a universal constant, and if $\omega^*(\Game_N) = 1$ then we have $\omega^*(\Game^\compr_n) = 1$. Moreover, there exists a Turing machine $A^\compr$ that on input $(1^n,G)$ returns the description of $\Game^\compr_n$ in polynomial time. 
\end{theorem}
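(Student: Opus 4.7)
The plan follows the three-stage architecture outlined in Section~\ref{sec:overview}. From the GTM $G$ for $\{\Game_N\}$ I first view the full protocol circuit of $\Game_N$ as a sequence of $T=\poly(N)$ gates: some are verifier gates coming from $C_Q$ or $C_A$, and some are prover gates acting on registers $\reg{M}_i\reg{P}_i$. The key observation is that any individual gate is specified by an $O(\log N)=O(n)$-bit index $t$ and can be retrieved in $\poly(n)$ time via a single call to $G$ on input $(N,t)$; this gives the verifier of $\Game^\compr_n$ full local access to the circuit while reading only $\poly(n)$ bits overall.

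Next, I would introduce an auxiliary ``history prover'' $H$ whose honest role is to supply, in register $\reg{C}$, the Feynman--Kitaev history state of the whole protocol circuit of $\Game_N$, with a clock register of size $O(n)$. An intermediate $(k+1)$-prover ENL game is then defined whose verifier samples a uniformly random clock time $t$, retrieves the $t$-th gate via one call to $G$, and performs the associated local Hamiltonian check: verifier gates are measured directly on $\reg{C}$ and the verifier's private workspace, while for prover gates the verifier routes the pre-gate window of $\reg{C}$ to the relevant original prover $i$, receives back the alleged post-gate window, and checks consistency with the history state on $\reg{C}$. Soundness relies on the standard Feynman--Kitaev promise-gap analysis, which shows that any strategy winning this intermediate game with probability $1-\eps$ yields a strategy for $\Game_N$ itself with acceptance probability at least $1-\poly(N)\cdot \eps^{1/\alpha_1}$ for some universal constant $\alpha_1\geq 1$.

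The main difficulty is the third step: eliminating the extra prover $H$ while keeping the verifier's circuit of size $\poly(n)$. I would merge $H$ into the $k$ existing provers by having the verifier designate a random prover $i^\star$ to also play the role of $H$, and use the remaining $k-1\geq 6$ provers as cross-checkers that enforce, via rigidity-style self-tests, that $i^\star$'s report of $\reg{C}$ matches the state that an honest history prover would have produced. The assumption $k\geq 7$ enters here, providing enough ``witness'' provers to pin down both the shared entangled state and the measurements of $i^\star$. The hard part of the analysis will be turning a strategy winning $\Game^\compr_n$ with probability $1-\eps$ into a strategy winning the intermediate game with probability $1-\poly(\eps)$, using a robust self-test with polynomial stability; composing its exponent with $\alpha_1$ yields the universal constant $\alpha$ in~\eqref{eq:compression-00}. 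Completeness and the existence of $A^\compr$ are then essentially free: an honest strategy for $\Game_N$ determines an honest history state that passes every check by design, and the whole description of $\Game^\compr_n$ is a fixed template of size $\poly(n)$ making $\poly(n)$ black-box calls to $G$, so $A^\compr$ simply writes down this template.
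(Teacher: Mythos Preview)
Your high-level three-stage decomposition matches the paper, but stages two and three each have a substantive gap.

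In the intermediate game you have the extra prover $H$ \emph{send} the history state to the verifier in register $\reg{C}$. That cannot work: the history state lives on the clock register together with the inner verifier's registers $\hat{\sC}\hat{\sV}\hat{\sM}$, and the latter already occupy $\poly(N)$ qubits, so your compressed verifier would itself have size $\poly(N)$ rather than $\poly(n)$. The paper's resolution is that only the $O(n)$-qubit clock is sent to the outer verifier; the $\poly(N)$ inner-verifier registers stay with the extra prover, who is only ever asked to measure at most three commuting weight-$\leq 2$ Pauli observables on them and report the $\pm 1$ outcomes. All of the gate, input, and output checks are rewritten so that they require only such Pauli measurements on the extra prover's side plus a single classical answer bit from at most one original prover (Figures~\ref{fig:gate_check}--\ref{fig:output_check}). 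The restriction to constant-weight Pauli observables is not a convenience---it is precisely what makes the merging step below feasible via self-testing.

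Your merging scheme---designate a random $i^\star$ to also play $H$, and have the remaining $k-1$ provers ``cross-check''---does not match the paper and, as stated, would not be sound. Cross-checkers who hold no share of $H$'s state have nothing to test it against, and a single prover holding all of $H$'s state can correlate its two roles freely. The paper instead encodes the Pauli prover's $S=\poly(N)$ qubits with the $7$-qubit Steane code and distributes one share to each of seven existing provers; the Magic-Square-based EPR test (Theorem~\ref{thm:epr-test}, Lemma~\ref{lem:stab-epr}) then rigidly forces each share's Pauli measurement to be honest. The delicate point your proposal misses is the decoupling argument (Lemma~\ref{lem:decouple} and Claim~\ref{lem:same_dist}): because every logical $X$ or $Z$ of the Steane code has a representative that avoids any chosen coordinate, the one prover $i^\star$ whose $\Game_H$-question is nontrivial can be excluded from the logical decoding of the Pauli answer, making its game answer statistically independent of the simulated Pauli outcome. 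Without this code-based independence there is no reduction from a merged strategy back to an Honest Pauli strategy, and the soundness chain breaks.
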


The strength of the theorem lies in the exponential reduction in the
size of the verifiers of the ENL game, from $\poly(N)$ (the size of $\Game_N$) to $\poly(n) = \poly(\log N)$ (the size of $\Game^{\compr}_n$). 
The cost of this exponential compression of game size is that the
value of the game gets ``compressed'' towards $1$; nevertheless, games
with value $1$ (resp.
$<1$) are compressed to games with value $1$ (resp.
$<1$).
Theorem~\ref{thm:main-compression} differs from the results
of~\cite{ji2017compression} in two significant ways.
First, the compression result in~\cite{ji2017compression} does not
yield a family $\{\Game^{\compr}_n\}$ that is as efficiently described as the games
returned by our reduction.\footnote{Although the question lengths of the ``compressed'' game in~\cite{ji2017compression} are $O(\log N)$, the verifier itself has size $\poly(N)$. The verifier for the game $\Game^\compr_n$, in contrast, has size $\poly(\log N)$.}
The recourse to succinct descriptions via Gate Turing Machines is an
essential ingredient for the recursive application of
Theorem~\ref{thm:main-compression}.
Second, the compression result in~\cite{ji2017compression} increases
the number of provers, from $k$ to $k+8$.
Our result does not require the use of additional provers; this is
again essential in allowing a large (or even infinite) number of recursive applications
of the theorem.

In the following subsections we sketch the proof of
Theorem~\ref{thm:main-compression}.
The first step is to make the notion of ``succinctly described'' more
concrete.

\subsubsection{Succinct descriptions of verifiers}
\label{sec:intro-gtm}
In the study of quantum interactive proof systems, families of games $\{\Game_N\}$ are usually presented as a uniformly generated family of circuits for the verifier: there exists a 
polynomial-time deterministic Turing machine $A$ that on input $1^N$ returns a
circuit description of the verifier in $\Game_N$. However, such uniform descriptions of verifier circuits are insufficient for our compression result: from a game $\Game_N$ we aim to
design a ``compressed game'' $\Game^{\compr}_n$ that has size $\poly(n)$,
exponentially smaller than the size of $\Game_N$.
In particular, $\Game^{\compr}_n$ does not have nearly enough time to run $A$
to get a circuit description of the verifier of $\Game_N$.
What we need is that the verifier of $\Game^{\compr}_n$ be granted some form of
\emph{implicit} description of the verifier of $\Game_N$. %

We achieve this via the notion of a \emph{Gate Turing
  Machine} (GTM) for a family of ENL games $\Set{\Game_N}$.
As mentioned before, it is a Turing machine $G$ that on input $(N,t)$ outputs in $\poly\log(N)$ time the
description of the $t$-th gate of the protocol circuit of $\Game_N$ (which has size
$\poly(N)$).

Thus, our notion of ``succinct description'' for a family of ENL games
$\set{\Game_N}$ is that there is a GTM $G$ for the family.
With this notion in place,
it remains to show the compression theorem: any succinctly described
family of games $\set{\Game_N}$ can be ``compressed'' to another family of ENL games $\set{\Game_n^{\compr}}$ with the
properties described in Theorem~\ref{thm:main-compression}.
We sketch how this is done in the next sections.

\subsubsection{Testing history states of protocol circuits}
\label{sec:intro-history}

With the appropriate notion of succinct description in place, we
describe the three main steps that go into the proof of
Theorem~\ref{thm:main-compression}.

The first step consists in considering the history state
$\ket{\Psi_\Game(N)}$ of the protocol circuit (Figure~\ref{fig:qip})
associated with an execution of $\Game=\Game_N$, where $N=2^n$.
This state is defined on the registers $\reg{CVMP}$, and may be
extremely large, depending on the size of the provers' registers.
In addition, the state has a component on a clock register
$\reg{C}_{outer}$ of the same dimension as the total number of gates
$\tau_N$ in the protocol circuit, which is polynomial in $N$; thus the
register $\reg{C}_{outer}$ is over $O(n)$ qubits.
Concretely, the state $\ket{\Psi_\Game(N)}$ has the form
\begin{equation}\label{eq:def-psig}
  \ket{\Psi_\Game(N)} \,=\, \frac{1}{\sqrt{\tau_N + 1}} \sum_{t=0}^{\tau_N}
  \ket{t}_{\reg{C}_{outer}}\otimes U_t\cdots U_1
  \ket{\psi_\Game(0)}_{\reg{CV}\reg{M}\reg{P}}\;.
\end{equation}
Here $\ket{\psi_\Game(0)}$ is the initial state of the verifier and the
provers' registers in $G$, with $\reg{C}$ denoting the initial
register received from the provers, $\reg{V}$ the private workspace
for the verifier, $\reg{M}=\reg{M}_1,\ldots,\reg{M}_k$ the message
registers, and $\reg{P}=\reg{P}_1,\ldots,\reg{P}_k$ the private spaces for
the provers.

Note that in~\eqref{eq:def-psig}, almost all unitaries are gates
applied by the verifier, except $k$ of them, one for each prover, that
can be considered ``wild cards''.
The important property is that, if $\omega^*(\Game)=1$ then there exists a
state of the form~\eqref{eq:def-psig}, for some choice of
$\ket{\psi_{\Game}(0)}$, and some choice of unitaries to apply in the
``wildcard'' locations, that is a ground state (energy $0$) of the
local Hamiltonian $H_\Game(N)$ that verifies the history state (this
is entirely analogous to Kitaev's circuit-to-Hamiltonian construction, but for
the use of the prover gates which may induce large non-local Hamiltonian terms
to verify their propagation).
Conversely, if $\omega^*(\Game_N)=1$ then no such state exists,
irrespective of the choice of the ``wildcard'' unitaries.

The next step is to design an intermediate ENL game $\Game_H$ that has
one additional prover, called the ``Pauli Prover'' $PV$.
We call the verifier in $\Game_H$ the \emph{outer verifier}.
The goal of the outer verifier is to verify that the provers share the
state $\ket{\Psi_\Game(N)}$, where registers associated with the
verifier in $\Game$ (that we call the \emph{inner verifier}), i.e.
$\reg{C}$, $\reg{V}$ and $\reg{M}$, are given to $PV$, while the clock
register $\reg{C}_{outer}$ is the prover's first message in the ENL
game $\Game$.
As already mentioned, this initial message has length $O(n)$
qubits.

Informally, to achieve this verification task the outer verifier and
the Pauli Prover collaborate to implement a family of tests that are
game-like versions of the tests implemented by the local Hamiltonian
$H_\Game(N)$.
This includes an ``input check'' (the state $\ket{\psi_\Game(0)}$ is
well-formatted), a ``gate check'' (each time step corresponds to the
application of a unitary, and unitaries associated with the inner
verifier are the right ones, as specified in the circuits $C_Q$ and
$C_A$), and an ``output check'' (the final decision made by the inner
verifier is to accept).
Each of these checks involves not only the verifier and $PV$, but also
the other provers, that are required to apply their prover gate when
the corresponding propagation check is performed.

In designing $\Game_H$, we take advantage of the fact that the Pauli
Prover is considered ``honest'': it always implements the observable
that it is asked by the outer verifier.
However, for reasons that will soon become clear the Pauli Prover can
only be asked to implement single- or two-qubit Pauli
observables.\footnote{In fact, triples of commuting two-qubit
  observables; we gloss over this for purposes of this overview.}
This means that all tests performed by the outer verifier can only
require such observables on the registers $\reg{CVM}$.

The crucial point here is that the complexity of the verifier in the
game $\Game_H$ is exponentially smaller than the complexity of the
verifier in $\Game$.
The reason this is possible is that in order for the verifier in
$\Game_H$ to check that the entangled state shared by the provers is a
valid history state for the protocol circuit associated with $\Game$
it is enough to select a random time step in that circuit, and
implement the associated check.
Both of these can be performed in time $\poly\log(N)$; the first
trivially so, and the second thanks to our assumption that $\Game$ is
specified through a ``succinct description'', provided by the verifier
$\cV$ and GTM $G$ associated with $\set{\Game_N}$, as described in
Section~\ref{sec:intro-gtm}.

In the last step we convert the Single Pauli Prover game $\Game_H$
into a new ENL game $\Game^{\compr}=\Game^{\compr}_n$, with the same number of provers
as in the original ENL $\Game$, but with drastically reduced question
length --- it is now $O(n)$, when questions in $\Game$ might have been
$\poly(N)$ bits long.
For this we need to remove the ``honest'' assumption on $PV$, and
moreover we need to ``merge'' $PV$ with existing provers.
This step of prover merging is explained in the next subsection.

\subsubsection{Prover merging}
\label{sec:intro-merge}

Prover merging is performed in two steps.
The first step uses somewhat standard techniques, similar to those
employed in~\cite{ji2017compression}, that originate in the
self-testing literature.
The main idea is to require the honest Pauli prover $PV$ in
$\Protocol$ to implement the observable it is asked to measure
transversally, on an error-encoded version of his share of the state
(this is the main motivation for restricting the prover to Pauli
observables), and then to split $PV$ into as many provers as the
error-correcting code requires.
It is then possible, using self-testing technique, to test the
``split'' $PV$ so as to ensure that any deviation from the honest
actions is detected by the verifier.

The second step is the actual merging step.
This step is somewhat delicate: we take the split provers, and merge
them into existing provers from $\Game$.
Since each prover $P$ now simultaneously receives two questions --- its
question in $\Game$, as well as the share of the question to $PV$ that
would have been sent to the split prover that got merged into $P$ ---
soundness is non-obvious.

To show that this step does not compromise soundness, we leverage the
fact that, by construction, the prover that is to be merged only has
to perform very simple operations: Pauli $\sigma_X$ and $\sigma_Z$
observables, on a constant number of qubits at a time.
These kinds of operations can be tested, indeed ``commanded'', in a
very rigid way by using self-testing results.
Therefore, we can embed these actions into any prover.
It is then straightforward to enforce that a prover performs the right
action on a Pauli observable.
However, its action on the real question may depend on the Pauli
question.
To get around this we once again leverage the structure of the Pauli
Prover game as well as the quantum error-correcting code.
More details on this part are given in Section~\ref{sec:sim-pauli}.

\subsubsection{Recursive compression}
\label{sec:rec-comp}

Ultimately, we use our compression theorem (Theorem~\ref{thm:main-compression}) in a recursive fashion to prove Theorem~\ref{thm:main}. To illustrate the essential idea behind the recursive compression approach, we give an informal overview of the proof of the statement that any language computable in deterministic time $t(n)$ has a quantum interactive proof system with completeness-soundness gap that scales as an inverse polynomial in $t(n)$. 

Let $L$ be such a language. Then there exists a deterministic Turing machine $M$ that on input $x \in \{0,1\}^n$ decides whether $x \in L$ in time $t(n)$. For every $x \in \{0,1\}^n$ and integer $N \geq n$, we construct a verifier $\cV_{x,N}$ for a $7$-prover ENL game $\Game_{x,N}$ that does the following. The verifier first runs $M$ for $N$ steps on input $x$. If $M$ accepts in this time, then $\cV_{x,N}$ accepts. If $M$ rejects in this time, then $\cV_{x,N}$ rejects. Otherwise, $M$ has not halted. In this case $\cV_{x,N}$ executes a \emph{compressed} version of the protocol corresponding to $\cV_{x,2^N}$, which is an exponentially larger version of itself. This compressed protocol is provided by Theorem~\ref{thm:main-compression}. The recursion continues until at some point, $M$ is run for a large enough ``tower of exponential'' number of steps that exceeds $t(n)$, in which case $M$ either accepts or rejects input $x$. The following can then be shown by induction on $R$ such that $t(n)\leq \Lambda_R(n)$. If $x \in L$ then the value of the game $\Game_{x,t(n)}$ is $1$, and therefore for all $N \leq t(n)$ the value of $\Game_{x,N}$ is $1$, which implies that $\Game_{x,n}$ has value $1$. Otherwise, if $x \notin L$, then using Theorem~\ref{thm:main-compression} we obtain that the value of $\Game_{x,n}$ is at most $1 - \Omega \Paren{1/\poly(t(n))}$. 

This nearly shows the desired conclusion, except that Theorem~\ref{thm:main-compression} requires that the family of games to be compressed have a succinct description in the manner described in Section~\ref{sec:intro-gtm}. We thus need to argue that the family of games $\Set{ \Game_{x,n}}$ has a GTM $G$ associated with it. 
\emph{A priori} it is unclear whether the verifiers $\Set { \cV_{x,n}}$ are structured enough so that any particular gate of the verifier circuits can be specified in polylogarithmic time. However, we show that as long as the verifiers $\Set { \cV_{x,n}}$ are \emph{uniformly generated} (meaning that there is some polynomial time Turing machine $A$ that on input $(1^n,x)$ returns the description of the verifier circuits of $\cV_{x,n}$), there is an \emph{equivalent} family of verifiers $\Set{ \cV_{x,n}'}$ that has a \emph{succinct description}. We prove this fact in Section~\ref{sec:gtm}; the proof relies on a concept from classical complexity theory known as \emph{oblivious simulation} of Turing machines.
Since the family of verifiers $\Set{ \cV_{x,N}}$ is uniformly generated, we obtain that the verifiers have a succinct description via a GTM, which in turn allows us to apply the compression theorem as outlined above. 

Adapting this sketch to handle languages that are decided by \emph{nondeterministic} Turing machines (as needed in Theorem~\ref{thm:main}), as well as reproving Slofstra's undecidability result (Theorem~\ref{thm:main-undecidable}), requires additional care. We give details in Section~\ref{sec:recursive}.

\subsection{Improving the compression theorem?}

Theorem~\ref{thm:main-compression} offers the following tradeoff
between ``compression in size'' and ``compression of the gap'': the
former is scaled by an exponential factor, from polynomial in $N=2^n$ to polynomial in $n$, while
the latter is divided by a quantity that is polynomial in $N$, or
equivalently, exponential in $n$. 

Surprisingly, we show that \emph{any} better tradeoff, i.e.
one in which the gap gets reduced by a subexponential factor in
$n$, would have far-reaching consequences in complexity
theory and mathematics. The result provides a possible explanation for the absence of meaningful upper bounds on $\MIP^*$ (provided an improved compression result does hold): not only would every computable language be decided by an $\MIP^*$ proof system, there would even be \emph{undecidable} languages in $\MIP^*$.

\begin{theorem}[Consequences of an improved compression theorem]
\label{thm:hct-informal}
	Suppose an analogue of Theorem~\ref{thm:main-compression} holds, such that the factor $\poly(N)$ in the denominator on the right-hand side of~\eqref{eq:compression-00} is replaced by a subexponential function of $n = \log N$. Then
	\begin{enumerate}
		\item $\MIP^*$ with constant gap contains all computable languages. %
		\item $\MIP^*$ with constant gap contains undecidable languages.
		\item The commuting operator model of multipartite correlations is strictly more powerful than the tensor product model. 
	\end{enumerate}
\end{theorem}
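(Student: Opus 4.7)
The plan is to exploit the hypothetical improved compression theorem recursively, in the spirit of the proof of Theorem~\ref{thm:main} sketched in Section~\ref{sec:rec-comp}, except that the subexponential (rather than $\poly(N)$) gap-loss per level should allow the recursion to proceed with unbounded depth while only incurring a bounded total loss. The guiding heuristic is that if the per-level loss factor is $2^{o(n)}$ where $n$ is the verifier size at that level, then by arranging the recursion so that the sizes $n_r$ at successive levels decay rapidly (comparable to an iterated logarithm of the outermost size), the accumulated product $\prod_r 2^{o(n_r)}$ can be kept a bounded constant, and hence so can the total gap loss.

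For Part 1, given a decidable language $L$ with deciding Turing machine $M$, I would construct a uniformly-generated family of verifiers $\Set{\cV_{x,r}}$ indexed by a recursion-depth parameter $r$, in the style of Section~\ref{sec:rec-comp}. The verifier $\cV_{x,r}$ simulates $M$ for a controlled number of steps; if $M$ halts, it outputs $M$'s decision, and otherwise it executes the compressed form (via the hypothetical improvement) of $\cV_{x,r+1}$. By Section~\ref{sec:gtm} the uniformity of the family yields a polynomial-time GTM, which lets the improved compression theorem apply at each level. The recursion necessarily terminates at some finite depth $r^*(x)$ once $M$ halts on $x$. Inducting outward from level $r^*(x)$ using the improved compression bound at each step, the gap at $r=0$ remains $\Omega(1)$ for NO-instances, placing $L$ in $\MIP*$ with constant gap.

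For Parts 2 and 3, I would first adapt the Theorem~\ref{thm:main-undecidable} construction outlined in Section~\ref{sec:overview}, replacing each use of the original compression theorem with the hypothetical improved one. The resulting games $\Game_M$ satisfy $\omega^*(\Game_M) = 1$ if $M$ never halts and $\omega^*(\Game_M) \leq 1 - c$ for a universal constant $c>0$ otherwise, immediately realising the complement of the halting problem as an $\MIP*$ language with constant gap (Part 2). Part 3 then follows from a standard argument: the commuting-operator value $\omega^{co}(\Game)$ is approximable from above by the convergent NPA semidefinite-programming hierarchy, so if the multipartite commuting-operator and tensor-product models coincided, then $\Set{\Game : \omega^*(\Game) \leq 1-c}$ would be recursively enumerable via NPA while $\Set{\Game : \omega^*(\Game) \geq 1 - c/2}$ is recursively enumerable via enumeration of finite-dimensional strategies; together these would render constant-gap $\MIP*$-membership decidable, contradicting Part 2.

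The hard part will be making Part 1 rigorous, since three considerations must be balanced simultaneously: (i) the family $\Set{\cV_{x,r}}$ must be uniform enough to admit an efficient GTM; (ii) the sizes $n_r$ at successive compression levels must shrink rapidly enough for $\prod_r 2^{o(n_r)}$ to converge; and (iii) the exponent $\alpha \geq 1$ of~\eqref{eq:compression-00}, when iterated through the recursion, must not blow up the gap. The third consideration in particular suggests that the hypothetical improvement should implicitly be formulated so that $\alpha$ is also refined (e.g.\ to $1+o(1)$), or else that parallel-repetition gap-amplification be interleaved between compressions; either arrangement is plausible under the stated hypothesis, and once Part 1 is settled, Parts 2 and 3 should follow with little additional effort.
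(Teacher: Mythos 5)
Your high-level architecture (recursive compression, plus the NPA-hierarchy/exhaustive-search argument for item 3) matches the paper, but there are two genuine gaps. First, your primary mechanism for item 1 --- arranging the recursion so that the accumulated loss $\prod_r 2^{o(n_r)}$ stays bounded --- does not work. Writing $\eps_r = 1-\omega^*$ at recursion level $r$, the hypothesis gives only $\eps_r \geq (\eps_{r+1}/g(n_r))^{\alpha}$, so even in the ideal case $\alpha=1$ the unrolled loss is $\prod_r g(n_r)$, whose largest factor is $g$ evaluated at the size of the second-innermost verifier, i.e.\ $2^{o(\log t(n))} = t(n)^{o(1)}$; this is unbounded in $n$ (and unbounded, period, if the recursion depth is unbounded as needed for item 2). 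The fix is not optional, and it is exactly the alternative you mention in passing: interleave the hardness amplification of Theorem~\ref{thm:anchoring} after \emph{every} compression step, choosing the repetition parameter $r \approx g(n')^{\alpha c}$ so that the gap is restored to a constant before the next compression. Subexponentiality of $g$ is then used not to make a product converge but to make the bookkeeping close: the amplified verifier has size $\approx g(n')^{\alpha c}\,q(n')=2^{o(n')}$, which fits inside a size-$n$ verifier for some $n'=\omega(\log n)$, so the simulated game (of size $2^{n'}>n$) is strictly larger than the simulating one and the recursion makes progress while the gap stays $\Omega(1)$ at every level. This is the crux of the paper's proof and is absent from your write-up.

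Second, your derivation of item 2 by "adapting the Theorem~\ref{thm:main-undecidable} construction" breaks on the completeness side. That construction concludes $\omega^*(\Game_M)=1$ for non-halting $M$ by taking a limit along the quantitative \emph{lower} bound $1-\frac{1-\omega^*(\Game_N)}{p(N)+1}\leq \omega^*(\Game^\compr_n)$ of Theorem~\ref{thm:compression}, which forces the compressed value to $1$ as $N\to\infty$ regardless of the inner game. The hypothesized improved theorem (Conjecture~\ref{conj:hct}) supplies only the soundness upper bound and the implication "value $1$ maps to value $1$"; it gives no guarantee whatsoever on $\omega^*(\Game_M)$ when $M$ never halts, so you cannot place the complement of the halting problem in $\MIP^*$ this way. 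The paper instead proves undecidability directly (Theorem~\ref{thm:undecidable3}) by a diagonalization: it builds a machine $M$ that runs a purported gap-deciding algorithm $A$ in parallel on a game $\Game_M$ and on its amplification $\Game_M^r$, using the observation that at most one of the two can violate the promise, so $M$ always halts and then contradicts the value guarantees that \emph{do} hold when $M$ halts. Your item 3 reasoning is sound once item 2 is established in this form.
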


We precisely define what we mean by ``improved compression theorem'' in Section~\ref{sec:improving} (see Conjecture~\ref{conj:hct}). The idea behind the proof of Theorem~\ref{thm:hct-informal} is that the tradeoff between a subexponential compression in gap and an exponential reduction in size can be ``boosted'' to a tradeoff where the gap does not get compressed at all, but the game size still gets compressed by a nontrivial amount. This uses \emph{hardness amplification} techniques for multiprover entangled games~\cite{bavarian2017hardness}, which employs a variant of parallel repetition to achieve this boosting. 

We briefly explain what we mean by the third item in Theorem~\ref{thm:hct-informal}, and refer to the end of Section~\ref{sec:cons2} for an expanded discussion. In this paper, we define the entangled value of a nonlocal game as the supremum of the success probabilities over all ``tensor product'' strategies for the provers, which consist of a finite-dimensional Hilbert space for each prover, an entangled state in the tensor product of those Hilbert spaces, and a collection of measurement operators on each prover's space. %

There is an alternate definition of the entangled value, which considers the supremum over so-called ``commuting operator'' strategies, for which there is a single (possibly infinite-dimensional) Hilbert space shared by all players, and the only restriction is that measurement operators applied by distinct provers commute with each other.
Since tensor product strategies are also commuting operator strategies, the entangled value in the tensor product model is at most the entangled value in the commuting operator model. 
It is known that in the finite dimensional case, the two models are equivalent. Whether they coincide in general is a famous problem in quantum information known as ``Tsirelson's problem'' (see e.g.~\cite{fritz2012tsirelson}). 

As we explain in Section~\ref{sec:improving} (and is well known to experts, though we could not find an explicit reference), a positive resolution to Tsirelson's problem implies the existence of an algorithm to approximate the value of any nonlocal game. However, the second item of Theorem~\ref{thm:hct-informal} shows that an improved compression theorem would refute the existence of such an algorithm, and thus would give a negative answer to (the multipartite version of) Tsirelson's problem.

It is known that Tsirelson's problem for two-prover games is essentially equivalent to Connes' Embedding Conjecture~\cite{connes1976classification}, a longstanding open problem in functional analysis (see~\cite{junge2011connes,fritz2012tsirelson,ozawa2013connes}). In particular, a separation between the definitions of entangled value for games with \emph{two} provers would refute Connes' Embedding Conjecture. We do not know if a separation for games with more than two provers (e.g., $15$) would still refute Connes' Embedding Conjecture.

\subsection{Related work}
We were informed of a forthcoming paper~\cite{coudron2018complexity} by Coudron and Slofstra that establishes a result similar (though strictly incomparable) to Theorem~\ref{thm:main}, using completely different techniques. In particular, the authors show that distinguishing between entangled value $1$ or $1 - 1/\poly(t(n))$ for games with \emph{two} provers in the commuting operator model is hard for nondeterministic $t(n)$ time (whereas our result shows hardness for nondeterministic $2^{t(n)}$ time for games with $15$ provers in the tensor product model). This result relies on the group-theoretic framework that was pioneered in~\cite{slofstra2016tsirelson,slofstra2017set}. 

\subsection{Outlook}

The most important structural properties of classical multiprover
interactive proof systems have been established since the 90s.
It is known that any multiprover interactive proof system can be
parallelized to a single round of interaction, with two provers only;
that completeness $1$ can be achieved without loss of generality; that
soundness can be amplified in parallel; finally, and most importantly,
that the class $\MIP$ of languages that can be recognized by any
multiprover interactive proof system, for any nontrivial choice of
completeness and soundness parameters, is exactly $\NEXP$.
Here, by nontrivial we mean any $(c,s)$ such that $\exp(-\poly(n))\leq
s<c\leq 1$, where $c-s$ is at least
$\exp(-\poly(n))$.
We use $\MIP_{c,s}(k,r)$ to denote the class of languages that can be
decided by a polynomial-time verifier interacting with $k$ provers
through an $r$-round interaction, with completeness $c$ and soundness
$s$.
Thus, $\MIP_{c,s}(2,1)=\NEXP$ for all nontrivial values of $(c,s)$.
When we write $\MIP$ we mean the union of all $\MIP_{c,s}(k,r)$ for
polynomially bounded functions $k,r$, and $c,s$ such that $0<s <c\leq 1$
and $(c-s)^{-1}$ is polynomially bounded.

In contrast, complexity-theoretic aspects of entangled-prover
interactive proof systems remain, to put it mildly, an untamed
wilderness.
Prior to our work it was known that $\NEXP \subseteq
\MIP^*$~\cite{IV12,Vidick13xor,NatarajanV17twoprover} with
completeness $1$ and soundness $\frac{1}{2}$, and that if one allows
the completeness-soundness gap to close exponentially fast with $n$,
then the inclusion can be strengthened to NEEXP, or, in our notation,
$\NTIME(\Lambda_2(n))$~\cite{ji2017compression}.
Interestingly, a similar phenomenon had previously been observed for
single-prover interactive proof systems, for which it is known that
$\QIP=\PSPACE$ with constant gap~\cite{jain2010qip}, but $\QIP$
contains $\EXP$ if one allows a doubly exponentially small
gap~\cite{ito2012quantum}.
Unlike $\MIP*$, however, the power of $\QIP$ does not grow arbitrarily
when the gap goes to zero; for any positive gap the class is contained in
\class{EXPSPACE}~\cite{ito2012quantum}.

For the case of multiprover interactive proof systems with entangled
provers, there is no compelling reason that a shrinking gap would be
necessary for the verification of languages beyond $\NEXP$.
Indeed, no upper bounds are known on $\MIP^*$ with constant gap --- it
is not even known to be contained in the set of decidable languages.
In fact, recent works provide indication that the class may be larger
than $\NEXP$: it is known that $\QMA_{\EXP}$, the ``exponential-size
proof'' analogue of $\QMA$, is such that $\QMA_\EXP \subseteq
\MIP^*_{1,1-2^{-n}}(5,1)$~\cite{FV14,ji2015classical}, and inclusion with a constant gap holds under randomized reductions~\cite{NV18}.
It is therefore an interesting question to determine to what extent
the exponentially small completeness-soundness gap that our technique
requires is necessary.
As mentioned earlier, significant consequences in complexity theory
and mathematics would follow from even a small improvement in our compression theorem,
Theorem~\ref{thm:main-compression}.

Another major open question on entangled-prover interactive proof
systems is the role of the number of provers.
Currently, it is not known if e.g.
$3$ provers allow to determine more languages than $2$ (for any
setting of the completeness-soundness gap).
Our proof of the compression theorem involves a ``prover merging''
step that reduces the number of provers, albeit for a very restricted
type of interactive proof systems.
We also note that our techniques restrict us to games with at least
$7$ provers.
This could potentially be decreased to $5$, or even $3$, by replacing
the use of the $7$-qubit Steane code with, say, a qutrit
error-detecting code.
Achieving a result with two provers seems more challenging.
Yet, the undecidability results in~\cite{slofstra2017set} apply to
two-prover games; it would be interesting to investigate whether some
improvements on our techniques could take us all the way to hardness
results for two-prover games as well.

A number of problems in quantum information theory are known to be
undecidable.
One that bears superficial similarity with the problem considered in
this paper, in the statement as well as in the techniques, is the
undecidability of the spectral gap of an infinite
translation-invariant Hamiltonian, shown
in~\cite{cubitt2015undecidability}.
It would be interesting to determine whether there could be a direct
reduction from a multiprover game to that problem.

\paragraph{Acknowledgments.}
Joseph Fitszsimons acknowledges support from Singapore's Ministry of
Education and National Research Foundation, and the US Air Force
Office of Scientific Research under AOARD grant FA2386-15-1-4082.
This material is based on research funded in part by the Singapore
National Research Foundation under NRF Award NRF-NRFF2013-01.
Thomas Vidick is supported by NSF CAREER Grant CCF-1553477, AFOSR YIP
award number FA9550-16-1-0495, a CIFAR Azrieli Global Scholar award,
and the IQIM, an NSF Physics Frontiers Center (NSF Grant PHY-1125565)
with support of the Gordon and Betty Moore Foundation (GBMF-12500028).
Henry Yuen is supported by ARO Grant W911NF-12-1-0541 and NSF Grant
CCF-1410022.

\paragraph{Outline.}
The rest of the paper is organized as follows.
We cover preliminaries and definitions in Section~\ref{sec:prelim}.
In Section~\ref{sec:enl} we formally define the model of extended
nonlocal games and strategies, as well as Gate Turing Machines.
In Sections~\ref{sec:single-pauli},~\ref{sec:sim-pauli}, and \ref{sec:compression} we prove our compression theorem.
In Section~\ref{sec:recursive} we prove Theorem~\ref{thm:main} and
Theorem~\ref{thm:main-undecidable}.
In Section~\ref{sec:improving} we show that quantitative improvements
to our compression theorem would lead to interesting consequences in computational complexity theory and in foundations of quantum mechanics.
\section{Preliminaries}
\label{sec:prelim}

Let $\mathbb{Z}$ and $\mathbb{N}$ be the set of integers and the set
of natural numbers respectively.
We write $\poly(N)$ for any function $f:\N\mapsto \mathbb{R}_+$ such
that there is an $\alpha>0$ and an $N_0\in\N$ such that $f(N) \leq
N^{\alpha}$ for all $N\geq N_0$.
We write $\poly(N;\eps)$ for any function $f:\N\times\mathbb{R}_+ \to
\mathbb{R}_+$ such that there exists $\alpha,\beta>0$ and $N_0\in\N$,
$\eps_0> 0$ such that, for all $N\geq N_0$ and all $\eps\leq
\eps_0$, $f(N,\eps)\leq N^\alpha \eps^\beta$.

\subsection{Quantum information theory}

All Hilbert spaces considered in the paper are finite dimensional.
We use the terminology ``quantum register'' to name specific quantum
systems with finite dimensional Hilbert spaces.
We use sans-serif font to denote registers, such as $\sA$, $\sB$.
For example, ``register $\reg{A}$'', to which is implicitly associated
the Hilbert space $\mathcal{H}_{\reg{A}}$.

$\Density(\sA)$ denotes the set of density matrices on $\sA$, and
$\Lin(\sA)$ the set of linear operators on $\sA$.
For a density matrix $\rho$ and an operator $M$, we use $\Tr_\rho(M)$
to denote $\Tr(\rho M)$.
A unitary matrix $U$ is a reflection if it has eigenvalues in $\{\pm
1\}$.

\paragraph{Universal gate set.}
The quantum circuits we discuss in this paper use single-qubit
Hadamard and three-qubit Toffoli gates, a universal gate set for
quantum computation~\cite{shi2002both}.

\paragraph{Pauli observables.}
Let $\sigma_I,\sigma_X,\sigma_Y,\sigma_Z$ denote the four single-qubit
Pauli observables
\[
  \sigma_I = \begin{pmatrix}
    1 & 0 \\
    0 & 1 \\
  \end{pmatrix}\;, \qquad \sigma_X = \begin{pmatrix}
    0 & 1 \\
    1 & 0 \\
  \end{pmatrix}\;, \qquad \sigma_Y = \begin{pmatrix}
    0 & -i \\
    i & 0 \\
  \end{pmatrix}\;, \qquad \sigma_Z = \begin{pmatrix}
    1 & 0 \\
    0 & -1 \\
  \end{pmatrix}\;.
\]
We use two ways of specifying a Pauli observable acting on a specific
qubit.
\begin{enumerate}
\item Let $W \in \{I,X,Y,Z\}$ be a label and let $\sR$ be a
  single-qubit register.
  We write $\sigma_W(\sR)$ to denote the observable $\sigma_W$ acting
  on $\sR$.
\item Let $\sR$ be an $n$-qubit register, and let $i \in
  \{1,\ldots,n\}$.
  Let $W = X_i$ (resp.
  $W = Z_i$).
  We write $\sigma_W$ to denote the $\sigma_X$ (resp.
  $\sigma_Z$) operator acting on the $i$-th qubit in $\sR$ (the
  register $\sR$ is implicit).
\end{enumerate}
We also use $W$ to label Pauli operators that have higher ``weight''.
For example, for $W = X_i Z_j$ the operator $\sigma_W$ denotes the
tensor product $\sigma_{X_i} \otimes \sigma_{Z_j}$.
For a vector $u\in\{0,1\}^n$ and $W\in\{X,Z\}$ we write $\sigma_W(u)$
for $\bigotimes_{i:u_i=1} \sigma_{W_i}$.

\begin{lemma}
  \label{lem:closeness_to_groundspace}
  Let $\sA, \sR$ be registers.
  Let $H$ be a positive semidefinite matrix acting on $\sA$ with
  smallest eigenvalue $0$ and second smallest eigenvalue $\Delta > 0$.
  If $\ket{\psi}$ is a state on $\sA \sR$ such that $\bra{\psi}
  H_{\sA} \otimes \Id_{\sR} \ket{\psi} \leq \eps$, then there exists a
  state $\ket{\theta}$ on $\sA \sR$ such that $H \ket{\theta} = 0$ and
  \[
    \left \| \ketbra{\psi}{\psi} - \ketbra{\theta}{\theta} \right \|_1
    \leq 4\sqrt{\eps/\Delta}\;.
  \]
\end{lemma}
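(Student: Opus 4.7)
The plan is to decompose $\ket{\psi}$ into its component inside the zero-eigenspace of $H$ and the component orthogonal to it, then show the orthogonal component has small norm, and conclude via the standard fidelity-to-trace-distance inequality for pure states.

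Concretely, first I would let $\Pi_0$ denote the spectral projector onto the kernel of $H$ (acting on $\sA$), and set $\Pi_0^\perp = \Id_{\sA} - \Pi_0$. By the spectral hypothesis, $H \succeq \Delta\, \Pi_0^\perp$, since every eigenvalue of $H$ outside the kernel is at least $\Delta$. This inequality lifts to $H_{\sA}\otimes \Id_{\sR} \succeq \Delta\, \Pi_0^\perp \otimes \Id_{\sR}$. Define the (unnormalized) vectors $\ket{\psi_0} = (\Pi_0 \otimes \Id_{\sR})\ket{\psi}$ and $\ket{\psi_1} = (\Pi_0^\perp \otimes \Id_{\sR})\ket{\psi}$, which are orthogonal and sum to $\ket{\psi}$.

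Next, I would plug the assumption $\bra{\psi}H_{\sA}\otimes \Id_{\sR}\ket{\psi} \leq \eps$ into the operator inequality to obtain
\[
\eps \,\geq\, \bra{\psi}H_{\sA}\otimes \Id_{\sR}\ket{\psi} \,\geq\, \Delta\, \bra{\psi}(\Pi_0^\perp \otimes \Id_{\sR})\ket{\psi} \,=\, \Delta\, \Snormt{\ket{\psi_1}}\;,
\]
so $\Snormt{\ket{\psi_1}}\leq \eps/\Delta$, and by orthogonality $\Snormt{\ket{\psi_0}} \geq 1 - \eps/\Delta$. I would then define $\ket{\theta} = \ket{\psi_0}/\normt{\ket{\psi_0}}$, which is a unit vector lying in the kernel of $H_{\sA}\otimes \Id_{\sR}$, so $H\ket{\theta}=0$ in the intended tensor-product sense.

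Finally, I would compute the overlap $|\iprod{\theta|\psi}| = \normt{\ket{\psi_0}} \geq \sqrt{1-\eps/\Delta}$, and invoke the standard identity $\Normo{\ketbra{\psi}{\psi}-\ketbra{\theta}{\theta}} = 2\sqrt{1-|\iprod{\theta|\psi}|^2}$ for pure states, giving a bound of $2\sqrt{\eps/\Delta}$, which is within the $4\sqrt{\eps/\Delta}$ claimed. There is no real obstacle here: this is a textbook calculation whose only subtlety is being careful that the reference state's kernel condition is preserved under tensoring with $\Id_{\sR}$, which is immediate from the product structure of $H_{\sA}\otimes \Id_{\sR}$.
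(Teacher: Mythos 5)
Your proof is correct, and it reaches the conclusion by a slightly different (and more elementary) route than the paper. Both arguments use the same decomposition: project $\ket{\psi}$ onto the kernel of $H_\sA\otimes\Id_\sR$, bound the weight outside the kernel by $\eps/\Delta$ via $H\succeq \Delta\,\Pi_0^\perp$, and take $\ket{\theta}$ to be the normalized projection. Where you diverge is the last step: the paper invokes the Gentle Measurement Lemma, $\normo{\rho-\sqrt{X}\rho\sqrt{X}}\leq 2\sqrt{\Tr(\rho(\Id-X))}$, which bounds the distance to the \emph{unnormalized} operator $P\rho P$ and therefore needs an extra (implicit) renormalization step --- that is where the constant $4$ rather than $2$ comes from. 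You instead use the exact pure-state identity $\normo{\ketbra{\psi}{\psi}-\ketbra{\theta}{\theta}}=2\sqrt{1-|\ip{\theta}{\psi}|^2}$ together with $|\ip{\theta}{\psi}|=\normt{\ket{\psi_0}}$, which gives the sharper constant $2\sqrt{\eps/\Delta}$ directly. The Gentle Measurement route is more general (it applies verbatim to mixed $\rho$ and non-projective $X$), but for this pure-state statement your computation is cleaner and strictly stronger. The only pedantic caveat, which affects the paper's proof equally, is that $\ket{\theta}$ is only well-defined when $\normt{\ket{\psi_0}}>0$; this holds whenever $\eps/\Delta<1$, and otherwise the claimed bound exceeds $2$ and is vacuous.
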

\begin{proof}
  Let $P$ denote the projector onto the kernel of $H$.
  Let $Q = \Id - P$.
  Then since $\Delta Q \leq H$ in the positive semidefinite ordering we have $\bra{\psi} Q \ket{\psi}
  \leq \eps/\Delta$.
  The Gentle Measurement Lemma~\cite{ogawa2002new} states that for all
  density matrices $\rho$ and for all positive semidefinite $X$
  satisfying $0 \preceq X \preceq \Id$, we have
  \begin{equation}\label{eq:gm-1}
    \left \| \rho - \sqrt{X}\rho \sqrt{X}
    \right \|_1 \leq 2 \sqrt{ \Tr(\rho (\Id - X))}\;.
  \end{equation}
  Setting $\rho = \ketbra{\psi}{\psi}$ and $X = P$ in~\eqref{eq:gm-1}
  we obtain the desired conclusion with
  \[
    \ket{\theta} = \frac{ P \ket{\psi}}{\sqrt{ \bra{\psi} P
        \ket{\psi}}}\;.
  \]
\end{proof}

\section{Nonlocal games}
\label{sec:enl}

In this paper we consider interactive protocols between a
quantum verifier $V$ and $k$ quantum provers.
We mostly work with a restricted type of three-turn interactive
protocols of the following form. First, the provers send a quantum message to the verifier;
second, the verifier sends classical questions to the provers; third,
the provers reply with classical answers.
Following the terminology introduced in~\cite{johnston2016extended} we
call such protocols ``extended nonlocal games'', or ENL. We also consider \emph{nonlocal games}, which are extended nonlocal games in which the first message is trivial (i.e. there is a single round of classical communication, from verifier to provers and back). 

This section formally introduces extended nonlocal games, as well as a
convenient representation of the verifier for such games as a special
kind of Turing machine, called a ``gate Turing machine'', or GTM.

We start by defining extended nonlocal games (and the special case of nonlocal games) in
Section~\ref{sec:enl-def}. In Section~\ref{sec:mip-star} we recall the definition of the class $\MIP^*$.
In Section~\ref{sec:normalform} we introduce the formalism for
representing strategies for the provers in an ENL.
In Section~\ref{sec:gtm} we introduce a representation of a verifier
in an ENL as a Turing machine.

\subsection{Extended nonlocal games}
\label{sec:enl-def}

Extended nonlocal games are a special kind of three-turn interactive
protocol between a quantum verifier and $k$ quantum provers.
For simplicity we first introduce notation for the case when there is
a single prover $P$.
There are four registers involved: $\sC, \sV, \sM, \sP$.
The verifier $\cV$ acts on registers $\sC$ (the register containing the
prover's initial message), $\sV$ (the verifier's private space) and
$\sM$ (the message register).
The prover $P$ acts on $\sM$ and $\sP$ (the prover's private space).
The registers $\sV$ and $\sM$ are initialized in the $\ket{0}$ state.
The registers $\sC$ and $\sP$ are initialized in an arbitrary state,
chosen by the prover.
The verifier applies a circuit $C_Q$ to the three registers $\sC \sV
\sM$ ($Q$ stands for ``questions'').
The prover then applies an arbitrary unitary transformation $P$ to the
registers $\sM \sP$.
Finally, the verifier applies a circuit $C_A$ to the three registers
$\sC \sV \sM$ ($A$ stands for ``answers'').
The first qubit of $\sV$ is designated as the ``output qubit'', and
measured in the standard basis to determine whether the verifier
accepts or rejects.
See Figure~\ref{fig:qip} for a representation.

We can (and often do) assume without loss of generality that every operation
in this protocol, including the prover's, is a reflection, i.e.
a Hermitian operator that squares to identity.
Indeed, the verifier circuits $C_Q,C_A$ consist of Hadamard gates
($H$) and Toffoli gate ($T$), which are reflections.
The prover's unitary $P$ can be embedded into a reflection by
introducing an ancilla qubit initialized to $\ket{0}$ and considering
the reflection $\tilde{P} = \ketbra{1}{0} \otimes P + \ketbra{0}{1}
\otimes P^\dagger$.

The extension to $k$ provers is straightforward. The registers $\sM$
and $\sP$ are divided into $k$ parts: $\sM_1,\ldots,\sM_k$ and
$\sP_1,\ldots,\sP_k$, such that the $i$-th prover's unitary $P_i$ acts on $\sM_i \sP_i$.

We say that a verifier $\cV = (C_Q,C_A)$ for a $k$-prover three-turn
protocol is a classical-message verifier if there are question and
answer alphabets $\mathcal{Q} = \mathcal{Q}_1 \times \cdots \times
\mathcal{Q}_k$ and $\mathcal{A} = \mathcal{A}_1 \times \cdots \times
\mathcal{A}_k$ such that
\begin{itemize}
\item The only gates of circuit $C_Q$ acting on the message registers
  $\sM$ are CNOT gates, controlled on qubits in $\sV$.
  In other words, $C_Q$ copies messages of length $\log |\cQ_i|$ from
  the register $\sV$ to the register $\sM_i$ for all $i$.
\item Similarly, the circuit $C_A$ is restricted to classically
  copying messages of length $\log |\cA_i|$ from the register $\sM_i$
  into the register $\sV$ for all $i$.
  (After this, an arbitrary quantum computation can be performed on
  $\sV$ only.)
\end{itemize}

We call such protocols with classical-message verifiers \emph{extended
  nonlocal (ENL) games}.
Note that while the verifier sends and receives classical messages in
the register $\sM$, it may receive a quantum message in the register
$\sC$ in the first turn.
A $k$-prover \emph{nonlocal game} is a restricted type of ENL game
where the verifier ignores the register $\sC$.

\subsection{The class $\MIP^*$}
\label{sec:mip-star}

Given a certain class of games, or more generally interactive
protocols, it is possible to define an associated class of languages.
The most common such class is the class $\MIP^*$ of languages that can
be decided by the verifier in a multiprover interactive proof system
in which the verifier is classical and communicates with the provers
in a polynomial number of rounds of interaction, using classical messages
only. Although we have only formally defined nonlocal games with a single round of interaction, the extension to multiple rounds is straightforward. For more background and definitions of complexity classes associated with quantum interactive proof systems, we refer to the introductory text~\cite{watrous2009quantum}.

\begin{definition}[$\MIP*$]
  Let $k,r$ be polynomially bounded functions of $n$, and $0\leq s < c
  \leq 1$ computable functions of $n$.
  We say that a language $L$ is in $\MIP_{c,s}^*(k,r)$ if there is an
  efficient classical procedure that on input $1^n$ returns a family
  of circuits for a verifier that interacts with $k$ provers in $r$
  rounds and is such that
  \begin{enumerate}
  \item \emph{(Completeness:)} If $x \in L$, then there is a strategy
    for the provers that is accepted with probability at least $c$;
  \item \emph{(Soundness:)} If $x \notin L$, no strategy for the
    provers has an acceptance probability that is larger than $s$.
  \end{enumerate}
  We write
  \[
    \MIP^*(k,r) \,=\, \bigcup_{c\in(0,1],\,g\in\poly} \MIP*_{c,c -
      1/g}(k,r)\;.
  \]
\end{definition}

The following problem is complete, under polynomial time Karp reductions, for the class $\MIP^*_{c,s}(k,1)$: given the description of a verifier $\cV$ for a $k$-prover nonlocal game $\Game$, decide whether $\omega^*(\Game) \geq c$ or $\omega^*(\Game) \leq s$.

\subsection{Strategies}
\label{sec:normalform}

The definition of an ENL in Section~\ref{sec:enl-def} models the action of each prover as a single reflection acting jointly on its message and private registers. We refer to the collection of the provers' shared state $\ket{\psi}_{\reg{CPR}}$, where $\reg{R}$ is a reference register, and each prover's reflection $P_i$, $i\in\{1,\ldots,k\}$, as a \emph{reflection strategy} $\strat = (\ket{\psi},\{P_i\})$. 

Since the message register only contains classical information, it is always possible to represent a prover's reflection as a sequence of three operations: copy the message to the prover's private register; apply an arbitrary reflection on the private register; copy the answer from the private register onto the message register. 
We call a strategy for the provers that are decomposed in this form a \emph{normal form strategy}. The structure of normal form strategies will be crucial for our compression result later on. 

We use the following notation to refer to normal form strategies.
Let $\cV = (C_Q,C_A)$ be the circuits for the verifier in a $k$-prover
ENL game $\Game$. Assume without loss of generality that all question and answer sets $\cQ_i$ and $\cA_i$ have the same cardinality
For $i \in \set{1,\ldots,k}$ and $j\in\{1,\ldots,\log|\cQ_k|\}$, let 
$\sM_{ij}$ denote the $j$-th qubit of $\sM_i$.

\begin{definition}
A \emph{normal form ENL game strategy} is a tuple $\strat =
(\rho,\{Q_{ij} \},\{ P_i \}, \{A_{ij} \})$, where $\{Q_{ij} \}$ is
a set of reflections indexed by $i \in \set{1,\ldots,k}$ and $j\in\set{1,\ldots,\log|\cQ_i|}$, $\{ P_i\}$ is a set of reflections indexed by
$i \in \set{1,\ldots,k}$, and $\{ A_{ij} \}$ is a set of reflections
indexed by $i \in \set{1,\ldots,k}$ and $j\in \set{1,\ldots,\log|\sA_i|}$.
For all $(i,j)$, the reflections $Q_{ij}$, $P_i$, $A_{ij}$ act on $\sP_i$.
\end{definition}

The execution of a normal form ENL game strategy $\strat$ in the game
$\Game$ proceeds as follows:

\begin{enumerate}
\item The circuit $C_A$ is executed on the registers $\sC, \sV, \sM$.

\item For each $i \in \set{1,\ldots,k}$,
  the $i$-th prover applies the sequence of gates $\{ \ctl Q_{ij} \}$ for
  $j \in \{1,\ldots,\log|\cQ_i|\}$, where
  \[
    \ctl Q_{ij} = \ketbra{0}{0}_{\sM_{ij}} \otimes \Id_{\sP} +
    \ketbra{1}{1}_{\sM_{ij}} \otimes Q_{ij} \;.
  \]

\item The $i$-th prover applies a reflection $P_i$ on $\sP_i$.
\item For each $i \in \set{1,\ldots,k}$, the $i$-th prover applies the
  sequence of gates $\{ \tgt A_{ij} \}$ for $j \in
  \{1,\ldots,\log|\cA_i|\}$, where
  \[
    \tgt A_{ij} = \Id_{\sM} \otimes \frac{\Id + A_{ij}}{2} +
    \sigma_{X}(\sM_{ij}) \otimes \frac{\Id - A_{ij}}{2}\; .
  \]

\item The circuit $C_A$ is executed on the registers $\sC, \sV, \sM$.
\end{enumerate}
Figure \ref{fig:normalform} gives a representation for the circuit
associated with this protocol. Gates of the form $\ctl Q_{ij}$ and $\tgt A_{ij}$ are referred to as \emph{communication gates}. Gates of the form $P_i$ are referred to as \emph{prover reflection gates}.

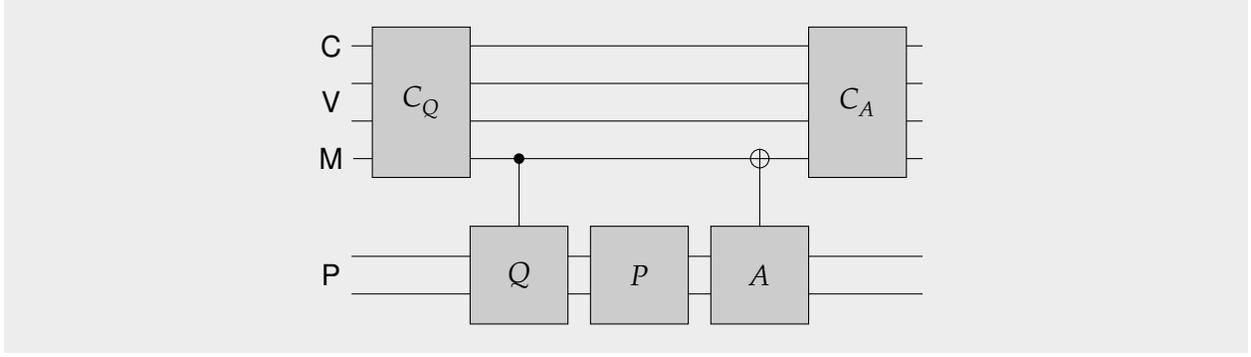
\begin{figure}[!t]
  \begin{mdframed}[style=figstyle]
    \begin{center}
      \begin{tikzpicture}[scale=1, control/.style={circle, fill,
          minimum size = 4pt, inner sep=0mm}, target/.style={circle,
          draw, minimum size = 7pt, inner sep=0mm},
        vgate/.style={draw, minimum height = 2cm, minimum width =
          1.3cm, fill=ChannelColor}, pgate/.style={draw, minimum
          height = 1.3cm, minimum width = 1.3cm, fill=ChannelColor}]

        \node (C) at (-2,1) {$\sC$}; %
        \node (V) at (-2,.5) {$\phantom{\sC}$}; %
        \node (V2) at (-2,0) {$\phantom{\sC}$}; %
        \node (V3) at (-2,.25) {$\sV$}; %
        \node (M) at (-2,-.5) {$\sM$}; %
        \node (P) at (-2,-2.3) {$\phantom{\sC}$}; %
        \node (P2) at (-2,-1.8) {$\phantom{\sC}$}; %
        \node (P3) at (-2,-2.05) {$\sP$}; %

        \node (OutC) at (6,1) {}; %
        \node (OutV) at (6,.5) {}; %
        \node (OutV2) at (6,0) {}; %
        \node (OutM) at (6,-.5) {}; %
        \node (OutP) at (6,-2.3) {}; %
        \node (OutP2) at (6,-1.8) {}; %

        \draw (C.east)--(OutC); %
        \draw (V.east)--(OutV); %
        \draw (V2.east)--(OutV2); %
        \draw (M.east)--(OutM); %
        \draw (P.east)--(OutP); %
        \draw (P2.east)--(OutP2); %

        \node[vgate] (CQ) at (-.8,.25) {$C_Q$}; %
        \node[pgate] (QG) at (.5,-2.05) {$Q$}; %
        \node[control] (C1) at (.5,-.5) {}; %
        \draw (C1.center)--(QG.north); %
        \node[pgate] (PG) at (2.1,-2.05) {$P$}; %
        \node[pgate] (AG) at (3.7,-2.05) {$A$}; %

        \node[vgate] (CA) at (5,.25) {$C_A$}; %
        \node[target] (T1) at (3.7,-.5) {}; %
        \draw (T1.north)--(AG.north);
      \end{tikzpicture}
    \end{center}
  \end{mdframed}
  \caption{An extended nonlocal game in normal form.}
  \label{fig:normalform}
\end{figure}

It is clear that any strategy for the players in an ENL game can be converted to the normal form: the provers use the gates $\ctl Q_{ij}$ to
classically read the message register $\sM$ one bit at a time, apply an arbitrary measurement, controlled on the copied message, on their private register $\sP_i$, and finally use $\tgt A_{ij}$ to classically write their answers into $\sM$ one bit at a time.

In addition we consider a second type of strategy, called \emph{measurement strategies}, which is the standard type of strategies in the study of nonlocal games. Reflection strategies and measurement strategies in ENL games are easily converted from one to another.

\begin{definition}
A \emph{measurement strategy} $\strat$ for the provers in a $k$-prover ENL game $\Game$ with question set $\cQ_1 \times \cdots \times \cQ_k$ and answer set $\cA_1 \times \cdots \times \cA_k$ consists of a
pair $(\rho,\{M_i\})$, where 
\begin{enumerate}
	\item $\rho$ is a state on $(k+1)$ registers denoted $\sC$, $\sP_1,\ldots, \sP_k$.
	\item For each $i\in\{1,\ldots,k\}$, $M_i$ is a map from $\mathcal{Q}_i \times \mathcal{A}_i$ to the set of positive semidefinite operators acting on $\sP_i$, satisfying the constraint that for all $q \in \mathcal{Q}_i$,
	\[
		\sum_{a \in \mathcal{A}_i} M_i(q,a) = \Id_{\sP_i}\;.
	\]
	For each $q \in \mathcal{Q}_i$, we write $M_i(q) = \{M_i(q,a)\}_a$ to denote the associated POVM on $\sP_i$.
\end{enumerate} 
\end{definition}

Next we define the value of a game. 

\begin{definition}
The \emph{value} of a strategy $\strat$ (either measurement or reflection) in a game $\Game$ is denoted by
$\omega^*_\strat(\Game)$ and is defined as the probability that players implementing strategy $\strat$ are accepted by the verifier in $\Game$, i.e. the probability that a measurement of the verifier's output qubit at the end of the interaction returns the outcome $1$. 
The \emph{value} of a game $\Game$ is denoted by $\omega^*(\Game)$ and
is defined as
\[
  \omega^*(\Game)\, =\, \sup_{\strat} \omega^*_\strat(\Game)\;,
\]
where the supremum is over all (finite dimensional) strategies
$\strat$ for $\Game$. 
\end{definition}

\paragraph{Distance between measurement strategies.}

We define notions of closeness of measurement strategies. (There are analogous notions of closeness of reflection strategies; however we will not need them in this paper). 

\begin{definition}[State-dependent closeness of POVMs]
	Let $\rho$ be a density matrix and let $M = \{M^a\}_a,N = \{N^a\}_a$ be two POVMs that have the same set of possible outcomes. Then define 
	\begin{align}
		d_\rho \Paren{ M, N } := \Big[ \sum_a \Tr \Paren{ (M^a - N^a)^2 \rho } \Big]^{1/2}.
	\end{align}
\end{definition}

\begin{definition}[Closeness of strategies]
\label{def:close}
	Let $\mathcal{S} = (\rho,\{M_i\}), \mathcal{S}' = (\rho',\{M_i'\})$ be strategies for an $k$-prover ENL game $\Game$. Then $\mathcal{S}$ is \textbf{$\eps$-close} to $\mathcal{S}'$ if and only if
	\begin{enumerate}
		\item $ \norm{ \rho - \rho'}_{tr} \leq \eps$
		\item For all $i\in\{1,\ldots,k\}$, $\E_q d_\rho(M_i(q),M_i'(q)) \leq \eps$, where the expectation is over $q$ drawn from the marginal distribution of the $i$th prover's questions in the game $\Game$.
	\end{enumerate}
\end{definition}

\begin{definition}[Isometric strategies]
	Let $\mathcal{S} = (\rho,\{M_i\})$ and $\mathcal{S}' = (\rho',\{M_i'\})$ be strategies for an $k$-prover ENL game $G$, where $\rho \in \Density(\sC \sP_1 \cdots \sP_k)$ and $\rho' \in \Density(\sC \sP_1' \cdots \sP_k')$. Then $\mathcal{S}$ is \textbf{$\eps$-isometric} to $\mathcal{S}'$ if and only if there exist isometries: $V_i : \sP_i \to \sP_i'$ for each $i\in\{1,\ldots,k\}$ such that the strategy $\tilde{\mathcal{S}} = (\tilde{\rho}, \set{ \tilde{M}_i })$ is $\eps$-close to $\mathcal{S}'$, where $\tilde{\mathcal{S}}$ is defined by
	\begin{enumerate}
		\item $\tilde{\rho} = (V_1 \otimes \cdots \otimes V_k) \rho (V_1 \otimes \cdots \otimes V_k)^\dagger$
		\item For all $i$, for all $(q,a) \in \mathcal{Q}_i \times \mathcal{A}_i$, $\tilde{M}_i(q,a) = V_i \, M_i(q,a)\, V_i^\dagger$.
	\end{enumerate}
\end{definition}

The following lemma shows that if strategy $\strat_1$ in a $k$-prover ENL game $G$ is $\eps$-isometric to $\strat_2$, then their success probabilities differ by at most $O(k\eps)$. 

\begin{lemma}
  \label{lem:close_strategies}
  Let $\cV = (C_Q,C_A)$ be a verifier in an ENL game $\Game$, and let
  $\strat' = (\rho_1,\{P^{(1)}_i\}),\strat = (\rho_2,\{P^{(2)}_i\})$
  be strategies for $\Game$ such that $\strat$ is $\eps$-isometric
  to $\strat'$.
  Then
  \[
    \big| \omega_{\strat}^*(\Game) - \omega_{\strat'}^*(\Game)
    \big| \,\leq\, O(k\eps)\;.
  \]
\end{lemma}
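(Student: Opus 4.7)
The plan is to prove this via a hybrid argument that replaces one component of the strategy at a time. As a first step I would observe that the isometries $V_i$ appearing in the definition of $\eps$-isometric act only on the provers' private registers $\sP_i$, which are disjoint from the verifier's workspace $\sC\sV\sM$; they therefore commute with every verifier operation and with the CNOT communication gates between $\sM_i$ and $\sP_i$, so applying them preserves the acceptance probability exactly. This reduces the problem to the case where $\strat$ is $\eps$-close (rather than $\eps$-isometric) to $\strat'$. Since reflection strategies and measurement strategies are interconvertible, I would also work with the measurement-strategy form, for which closeness is defined directly in Definition~\ref{def:close}.

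Next, I would write the acceptance probability as a linear function of the shared state $\rho$ that is jointly multilinear in the prover POVMs. Using the classical-message structure of the verifier, it has the form
\[
	\omega^*_\strat(\Game) \,=\, \sum_{t,q,a} c(t,q,a) \, \Tr\Bigl( \bigl(\Lambda(t) \otimes M_1(q_1,a_1) \otimes \cdots \otimes M_k(q_k,a_k)\bigr) \rho \Bigr),
\]
where $\{\Lambda(t)\}$ is the POVM on $\sC$ implemented by $C_Q$, the questions $q=(q_1,\ldots,q_k)$ are a (possibly randomized) function of $t$, and $c(t,q,a) \in [0,1]$ is the acceptance coefficient produced by $C_A$. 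The hybrid then has $k+1$ steps: first replace $\rho$ by $\rho'$, losing at most $\|\rho-\rho'\|_1 \leq \eps$ by H\"older; then swap $M_i$ for $M_i'$ for $i=1,\ldots,k$, one at a time, keeping everything else fixed.

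The key per-step estimate is that each such swap changes the acceptance probability by $O\bigl(\E_{q_i} d_\rho(M_i(q_i), M_i'(q_i))\bigr) \leq O(\eps)$. This follows from Cauchy--Schwarz applied to the definition of $d_\rho$: the remaining factors (the verifier POVM $\Lambda(t)$, the other provers' operators, and the coefficient $c(t,q,a)$) assemble, for fixed $q_i$, into a family of operators $\{A_{a_i}\}$ on $\sP_i$ with $\sum_{a_i} A_{a_i}^* A_{a_i} \preceq \Id$, and for any such family one has $\sum_{a_i} \bigl|\Tr\bigl((M_i(q_i,a_i)-M_i'(q_i,a_i)) A_{a_i} \rho A_{a_i}^*\bigr)\bigr| \leq d_\rho(M_i(q_i),M_i'(q_i))$. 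Summing the $k+1$ contributions yields the claimed $O(k\eps)$ bound. The main technical point to verify carefully is this per-prover inequality, especially the bookkeeping that absorbs the initial quantum message on $\sC$ and the verifier's internal randomness into the remainder operators $\{A_{a_i}\}$ without inflating their norm; once that standard state-dependent estimate is in place, the rest of the argument is routine.
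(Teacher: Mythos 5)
Your proof is correct and follows essentially the same route as the paper: apply the isometries (which preserve the value exactly), swap the shared state at a cost of $\norm{\rho-\rho'}_1\leq\eps$, and then run a $(k+1)$-step hybrid replacing one prover's POVMs at a time. The only difference is that the paper outsources the per-prover swap bound to Lemma~7 of~\cite{ji2017compression}, whereas you supply the Cauchy--Schwarz argument for it directly; that argument is sound (note the remainder operators $\{A_{a_i}\}$ live on the registers complementary to $\sP_i$, not on $\sP_i$ itself, which is what makes the commutation and the bound $\sum_{a_i}A_{a_i}^*A_{a_i}\preceq\Id$ go through).
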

\begin{proof}
	Observe that $\omega^*_{\strat}(G) = \omega^*_{\tilde{\strat}}(G)$ where $\tilde{\strat}$ is the strategy that is $\eps$-close to $\strat'$ as given by the definition of isometric strategies. Let $\strat''$ denote the strategy that is the same as $\tilde{\strat}$ except the shared state $\rho''$ is taken to be the shared state $\rho'$ of $\strat'$. We have that $\Abs{\omega^*_{\strat''}(G) - \omega^*_{\strat}(G)} \leq \eps$. 
	
	Consider a sequence of $(k+1)$ hybrid strategies $\strat_0,\strat_1,\ldots,\strat_k$ where $\strat_0 = \strat''$ and $\strat_k = \strat'$, and  strategies $\strat_i$ and $\strat_{i+1}$ differ in that the $i$-th prover's measurement operators are switched from those of $\strat''$ to those of $\strat'$. Lemma 7 of~\cite{ji2017compression} implies that $\Abs{\omega^*_{\strat_i}(G) - \omega^*_{\strat_{i+1}}(G)} \leq \eps$. We thus obtain the statement of the lemma.
\end{proof}

\paragraph{Protocol circuits.}

A protocol circuit is a quantum circuit description of a normal form
strategy in an extended nonlocal game (see Figure \ref{fig:normalform}
for an example).
Formally, a $k$-prover protocol circuit $C$ is specified by a set of
$s$ \emph{verifier wires}, a set of $k$ \emph{prover wires}, and a finite
sequence of gates $g_1,g_2,\ldots,g_\tau$.
Every gate $g$ has a \emph{type}, denoted by $\type(g)$:
\begin{enumerate}
\item $H$, which stands for a \emph{double Hadamard} gate\footnote{A
    double Hadamard gate is simply a two-qubit gate that applies two
    Hadamard gates in parallel.
    We use this gate for technical reasons.}
\item $T$, which stands for a Toffoli gate
\item $Q$, which stands for a gate of the form $\ctl Q_{ij}$, for an arbitrary reflection $Q_{ij}$ acting on $\sP_i$.
\item $A$, which stands for a gate of the form $\tgt A_{ij}$, for an arbitrary reflection $A_{ij}$ acting on $\sP_i$.
\item $P$, which stands for a prover reflection $P_i$ acting on $\sP_i$.
\end{enumerate}
The \emph{wires} of a gate $g$, denoted by $\wire(g)$, is the set of
wires it acts on.
Each gate acts on up to $3$ wires.
The size of a $k$-prover protocol circuit with $\tau$ gates and $s +
k$ wires is defined to be $(\tau + s + k)$.

It is easy to see that, from the protocol circuit $C$ of a game, we can
extract the circuits $C_Q$ and $C_A$ defining the verifier $\cV$ of
the game. We may use protocol circuits $C$ and the corresponding
verifier $\cV$ interchangeably.

\subsection{Turing machine descriptions of verifier circuits}
\label{sec:gtm}

In this section, we discuss Turing machine descriptions of verifier circuits.

\begin{definition}\label{def:uniform-circuit}
Let $\Lambda$ denote a countable set. A family of verifier circuits $\set{\cV_{n,\lambda}}_{n\in \N,\lambda \in \Lambda}$ is \emph{uniformly generated} if there is a deterministic Turing machine $M$ that on input $(1^n,\lambda)$ runs in polynomial time and returns a description of $\cV_{n,\lambda}$. 
\end{definition}

\medskip
\noindent \emph{Remark.} In the usual definition of uniformly generated circuits, the circuits are only parameterized by an integer $n$ that denotes the size. In our definition, the verifier circuits are parameterized by both a size parameter $n$ as well as an auxiliary symbol $\lambda$; this generalization will be useful in our proof of the compression theorem. Alternatively, one can think of a family of verifier circuits $\set{\cV_{n,\lambda}}_{n,\lambda}$ as specifying, for each fixed $\lambda \in \Lambda$, a family of uniformly generated verifiers circuits $\set{ \cV_{n,\lambda}}_{n \in \N}$ (in the standard sense). Furthermore, there is a \emph{single} Turing machine $M$, that by fixing the input $\lambda$, generates each family $\set{ \cV_{n,\lambda}}_{n \in \N}$.

\medskip
For our compression result it is not enough for verifiers to have uniform Turing machine descriptions --- it is crucial that they also have highly \emph{succinct} descriptions, defined as follows. 

\begin{definition}\label{def:succinct-circuit}
A family of verifier circuits $\set{\cV_{n,\lambda}}$ has a \emph{succinct description} if there exists a deterministic Turing machine $G$, called the \emph{Gate Turing Machine (GTM) for the protocol circuits $\{C_{n,\lambda}\}$ specified by $\{\cV_{n,\lambda}\}$}  if on input $(n,t,\lambda)$ the Turing machine $G$ runs in polynomial time and returns the description of the $t$-th gate $g$ of $C_{n,\lambda}$ (and a special error symbol in case $t$ is larger than the size of $C_{n,\lambda}$). In addition, we assume that a GTM always returns the size $p_\lambda(n)$ of the protocol circuit $C_{n,\lambda}$ it specifies when provided the input $(1^n,-1,\lambda)$.  
\end{definition}

In the definition, by ``description'' of a gate  we mean the pair $(\type(g),
\wire(g))$.%

We use $\CKT(G,n)$ to denote the protocol circuit whose gates are specified by $G$ on input $(n,t)$ for $1 \leq t \leq p(n)$. We call the circuit $\CKT(G,n)$ the $n$-th protocol circuit specified by $G$, and the game $\Game_n$ corresponding to $\CKT(G,n)$ the $n$-th  game specified by $G$. We say that $G$ is a GTM for a family of ENL games $\set{\Game_n}$ if $\Game_n$ is the $n$-th game specified by $G$.

The following lemma shows that if a verifier family $\Set{\cV_n}$ is uniformly generated, then there is an \emph{equivalent} verifier family $\Set{\cV_n'}$ that has a succinct description. Here, we use a strong notion of equivalence: the question and answer alphabets of $\cV_n'$ are the same as $\cV_n$, and furthermore, the value of any strategy $\strat$ is the same in $\Game_n'$ and $\Game_n$.

\begin{lemma}
\label{lem:gtm-combine}
  Let $k \geq 0$ be an integer. Let $\{ \cV_{n,\lambda} \}= \{(C_{Q,n,\lambda},C_{A,n,\lambda})\}$ be a family of verifier circuits for a $k$-prover ENL game that is uniformly generated by a
  Turing machine $M$. Here $\lambda$ denotes an auxiliary string that is part of the input $(1^n,\lambda)$ to $M$. 
	Let $\Game_{n,\lambda}$ denote the ENL game associated with $\cV_{n,\lambda}$. 
  Then there exists a GTM $G_M$, that is computable from $M$, such that the $n$-th game specified by $G_M$ is $\Game_{n,\lambda}'$ such that:
    \begin{enumerate}
		\item The question and answer alphabets of the verifier of $\Game_{n,\lambda}'$ are the same as in $\Game_{n,\lambda}$;
		\item For all $n$ and for all ENL game strategies $\strat$, $\omega^*_\strat(\Game_{n,\lambda}') = \omega^*_\strat(\Game_{n,\lambda})$.
	\end{enumerate}
\end{lemma}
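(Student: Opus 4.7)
The plan is to combine an oblivious simulation of $M$ with a universal protocol-circuit simulator to construct a new verifier circuit $\cV'_{n,\lambda}$ with the same game semantics as $\cV_{n,\lambda}$ but whose individual gates can be read off position-by-position in polylogarithmic time. The GTM $G_M$ will simply be the gate-lookup procedure for this highly structured circuit. Since the circuit $C'_{n,\lambda}$ has size $\poly(n,|\lambda|)$ while the GTM must run in time $\poly(\log n, |\lambda|)$, the key is to ensure that $C'_{n,\lambda}$ has a fully uniform layout that does not require simulating $M$ in order to identify a single gate.

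First, I would invoke the Pippenger--Fischer oblivious simulation theorem to replace $M$ by an equivalent oblivious Turing machine $\tilde M$ of runtime $T = \tilde O(\poly(n,|\lambda|))$, whose head trajectories depend only on the input length and not on tape contents. Unrolling $\tilde M$ into the standard tableau gives a classical reversible circuit $E_{n,\lambda}$ of size $\tilde O(T)$ that writes the gate-list of $\cV_{n,\lambda}$ onto an ancilla register. Because every dependency in the tableau is determined by the step index and $\tilde M$'s transition table, the type and wire indices of the $t$-th gate of $E_{n,\lambda}$ are a fixed function of $t$ and of the input bits $(1^n,\lambda)$, computable in time $\poly(\log T, |\lambda|)$.

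Second, I would fix a universal protocol-circuit simulator $U_N$ of size $\tilde O(N)$ for $N = \poly(n,|\lambda|)$: given an ancilla-stored description of an arbitrary $k$-prover protocol circuit $C$ of size at most $N$, it executes $C$ on a designated verifier workspace and on the \emph{real} message registers $\sM_1,\ldots,\sM_k$. The simulator is laid out in $N$ identical ``slots'', each a standard multiplexer-based universal-gate gadget that reads the next gate description from the ancilla and dispatches it. The crucial design requirement is that whenever the dispatched gate is a communication gate ($Q$- or $A$-type), the gadget applies it to the physical wires of $\sM_i$ that coincide with those specified by $\cV_{n,\lambda}$ — not to a simulated copy. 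Because the layout of $U_N$ is independent of the simulated circuit, each of its gates is a function of its position index, computable in polylog time. The new verifier is then $\cV'_{n,\lambda} = E_{n,\lambda}^{\dagger} \circ U_N \circ E_{n,\lambda}$, preceded by loading $(1^n,\lambda)$ into dedicated ancilla; its gate list is the concatenation of three pieces each of which admits polylog-time access. Setting $G_M$ to be the Turing machine that on input $(n,t,\lambda)$ locates $t$ within the appropriate piece and computes the corresponding gate yields the GTM required by Definition~\ref{def:succinct-circuit}. Correctness of the oblivious simulation and of the universal circuit ensures that $\cV'_{n,\lambda}$ applies the same joint unitary on $\sC\sV\sM\sP$ as $\cV_{n,\lambda}$ (with ancillas uncomputed to $\ket 0$), so the question and answer alphabets agree and $\omega^*_{\strat}(\Game'_{n,\lambda}) = \omega^*_{\strat}(\Game_{n,\lambda})$ for every strategy $\strat$.

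The main technical obstacle is the design of $U_N$: a naive universal simulator would execute every simulated gate, including the communication gates, on simulated copies of the message register, thereby severing the interface with the provers. To avoid this I would explicitly reserve wires of $U_N$ that \emph{are} the physical message-register wires, and build the dispatch gadget so that when the description ancilla specifies a $Q$- or $A$-type gate with wire index $(i,j)$, the multiplexer steers the operation onto the physical $\sM_{ij}$ wire rather than onto a simulated version. The provers' private registers $\sP_i$ remain external to $U_N$, so the provers in $\Game'_{n,\lambda}$ interact with exactly the interface of $\Game_{n,\lambda}$; this is what guarantees the strategy-by-strategy equivalence demanded in item (2) of the lemma. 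With this one careful design choice in place, the rest of the construction is routine composition of well-known uniform circuit simulations.
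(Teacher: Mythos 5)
There is a genuine gap, and it sits exactly where you identify the ``main technical obstacle.'' Your universal simulator $U_N$ is supposed to read a gate description off an ancilla and, when that description is a $Q$- or $A$-type gate, ``dispatch'' it onto the physical message wires. But a communication gate $\ctl Q_{ij}$ (resp.\ $\tgt A_{ij}$) is not an operation the verifier can apply at all: it is controlled on (resp.\ targets) $\sM_{ij}$ but acts with an \emph{arbitrary prover-chosen reflection} on the prover's private register $\sP_i$, which you correctly keep external to $U_N$. The same holds for the prover reflection gates $P_i$. These gates are wildcards filled in by the strategy, not by the circuit, so no multiplexer gadget inside the verifier's circuit can ``execute'' them. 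Moreover, if the slot at which a communication gate fires depends on the ancilla contents (i.e., on the output of $E_{n,\lambda}$), then the type of the $t$-th gate of the resulting protocol circuit is data-dependent, so your $G_M$ cannot report $\type(g_t)$ in $\poly(\log n)$ time without simulating $M$ --- and the resulting object is no longer an ENL protocol circuit, whose three-turn structure requires the communication and prover gates to sit at fixed, input-independent positions between $C_Q$ and $C_A$.

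The paper's proof avoids this by never putting the communication or prover gates inside the simulation machinery. It splits the normal-form protocol circuit into three pieces: the verifier circuit $C_{Q,n,\lambda}$, the fixed middle block of $\ctl Q_{ij}$, $P_i$, $\tgt A_{ij}$ gates, and the verifier circuit $C_{A,n,\lambda}$. The oblivious-simulation-plus-universal-circuit construction (your $E_{n,\lambda}$ and a $\CKTSIM$-style simulator, which is exactly Lemma~\ref{lem:succinct}) is applied only to $C_Q$ and $C_A$ separately, which consist solely of Hadamard/Toffoli gates on $\sC\sV\sM$; the middle block needs no simulation at all because in normal form its layout is a fixed, trivially computable function of $(n,t,\lambda)$. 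The GTM $G_M$ then just determines which of the three pieces $t$ falls into and either invokes $G_Q$/$G_A$ or computes the communication/prover gate directly. Your first ingredient (oblivious simulation and polylog gate lookup for the tableau) is sound and matches the appendix; the missing idea is this three-way decomposition, which makes the ``careful design'' of $U_N$ unnecessary and, as argued above, is in fact unavoidable.
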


\begin{proof}
From the Turing machine $M$ it is possible to design two Turing machines $M_Q$ and $M_A$ that specify the families of circuits $\{C_{Q,n,\lambda}\}$ and $\{C_{A,n,\lambda}\}$. As shown in Lemma~\ref{lem:succinct} in Appendix~\ref{sec:gtm}, any uniformly generated family of circuits has a succinct representation of the form described in Definition~\ref{def:succinct-circuit}. Let $G_Q$ and $G_A$ be the associated GTMs. The GTM $G_M$ is a straightforward combination of $G_Q$ and $G_A$. On input $(n,t,\lambda)$, the GTM first determines if the time $t$ corresponds to a gate in $C_Q$, or is among the  $\ctl Q_{ij}$, $P_i$ or $\tgt A_{ij}$ gates, or a gate in $C_A$ (recall the notation for normal form verifiers introduced in Section~\ref{sec:normalform}). 
  This can be determined in polynomial time as each part
  has an easily computable size. If $t$ belongs to the first or last part, $G_M$ determines the appropriate gate by executing $G_Q$ or $G_A$ respectively. 
  In the remaining cases, the correct
  communication gate or prover reflection gate can easily be
  computed in polynomial time.
  \end{proof}
\section{Honest Pauli Prover games}
\label{sec:single-pauli}

As mentioned in the introduction, we prove Theorem~\ref{thm:main-compression} in two parts: first we show how to compress a family of $k$-prover ENL games $\Set{\Game_N}$ specified by a GTM $G$ to a family of $(k+1)$-prover \emph{Honest Pauli Prover} games $\Set{\Game_{H,n}^\compr}$, in which one of the provers is a specially designated ``Honest Pauli Prover'' who is ``commanded'' to measure multi-qubit Pauli observables. We describe Honest Pauli Prover games in this section. In Section~\ref{sec:sim-pauli} we show how to simulate an Honest Pauli Prover game $\Game_{H,n}^\compr$ with a $k$-prover ENL game $\Set{\Game_n^\compr}$. In Section~\ref{sec:compression} we put the two parts together to prove Theorem~\ref{thm:main-compression}.

\medskip
\vspace{10pt}

Throughout this section, we fix a GTM $G$ for a family of $k$-prover ENL games $\Set{\Game_n}$. We write $\CKT(G,n)$ for the $n$-th protocol circuit specified by $G$, and let $p(n)$ denote the size of $\CKT(G,n)$. When $n$ is fixed we let $N=2^n$ and write $\hat{\sC}, \hat{\sV}, \hat{\sM}$ for the registers that the verifier $\cV_N$ in $\Game_N$ acts on, and $\hat{\sX} = \hat{\sC} \hat{\sV} \hat{\sM}$ for the union of these registers.
We interpret $\hat{\sX}$ as an ordered sequence of single-qubit registers $\{\hat{\sC}_i\}$, $\{\hat{\sV}_i\}$, and $\{\hat{\sM}_i\}$. For any register $\sR_i$ of this form, we write 
 $\ind(\sR_i) \in \{1,\ldots,|\hat{\sX}| \}$ for the qubit of
$\hat{\sX}$ that $\sR_i$ corresponds to.

In this section we introduce a family of games $\Set{\Game_{H,n}^\compr}$ that is designed to force the provers to hold a history state of the protocol circuit $\CKT(G,N)$. (The $\compr$ superscript in $\Game_{H,n}^\compr$ indicates that the game is a compression of $\Game_N$.)  These games fall in a category of \emph{Honest Pauli Prover} games, 
defined as follows.

\begin{definition}[Honest Pauli Prover game]\label{def:honest-pauli}
Let $k,S\geq 1$ be integer. An extended nonlocal game $\Game$ is an $(k+1)$-prover $S$-qubit \emph{Honest Pauli
    Prover game} if the following holds.
  The game has $(k+1)$ provers, the first of which is labelled $PV$ and called the
  ``Pauli prover'', and the remaining $k$ are labelled $PP_1,\ldots,PP_{k}$.
  In the game, queries take the form $Q=(q_V,q_{P,1},\ldots,q_{P,k})$, where
  the question $q_V$ to the Pauli prover is a set of labels
  $\{W^{(j)}\}$ for up to three commuting $S$-qubit Pauli observables, each of
  which acts nontrivially on at most two qubits. Answers in the game are labeled $a_V$, $a_{P,1},\ldots,a_{P,k}$,
respectively.
\end{definition}

We introduce a class of strategies for Honest Pauli Prover games in
which the Pauli prover performs Pauli operations honestly.

\begin{definition}[Honest Pauli Prover strategy]
  For $k\geq 0$ we say that a $(k+1)$-prover measurement strategy
  $(\ket{\psi},\{M_i\})$ for an Honest Pauli Prover Game $\cG_H$ is an
  $S$-qubit \emph{honest Pauli Prover strategy} (or honest
  Pauli strategy for short) if the following holds. The state $\ket{\psi}$ is on $(k+3)$ registers: $\sC$ (held by the the verifier), $\sP_V$ (held by the prover $PV$),
$\sP_{P,1},\ldots,\sP_{P,k}$ (held by provers $PP_1,\ldots,PP_k$
respectively), and $\sR$ (a reference register). We use $\sP$ to denote the $(k+1)$ prover registers collectively. Furthermore, the register $\sP_V$ consists of $S$ qubits, and
  on any question $q_V$ the answer bits $a_V$ returned by the Pauli
  prover are obtained by measuring the set of commuting Pauli observables
  that is specified by its question (the prover reports one answer bit for each observable).
	\end{definition}

The verifier $\cV_{H,n}^\compr$ for the game $\Game_{H,n}^\compr$ is summarized in Figure~\ref{fig:honest}. The verifier randomly executes one of three possible routines.
We give the description of each subprotocol in
Section~\ref{sec:gate_check}, Section~\ref{sec:input_check} and
Section~\ref{sec:output_check} respectively.
We conclude with the analysis of $\cV_{H,n}^\compr$ in
Section~\ref{sec:vcomph}.

\begin{figure}[H]
  \centering
  \begin{mdframed}[style=figstyle]
    \ul{Verifier name:} $\cV_{H,n}^\compr$: 
    \begin{itemize}
    \item Execute each of the following subprotocols with
      probability $1/3$: $\textsc{Gate Check}(n)$, $\textsc{Input
        Check}(n)$, and $\textsc{Output Check}(n)$.
    \end{itemize}
  \end{mdframed}
  \caption{The verifier $\cV_{H,n}^{\compr}$.}
  \label{fig:honest}
\end{figure}

\subsection{Gate Check}
\label{sec:gate_check}

The goal of the Gate Check subprotocol is to check that the provers (already assumed to be using an honest Pauli strategy) share a state close to a history state corresponding to the execution of the protocol circuit $\CKT(G,N)$. More precisely, their strategy must be close to one of the following form.

\begin{definition}[Honest Gate Check strategy]\label{def:honest-gate}
An honest Pauli strategy $\strat=(\ket{\psi},\{M_i\})$ is an \emph{honest Gate Check strategy} for the game $\Game_{H,n}^\compr$ derived from the GTM $G$ if
the shared state $\ket{\psi}_{\sC \sP \sR}$ is a history state of the
circuit $\CKT(G,N)$,
\begin{equation}\label{eq:gate-hist}
  \ket{\psi}_{\sC \sP \sR} \,=\, \frac{1}{\sqrt{p(N) + 1}} \sum_{t =
    0}^{p(N)} \ket{t}_{\sC} \otimes \ket{\psi_t}_{\sP \sR}\;,
\end{equation}
where the state $\ket{\psi_0}_{\sP\sR}$ is arbitrary and for all $t \geq 1$, the state $\ket{\psi_t}_{\sP \sR}$ is
defined as $U_{g_t} \ket{\psi_{t-1}}_{\sP \sR}$ where $g_t =
G(N,t)$ and $U_{g_t}$ is the unitary specified in~\eqref{eq:gates},
acting on the registers specified by $\wire(g_t)$. In particular, the register $\sP_V$ is isomorphic to $\hat{\sX} = \hat{\sC} \hat{\sV} \hat{\sM}$, and $S=|\hat{\sX}|$. 
\end{definition}

We proceed to describe the Gate Check, and then state its properties.  
In the check, the verifier samples a random time $t \in
\{1,\ldots,p(N)\}$, and computes the $t$-th gate $g = G(N,t)$ (the verifier can compute this gate by simulating the Turing machine $G$ for $\poly\log(N)$ steps). Depending on the type of $g$, a double Hadamard gate, a Toffoli gate,
a communication channel gate (see Section~\ref{sec:normalform}), or a
prover reflection gate, the verifier executes a specially tailored
subprotocol to check the propagation of that particular gate.

\begin{figure}[H]
  \centering
  \begin{mdframed}[style=figstyle]
    \ul{Subprotocol name:} $\textsc{Gate Check}(n)$: 
    \begin{enumerate}
    \item %
      Select a uniformly random integer $t\in\{1,\ldots, p(N)\}$, and measure
      the clock register $\sC$ using the POVM
      \[
	\big\{ \Pi^0 = \ketbra{+_t}{+_t}, \Pi^1 = \ketbra{-_t}{-_t}, \Pi^2
        = \Id - \Pi^0 - \Pi^1 \big\}\;,
      \]
      where $\ket{\pm_{t}} \,=\, \frac{1}{\sqrt{2}}\big(\ket{t-1}\pm\ket{t}\big)$. Let $s \in \{0,1,2\}$ denote the result of the measurement. If $s = 2$, accept.
    \item Simulate the execution of the the GTM $G$ on input $(N, t)$ to
      obtain $g = G(N,t)$.
    \item If $\type(g) = T$, run $\textsc{Toffoli Check}(n,s,g)$.
    \item If $\type(g) = H$, run $\textsc{Hadamard Check}(n,s,g)$.
    \item If $\type(g) \in \{Q,A \}$, run $\textsc{Communication
        Channel Check}(n,s,g)$.
    \item If $\type(g) = P$, run $\textsc{Prover Reflection
        Check}(n,s,g)$.
    \end{enumerate}
  \end{mdframed}
  \caption{Gate Check}
  \label{fig:gate_check}
\end{figure}

Figure~\ref{fig:toffoli_hadamard_check} details the subprotocols
invoked by \textsc{Gate Check}.
The subprotocols \textsc{Toffoli Check} and \textsc{Hadamard Check} are
taken from~\cite{ji2017compression}.
A Toffoli or doubled Hadamard gate $g$ returned by the GTM $G$ always
comes together with labels for a set of qubits on which the gate acts
on.
In the subprotocols \textsc{Hadamard Check}, \textsc{Communication
  Channel Check}, and \textsc{Prover Reflection Check}, the verifier
artificially accepts with probability $1/2$ without testing anything; this is to adjust the normalization of the rejection probabilities of
these subprotocols.

The next lemma establishes an expression for the rejection probability for \textsc{Gate Check} conditioned on a choice of random $t \in \{1,\ldots,p(N)\}$.

\begin{figure}[htb!]
  \centering
  \begin{mdframed}[style=figstyle]
    \ul{Subprotocol name:} $\textsc{Toffoli Check}(n,s,g)$: \\
    \ul{Description of input:} $g$ is a Toffoli gate acting on qubits
    $u_1,u_2,u_3$, and $s \in \{0,1\}$.
    \begin{enumerate}
    \item Sample $\alpha \in \{0,1\}$ uniformly at random, and accept
      if $\alpha = 1$.
      Otherwise, continue.
    \item Set $q_V = (Z_{u_1},Z_{u_2},X_{u_3})$.
      Let $a_V = (a_1,a_2,a_3)$ be the three answer bits from $P_V$.
      Reject if $a_1 = a_2 = 1 \wedge s \oplus a_3 = 1$, or $a_1 a_2 =
      0 \wedge s = 1$.
      Accept otherwise.
    \end{enumerate}
    \vspace{10pt}
    \ul{Subprotocol name:} $\textsc{Hadamard Check}(n,s,g)$: \\
    \ul{Description of input:} $g$ is a double Hadamard gate acting on
    qubits $u_1,u_2$, and $s \in \{0,1\}$.
    \begin{enumerate}
    \item Sample $\alpha \in \{0,1\}$ uniformly at random.
    \item If $\alpha = 0$, set $q_V = ( X_{u_1} X_{u_2}, Z_{u_1}
      Z_{u_2})$.
      Let $a_1,a_2$ be the two answer bits from $P_V$.
      Reject if $s \oplus a_1 = s \oplus a_2 = 1$, accept otherwise.

    \item If $\alpha = 1$, set $q_V = (X_{u_1}Z_{u_2},Z_{u_1}
      X_{u_2})$.
      Let $a_1,a_2$ be the two answer bits from $P_V$.
      Reject if $s \oplus a_1 = s \oplus a_2 = 1$ and accept
      otherwise.
    \end{enumerate}
    \vspace{10pt}
    \ul{Subprotocol name:} $\textsc{Communication Channel Check}(n,s,g)$: \\
    \ul{Description of input:} $g$ is a communication gate $\ctl
    Q_{ij}$ or $\tgt A_{ij}$, and $s \in \{0,1\}$.
    \begin{enumerate}
    \item Sample $\alpha \in \{0,1\}$ uniformly at random, and accept
      if $\alpha = 1$.
      Otherwise, continue.
    \item Let $(i,j) = \wire(g)$.
      Let $u = \ind(\hat{\sM}_{ij})$.
    \item If $\type(g) = Q$: Set $q_V = Z_u$.
      Set $q_{P,i} = Q_{ij}$.
      Reject if $a_V = 1 \wedge s \oplus a_{P,i} = 1$, or $a_V = 0
      \wedge s = 1$.
      Accept otherwise.
    \item If $\type(g) = A$: Set $q_V = X_u$.
      Set $q_{P,i} = A_{ij}$.
      Reject if $a_{P,i} = 1 \wedge s \oplus a_V = 1$, or $a_{P,i} = 0
      \wedge s = 1$.
      Accept otherwise.
    \end{enumerate}
    \vspace{10pt}
    \ul{Subprotocol name:} $\textsc{Prover Reflection Check}(n,s,g)$: \\
    \ul{Description of input:} $g$ is a prover reflection gate, and $s
    \in \{0,1\}$.
    \begin{enumerate}
    \item Sample $\alpha \in \{0,1\}$ uniformly at random, and accept
      if $\alpha = 1$. Otherwise, continue.
    \item Let $i = \wire(g)$.
      Set $q_{P,i} = \star$.
    \item Reject if $a_{P,i} \neq s$.
      Accept otherwise.
    \end{enumerate}
  \end{mdframed}
  \caption{Toffoli, Hadamard, Communication Channel, and Prover
    Reflection Checks.}
  \label{fig:toffoli_hadamard_check}
\end{figure}

\begin{lemma}
  \label{lem:ver_gate_check}
  Let $\strat = (\ket{\psi},\{M_i\})$ be an honest Pauli strategy for the \textsc{Gate Check} subprotocol. %
  For all $i \in \{1,\ldots,k\}$ let $Q_{ij}$, $A_{ij}, P_i$ be prover $PP_i$'s
  observables on questions $Q_{ij}, A_{ij}, \star$ respectively.
  Let $\ctl Q_{ij}$ and $\tgt A_{ij}$ denote the associated controlled
  operators defined in Section~\ref{sec:normalform}.
  
  Fix $t \in \{1,\ldots,p(N)\}$. Let $g = G(N,t)$ denote the $t$-th gate of the protocol circuit $\CKT(G,N)$. Let
  \begin{equation}
    \label{eq:gates}
    U_g = \left \{
      \begin{array}{ll}
        H^{\otimes 2} & \mbox{if } \type(g) = H \\
        T 	& \mbox{if } \type(g) = T \\
        \ctl Q_{ij}	& \mbox{if } \type(g) = Q, \wire(g)=(i,j) \\
        \tgt A_{ij}	& \mbox{if } \type(g) = A, \wire(g)=(i,j) \\
        P_i & \mbox{if } \type(g) = P, \wire(g) = i.
      \end{array}
    \right.
  \end{equation}
  Then the rejection probability of \textsc{Gate Check}, conditioned on the verifier selecting time $t \in \{0,1,\ldots,p(N)\}$ in Step 1 of Figure~\ref{fig:gate_check}, is
  \[
    \frac{1}{4} \Tr_{\rho} \Paren{ K_t \paren{\Id - J_t \otimes U_g}K_t }\;,
  \]
  where $\rho = \ketbra{\psi}{\psi}$, $K_t$ denotes the projector $\ketbra{+_t}{+_t} + \ketbra{-_t}{-_t}$ acting on $\sC$ and $J_t$ denotes the unitary operator $\Id - 2\ketbra{-_t}{-_t}$ acting on $\sC$.
\end{lemma}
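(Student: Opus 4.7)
My plan is to compute $\Pr[\text{reject}\mid t]$ directly and verify case-by-case that it matches the claimed expression. First, since $K_t$ and $J_t$ both act on $\sC$ only and $J_t$ preserves $\mathrm{span}(\ket{+_t},\ket{-_t})$, acting as $+1$ on $\ket{+_t}$ and $-1$ on $\ket{-_t}$, one has $K_t J_t K_t = \ketbra{+_t}{+_t} - \ketbra{-_t}{-_t}$, from which
\[
  K_t(\Id - J_t\otimes U_g)K_t \,=\, \ketbra{+_t}{+_t}\otimes(\Id - U_g) \,+\, \ketbra{-_t}{-_t}\otimes(\Id + U_g).
\]
The lemma is thus equivalent to showing that the joint POVM element for rejection on $\sC\sP$, conditional on sampling time $t$, equals one-quarter of this operator.

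Next I would organize the computation via the clock outcome $s\in\{0,1,2\}$. Outcome $s=2$ always leads to acceptance, so the relevant rejection operator has the block form $M_t = \ketbra{+_t}{+_t}\otimes R_0 + \ketbra{-_t}{-_t}\otimes R_1$, where $R_s$ is the POVM element on $\sP$ describing rejection in the subcheck that is executed after clock outcome $s$; this decomposition is legitimate because the clock measurement is projective on $\sC$ while the subsequent subcheck acts only on $\sP$. The remaining task is to show that in every subcheck one has $R_s = \tfrac{1}{2}\cdot\tfrac{\Id - (-1)^s U_g}{2}$ (with $U_g$ the corresponding reflection), where the outer $\tfrac{1}{2}$ factor comes from the $\alpha$ coin in \textsc{Gate Check}---which either artificially accepts with probability $\tfrac{1}{2}$ (Toffoli, Communication Channel, Prover Reflection) or selects uniformly between two sub-tests (Hadamard).

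The verification for Toffoli, Communication Channel, and Prover Reflection is a direct computation using that each $U_g$ is a reflection. For the Toffoli gate $T$ on qubits $u_1,u_2,u_3$, the projector $\tfrac{\Id - T}{2}$ equals $\ketbra{11}{11}_{u_1 u_2}\otimes\ketbra{-}{-}_{u_3}$, the joint projector onto the commuting Pauli outcomes $(Z_{u_1},Z_{u_2},X_{u_3})=(-1,-1,-1)$; this matches the reject condition $a_1=a_2=a_3=1$ at $s=0$, and its complement on the $\{Z_{u_1}=1 \vee Z_{u_2}=1\}$ subspace matches the reject condition at $s=1$. For a $Q$-type communication gate, expanding $\ctl Q_{ij} = \tfrac{\Id+Z_u}{2}\otimes\Id + \tfrac{\Id-Z_u}{2}\otimes Q_{ij}$ gives the factorization $\tfrac{\Id-\ctl Q_{ij}}{2} = \tfrac{\Id - Z_u}{2}\cdot\tfrac{\Id - Q_{ij}}{2}$, which is exactly the projector onto $a_V = a_{P,i} = 1$; the $A$-type case is parallel with $X_u$ in place of $Z_u$. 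For a prover reflection $P_i$, the $\star$-question directly implements measurement of $P_i$ on $\sP_{P,i}$.

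The main obstacle is the Hadamard check, because $H\otimes H$ is not a single Pauli and cannot be measured in a single shot. Writing $H = (\sigma_X + \sigma_Z)/\sqrt{2}$ gives $H\otimes H = \tfrac{1}{2}(XX + XZ + ZX + ZZ)$ (where each two-letter Pauli acts on qubits $u_1, u_2$). The $\alpha=0$ sub-test measures the commuting pair $\{XX, ZZ\}$; since $(XX)(ZZ) = -YY$, its rejection POVM element at $s=0$ is $\tfrac{\Id-XX}{2}\cdot\tfrac{\Id-ZZ}{2} = \tfrac{1}{4}(\Id - XX - ZZ - YY)$. The $\alpha=1$ sub-test measures $\{XZ, ZX\}$ with $(XZ)(ZX) = +YY$, giving $\tfrac{1}{4}(\Id - XZ - ZX + YY)$. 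Averaging over $\alpha$, the $\pm YY$ terms cancel exactly, leaving $\tfrac{1}{8}(2\Id - XX - XZ - ZX - ZZ) = \tfrac{1}{2}\cdot\tfrac{\Id - H\otimes H}{2}$, as required; the $s=1$ case is analogous with all signs flipped. Combining the four cases establishes the identity $M_t = \tfrac{1}{4}K_t(\Id - J_t\otimes U_g)K_t$ and hence the lemma.
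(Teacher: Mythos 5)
Your proof is correct and follows essentially the same route as the paper: a case-by-case computation of the conditional rejection operators, organized via the identity $K_t(\Id - J_t\otimes U_g)K_t = \ketbra{+_t}{+_t}\otimes(\Id-U_g)+\ketbra{-_t}{-_t}\otimes(\Id+U_g)$. The only difference is that you supply details the paper omits --- it cites~\cite{ji2017compression} for the Hadamard and Toffoli cases and leaves the communication-gate algebra as ``can be verified'' --- whereas you carry out all cases explicitly, including the cancellation of the $\pm YY$ terms when averaging the two Hadamard sub-tests.
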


\begin{proof}
  The rejection probability for the double Hadamard and Toffoli gates
  was established in~\cite{ji2017compression}.
  In the case of $\type(g) = Q$, the rejection probability is
  \[
    \frac{1}{2} \Tr_{\rho} \Paren{ K_t \Brac{ \ketbra{-_t}{-_t} \otimes
      \frac{\Id + \sigma_{Z_u}}{2} + \frac{\Id - J_t \otimes Q_{ij}
      }{2} \otimes \frac{\Id - \sigma_{Z_u}}{2}} K_t }
  \]
  which can be verified to be equal to $\frac{1}{4} \Tr_{\rho} \paren{
    K_t (\Id - J_t \otimes U_g) K_t }$.
  In the case that $\type(g) = A$, the rejection probability is
  \[
    \frac{1}{2} \Tr_{\rho} \Paren{ K_t \Brac{\ketbra{-_t}{-_t} \otimes
      \frac{\Id + A_{ij}}{2} + \frac{\Id - J_t \otimes \sigma_{X_u}
      }{2} \otimes \frac{\Id - A_{ij}}{2} } K_t }
  \]
  which again can be verified to be equal to $\frac{1}{4} \Tr_{\rho}
  \paren{ K_t (\Id - J_t \otimes U_g)K_t }$.
  In the case of $\type(g) = P$, the rejection probability is by
  definition
  \[
    \frac{1}{4} \Tr_{\rho} \paren{ K_t (\Id - J_t \otimes U_g) K_t}.
  \]
\end{proof}

\begin{lemma}
  \label{lem:gate_check}
The following hold for the $\textsc{Gate check}$ subprotocol described in Figure~\ref{fig:toffoli_hadamard_check}:
  \begin{enumerate}
  \item (Completeness) An honest Gate Check strategy passes the
    $\textsc{Gate check}$ subprotocol with probability $1$.
  \item (Soundness) Any honest Pauli strategy that passes
    the $\textsc{Gate check}$ subprotocol with probability at least $1
    - \eps$ is $\delta$-close (see Definition~\ref{def:close}) to an honest Gate Check strategy, for $\delta = O(p(N)^{3/2} \sqrt{\eps})$
  \end{enumerate}
\end{lemma}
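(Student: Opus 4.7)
My plan is to reduce the Gate Check analysis to a spectral gap estimate for a Kitaev-style circuit-to-Hamiltonian construction, and then invoke Lemma~\ref{lem:closeness_to_groundspace} to extract a nearby history state. The bridge between the subprotocol and the Hamiltonian is Lemma~\ref{lem:ver_gate_check}: the conditional rejection probability at time $t$ is $\frac{1}{4}\Tr_\rho\bigl(K_t(\Id - J_t \otimes U_{g_t}) K_t\bigr)$. Expanding $\ket{\pm_t}$ in the computational clock basis and using that every gate $U_{g_t}$ (Hadamard, Toffoli, controlled-reflection communication, and prover reflection) is its own inverse, this rewrites as $\frac{1}{2}\Tr_\rho(H_t)$, where
\[
  H_t \,=\, \tfrac{1}{2}\bigl( (\ketbra{t-1}{t-1} + \ketbra{t}{t}) \otimes \Id \,-\, (\ketbra{t-1}{t} + \ketbra{t}{t-1}) \otimes U_{g_t} \bigr)
\]
is the standard Kitaev propagation term at step $t$.

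Completeness is immediate: for an honest Gate Check strategy the shared state~\eqref{eq:gate-hist} is, by construction, annihilated by every $H_t$, so each conditional rejection probability is zero, and combined with the fact that the outcome $s = 2$ in Step~1 of Figure~\ref{fig:gate_check} always accepts, the subprotocol accepts with probability one.

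For soundness, averaging over $t$ uniform in $\{1,\ldots,p(N)\}$ gives
\[
  \eps \,\geq\, \frac{1}{2\, p(N)} \, \Tr_\rho\bigl(H_{\mathrm{prop}}\bigr)\;, \qquad H_{\mathrm{prop}} \,=\, \sum_{t=1}^{p(N)} H_t\;.
\]
The kernel of $H_{\mathrm{prop}}$ is exactly the span of history states of the form~\eqref{eq:gate-hist} tensored with arbitrary states on the reference register, and Kitaev's classical analysis of this Hamiltonian yields the spectral gap $\Delta = \Omega(1/p(N)^2)$. Applying Lemma~\ref{lem:closeness_to_groundspace} with $\sA = \sC \sP$ and reference $\sR$ produces a state $\ket{\theta}$ in this kernel, i.e.\ a history state, with
\[
  \bignorm{\ketbra{\psi}{\psi} - \ketbra{\theta}{\theta}}_1 \,\leq\, 4 \sqrt{\frac{2 p(N)\, \eps}{\Omega(1/p(N)^2)}} \,=\, O\bigl(p(N)^{3/2} \sqrt{\eps}\bigr)\;.
\]
Taking $\strat' = (\ket{\theta}, \{M_i\})$ with the same prover measurements (so that the history state is built using those very measurements, as demanded by Definition~\ref{def:honest-gate}) gives an honest Gate Check strategy; since the $M_i$ agree on the nose, Definition~\ref{def:close} yields $\delta = O(p(N)^{3/2}\sqrt{\eps})$-closeness.

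The one point requiring care is the applicability of Kitaev's spectral gap bound $\Omega(1/p(N)^2)$: his argument only uses that each $U_{g_t}$ is unitary, so the fact that some of the gates here are strategy-dependent prover reflections rather than drawn from a fixed universal gate set is harmless, as the honest Pauli assumption ensures every such $U_{g_t}$ is a unitary (indeed a Hermitian) operator. A secondary minor point is that the kernel of $H_{\mathrm{prop}}$ is degenerate, parametrized by the choice of initial state $\ket{\psi_0}_{\sP\sR}$, which is precisely what is needed for the ground state produced by the Gentle Measurement Lemma to be of the required history-state form.
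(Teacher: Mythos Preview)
Your proof is correct and follows essentially the same route as the paper: both expand the conditional rejection probability from Lemma~\ref{lem:ver_gate_check} into the standard Kitaev propagation term, average over $t$ to obtain $\Tr_\rho(H_{\mathrm{prop}}) \leq 2p(N)\eps$, invoke the $\Omega(1/p(N)^2)$ spectral gap of the propagation Hamiltonian, and then apply Lemma~\ref{lem:closeness_to_groundspace} to produce a nearby history state. The paper writes $H_{\mathrm{prop}}$ as $\sum_t Q\ketbra{-_t}{-_t}Q^\dagger$ via the change of basis $Q = \sum_t \ketbra{t}{t}\otimes U_{g_t}\cdots U_{g_1}$ rather than your direct expansion of $H_t$, but this is a cosmetic difference---the two expressions are equal.
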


\begin{proof} 
Completeness is straightforward.
 We show soundness.
  The analysis largely follows~\cite{ji2017compression}. Let $\strat$ be an honest Pauli strategy that succeeds with probability at least $1-\eps$ in the $\textsc{Gate check}$ subprotocol. 
  Let $\ket{\psi}_{\sC \sP \sR}$ denote the provers' shared state in $\strat$, and let $\rho = \ketbra{\psi}{\psi}$.

  We calculate the rejection probability of $\textsc{Gate Check}$. At step 1. in $\textsc{Gate Check}$ the verifier selects a time $t$ uniformly at random from $\{1,\ldots,p(N)\}$. Let $g_t = G(N,t)$ denote the $t$-th gate of $\CKT(G,N)$. Let $r_t$ denote the rejection probability of $\textsc{Gate Check}$ conditioned on time $t$ having been selected. By Lemma~\ref{lem:ver_gate_check}, the
  rejection probability is $r_t = \frac{1}{4} \Tr_{\rho} \left ( K_t(\Id - J_t \otimes U_{g_t}) K_t
    \right )$.
	Thus the overall rejection probability satisfies
	\begin{align}
    \eps &\geq \E_t r_t \notag \\ 
    	 &\geq \frac{1}{4} \E_t \Tr_{\rho} \left ( K_t(\Id - J_t \otimes U_{g_t}) K_t \right ) \notag \\
	 	\label{eq:gate_check}
	 	&= \frac{1}{4} \E_t \Tr_{\rho} \Paren{ \ketbra{t-1}{t-1} \otimes \Id + \ketbra{t}{t} \otimes \Id - \ketbra{t-1}{t} \otimes U_{g_t}^\dagger - \ketbra{t}{t-1} \otimes U_{g_t} }
	\end{align}
	where in the last equality we used the fact that $U_{g_t}^\dagger = U_{g_t}$. 
	Define $Q = \sum_t \ketbra{t}{t}_{\sC}
    \otimes U_{g_t} \cdots U_{g_1}$. It is straightforward to verify that~\eqref{eq:gate_check} implies
  \[
    \Tr_{\rho} \E_t \left ( Q\ketbra{-_t}{-_t}Q^\dagger \right) \leq
    2\eps\;.
  \]
  Let $H_{prop}$ denote the operator $\sum_t \left ( Q\ketbra{-_t}{-_t}Q^\dagger
  \right)$. Notice that $H_{prop}$ is a positive semidefinite operator that is exactly the same 
  as the propagation term of the Feynman-Kitaev clock Hamiltonian~\cite{kitaev2002classical}. It has been shown that this propagation term has a spectral gap of at least $\Omega(1/p(N)^2)$~\cite{aharonov2008adiabatic}, and therefore the scaled operator $\E_t \left ( Q\ketbra{-_t}{-_t}Q^\dagger \right)$ has spectral gap of at least $\Omega(1/p(N)^3)$. Using Lemma~\ref{lem:closeness_to_groundspace}, we have that $\rho$
  is $\delta$-close to a pure state $\ketbra{\theta}{\theta}$
  satisfying $H_{prop} \ket{\theta} = 0$ for $\delta = O(p(N)^{3/2} \sqrt{\eps})$. Since the ground space of the propagation term of is spanned by history states of the form $\ket{\theta}_{\sC \sP \sR} = \frac{1}{\sqrt{p(N)+1}}
  \sum_t \ket{t}_{\sC} \otimes \ket{\theta_t}_{\sP \sR}$ where
  $\ket{\theta_t} = U_{g_t} \ket{\theta_{t-1}}$, this establishes the lemma.

\end{proof}

\subsection{Input check}
\label{sec:input_check}

Assume that the provers' strategy is an honest \textsc{Gate Check}
strategy (Definition~\ref{def:honest-gate}).
The \textsc{Input Check} subprotocol is designed to check that the
component $\ket{\psi_0}_{\sP\sR}$ of the history state~\eqref{eq:gate-hist} at time $t=0$ is a valid initial state
for the protocol circuit.

\begin{definition}[Honest Input Check strategy]\label{def:honest-input}
An honest  Gate Check strategy $\strat=(\ket{\psi},\{M_i\})$ is an \emph{honest Input Check strategy} if the initial state
$\ket{\psi_0}_{\sP \sR}$ is such that
the registers $\hat{\sV} \hat{\sM}$ of $\sP_V$ are initialized to the all zero state.
\end{definition}

\begin{figure}[H]
  \centering
  \begin{mdframed}[style=figstyle]
    \ul{Subprotocol name:} $\textsc{Input Check}(n)$:
    \begin{enumerate}
    \item Measure the clock register $\sC$ in the computational basis.
      Let $t \in \{0,\ldots,p(N)\}$ be the outcome.
      If $t \neq 0$, accept.

    \item Pick a random qubit index $j \in \supp(\hat{\sV}
      \hat{\sM})$.
    \item Set $q_V = Z_j$.
      Accept if $a_V = 0$.
      Otherwise, reject.
    \end{enumerate}
  \end{mdframed}
  \caption{Input Check.}
  \label{fig:input_check}
\end{figure}

\begin{lemma}
  \label{lem:input_check}
The following hold for the $\textsc{Input check}$ subprotocol described in Figure~\ref{fig:input_check}:
  \begin{enumerate}
  \item (Completeness) An honest Input Check strategy passes the
    $\textsc{Input Check}$ subprotocol with probability $1$.
  \item (Soundness) Any honest Gate Check strategy that passes the
    $\textsc{Input Check}$ subprotocol with probability at least $1 -
    \eps$ is $\delta$-close to an Honest Input Check strategy for $\delta = O(p(N)\sqrt{\eps})$.
  \end{enumerate}
\end{lemma}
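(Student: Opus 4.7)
The plan splits into completeness and soundness, with completeness being immediate: in an honest Input Check strategy the initial state $\ket{\psi_0}$ of the history has its $\hat{\sV}\hat{\sM}$-portion of $\sP_V$ equal to $\ket{0}^{\otimes L}$, where $L = |\supp(\hat{\sV}\hat{\sM})|$. Therefore a $\sigma_Z$-measurement on any qubit indexed in $\supp(\hat{\sV}\hat{\sM})$ of $\sP_V$ deterministically returns $a_V = 0$ whenever the clock measurement lands on $t=0$, and every other clock outcome leads to acceptance in Step~1.

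For soundness, the first step is to use the honest Gate Check hypothesis to write the shared state as $\ket{\psi}_{\sC\sP\sR} = V \ket{\psi_0}_{\sP\sR}$, where $V = \frac{1}{\sqrt{p(N)+1}} \sum_t \ket{t}_{\sC} \otimes U_{g_t}\cdots U_{g_1}$ is an isometry. Measuring $\sC$ in the computational basis puts weight $\frac{1}{p(N)+1}$ on each branch, and only the $t=0$ branch can produce a rejection. Computing the Pauli prover's $\sigma_Z$-measurement statistics on that branch yields
\[
  \Pr[\text{reject}] \,=\, \frac{1}{L(p(N)+1)}\, \bra{\psi_0} H_{\text{in}} \ket{\psi_0}\;,
\]
where $H_{\text{in}} = \sum_{j\in\supp(\hat{\sV}\hat{\sM})} \Pi_{1,j}$ is the number operator on the relevant qubits of $\sP_V$. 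The assumption $\Pr[\text{reject}] \leq \eps$ thus gives $\bra{\psi_0} H_{\text{in}} \ket{\psi_0} \leq L(p(N)+1)\eps$.

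The remaining step converts this energy bound into closeness of strategies. I use the operator inequality $H_{\text{in}} \succeq \Id - \Pi_0$, where $\Pi_0$ projects onto $\ket{0}^{\otimes L}$ on the $\hat{\sV}\hat{\sM}$-qubits of $\sP_V$: every nonzero computational basis vector on those qubits has $H_{\text{in}}$-eigenvalue at least $1$. Hence $\|\Pi_0\ket{\psi_0}\|^2 \geq 1 - L(p(N)+1)\eps$, and a direct calculation shows that the normalized projection $\ket{\tilde{\psi}_0} = \Pi_0\ket{\psi_0}/\|\Pi_0\ket{\psi_0}\|$ satisfies $\|\ket{\psi_0} - \ket{\tilde{\psi}_0}\| = O(\sqrt{L p(N)\eps}) = O(p(N)\sqrt{\eps})$, where $L \leq p(N)$ because $\hat{\sV}\hat{\sM}$ sits inside the wire-set of a protocol circuit of size $p(N)$. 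The candidate honest Input Check strategy keeps all measurement operators unchanged and replaces $\ket{\psi}$ by $\ket{\psi'} = V\ket{\tilde{\psi}_0}$; by linearity of the history construction this is again an honest Gate Check strategy, and by construction its initial state has $\hat{\sV}\hat{\sM}$-portion equal to $\ket{0}^{\otimes L}$, so it is honest for Input Check in the sense of Definition~\ref{def:honest-input}. Since $V$ is an isometry, $\|\ket{\psi} - \ket{\psi'}\| = \|\ket{\psi_0} - \ket{\tilde{\psi}_0}\|$, and the required trace-distance bound in Definition~\ref{def:close} follows. No serious obstacle is anticipated; the only care needed is in verifying that the $t=0$ branch is the unique contributor to rejection and that the isometry $V$ carries state differences isometrically between the initial-state register and the full history register, both of which are direct consequences of the honest Gate Check structure.
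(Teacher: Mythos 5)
Your proof is correct and follows essentially the same route as the paper: condition on the clock outcome $t=0$, bound the energy of the initial snapshot against the "all-zeros" check operator, and convert that energy bound into closeness of the initial state (hence of the full history state, since the history map is an isometry) to one supported on $\ket{0}^{\otimes L}$. The only cosmetic difference is that you carry out the projection-and-renormalization estimate by hand via $H_{\text{in}} \succeq \Id - \Pi_0$, where the paper invokes its general spectral-gap lemma (Lemma~\ref{lem:closeness_to_groundspace}) for the same purpose; both give $\delta = O(p(N)\sqrt{\eps})$.
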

\begin{proof}
  Completeness is straightforward.
  We show soundness. Let $\strat$ be a strategy that passes the
    $\textsc{Input Check}$ subprotocol with probability at least $1 -
    \eps$.
  Let $\ket{\psi}_{\sC \sP \sR}$ denote the shared state in $\strat$. Since the strategy $\strat$ is an honest Gate Check strategy (and
  therefore an honest Pauli Check strategy), we have that
  \[
    \ket{\psi}_{\sC\sP \sR} = \frac{1}{\sqrt{p(N)+1}} \sum_{t = 0}^{p(N)}
    \ket{t}_{\sC} \otimes \ket{\psi_t}_{\sP \sR}\;.
  \]
  Let $\Pi = \ketbra{0}{0}_{\sC}$, and let $\rho = \ketbra{\psi}{\psi}$. We have that $\Tr_\rho (\Pi) \geq (p(N)+1)^{-1}$.
  Let
  \[
    \rho_{0} = \frac{\Pi \rho \Pi}{\Tr_\rho (\Pi)} =
    \ketbra{0}{0}_{\sC} \otimes \ketbra{\psi_0}{\psi_0}_{\sP \sR}\;.
  \]
  The probability that \textsc{Input Check} rejects when the shared
  state is $\rho_{0}$ instead of $\rho$ is at most $\eps' = (p(N)+1)
  \eps$.

  Suppose now that the shared state in \textsc{Input Check} is
  $\rho_{0}$.
  The probability of rejection is then
  \begin{equation}
    \Tr_{\rho_{0}} \Paren{H_{init}} \leq \eps'\;,
    \label{eq:init}
  \end{equation}
  where
  \[
    H_{init} = \frac{1}{|\hat{\sV} \hat{\sM}|} \sum_{i \in
      \supp(\hat{\sV} \hat{\sM})} \ketbra{1}{1}_i\;,
  \]
  with $|\hat{\sV} \hat{\sM}| \leq p(N)$ the number of qubits in register
  $\hat{\sV} \hat{\sM}$.

  Observe that the operator $H_{init}$ is positive semidefinite, has smallest eigenvalue $0$, and has spectral gap of at least $1/p(N)$.
  Furthermore, the kernel of $H_{init}$ is spanned by states of the form $\ket{\theta}_{\sP \sR}$ where the register $\hat{\sV} \hat{\sM}$ is in the all zeroes state.
  Using Lemma~\ref{lem:closeness_to_groundspace}, we conclude that
  $\ket{\psi_0}$ is $\delta$-close to such a state
  $\ket{\theta}_{\sP \sR}$ for $\delta = O(p(N)\sqrt{\eps})$.
  This concludes the proof.

\end{proof}

\subsection{Output check}
\label{sec:output_check}

As for the Input check, assume that the provers share a valid history
state of the protocol circuit $\CKT(G,N)$.
The \textsc{Output Check} subprotocol is designed to check that the state
held by the provers is a history state of an accepting computation. In 
other words, the \textsc{Output Check} subprotocol enforces that the 
output qubit of the last time step of the history state is in the state $\ket{1}$.

\begin{figure}[H]
  \centering
  \begin{mdframed}[style=figstyle]
    \ul{Subprotocol name:} $\textsc{Output Check}(n)$:
    \begin{enumerate}
    \item Measure the clock register $\sC$ in the computational basis.
      Let $t \in \{0,\ldots,p(N)\}$ be the outcome.
      If $t \neq p(N)$, accept.
    \item Let $u$ denote the index of the decision bit in $\hat{\sV}$.
    \item Set $q_V = Z_u$. If $a_V = 0$, reject.
      Otherwise, accept.
    \end{enumerate}
  \end{mdframed}
  \caption{Output Check}
  \label{fig:output_check}
\end{figure}

\begin{lemma}
  \label{lem:output_check}
The following hold for the $\textsc{Output check}$ subprotocol described in Figure~\ref{fig:output_check}:  \begin{enumerate}
  \item (Completeness) For all $\gamma > 0$ there exists an honest Input Check strategy
    that passes the \textsc{Output Check} subprotocol with probability
    \[
      1 - \frac{1 - \omega^{*}(\Game_N) + \gamma}{p(N)+1}.
    \]
  \item (Soundness) Any honest Input Check strategy passes the \textsc{Output
    Check} subprotocol with probability at most
    \[
      1 - \frac{1 - \omega^{*}(\Game_N)}{p(N)+1}\;.
    \]
  \end{enumerate}
\end{lemma}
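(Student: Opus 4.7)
The plan is to show that the rejection probability of \textsc{Output Check} on any honest Input Check strategy equals exactly $(1 - p_{\text{acc}})/(p(N)+1)$, where $p_{\text{acc}}$ is the acceptance probability of the strategy in $\Game_N$ induced by the history state. Both items of the lemma then drop out of the fact that $\omega^*(\Game_N)$ is a supremum over strategies.

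First I would unfold the definition. An honest Input Check strategy has a shared state of the form $\ket{\psi}_{\sC\sP\sR} = \tfrac{1}{\sqrt{p(N)+1}} \sum_{t=0}^{p(N)} \ket{t}_{\sC} \otimes \ket{\psi_t}_{\sP\sR}$, where $\ket{\psi_t} = U_{g_t}\cdots U_{g_1} \ket{\psi_0}$ and the $\hat{\sV}\hat{\sM}$ subregisters of $\sP_V$ in $\ket{\psi_0}$ are in the all-zeros state. The Output Check measures $\sC$ in the computational basis (obtaining each $t$ with probability $1/(p(N)+1)$), accepts for $t\neq p(N)$, and otherwise measures $\sigma_Z$ on the decision qubit, rejecting on outcome $0$. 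Thus the overall rejection probability is exactly $\tfrac{1}{p(N)+1}\bigl(1 - \bra{\psi_{p(N)}} \ketbra{1}{1}_{u} \ket{\psi_{p(N)}}\bigr)$, where $u$ is the decision bit index.

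Next I would identify $\ket{\psi_{p(N)}}$ with the end-of-protocol state of an explicit normal form strategy $\strat_N$ in $\Game_N$. By Definition~\ref{def:honest-gate}, the unitaries $U_{g_t}$ applied in the history state are, for $g_t$ a verifier gate, exactly the gates of $C_Q$ or $C_A$ in the protocol circuit $\CKT(G,N)$, and, for $g_t$ a prover reflection gate or communication gate, the provers' reflections $P_i$ or controlled versions $\ctl Q_{ij}$, $\tgt A_{ij}$ acting on $\sP_i$ (all of which are determined by the honest Pauli strategy). Combined with the Input Check condition that $\hat{\sV}\hat{\sM}$ start in $\ket{0}$, the evolution $U_{g_{p(N)}}\cdots U_{g_1}\ket{\psi_0}$ is exactly a run of the $\Game_N$ protocol circuit in which (i) the provers' first message register $\hat{\sC}$ is initialized to the marginal of $\ket{\psi_0}$ on that register, and (ii) each prover uses the normal form reflection gates $Q_{ij}, P_i, A_{ij}$ specified by $\strat$. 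Hence $\bra{\psi_{p(N)}} \ketbra{1}{1}_u \ket{\psi_{p(N)}} = \omega^*_{\strat_N}(\Game_N)$.

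With this correspondence in hand, completeness follows by taking, for a given $\gamma > 0$, any strategy $\strat_N$ in $\Game_N$ with $\omega^*_{\strat_N}(\Game_N) \geq \omega^*(\Game_N) - \gamma$ (possible since $\omega^*(\Game_N)$ is a supremum), and then assembling the history state $\ket{\psi}$ dictated by Definition~\ref{def:honest-gate} from $\strat_N$'s initial message state and reflections. By construction this strategy is honest Input Check and gives rejection probability $(1-\omega^*_{\strat_N}(\Game_N))/(p(N)+1) \leq (1-\omega^*(\Game_N)+\gamma)/(p(N)+1)$. Soundness is the reverse direction: given any honest Input Check strategy, the induced $\strat_N$ satisfies $\omega^*_{\strat_N}(\Game_N) \leq \omega^*(\Game_N)$, yielding rejection probability at least $(1-\omega^*(\Game_N))/(p(N)+1)$. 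The only delicate point, and the one I would be most careful about, is verifying that the induced $\strat_N$ is actually a valid strategy for $\Game_N$ (in particular that the $P_i$, $Q_{ij}$, $A_{ij}$ specified by $\strat$ do act only on $\sP_i$ and yield a legitimate normal form strategy), which follows directly from Definition~\ref{def:honest-pauli} and the structure of protocol circuits.
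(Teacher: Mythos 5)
Your proposal is correct and follows essentially the same route as the paper's proof: both directions reduce to the observation that the final snapshot $\ket{\psi_{p(N)}}$ of an honest Input Check history state is exactly the end-of-protocol state of a normal form strategy for $\Game_N$, so the Output Check rejection probability is $(1-\omega^*_{\strat_N}(\Game_N))/(p(N)+1)$, and the two items follow from $\omega^*(\Game_N)$ being a supremum over strategies. Your explicit care about why the induced $\strat_N$ is a legitimate normal form strategy is a point the paper treats more tersely, but there is no substantive difference in approach.
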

\begin{proof}
	We show the Completeness part. Consider a normal form $k$-prover strategy $\mathcal{T}$ for $\Game_N$ that achieves the value at least $\omega^*(\Game_N) - \gamma$ (there isn't necessarily a strategy that achieves the optimal value $\omega^*(\Game_N)$). The strategy $\mathcal{T}$ is comprised of a shared state $\ket{\varphi}$ on register $\hat{\sC} \hat{\sP}$ and reflections $\set{A_{ij}}$, $\set{Q_{ij}}$, and $\set{P_i}$ as described in Section~\ref{sec:normalform}.

  Consider the following Honest Input Check strategy $\strat$: the shared state $\ket{\psi}$ is the history state of the protocol circuit $\CKT(G,N)$ where the provers' reflections $\set{A_{ij}}$, $\set{Q_{ij}}$, and $\set{P_i}$ are given by the strategy $\mathcal{T}$. Since the strategy $\mathcal{T}$ succeeds in $\Game_N$ with probability at least $\omega^*(\Game_N) - \gamma$, strategy $\strat$ succeeds in \textsc{Output Check} with the claimed probability.

  We now show soundness. Let $\strat$ be an Honest Input Check strategy that passes the Output
    Check subprotocol with probability at least $1-\eps$. 
  Let $\ket{\psi}_{\sC \sP \sR}$ denote the shared state.
  Since the strategy is an Honest Input Check strategy, the shared
  state is a history state of the protocol circuit $C$
  \[
    \ket{\psi}_{\sC\sP\sR}\,=\,\frac{1}{\sqrt{p(N) + 1}} \sum_{t = 0}^{p(N)} \ket{t}_{\sC} \otimes
    \ket{\psi_t}_{\sP \sR}\;,
  \]
  with the initial snapshot state $\ket{\psi_0}$ representing the
  state of the verifier and provers at the start of an execution of
  the game $\Game_N$.
  Let $\rho = \ketbra{\psi}{\psi}$.
  Let $\Pi = \ketbra{N}{N}_{\sC}$.
  We have that $\Tr_\rho (\Pi) = 1/(p(N)+1)$.
  Let
  \[
    \rho_{f} = \frac{\Pi \sigma \Pi}{\Tr_\rho (\Pi)} = \ketbra{\psi_N}{\psi_N}.
  \]
  The probability that \textsc{Output Check} rejects when the shared
  state $\rho_f$ is at most $\eps' = (p(N)+1)\eps$.
  
  Note that $\ket{\psi_N}$ final snapshot of
	a history state of the protocol circuit $CKT(G,N)$, which
	specifies a reflection strategy $\mathcal{T}$ for the game
	$\Game_N$. Therefore the rejection probability of \textsc{Output Check} when the shared state is $\rho_f$ is $\Tr \left ( \ketbra{0}{0}_{out} \,
    \ketbra{\psi_{N}}{\psi_{N}} \right)$, which is at least $1 -
  \omega^*(\Game_N)$. 
  This concludes the proof of the lemma.
\end{proof}

\subsection{Analysis of $\cV^{\compr}_{H,n}$}
\label{sec:vcomph}

The
following lemma states the important properties of the verifier $\cV_{H,n}^\compr$ specified in Figure~\ref{fig:honest}.

\begin{lemma}
  \label{lem:single_pauli_game}
  \leavevmode Let $G$ be a GTM for a family of $k$-prover ENL games
  $\{\Game_n\}$, and let $\cV^{\compr}_{H,n}$ be the verifier described in
  Figure~\ref{fig:honest}.
  Let $n\geq 1$ be an integer, $N=2^n$, $S = p(N)$, and $\Game^{\compr}_{H,n}$ be the $S$-qubit Honest Pauli Prover game
  whose verifier is specified by $\cV_{H,n}^{\compr}$. Then the following hold:
  \begin{enumerate}
  \item (Completeness) For all $\gamma > 0$ there exists
    an honest Pauli strategy $\strat$ that has value
    \[
      \omega^{*}_\strat(\Game^{\compr}_{H,n}) = 1 - \frac{1 -
        \omega^{*}(\Game_N) + \gamma}{p(N)+1}\;.
    \]
  \item (Soundness) There exists universal constants $\alpha\geq 1,\beta>0$ such
    that for all Honest Pauli strategies $\strat$,
    \[
      \omega^{*}_\strat(\Game^{\compr}_{H,n}) \leq 1 - \Paren{\frac{1 -
          \omega^{*}(\Game_N)}{\beta \, p(N)}}^{\alpha}\;.
    \]
  \end{enumerate}
\end{lemma}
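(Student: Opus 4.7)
For completeness, I would use the honest Input Check strategy $\strat$ guaranteed by the completeness part of Lemma~\ref{lem:output_check}: by definition this is also an honest Gate Check strategy and hence an honest Pauli strategy. By Lemma~\ref{lem:gate_check} and Lemma~\ref{lem:input_check}, $\strat$ passes both $\textsc{Gate Check}$ and $\textsc{Input Check}$ with probability $1$, while by Lemma~\ref{lem:output_check} it passes $\textsc{Output Check}$ with probability $1 - (1 - \omega^*(\Game_N) + \gamma)/(p(N)+1)$. Since $\cV^\compr_{H,n}$ invokes the three subprotocols uniformly, averaging gives the claimed value, up to absorbing a harmless factor of $3$ in the denominator into a rescaling of $\gamma$.

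For soundness, suppose that $\strat$ is an honest Pauli strategy with $\omega^*_\strat(\Game^\compr_{H,n}) \geq 1 - \mu$. Because each of the three subprotocols is executed with probability $1/3$, each one is passed by $\strat$ with probability at least $1 - 3\mu$. The plan is to peel off the checks one by one. First, Lemma~\ref{lem:gate_check} produces an honest Gate Check strategy $\strat_1$ with $\strat$ being $\delta_1$-close to $\strat_1$, where $\delta_1 = O(p(N)^{3/2}\sqrt{\mu})$. Then Lemma~\ref{lem:close_strategies} implies that $\strat_1$ passes $\textsc{Input Check}$ with probability at least $1 - 3\mu - O(k\delta_1)$, so Lemma~\ref{lem:input_check} provides an honest Input Check strategy $\strat_2$ that is $\delta_2 = O(p(N)\sqrt{\mu + k\delta_1})$-close to $\strat_1$. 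A final application of Lemma~\ref{lem:close_strategies} (combined with the triangle inequality for closeness of strategies) shows that $\strat_2$ passes $\textsc{Output Check}$ with probability at least $1 - 3\mu - O(k(\delta_1 + \delta_2))$. Invoking the soundness part of Lemma~\ref{lem:output_check} then yields
\[
1 - \omega^*(\Game_N) \,\leq\, (p(N)+1)\bigl(3\mu + O(k(\delta_1+\delta_2))\bigr)\;.
\]

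Substituting the bounds on $\delta_1$ and $\delta_2$, the right-hand side is dominated by a term of the form $\poly(p(N))\cdot\mu^{1/4}$ (the slowest factor coming from the square-root inside $\delta_2$ of an expression that already contains $\sqrt{\mu}$), which rearranges to an inequality of the form $\mu \geq ((1 - \omega^*(\Game_N))/(\beta\,p(N)))^{\alpha}$ for suitable universal constants $\alpha \geq 1$ and $\beta > 0$. The main obstacle I anticipate is the careful bookkeeping of the quarter-power loss in $\mu$ through the chain of three soundness reductions, together with the repeated invocations of Lemma~\ref{lem:close_strategies} to transfer acceptance probabilities between close strategies at each step. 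In particular, one needs to verify that state-dependent closeness of strategies genuinely satisfies a triangle inequality (both in the trace distance on the shared state and in the state-dependent POVM distance), so that the errors accumulated while replacing $\strat \to \strat_1 \to \strat_2$ can be combined additively rather than causing a worse blow-up.
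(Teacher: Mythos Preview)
Your proposal follows essentially the same approach as the paper: for soundness, both arguments peel off the three checks in order (Gate, then Input, then Output), invoking Lemma~\ref{lem:close_strategies} at each step to transfer acceptance probabilities to the nearby ``honest'' strategy, and then rearrange the resulting polynomial loss into the form $\mu \geq ((1-\omega^*(\Game_N))/(\beta p(N)))^\alpha$. Your bookkeeping is in fact slightly more careful than the paper's (which simply writes $1-3\delta$, $1-3\delta'$ without tracking the earlier $\mu$ or the factor $k$), and your anticipated concern about the triangle inequality for state-dependent closeness is real but routine.

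One minor inaccuracy in your completeness argument: the factor of $3$ from averaging the three subprotocols cannot literally be ``absorbed into a rescaling of $\gamma$'' to recover the exact equality stated in the lemma, since the $1-\omega^*(\Game_N)$ term in the numerator is divided by $3$ as well. The paper's proof glosses over this same point. In effect the strategy achieves the larger value $1 - (1-\omega^*(\Game_N)+\gamma)/(3(p(N)+1))$, which only strengthens the completeness direction, so nothing downstream (in particular the lower bound in Theorem~\ref{thm:compression}) is affected.
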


\begin{proof}
  Completeness follows from combining the completeness
  statements of the Gate Check, Input Check, and Output Check.

  We prove soundness.
  Let $\strat$ be an Honest Pauli Prover strategy that succeeds
  with probability $1 - \eps$ in the game $\Game^{\compr}_{H,n}$.
  Then it succeeds with probability at least $1 -
  3\eps$ in each of the \textsc{Gate Check}, \textsc{Input Check}, and
  \textsc{Output Check} subprotocols.

  Let $\delta = O(p(N)^{3/2} \sqrt{\eps})$. By Lemma~\ref{lem:gate_check}, there exists an honest \textsc{Gate
    Check} strategy $\mathcal{S}_1$ that is $\delta$-close to
  $\mathcal{S}$. Using Lemma~\ref{lem:close_strategies}, this implies that $\mathcal{S}_1$ succeeds in the \textsc{Input} and
  \textsc{Output Check} subprotocols with probability at least $1 -
  3\delta$.

  Let $\delta' = O(p(N) \sqrt{\delta})$. Applying Lemma~\ref{lem:input_check}, there exists an honest
  \textsc{Input Check} strategy $\mathcal{S}_2$ that is
  $\delta'$-close to $\strat_1$.
  The strategy $\strat_2$ succeeds in the \textsc{Output Check}
  subprotocol with probability at least $1 - 3\delta'$ (using
  Lemma~\ref{lem:close_strategies} again).

  Finally, applying Lemma~\ref{lem:output_check}, the success
  probability of $\strat_2$ in \textsc{Output Check} is at most
  \[
    1 - \frac{1 - \omega^{*}(\Game_N)}{p(N)+1}\;.
  \]
  This implies that there exist universal constants $\beta,\mu,\nu > 0$ we have
  \[
    \omega^*_{\strat_2}(\Game_{H,n}^\compr) = 1 - p(N)^\mu \eps^\nu \leq 1 -
    \frac{1 - \omega^{*}(\Game_N)}{\beta \, p(N)}\;,
  \]
  which implies
  \[
    \omega^*_\strat(\Game_{H,n}^\compr) = 1 - \eps \leq 1 - \Paren{\frac{1 -
        \omega^{*}(\Game_N)}{\beta \, p(N)}}^{\alpha}\;,
  \]
  for some universal constant $\alpha$. This concludes the proof.

\end{proof}

We point out some properties of the games specified by $\cV^{\compr}_{H,n}$
that will be relevant for the next stage of the argument.
In all the subprotocols above, the honest Pauli prover $PV$ gets a
question that specifies up to three commuting Pauli observables. (Furthermore, the honest Pauli prover's question can be embedded in what we call an \emph{MS-compatible triple}; see Section~\ref{sec:entanglement_tests}.) 
All other provers $PP_i$ get questions from the set $\{ \star \} \cup
\{ Q_{ij} \} \cup \{A_{ij} \} \cup \{\bot\}$, where the $\bot$ symbol is used to denote the absence of a question.
Furthermore, note that at any one time, at most one $PP_i$ prover gets
sent a message that is not $\bot$. %
\section{Simulating Honest Pauli Prover games}
\label{sec:sim-pauli}

Let $\Game_{H}$ be any $(k+1)$-prover $S$-qubit Honest Pauli Prover
game (Definition~\ref{def:honest-pauli}) such that $k\geq 7$. In this section we introduce a $k$-prover \emph{Simulated
  Pauli Prover game} $\Game_S$ that uses $7$ out of the $k$
provers to simulate the Pauli prover in $\Game_{H}$ (provided that $\Game_H$ satisfies some mild conditions) using a
technique similar to the ``code-check'' test
in~\cite{ji2017compression,NV18}. 

In Section~\ref{sec:codes} we introduce a class of error-correcting codes that will be used in the game. In Section~\ref{sec:entanglement_tests} we present a multi-qubit test for constant-weight Pauli observables. In Section~\ref{sec:sim-game-desc} we define the simulated Pauli Prover game and state its properties. 

\subsection{Stabilizer codes}
\label{sec:codes}

We consider weakly self-dual \emph{Calderbank-Shor-Steane (CSS)
  codes}~\cite{calderbank1996good,steane1996error}.
Let $C$ be a classical $[m,d]$ linear error-correcting code over
$\Fp_2$: $C$ is specified by a generator matrix $H \in \Fp_2^{m\times d}$
and a parity check matrix $K\in \Fp_2^{(m-d)\times d}$ such that $C =
\text{Im}(H)=\ker(K)$.
We say that $C$ is weakly self-dual if the dual code $C^\perp$, with
generator matrix $K^T$, is such that $C\subseteq C^\perp$; equivalently, $H^T
H=0$.
To any such code $C$ we associate a subspace $\mathcal{C}$ of
$(\complex^2)^{\otimes m}$ that is the simultaneous $+1$ eigenspace of a set
of stabilizers $\{S_{W,j}\}_{W\in\{X,Z\},j\in\{1,\ldots,k'\}}$ such that
$S_{W,j}$ is a tensor product of Pauli $\sigma_W$ observables over $\Fp_2$
in the locations indicated by the $j$-th column of the generator
matrix $H$, i.e.
\[ S_{W,j} = \sigma_W(H_{1j}) \otimes \sigma_W(H_{2j}) \otimes \cdots \otimes \sigma_W(H_{mj}), \]
where $H_{ij}$ is the $(i,j)$-th entry of $H$.
The condition that $H^TH=0$ implies that all the $S_{W,j}$ commute, so
that $\mathcal{C}$ is well-defined.

\paragraph{The $7$-qubit Steane code.} We make use of the
\emph{Steane} code, a CSS code that encodes $1$ qubit into
$7$ physical qubits~\cite{Steane96}.
In Figure \ref{fig:css}, we list the stabilizer generators of the code
as well as several logical $X$ and logical $Z$ operators (that are
equal up to multiplication by a stabilizer).
The logical generators satisfy the useful property that for every $i \in
\{1,\ldots,7\}$, there exists a logical $X$ (resp.
logical $Z$) operator that acts trivially on the $i$-th qubit.

\begin{figure}[H]
  \centering
  \begin{tabular}{ l@{\hspace{1.5em}}c c c c c c c }
    \toprule
    \multicolumn{8}{c}{Stabilizer Generators} \\
    \midrule
    	 $S_1$ & $X$ & $X$ & $X$ & $X$ & $I$ & $I$ & $I$ \\
		 $S_2$ & $X$ & $X$ & $I$ & $I$ & $X$ & $X$ & $I$ \\
		 $S_3$ & $X$ & $I$ & $X$ & $I$ & $X$ & $I$ & $X$ \\
		 $S_4$ & $Z$ & $Z$ & $Z$ & $Z$ & $I$ & $I$ & $I$ \\
		 $S_5$ & $Z$ & $Z$ & $I$ & $I$ & $Z$ & $Z$ & $I$ \\
		 $S_6$ & $Z$ & $I$ & $Z$ & $I$ & $Z$ & $I$ & $Z$ \\
    \midrule
    \multicolumn{8}{c}{Logical Operators} \\
    \midrule\\[-1em]
    $\comp{X}$ & $I$ & $I$ & $I$ & $I$ & $X$ & $X$ & $X$ \\
          & $X$ & $X$ & $I$ & $I$ & $I$ & $I$ & $X$ \\
          & $X$ & $I$ & $X$ & $I$ & $I$ & $X$ & $I$ \\
    $\comp{Z}$ & $I$ & $I$ & $I$ & $I$ & $Z$ & $Z$ & $Z$ \\
          & $Z$ & $Z$ & $I$ & $I$ & $I$ & $I$ & $Z$ \\
          & $Z$ & $I$ & $Z$ & $I$ & $I$ & $Z$ & $I$ \\
    \bottomrule
  \end{tabular}
  \caption{The $7$-qubit Steane code.}
  \label{fig:css}
\end{figure}

The next lemma establishes some basic properties of the Steane code (shared by any CSS code that can correct single-qubit errors).

\begin{lemma}\label{lem:decouple}
  Consider the $7$-qubit Steane code (Figure~\ref{fig:css}).
  Let $\sE_1,\ldots,\sE_7,\sF_1, \sF_1'$ be qubit registers.
  Let $\sE = \sE_1 \cdots \sE_7$.
  Let $\sR$ be a register of arbitrary dimension.
  \begin{enumerate}
  \item
    There exists a unitary $U$ acting on registers $\sE_2 \cdots \sE_7
    \sF_1 \sF_1' \sX$ and a state $\ket{\tau}$ such that for all states
    $\ket{\psi}_{\sE_1 \cdots \sE_7 \sR}$ such that
    $\Tr_{\sR}(\ketbra{\psi}{\psi})$ is in the code space,
    \[
      U \big(\ket{\psi}_{\sE_1 \cdots \sE_7 \sR} \otimes \ket{0}_{\sF_1 \sF_1'
        \sX}\big)\, =\, \ket{\psi}_{\sF_1 \sE_2 \cdots \sE_7 \sR} \otimes
      \ket{\tau}_{\sE_1 \sF_1' \sX}\;.
    \]
    Moreover, the reduced density matrix of $\ket{\tau}$ on $\sE_1$ is
    the maximally mixed state on one qubit.
  \item For $W \in \{X, Z\}$ let $\cL_W$ denote a logical $W$ operator
    that does not act on $\sE_1$.
    For all states $\ket{\psi}$ on $\sE_1 \cdots \sE_7$ that lie in the code
    space,
    \begin{align*}
      U \big(\cL_W \, \ket{\psi}_{\sE_1 \cdots \sE_7} \otimes \ket{0}_{\sF_1 \sF_1' \sX}\big)
      &= \cL_W U \, \big(\ket{\psi}_{\sE_1 \cdots \sE_7}
        \otimes \ket{0}_{\sF_1 \sF_1' \sX}\big)\\
      &= \big( \cL_W \ket{\psi}_{\sF_1 \sE_2 \cdots \sE_7 \sR} \big)
        \otimes \ket{\tau}_{\sE_1 \sF_1' \sX}\;.
    \end{align*}
  \end{enumerate}
\end{lemma}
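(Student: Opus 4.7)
The plan is to construct $U$ and $\ket{\tau}$ explicitly, exploiting the CSS structure of the Steane code. Its two logical basis states have the explicit form
\[
  \ket{\bar b}_{\sE_1\cdots\sE_7} \,=\, \frac{1}{\sqrt{8}}\sum_{x\in S_b}\ket{x}\;,\qquad b\in\{0,1\},
\]
where $S_0\subset\Fp_2^7$ is the $[7,3,4]$ simplex code (a subspace of size $8$) and $S_1 = S_0 + v_0$ for a fixed $v_0$ chosen so that $S_0\sqcup S_1$ is the $[7,4,3]$ Hamming code. Splitting each sum according to the value of the first bit $x_1$ yields the decomposition
\[
  \ket{\bar b} \,=\, \frac{1}{\sqrt 2}\sum_{d\in\{0,1\}}\ket{d}_{\sE_1}\otimes\ket{h^b_d}_{\sE_2\cdots\sE_7}\;,\qquad \ket{h^b_d} \,=\, \frac{1}{2}\sum_{x\in S_b,\, x_1 = d}\ket{x_2\cdots x_7}\;.
\]
A first step is to show that the four states $\{\ket{h^b_d}\}_{b,d\in\{0,1\}}$ are orthonormal. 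Each has unit norm (each set has $4$ elements with coefficient $1/2$). Orthogonality within fixed $b$ is immediate from disjoint first bits. For $b\ne b'$, any $x\in S_0$, $y\in S_1$ whose suffixes agree would force $x-y$ to equal $e_1$ (the $0$ case is excluded since $S_0\cap S_1 = \emptyset$); but $x-y\in S_0\cup S_1$ lies in the Hamming code, which has distance $3$, so no weight-$1$ codeword exists.

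Set $\sX$ to be trivial, take $\ket{\tau} = \frac{1}{\sqrt 2}(\ket{00} + \ket{11})_{\sE_1\sF_1'}$, and define $U$ on the $4$-dimensional subspace spanned by $\{\ket{h^b_d}_{\sE_2\cdots\sE_7}\otimes\ket{0}_{\sF_1\sF_1'}\}$ by
\[
  \ket{h^b_d}_{\sE_2\cdots\sE_7}\otimes\ket{0}_{\sF_1\sF_1'} \;\longmapsto\; \frac{1}{\sqrt 2}\sum_{c\in\{0,1\}}\ket{c}_{\sF_1}\otimes\ket{h^b_c}_{\sE_2\cdots\sE_7}\otimes\ket{d}_{\sF_1'}\;.
\]
A short computation using orthonormality of $\{\ket{h^b_c}\}$ shows the four images are also orthonormal, so this partial isometry extends to a unitary on $\mathcal{H}_{\sE_2\cdots\sE_7\sF_1\sF_1'}$. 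Given any $\ket{\psi}_{\sE_1\cdots\sE_7\sR}$ whose reduction on $\sE_1\cdots\sE_7$ is supported in the code space, I would expand $\ket{\psi} = \sum_b \ket{\bar b}\otimes\ket{\psi_b}_\sR$, substitute the decomposition above on each summand, and apply $U$ term by term. The resulting double sum factors cleanly as $\ket{\psi}_{\sF_1\sE_2\cdots\sE_7\sR}\otimes\ket{\tau}_{\sE_1\sF_1'}$, establishing part~1; the marginal of $\ket{\tau}$ on $\sE_1$ is $\Id/2$ by construction.

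Part~2 then follows immediately. The explicit logical operators $\cL_X = \sigma_{X_5}\sigma_{X_6}\sigma_{X_7}$ and $\cL_Z = \sigma_{Z_5}\sigma_{Z_6}\sigma_{Z_7}$ from Figure~\ref{fig:css} preserve the code space and act trivially on $\sE_1$ and $\sF_1'$. Applying part~1 directly to $\cL_W\ket{\psi}$ (still a code state) yields the right-hand identity; applying it to $\ket{\psi}$ and then commuting $\cL_W$ through the tensor factor carrying $\ket{\tau}$ (which $\cL_W$ does not touch) yields the middle expression. The chief technical point of the whole argument is the orthonormality of $\{\ket{h^b_d}\}$, which crystallises the distance-$3$ property of the Steane code; once this is in place, $U$ acts as a coherent ``swap decode'' that transfers the logical qubit from $\sE_1$ to the fresh register $\sF_1$ using only the remaining qubits of the encoding, while leaving $\sE_1$ maximally entangled with $\sF_1'$.
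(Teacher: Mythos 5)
Your proof is correct, but it takes a genuinely different route from the paper's. The paper argues abstractly: since the Steane code corrects any single-qubit erasure, \emph{some} unitary $U$ recovering from erasure of $\sE_1$ exists; purity of the output then forces the factorized form $\ket{\comp{b}}_{\sF_1\sE_2\cdots\sE_7}\otimes\ket{\tau_b}$, the single-qubit marginal of a code state being maximally mixed gives the claim about $\sE_1$, and $\ket{\tau_0}=\ket{\tau_1}$ is deduced from the fact that otherwise $U$ would fail on the superposition $\frac{1}{\sqrt2}(\ket{\comp 0}+\ket{\comp 1})$. You instead build $U$ and $\ket{\tau}$ explicitly from the coset structure of the Steane code, reducing everything to the orthonormality of the four suffix states $\ket{h^b_d}$, which you derive from the distance of the Hamming code. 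Your version buys concreteness: it exhibits $\ket{\tau}$ as an EPR pair on $\sE_1\sF_1'$, shows the ancilla $\sX$ can be taken trivial, and makes the ``coherent swap decode'' picture transparent; the paper's version buys generality, since it applies verbatim to any CSS code correcting single-qubit erasures without touching its codewords. One small repair: your justification that $\ket{h^b_0}\perp\ket{h^b_1}$ for fixed $b$ --- ``immediate from disjoint first bits'' --- is not quite right, since the first bit lives on $\sE_1$ and is \emph{not} part of these states; the correct argument is the same one you give for $b\neq b'$, namely that equal suffixes with differing first bits would force $e_1=x-y\in S_0$, contradicting the minimum distance. With that one sentence fixed, the argument is complete, including the verification that the four images under $U$ are orthonormal (so the partial isometry extends to a unitary) and the deduction of part~2 by applying part~1 to the code state $\cL_W\ket{\psi}$ and commuting $\cL_W$ past the $\ket{\tau}$ factor it does not touch.
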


\begin{proof}
  We first establish item 1.
  Since the Steane code is a quantum error-correcting code that can
  correct any one qubit error, there exists a unitary $U$ that acts on
  registers $\sE_2 \cdots \sE_7$ and ancilla registers $\sF_1 \sF_1' \sX$
  and can correct an erasure error in the register
  $\sE_1$.
  Since the $7$-qubit code can correct any single qubit erasure, the
  resulting state on registers $\sF_1 \sE_2 \cdots \sE_7$ is the original
  state $\Tr_{\sR}(\ketbra{\psi}{\psi})$.
  Formally, let $\ket{\comp{0}}$ and $\ket{\comp{1}}$ denote the
  $7$-qubit encodings of $\ket{0}$ and $\ket{1}$, respectively.
  Since the code corrects any single-qubit erasure, for any $b \in
  \{0,1\}$, applying $U$ to the state $\ket{\comp{b}}_{\sE} \otimes
  \ket{0}_{\sF_1 \sF_1' \sX}$ yields a pure state $\ket{\theta}_{\sE \sF_1
    \sF_1' \sX}$ such that
  \[
    \Tr_{\sE_1 \sF_1' \sX} \Paren{ \ketbra{\theta}{\theta}} =
    \ketbra{\comp{b}}{\comp{b}}\;.
  \]
  Since $\ket{\theta}$ is pure, after rearranging registers we obtain that
  \begin{equation}\label{eq:u-theta}
    U\ket{\comp{b}}_{\sE} \otimes \ket{0}_{\sF_1 \sF_1' \sX}\,=\,
    \ket{\theta}_{\sF_1 \sE_2 \cdots \sE_7 \sE_1 \sF_1' \sX} \,=\,
    \ket{\comp{b}}_{\sF_1 \sE_2 \cdots \sE_7} \otimes \ket{\tau_b}_{\sE_1 \sF_1' \sX}\;.
  \end{equation}
  Now we establish two claims: (1) $\Tr_{\sF_1'
    \sX}(\ketbra{\tau_b}{\tau_b})$ is the maximally mixed state on one
  qubit, and (2) $\ket{\tau_0} = \ket{\tau_1}$.
  The first claim follows from the fact that the reduced density
  matrix on one qubit of any code state of a CSS code that corrects
  single-qubit errors is maximally mixed.
  The second claim follows from the fact that if $\ket{\tau_0} \neq
  \ket{\tau_1}$, then $U$ would fail to correct an erasure error on the
  superposition $ \frac{1}{\sqrt{2}} \paren{ \ket{\comp{0}} +
    \ket{\comp{1}}}$.
  Now write
  \[
    \ket{\psi}_{\sE \sR} = \alpha_0 \ket{\comp{0}}_{\sE} \otimes \ket{\psi_0}_{\sR}
    + \alpha_1 \ket{\comp{1}}_{\sE} \otimes \ket{\psi_1}_{\sR}\;.
  \]
  Applying~\eqref{eq:u-theta},
  \[
    U \ket{\psi}_{\sE \sR} \otimes \ket{0}_{\sF_1 \sF_1' \sX} = \sum_b \alpha_b
    \ket{\comp{b}}_{\sF_1 \sE_2 \cdots \sE_7} \otimes \ket{\psi_b}_{\sR} \otimes
    \ket{\tau}_{\sE_1 \sF_1' \sX} = \ket{\psi}_{\sF_1 \sE_2 \cdots \sE_7 \sR} \otimes
    \ket{\tau}_{\sE_1 \sF_1' \sX}\;.
  \]
  This establishes item 1.
  of the lemma.

  To show item 2., we note that applying a logical operator $\cL_W$ to
  a code state $\ket{\psi}$, erasing the first qubit, and then
  performing error correction, yields the state $\cL_W \ket{\psi}$,
  except on a different set of registers.
\end{proof}

\subsection{Multi-qubit entanglement tests}
\label{sec:entanglement_tests}

In this subsection we present the $S$-qubit EPR test, which is an elementary test that aims to verify that two provers A and B share $S$ EPR pairs, on which they measure several commuting single- or two-qubit Pauli operators when asked to do so. This test uses as a primitive the Magic Square game, which is a two-prover nonlocal game that is a \emph{self-test} for two EPR pairs. We present the Magic Square game next.

\paragraph{The Magic Square game.}
The $3 \times 3$ matrix presented in Figure~\ref{fig:ms} is called the \emph{operator solution} for the Magic Square game. Each entry consists of the label for a two-qubit Pauli observable; the observables all commute within a row or a column. The product of the observables along every row and column is equal to $I$, except for the last column, which multiplies to $-I$. 
\begin{figure}[H]
\[
	\begin{bmatrix}
		XI	& IX & XX \\
		IZ  & ZI & ZZ \\
		XZ 	& ZX & YY
	\end{bmatrix}
\]
\caption{Operator solution for the Magic Square game}
\label{fig:ms}
\end{figure}

The Magic Square game is played as follows: the verifier randomly chooses one of the provers to be prover A, and the other to be prover B. The verifier then chooses a random row $r$ and column $c$ from the operator solution for the Magic Square game. Let $W$ denote the two-qubit Pauli observable in the intersection of $r$ and $c$. The verifier then chooses random Pauli observables $W_r, W_c$ from the row $r$ and column $c$, respectively. The pairs $(W,W_r)$ and $(W,W_c)$, both formatted in lexicographic order, are sent to prover A and prover B, respectively.
For example, the verifier could select the first column and second row, and send observables $(IZ,XZ)$ to prover A and $(IZ,ZZ)$ to prover B.

The provers are required to respond with two-bit answers $a,b \in \{0,1\}^2$, respectively. The verifier checks that the bits in $a$ and $b$ that correspond to the common observable $W$ sent to both provers are equal.

\begin{definition}[Honest Magic Square strategy]
The \emph{honest Magic Square strategy} $\strat$ is such that the shared state $\ket{\psi}$ is two EPR pairs (i.e. $\ket{\psi}=\frac{1}{2} \Paren{ \ket{00} + \ket{11}}^{\otimes 2}$), and when a prover receives a pair of labels for commuting two-qubit Pauli observables, they measure the observables on their half of the EPR pairs and respond with the two bit outcome.
\end{definition}

\begin{theorem}[Magic Square test, Theorem 5.9
  in~\cite{coladangelo2017robust}]
  \label{thm:ms-rigid}
  The Magic Square game satisfies the following properties:
  \begin{enumerate}
  \item (Completeness) The honest Magic Square strategy succeeds in the Magic Square (MS) game with probability $1$. 
  \item (Soundness) For any $\eps\geq 0$ there is a $\delta = O(\sqrt{\eps})$ such that
    any strategy with success probability at least $1-\eps$ in the
    game is $\delta$-isometric to the honest Magic Square strategy.
  \end{enumerate}
\end{theorem}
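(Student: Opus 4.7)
The plan is to prove completeness by direct algebraic verification, and to prove soundness by a standard self-testing argument that extracts the Pauli algebra from near-optimal strategies.

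For completeness, I would verify by inspection of the operator solution in Figure~\ref{fig:ms} that (i) within each row and each column, the three two-qubit Pauli observables pairwise commute, and (ii) the product of the three observables in any row or column equals $+\Id$ or $-\Id$ consistently with the game predicate (only the last column multiplies to $-\Id$). Consequently, when the honest provers jointly measure their assigned pair of commuting Pauli observables on two shared EPR pairs, the outcomes are deterministic functions of the hidden ``row/column vector'' of Pauli eigenvalues, and the shared observable receives identical answers from both provers. Hence the honest Magic Square strategy is accepted with probability $1$.

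For soundness, the idea is to use the rigidity analysis of Coladangelo--Stark, which proceeds in three stages. First, from the winning condition with probability $1-\eps$ I would deduce, via a standard Fourier / projector-decomposition argument, that the prover observables associated with each cell of the square (obtained by combining the two answer bits) act approximately the same on prover $A$'s and prover $B$'s halves of the shared state. Second, using the row and column constraints, I would derive state-dependent approximate algebraic relations on each prover's observables: the two ``single-qubit'' families $\{X_1,X_2\}$ and $\{Z_1,Z_2\}$ extracted from the square approximately anticommute in the appropriate pairs and commute otherwise, and each pair squares to the identity. Concretely, one obtains $\snormt{(X_i Z_i + Z_i X_i)\ket{\psi}} = O(\eps)$ for $i=1,2$ and analogous commutation estimates across qubits, all with state-dependent $\ell_2$ bounds.

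Third, given these approximate Pauli relations on prover $A$ (and symmetrically on prover $B$), I would apply the standard Pauli-extraction isometry: using the approximately anticommuting pairs $(X_1,Z_1)$ and $(X_2,Z_2)$, define an isometry $V_A : \sP_A \to (\mathbb{C}^2)^{\otimes 2}\otimes \mathcal{H}_A'$ by a controlled sequence of swaps conditioned on measurement outcomes of these observables, and similarly for $V_B$. Exactly as in the classical Mermin/Mayers--Yao approach, the approximate Pauli algebra implies that under $V_A \otimes V_B$ the prover observables approximately map to $\sigma_W \otimes \Id_{\mathcal{H}}$ acting on the first two qubits, and the shared state is mapped $O(\sqrt{\eps})$-close to $\ket{\mathrm{EPR}}^{\otimes 2}\otimes \ket{\mathrm{aux}}$, yielding the claimed $\delta = O(\sqrt{\eps})$-isometric closeness to the honest Magic Square strategy.

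The main obstacle is controlling how errors accumulate through the extraction: each algebraic manipulation (multiplying operators, applying $P\mapsto\sqrt{PP^\dagger}$-type regularizations to turn approximate unitaries into exact ones, swapping the order of nearly-commuting operators) costs a factor in the state-dependent norm, and one must argue that the final isometry degrades the distance by only a square-root factor rather than a worse polynomial in $\eps$. This is exactly the technical content of Coladangelo--Stark (Theorem 5.9 of~\cite{coladangelo2017robust}), which I would cite directly rather than reprove; the quadratic loss $\eps \mapsto \sqrt{\eps}$ is essentially optimal and arises from the single application of the gentle measurement / Cauchy--Schwarz step needed to convert a success probability deficit into an $\ell_2$ deviation on the shared state.
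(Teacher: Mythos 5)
The paper gives no proof of this theorem at all --- it is imported wholesale as Theorem 5.9 of~\cite{coladangelo2017robust} --- and your proposal, after sketching the standard completeness check and the usual three-stage rigidity argument (cross-prover agreement, approximate Pauli relations from the row/column constraints, swap-isometry extraction), ultimately defers to that same citation for the quantitative error analysis. Your sketch accurately reflects how the cited proof goes, so this is consistent with the paper's treatment; nothing further is needed.
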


\paragraph{The EPR test.}
The $S$-qubit EPR test is described in Figure~\ref{fig:epr_test}. The test and its analysis are adapted from~\cite{chao2016test}. The provers in the test are denoted prover A and prover B. Furthermore, the provers each receive a triple of commuting two-qubit Pauli observables $(W^{(1)},W^{(2)},W^{(3)})$. (This is purposefully formatted as questions to the Honest Pauli Prover in Section~\ref{sec:single-pauli}.) 

The EPR test consists of two subtests, which check that the provers' measurements satisfy the Pauli commutation and anticommutation relations, respectively. The Magic Square game is used to test the anticommutation relations. In order for the EPR test --- as well as the other protocols presented in this section --- to be sound, we need to ensure that the provers cannot distinguish between the subtests. Thus we require a definition of a triple $(W^{(1)},W^{(2)},W^{(3)})$ that is \emph{compatible} with the Magic Square game. %

\begin{definition}
\label{def:ms_compat}
	A triple of commuting two-qubit Pauli observables $(W^{(1)},W^{(2)},W^{(3)})$ is \emph{MS-compatible} if at least two of the observables act on the same pair of qubits, and furthermore those two observables can occur together in a row or column in Figure~\ref{fig:ms}.
\end{definition}

In the EPR test (and the other protocols in this section) we require that any question to the provers is embedded in a uniformly random MS-compatible triple that is consistent with the question. For example, suppose the verifier samples the question $(X_1,Z_2, Z_4)$ to send to prover A where the subscripts indicate which qubits the observables are supposed to act on. This question can be embedded in, say, the MS-compatible triple $(X_1 I_2, I_1 Z_2, I_3 Z_4)$, which is then sent to prover A. Note that any commuting pair of two-qubit Pauli observables, where each single-qubit observable is taken from $\{I,X,Z\}$, can be embedded in an MS-compatible triple in several ways; it does not matter which MS-compatible triple is chosen for any particular question.

\begin{figure}[H]
  \begin{mdframed}[style=figstyle]
    The verifier performs each of the following with equal
    probability:
    \begin{enumerate}
    \item (Commutation test) 
    	\begin{enumerate}
			\item Select distinct $i,j\in\{1,\ldots,S\}$ and let $W,W'\in\{X,Z\}$
      uniformly at random. 
      		\item Send the pair of single-qubit observables $(W_i,W'_j)$, embedded in an MS-compatible triple to prover A.
      		\item Send the two-qubit observable $W_i W_j'$, embedded in an MS-compatible triple to prover B.
      		\item Receive bits $(a,a',a'')$ from prover A and $(b,b',b'')$ from
      prover $B$.  Let $a,a'$ denote the answer bits corresponding to $W_i$ and $W_j'$ respectively, and let $b$ denote the answer bit corresponding to $W_i W_j'$.  Accept if and only if $a \oplus a'=b$.
      	\end{enumerate}
    \item (Anticommutation test) 
    	\begin{enumerate}
			\item Select distinct $i,j,\in\{1,\ldots,S\}$ and a pair of
      questions $(q,q')$ in the Magic Square game. Note that $q,q'$ both consist of a pair of commuting two-qubit Pauli observables. 
      		\item Send $q$ and $q'$, embedded in MS-compatible triples, to prover A and prover B, respectively. 
			\item Accept if and only if the provers' answers associated with the
      query $(q,q')$ would be accepted in the Magic Square game.
      	\end{enumerate}
    \end{enumerate}
  \end{mdframed}
  \caption{$S$-qubit EPR test~\cite{chao2016test}.}
  \label{fig:epr_test}
\end{figure}

\begin{definition}[Honest EPR strategy]\label{def:honest-epr}
An \emph{honest $S$-qubit EPR strategy} $\strat$ is a two-prover strategy that satisfies the following conditions. In the strategy the provers share the $S$-qubit maximally entangled state $
\ket{\Phi}_{\reg{A}\reg{B}}$, where prover A
has register $\reg{A}$ and prover B has register $\reg{B}$.
When sent an MS-compatible triple $(W^{(1)},W^{(2)},W^{(3)})$ of mutually commuting two-qubit Pauli observables, 
the prover returns the three bits obtained by simultaneously measuring the three
Pauli observables $\sigma_{W^{(1)}}$, $\sigma_{W^{(2)}}$ and $\sigma_{W^{(3)}}$ on its share of $\ket{\Phi}$.
\end{definition}

\noindent The following is a consequence of the results in~\cite{chao2016test}.

\begin{theorem}\label{thm:epr-test}
  The $S$-qubit EPR test (Figure~\ref{fig:epr_test}) has the following guarantees.
  \begin{itemize}
  \item (Complexity) Questions in the test are $O(\log S)$-bit long.
    Answers are $O(1)$-bit long.
  \item (Completeness) Any honest $S$-qubit EPR strategy succeeds with
    probability $1$ in the test.
  \item (Soundness) For any $\eps\geq 0$ there is a $\delta = \poly(S;\eps)$
    such that any strategy that succeeds with probability at least
    $1-\eps$ in the test is $\delta$-isometric to a honest $S$-qubit EPR
    strategy.
  \end{itemize}
\end{theorem}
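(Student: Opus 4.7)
The complexity claim is immediate from inspection: an MS-compatible triple of two-qubit Pauli observables on $S$ qubits is specified by $O(\log S)$ bits for the qubit indices together with $O(1)$ bits for the Pauli labels, and answers consist of a constant number of bits.

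For completeness, the honest $S$-qubit EPR strategy succeeds in both subtests with probability $1$. In the commutation test, the stabilizer identity $(\sigma_W \otimes \sigma_W)\ket{\Phi} = \ket{\Phi}$ for $W\in\{X,Z\}$ on each EPR pair implies that the bits obtained by prover A from measuring $W_i$ and $W_j'$ XOR to the outcome prover B obtains from measuring $W_i W_j'$. In the anticommutation test, the sampled questions precisely reproduce an instance of the Magic Square game played on the two EPR pairs indexed by $(i,j)$, so acceptance follows from the completeness part of Theorem~\ref{thm:ms-rigid}.

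For soundness I plan to locally extract a two-EPR-pair structure from the anticommutation subtest via Magic Square rigidity, then stitch the pairwise pieces together using the compatibility enforced by the commutation subtest, following the strategy of~\cite{chao2016test}. Concretely, first I would condition on the anticommutation subtest being executed on a specific pair $(i,j)$, losing a factor $O(S^2)$ in the soundness parameter; the Magic Square soundness from Theorem~\ref{thm:ms-rigid} then supplies a local isometry under which the provers' observables for questions associated with $(i,j)$ are $O(\sqrt{S^2 \eps})$-isometric to the canonical two-qubit Pauli observables on two EPR pairs. Second, I would use the commutation subtest to derive that prover A's single-qubit observables $\{W_i\}_{i\in\{1,\ldots,S\},\,W\in\{X,Z\}}$ satisfy approximate commutation across different qubit indices, and that their products on the A side agree with the two-qubit observables measured on the B side. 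Third, I would combine the local isometries into a single global $S$-qubit isometry by a gluing argument in the style of Chao--Reichardt--Sutherland--Vidick, producing a strategy $\delta$-close to an honest $S$-qubit EPR strategy for some $\delta = \poly(S;\eps)$.

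The main obstacle is this gluing step: the pairwise isometries produced by Magic Square self-testing live on a priori unrelated copies of two EPR pairs, and must be reconciled into a single $S$-qubit structure without inflating the error beyond a polynomial in $S$ and $\eps$. The commutation subtest is exactly what controls this, as it forces the observables on disjoint qubits to approximately commute and the products of prover A's single-qubit observables to approximately equal prover B's two-qubit observables; together these constraints yield an approximate representation of the $S$-qubit Pauli group, which can then be promoted to an isometric copy of the honest $S$-qubit EPR strategy using standard approximate-representation tools.
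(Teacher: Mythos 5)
Your proposal is correct and takes essentially the same route as the paper, which in fact offers no proof of this theorem at all: it states the result as "a consequence of the results in~\cite{chao2016test}" and defers entirely to that reference. Your outline --- Magic Square rigidity for the anticommutation relations, the commutation subtest to enforce compatibility across qubit indices, and a gluing step via approximate representations of the $S$-qubit Pauli group --- is a faithful reconstruction of the argument in that cited work, and the $\poly(S;\eps)$ loss you anticipate (including the $O(S^2)$ conditioning factor) is exactly what the theorem's soundness parameter is designed to absorb.
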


\subsection{Simulated Pauli Prover game}
\label{sec:sim-game-desc}

Let $\cV_H$ be a verifier for a $(k+1)$-prover $S$-qubit Honest Pauli Prover game
$\Game_H$ satisfying some special properties that will be specified
later (these properties are satisfied by the verifier
$\cV^{}_{H,n}$ introduced in Section~\ref{sec:single-pauli}). Assume $k\geq 7$.

We define a $k$-prover ENL game $\Game_S$ that simulates $\Game_H$.
Label the provers in $\Game_S$ as $P_1,\ldots,P_k$.
Of the $k$ provers, the first seven, $P_1,\ldots,P_7$, are
chosen to be the ``simulated Pauli provers.''
The idea is that the provers $P_1,\ldots,P_7$ are supposed to share the
state of $PV$ where each qubit is encoded using the $7$-qubit Steane
code, and prover $P_i$ holds the $i$-th share of each encoded qubit.
When in game $\Game_H$, $PV$ is asked to measure a certain Pauli
observable, in game $\Game_S$ the simulated Pauli provers are asked to
implement a logical observable on their share of the encoding.
In addition, the prover is sent its own question, as in $\Game_H$, and
asked to provide an answer.
Since, in contrast to $PV$, none of the provers $\set{P_i}$ in $\Game_S$ are trusted, the verifier in $\Game_S$
executes a sub-test (called \emph{Stabilizer Check}) to ensure that
the simulated Pauli provers do indeed share an encoding of some state
(on some sub-registers), and measure a Pauli observable when asked to
do so.

The game $\Game_S$ is described in Figure~\ref{fig:simpauli_check}.
In the game questions are of the form $(W,g)$ where $W$ is called
an ``EPR question'' (i.e. is an MS-compatible triple that could arise in the EPR test) and $g$ is a ``$\Game_H$
question'' (i.e. a
question that is asked in the game $\Game_H$).
The provers reply with answers $(A,a)$ where $A$ is the answer to the
EPR question and $a$ is the answer to $g$.
We use $q_i$ to denote the $i$-th prover's question in $\Game_S$.

Let $g_P$ be a $\Game_H$ question.
For an answer $A = (A^{(1)},A^{(2)},A^{(3)})$ to an (MS-compatible) EPR question $W$
that contains $g_P$ (which we denote by $g_P \subseteq W$), let $A|_{g_P}$
denote the projection of $A$'s three bits to those that correspond to
$g_P$.
If $g_P = \bot$, then $A|_{g_P}$ is defined to be $0$.

The description of $\Game_S$ in Figure~\ref{fig:simpauli_check} involves notions of ``composite query''
and ``composite answer'' that are defined as follows.
Let $H$ be the generator matrix corresponding to the Steane code
described in Figure~\ref{fig:css}.

\begin{definition}[Composite queries and answers]\label{def:queries}
  Let $W$ be an $S$-qubit Pauli observable.
  \begin{enumerate}
  \item The composite query associated with $W$, denoted $\comp{W}$,
    is obtained by sending each prover forming the composite prover
    the question $W$.
  \item Given answers $(A_i)_{i\in\{1,\ldots,k\} \setminus \{t\}}$ from the $6$
    provers forming the composite prover, the composite answer
    $\comp{A}$ is obtained by selecting a uniformly random vector $v$
    in the column span of $H$ such that $v_t=1$, and computing the sum
    $\comp{A} = \sum_{i \in \{1,\ldots,7\} \setminus \{t\}} v_i A_{i}$.
  \end{enumerate}
\end{definition}

\begin{figure}[H]
  \centering
  \begin{mdframed}[style=figstyle]
    Let $\Game_H$ denote a $(k+1)$-prover Honest Pauli Prover game such that $k\geq 7$. \\
    The first $7$ of the $k$ provers are designated the
    ``simulated Pauli prover''. \\
    The verifier in $\Game_S$ perform one of the following tests, each
    chosen with equal probability:
    \begin{enumerate}
    \item \emph{(Stabilizer Check)}
      \begin{enumerate}
      \item Pick $t \in \{1,\ldots,7\}$ uniformly at random.
        Prover $P_t$ is designated the ``special prover''.
        The other provers $\{P_1,\ldots,P_7\}\setminus \{ P_t \}$ are jointly
        referred to as the ``composite prover''.
        A prover is not told whether it is the special prover, or a
        composite prover.
      \item Generate a query $(W,W')$ in the $S$-qubit EPR test, and
        for $i \in \{1,\ldots,k\}$ independently sample a question $g_i$
        according to the marginal distribution of the $i$-th prover's
        question in $\Game_H$.
      \item Set $q_t = (W,g_t)$ and $q_i = (W',g_i)$ for each $i \in
        \{1,\ldots,7\} \setminus \{t\}$.
        For $i > 7$, set $q_i = (W'',g_i)$ where $W''$ is a random EPR
        question.
      \item Let $(A_i,a_i)$ denote the $i$-th prover's answer.
        Accept if and only if $(A_t,\comp{A})$ would be accepted in
        the EPR test, where $\comp{A}$ is the composite answer
        associated with $\{A_i\}_{i \neq t}$.
        (Answers to $\Game_H$ questions are ignored.)
      \end{enumerate}
    \item \emph{($\Game_H$ Simulation)}
      \begin{enumerate}
      \item Generate a query $Q = (g_P,g_1,\ldots,g_k)$ as in $\Game_H$.
        Let $i^* \in \{1,\ldots,k\}$ denote the index such that $g_{i^*} \neq
        \bot$ if it exists.
        If it doesn't, set $i^* = 1$.
      \item Let $W$ be a uniformly random MS-compatible triple that contains $g_P$.
      \item For all $i \in \{1,\ldots,k\}$, if $g_i = \bot$ set $q_i =
        (W,\tilde{g}_i)$, where $\tilde{g}_i$ is uniformly random
        question sampled from the marginal distribution of the $i$-th
        prover's question in $\Game_H$.
        If $g_i \neq \bot$ set $q_i = (W_i,g_i)$, where $W_i$ is a
        uniformly random EPR question.
      \item Let $v\in\{0,1\}^7$ be such that $\sigma_X(v)$ and $\sigma_Z(v)$
        are logical operators for the $7$-qubit code, and moreover
        $v_{i^*}=0$.
      \item Let $(A_i,a_i)$ denote the $i$-th prover's answer.
        Let $A = \sum_{i\in\{1,\ldots,7\}} v_i A_i$.
        Accept if and only if $(A|_{g_P},a_1,\ldots,a_k)$ would be
        accepted in $\Game_H$.
      \end{enumerate}
    \end{enumerate}
  \end{mdframed}
  \caption{$k$-prover ENL game $\Game_S$.}
  \label{fig:simpauli_check}
\end{figure}

For a label $W \in \{X, Z \}$, an integer $i \in [S]$, and bit $A \in
\{0,1\}$, let $\sigma^A_{W_i}$ denote the projector $\frac{1}{2} ( \Id +
(-1)^A \sigma_{W_i})$. We first analyze the Stabilizer Check of the game $\Game_S$. We show that succeeding in the Stabilizer Check with high probability enforces that the provers hold a state that is encoded using the Steane code, and furthermore they apply honest Pauli measurements. This type of rigidity statement is common to the works of~\cite{ji2015classical,ji2017compression,NV17,NV18}.

\begin{definition}[Honest Stabilizer Check strategy]\label{def:honest-stabilizer}
A strategy $\strat = (\ket{\psi},\{M_i\})$ is an \emph{honest Stabilizer Check
  strategy} (implicitly, for code $\mathcal{C}$) if the following holds.
\begin{itemize}
\item The state $\ket{\psi}$ is on registers $\sC, \sP_1, \ldots, \sP_k$, and a reference register $\sR$, where for each $i\in\{1,\ldots,k\}$,   $\sP_i = \sE_i \sA_i$ with $\sE_i$ a register of $S$ qubits
  labeled $\sE_{i1},\ldots,\sE_{iS}$.
\item For $j \in \{1,\ldots,S\}$, the reduced density matrix $\rho_{\sE_{1j}
    \cdots \sE_{7j}}$ of $\ket{\psi}$ is in the code space of $\mathcal{C}$.
  We refer to $\sE_i$ as the $S$ ``code qubits'' of prover $P_i$.
\item Let $\set{ M_i((W,g),(A,a))}$ denote the $i$-th prover's POVM
  for the question $(W,g)$, where $W = (W^{(1)},W^{(2)},W^{(3)})$ is an EPR question and $g$ is a $\Game_H$ question.
  Then
  \begin{equation}
    \label{eq:honest_stabilizer}
    \E_g \sum_a M_i((W,g),(A,a)) = \sigma_W^A\;,
  \end{equation}
  where the expectation is taken with respect to the marginal distribution of questions
  $g$ to the $i$-th prover in $\Game_H$ and $\sigma_W^A = \sigma_{W^{(1)}}^{A_1}
  \sigma_{W^{(2)}}^{A_2} \sigma_{W^{(3)}}^{A_3}$ is the product of the three
  commuting projectors corresponding to the Pauli observables $W$
  acting on $\sE_i$.
\end{itemize}
\end{definition}

\begin{lemma}[Rigidity for Stabilizer Check]\label{lem:stab-epr}
  The following properties hold for the Stabilizer Check (item 1. in Figure~\ref{fig:simpauli_check}). 
  \begin{enumerate}
  \item (Completeness) An honest Stabilizer Check strategy $\strat$
    passes the Stabilizer Check with probability $1$.
  \item (Soundness) For any $\eps\geq 0$ there is a $\delta = \poly(S;\eps)$
    such that any strategy $\strat$ that pass the Stabilizer Check
    with probability at least $1 - \eps$ is $\delta$-isometric to an
    honest Stabilizer Check strategy.
  \end{enumerate}

\end{lemma}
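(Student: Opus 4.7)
For completeness, I would verify that an honest Stabilizer Check strategy passes with probability $1$. For any draw of the special index $t$, of EPR questions $(W,W')$, and of $v$ in the column span of $H$ with $v_t=1$, the observable $\sigma_W(v)$ across the seven code-qubit registers $\sE_{1j},\ldots,\sE_{7j}$ is a stabilizer of the Steane code, so the honest measurement outcomes on any codeword satisfy $\sum_\ell v_\ell A_\ell = 0$ for every qubit index $j$. A short calculation then shows that the relation $A_t \oplus \overline{A}=0$ translates, both for the single-qubit versus two-qubit commutation subtest and for the Magic-Square anticommutation subtest, into exactly the correlation that an honest $S$-qubit EPR strategy produces in the EPR test of Figure~\ref{fig:epr_test}.

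For soundness, I would fix a strategy $\strat$ with acceptance probability at least $1-\eps$ and, for each $t\in\{1,\ldots,7\}$, view the conditional game as a two-prover EPR test. Prover A is identified with $P_t$, while prover B is a virtual composite player whose operational strategy, on receiving an EPR query $W'$, samples a fresh $\Game_H$ question $g_\ell$ for each $\ell\neq t$, sends $(W',g_\ell)$ to $P_\ell$, collects the answer bits $A_\ell$, samples $v$ uniformly from the column span of $H$ subject to $v_t=1$, and outputs $\overline{A}=\sum_{\ell\neq t}v_\ell A_\ell$. Since $t$ is uniform, the conditional acceptance probability is at least $1-7\eps$, and Theorem~\ref{thm:epr-test} yields isometries $V_t^{(\mathrm{A})}$ on $\sP_t$ and $V_t^{(\mathrm{B})}$ on the joint register of the other six provers, after which the conditional strategy is $\delta_t$-isometric to an honest $S$-qubit EPR strategy, with $\delta_t=\poly(S;\eps)$.

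The next step is to stitch the seven per-$t$ isometries into a single tensor-product factorization $\sP_\ell=\sE_\ell\sA_\ell$ with $\sE_\ell$ an $S$-qubit code-qubit register. My plan is an inductive construction: anchor $\sE_1$ as the $S$-qubit register extracted by $V_1^{(\mathrm{A})}$, and for each $\ell=2,\ldots,7$ combine the $\ell$-th rigidity with a qubit-by-qubit application of Lemma~\ref{lem:decouple} to pin down $\sE_\ell\subseteq\sP_\ell$: Lemma~\ref{lem:decouple} supplies an ancilla-assisted decoding of the $\ell$-th code qubit from the other six, so the subregister $\sE_\ell$ is essentially forced by the requirement that the joint state on $\sE_{1j}\cdots\sE_{7j}$ lies in the Steane code space for every $j\in\{1,\ldots,S\}$. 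Once the code-qubit registers are fixed, the per-$t$ rigidity statements further force the marginal measurement of each prover on $\sE_\ell$ to be the prescribed Pauli observable, and averaging over the (independent and ignored) $\Game_H$ question $g$ yields~\eqref{eq:honest_stabilizer}.

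The main obstacle I foresee is the reconciliation of the seven per-$t$ factorizations into a single consistent one, since Theorem~\ref{thm:epr-test} only determines each isometry $V_t^{(\mathrm{A})},V_t^{(\mathrm{B})}$ up to local unitaries on the prover ancilla spaces. To overcome this I would rely heavily on the two parts of Lemma~\ref{lem:decouple}: the decoupling identity lets me pin $\sE_\ell$ down by \emph{consistency} with the already-fixed code qubits $\sE_1,\ldots,\sE_{\ell-1}$ rather than by an independent rigidity argument, and the compatibility of the decoding unitary with logical operators ensures that the resulting measurement on $\sE_\ell$ is still an honest transversal Pauli. Accumulating error contributions from the seven applications of Theorem~\ref{thm:epr-test}, the six inductive applications of Lemma~\ref{lem:decouple}, and a final invocation of Lemma~\ref{lem:close_strategies}, yields the announced bound $\delta=\poly(S;\eps)$.
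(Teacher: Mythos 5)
Your completeness argument and the first half of your soundness argument track the paper's proof: for each $t$ you condition on $P_t$ being the special prover, package the other six provers into a single virtual prover computing the composite answer, invoke Theorem~\ref{thm:epr-test} to obtain an isometry $V_t$ on $\sP_t$ under which $P_t$'s marginalized measurements are honest Paulis on $S$ identified qubits, and then apply this for all seven values of $t$. Up to that point you match the paper, and your worry about ``reconciling'' the seven factorizations is largely moot: only the A-side isometry from each of the seven invocations is ever used, and these act on the pairwise disjoint registers $\sP_1,\ldots,\sP_7$, so they compose in tensor product with no consistency condition to check.

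The genuine gap is in how you propose to establish that the post-isometry state lies (approximately) in the Steane code space on each block $\sE_{1j}\cdots\sE_{7j}$ --- the second bullet of Definition~\ref{def:honest-stabilizer}. You want to derive this by an inductive application of Lemma~\ref{lem:decouple}, arguing that $\sE_\ell$ is ``forced by the requirement that the joint state on $\sE_{1j}\cdots\sE_{7j}$ lies in the Steane code space.'' But Lemma~\ref{lem:decouple} takes code-space membership as a \emph{hypothesis} --- it is a decoding/decoupling statement about states already in the code space --- so it cannot be the tool that establishes code-space membership; the argument is circular exactly where the substance lies. The paper closes this step differently: once each prover's measurements are certified honest under the isometries, the acceptance condition of the Stabilizer Check (the parity relation $A_t + \sum_{\ell\neq t} v_\ell A_\ell = 0$ for $v$ in the column span of $H$ with $v_t=1$) says precisely that the state $\ket{\psi'}$ is approximately stabilized by every transversal stabilizer $h_j$ of the Steane code. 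Since the code-space projector satisfies $\Pi_j = \E_{h_j} h_j$, this yields $\norm{\Pi_j\ket{\psi'}-\ket{\psi'}}\leq\poly(S;\eps)$ for each $j\in\{1,\ldots,S\}$, and a hybrid argument over $j$ gives closeness to $\bigotimes_j\Pi_j\ket{\psi'}$. Without this step (or an equivalent one), your construction never verifies that the target strategy is an honest Stabilizer Check strategy.
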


\begin{proof}
	We first show completeness. Let $\strat$ be an honest Stabilizer Check strategy. 
	Suppose without loss of generality that prover $1$ is selected to be the special
	prover, and provers $\{2,\ldots,7\}$ are chosen to form the composite prover. In the Stabilizer Check, the EPR test is executed between the special prover and the composite prover; thus $\strat$ can then be viewed as a two-prover strategy in the EPR test, where the special prover measures the Pauli observables corresponding to its EPR question on its share of the shared state $\ket{\psi}$, generating a triple of bits $A \in \{0,1\}^3$ as its answer. The composite prover performs the Pauli measurements of provers $P_2,\ldots,P_7$ on registers $\sE_2,\ldots,\sE_7$, generating $6$ strings $A_2,\ldots,A_7 \in \{0,1\}^3$. Assume without loss of generality that the composite answer is the sum $\comp{A} = A_2 + A_3 + A_4$ modulo $2$ (this corresponds to selecting the vector $v = 1111000$ in the column span of the generator matrix $H$ corresponding to the Steane code). 
	
	It is straightforward to verify that this two-prover strategy passes the EPR test with probability $1$. Suppose first that the commutation subtest of the EPR test is chosen by the verifier, and let $i,j,W_i,W_j'$ be as in Figure~\ref{fig:epr_test}. Then the special prover measures $\sigma_{W_i}(i)$ and $\sigma_{W_j'}(j)$ on registers $\sE_{1i}$ and $\sE_{1j}$ of $\ket{\psi}$ to obtain answer bits $a$ and $a'$, respectively. The composite prover independently measures $\sigma_{W_i}(i) \otimes \sigma_{W_j'}(j)$ on registers $\sE_{2i} \sE_{2j}$,  $\sE_{3i} \sE_{3j}$, and $\sE_{4i} \sE_{4j}$ to obtain answer bits $a_2, a_3,a_4$ which then form the composite answer $\comp{a} = a_2 + a_3 + a_4$. Since Pauli observables $\sigma_{W_i}(i)^{\otimes 4}$ acting on $\sE_{1i} \sE_{2i} \sE_{3i} \sE_{4i}$ and $\sigma_{W_j'}(j)^{\otimes 4}$ acting on $\sE_{1j} \sE_{2j} \sE_{3j} \sE_{4j}$ are stabilizers of the Steane code, this implies that $a + a' + \comp{a} = 0$, which is the condition checked in the EPR test. A similar argument holds for the anticommutation test.
	
\medskip

Next we show soundness of the Stabilizer Check. Fix a $t \in \{1,\ldots,7\}$, and condition on prover $P_t$ being
  selected as the special prover.
  The provers' strategy $\strat$ is accepted in the Stabilizer Check
  with probability at least $1 - 7\eps$.
  From $\strat$ we construct a strategy $\strat_t'$ for the EPR test
  as follows.
  Let $(W,W')$ be the query received in the EPR test.
  When prover A receives question $W$, it generates a uniformly random
  $\Game$ question $g_t$ for the $t$-th prover, and plays according to
  the special prover $P_t$'s strategy on question $(W,g_t)$.
  For prover B we combine the strategies of the six provers that make
  the composite prover (including the post-processing involved in
  computing the composite answer $\comp{A'}$).
  Prover B simulates the measurements of the six provers on
  $(W',g_i)$ where $g_i$ is a random $\Game_H$ question for the $i$-th
  prover, for $i = \{1,\ldots,7\} \setminus \{t\}$.

  The resulting two-prover strategy succeeds in the EPR test with
  success probability $1-7\eps$.
  Applying the soundness analysis of the EPR test given in
  Theorem~\ref{thm:epr-test} it follows that $\strat_t'$ is
  $\poly(S;\eps)$-isometric to an honest $S$-qubit EPR strategy.
  In particular, there is an isometry $V_t$ for the special prover,
  such that the special prover's measurement operator associated with the
  answer $A_t$ to the EPR question $W$, which is
  \[
    \E_{g_t} \sum_{a_t} M_t((W,g_t),(A_t,a_t))\;,
  \]
  is $\poly(S;\eps)$-close to the honest Pauli measurement operator $\sigma_W^A$, under
  $V_t$, on the $S$ qubits identified by the isometry.

  Applying this analysis for each $t \in \{1,\ldots,7\}$, we obtain an
  isometry $V_t$ for each prover under which their (marginalized) measurement operators are
  $\poly(S;\eps)$-close to the corresponding honest Pauli measurement operator. Let $\sE_{ij}$ denote the register that holds the $j$-th qubit of the $i$-th prover under the isometry. 
  
It remains to show that the shared state $\ket{\psi}$ (after application of the isometries $\set{V_t}$) is $\poly(S; \eps)$-close to the codespace of the Steane code. Let $\Pi$ denote the projector onto the $7$ qubit codespace of the Steane code. Observe that
	\begin{equation}
	\label{eq:stabilizer}
  		\Pi = \E_h h\;,
	\end{equation}
  where the expectation is over a uniformly random stabilizer element $h$ of the Steane code. %
  Using that the stabilizer elements of the Steane code (or any CSS code) are Hermitian and form a group, it is immediate to verify that the expectation in~Equation~\eqref{eq:stabilizer} define a projection; by definition the codespace is the eigenvalue-1 eigenspace of the projection. For $j\in\{1,\ldots,S\}$ let $\Pi_j$ (resp. $h_j$) denote projector onto the codespace (resp. the stabilizer $h$) of the Steane code that acts on registers $\sE_{1j} \cdots \sE_{7j}$.
  
  Let $\ket{\psi'} = \bigotimes_t V_t \ket{\psi}$. Succeeding with probability at least $1 - \eps$ in the Stabilizer Check test implies that for all $j \in \{1,\ldots,S\}$, we have that $\ket{\psi'}$ is \emph{approximately stabilized} by the stabilizers of the Steane code:
  \[
  	\E_{h_j} \Norm{ h_j \ket{\psi'} - \ket{\psi'}} \leq \poly(S;\eps)\;,
  \]
from which it follows that 
  \[
 	\Norm{ \Pi_j \ket{\psi'} - \ket{\psi'}}  = \Big\| \E_{h_j} h_j \ket{\psi'} - \ket{\psi'}\Big\| \leq \poly(S;\eps)\;.
  \]
  By a hybrid argument, this implies that
  \[
  	\Big\| \bigotimes_j \Pi_j \ket{\psi'} - \ket{\psi'}\Big\|\leq \poly(S; \eps)\;,
  \]
  which completes the proof.
\end{proof}

\begin{theorem}
  \label{thm:pauli_sim}
  Let $k \geq 7$ be an integer.
  Let $\Game_H$ be a $(k+1)$-prover $S$-qubit Honest Pauli Prover game
  that satisfies the following properties:
  \begin{enumerate}
  \item The distribution over queries $(g_P,g_1,\ldots,g_k)$ is such that
    for any $(g_P,g_1,\ldots,g_k)$ in the support, there is at most one
    $i^* \in \{1,\ldots,k\}$ such that $g_{i^*} \neq \bot$.
  \item For any query $(g_P,g_1,\ldots,g_k)$ the accept or reject decision
    of $\Game_H$ does not depend on the answer of prover $PP_i$, for
    all $i$ such that $g_i = \bot$.
    \item The distribution of $g_P$ is supported on sets of Pauli observables that can be embedded in MS-compatible triples (see Definition~\ref{def:ms_compat}).
  \end{enumerate}
  Let $\Game_S$ be the Simulated Pauli Prover game described in
  Figure~\ref{fig:simpauli_check}.
  Then the following hold.
  \begin{itemize}
  \item (Completeness) For all Honest Pauli Prover strategies $\strat_H$ in $\Game_H$ there exists a $k$-prover strategy $\strat$ in
    $\Game_S$ that succeeds with probability $\omega^*_{\strat_H}(\Game_H)$.
  \item (Soundness) For any $k$-prover honest Stabilizer Check
    strategy that succeeds in $\Game_S$ with probability at least
    $1-\eps$, there is a $(k+1)$-prover Honest Pauli prover strategy
    that is accepted with probability at least $1-2\eps$ in $\Game_H$.
  \end{itemize}
\end{theorem}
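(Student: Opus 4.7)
The plan is to prove completeness and soundness separately, both turning on the encoding/decoding guarantees of the Steane code provided by Lemma~\ref{lem:decouple}.

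For completeness, starting from an Honest Pauli Prover strategy $\strat_H$, I would construct a $\Game_S$ strategy $\strat$ by encoding each of the $S$ qubits of the Pauli prover's register using the Steane code across the first $7$ provers (so prover $P_i$ holds the $i$-th share $\sE_i$ of every codeword), with each $P_i$ for $i \in \{1,\ldots,7\}$ additionally carrying the workspace of $PP_i$ from $\strat_H$, and provers $P_i$ for $i > 7$ directly inheriting the state of $PP_i$. On an EPR question each of the first $7$ provers measures the requested Pauli observables on its code share; on a $\Game_H$ question prover $P_i$ runs the measurement $PP_i$ would use in $\strat_H$. The Stabilizer Check then passes deterministically, because the product of honest Pauli outcomes over any stabilizer generator is $+1$ on code states, which is exactly the correlation between the special and composite provers required by the EPR test. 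In the $\Game_H$ Simulation, the vector $v$ is chosen so that $\sigma_X(v)$ and $\sigma_Z(v)$ are logical operators avoiding qubit $i^*$, so the composite answer $A = \sum_i v_i A_i$ is the outcome of measuring the logical Pauli $\cL_W$, which by construction equals what $PV$ would output on query $g_P \subseteq W$ in $\Game_H$.

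For soundness, let $\strat$ be an honest Stabilizer Check strategy that succeeds in $\Game_S$ with probability $1 - \eps$. Since such strategies pass the Stabilizer Check with probability $1$ by Lemma~\ref{lem:stab-epr}, the $\Game_H$ Simulation subprotocol must be accepted with probability at least $1 - 2\eps$. To extract an Honest Pauli Prover strategy for $\Game_H$, I would apply Lemma~\ref{lem:decouple} simultaneously across the $S$ codewords, adapted to erase qubit $\sE_{i^*}$ instead of $\sE_1$; this adaptation is legitimate for any chosen $i^* \in \{1,\ldots,7\}$ by the symmetry of the Steane code, which admits logical $X$ and $Z$ operators avoiding any one qubit (visible in Figure~\ref{fig:css}). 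The decoded logical qubits collectively form the register of the new Pauli prover $PV$, while $PP_i$ keeps its (now possibly modified) $\sE_i$ together with $\sA_i$. On question $g_P$ the Pauli prover measures the requested Paulis honestly on its decoded register; on question $g_{i^*}$, prover $PP_{i^*}$ internally samples a uniformly random MS-compatible triple $W_{i^*}$ and outputs the answer $P_{i^*}$ would produce on $(W_{i^*}, g_{i^*})$ in $\strat$.

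The step requiring the most care, and the main obstacle, is verifying that this construction reproduces the $\Game_H$ Simulation's acceptance probability, thereby achieving the claimed $1 - 2\eps$ bound. Three observations combine to close the argument: (i) by item 2 of Lemma~\ref{lem:decouple}, acting with $\cL_W$ on the decoded register yields the same outcome distribution as acting with $\cL_W$ on the encoded state, so $PV$'s honest answer has the same distribution as $A|_{g_P}$ in $\Game_S$; (ii) assumptions 1 and 2 of the theorem ensure that only $PP_{i^*}$'s answer affects $\Game_H$'s decision, and in the $\Game_H$ Simulation $P_{i^*}$ receives an \emph{independent} uniformly random EPR question, so $PP_{i^*}$ can faithfully reproduce its marginal distribution by sampling $W_{i^*}$ locally; (iii) assumption 3 ensures $g_P$ can always be embedded into an MS-compatible triple, as the Simulation requires. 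Because the decoding is a unitary, the joint distribution over the remaining registers is preserved, which together with the preceding observations yields the desired success probability in $\Game_H$.
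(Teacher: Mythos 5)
Your high-level architecture is right, and you have correctly identified Lemma~\ref{lem:decouple} together with assumptions 1--3 as the engine of the soundness argument; the completeness construction and the $1-2\eps$ accounting also match the paper. However, the soundness construction as written has a genuine gap: the state you hand to the provers of $\Game_H$ is not well defined, because it depends on $i^*$. You propose to ``apply Lemma~\ref{lem:decouple} \ldots adapted to erase qubit $\sE_{i^*}$'' as a physical pre-processing step defining the new shared state, but $i^*$ is determined by the query $(g_P,g_1,\ldots,g_k)$, which is sampled only after the strategy (in particular the partition of the state among $PV, PP_1,\ldots,PP_k$) is fixed. Relatedly, your register bookkeeping is inconsistent: after the erasure-correction unitary the code content lives on $\sF_{i^*}\,\{\sE_j\}_{j\neq i^*}$, so $PV$ would need the registers $\sE_j$ for $j \neq i^*$ in order to measure logical operators, which contradicts your statement that every $PP_i$ ``keeps its $\sE_i$.''

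The missing idea — which is really the crux of prover merging — is to make the assignment query-independent: give $PV$ \emph{all seven} physical code registers $\sE_1\cdots\sE_7$ (so $PV$ measures \emph{logical} observables on the still-encoded state rather than on a decoded register), and give each $PP_i$ its register $\sA_i$ together with a \emph{fresh maximally mixed} register $\sF_i$ isomorphic to $\sE_i$, on which it runs the measurement $M_g^a$ that $P_i$ would have applied to $\sA_i\sE_i$ in $\Game_S$. Lemma~\ref{lem:decouple} is then invoked only in the \emph{analysis}: for whichever $i^*$ is relevant to a given query, inserting $U^{\otimes S}$ shows that $\Tr_\rho(M_g^a \otimes \cL_D)$ equals the same expression with $\sE_{i^*}$ replaced by a decoupled maximally mixed register, because the verifier's logical operators avoid qubit $i^*$ and commute with the correction unitary. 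This yields exactly the joint distribution of $(A|_{g_P}, a_{i^*})$ produced in the $\Game_H$ Simulation, and assumptions 1 and 2 ensure no other answers matter. Your observations (i)--(iii) are the right ones, but they only close the argument once the state is defined in this query-independent way.
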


\begin{proof}
  The completeness part of the theorem is straightforward.

  We show soundness.
  Fix an honest Stabilizer Check strategy $\strat = (\ket{\psi},\{M_i\})$ for
  the $k$ provers in $\Game_S$ that has success probability at least
  $1-\eps$, for some $\eps\geq 0$.
  In the game $\Game_H$, the provers are labeled $PV, PP_1,\ldots,PP_k$.
  The honest Pauli prover is $PV$.
  Using the strategy $\strat$, we define an Honest Pauli strategy
  $\strat^H = (\ket{\psi}^H,\{M_i^H\})$ for the provers in $\Game_H$ as
  follows:
  \begin{itemize}
  \item $\rho^H$ is on registers $\sC, \sP^H_V, \sP^H_1, \ldots, \sP^H_k$, and $\sR$,
    where the honest Pauli prover $PV$ gets $\sP^H_V$, and prover
    $PP_i$ gets $\sP^H_i$ for $i \in \{1,\ldots,k\}$.
    The register $\sP^H_V$ is isomorphic to the union of
    $\sE_1,\ldots,\sE_7$ (i.e.
    it is $7S$ qubits).
    The register $\sP^H_i$ is isomorphic to $\sF_i \sA_i$.
    The reduced density $\rho^H$ of the state $\ket{\psi}^H$ on all registers except $\sR$ is equal to the state $\rho \otimes \sigma$, where $\rho$ is the reduced density of $\ket{\psi}$ on all registers but $\sR$, and $\sigma$ is the
    maximally mixed state on an ancilla register $\sF = \sF_1 \cdots \sF_7$
    that is isomorphic to $\sE = \sE_1 \cdots \sE_7$.
    The registers have been relabeled according to the scheme
    described in Figure~\ref{fig:relabel}.

    \begin{figure}[htb!]
      \begin{center}
        \begin{tabular}{ c  c }
          \toprule
          \textbf{Register in $\rho\otimes\sigma$} & \textbf{Register in $\rho^H$} \\
          \midrule
          $\sC$ 		& $\sC$  \\
          $\sE_1 \cdots \sE_7$ 			& $\sP^H_V$ \\
          $\sA_i \sF_i$			& $\sP_i^H$\\
          \bottomrule

        \end{tabular}
      \end{center}
      \caption{Relabeling the registers of $\rho \otimes \sigma$ to get $\rho^H$.}
      \label{fig:relabel}
    \end{figure}

    In other words, the honest Pauli prover is given the $S$ code
    qubits held by each of the $7$ provers that constitute the
    simulated Pauli prover in $\Game_S$.
    The prover $PP_i$ in $\Game_H$ gets all the other qubits of prover
    $P_i$ in $\Game_S$, as well as the maximally mixed state in
    place of the $S$ qubits.

  \item On reception of a question $g_P$ in $\Game_H$ (which is a
    collection of up to three commuting Pauli observables), the honest
    Pauli prover $PV$ samples a random EPR question $W = (W^{(1)},
    W^{(2)}, W^{(3)})$ that contains $g_P$.
    The prover $PV$ measures the three logical observables $W^{(1)},
    W^{(2)}, W^{(3)}$ on the $7S$-qubit encoded state to obtain
    $(A_1^{(j)},\ldots,A_7^{(j)})$ for $j = 1,2,3$.
    Let $(A^{(1)}, A^{(2)},A^{(3)})$ be the decoded measurement
    outcomes.
    For example, $PV$ could apply the logical operator which has
    weight only on the last $3$ qubits and set $A^{(j)} = A_5^{(j)} +
    A_6^{(j)} + A_7^{(j)}$.
    The prover $PV$ returns $A|_{g_P}$.

  \item Suppose prover $PP_i$ in $\Game_H$ receives the question
    $g_i$.
    If $g_i = \bot$, then $PP_i$ returns $0$.
    The prover $PP_i$ samples a random EPR question $W_i$ that
    contains $g_P$.
    The prover $PP_i$ performs the same measurement that prover $P_i$
    would in game $\Game_S$ on question $(W_i,g_i)$.
    It obtains answer $(A_i,a_i)$ and returns $a_i$.

  \end{itemize}

  The following claim establishes that the answer distribution of the
  honest Pauli strategy $\strat^H$, when restricted to the
  ``relevant'' provers (i.e.
  the provers who receive questions that are not $\bot)$, is essentially
  the same as in the strategy $\strat$.

\begin{claim}\label{lem:same_dist}
  Fix a query $Q = (g_P,g_1,\ldots,g_k)$ in $\Game_H$.
  \begin{enumerate}
  \item If for all $i \in \{1,\ldots,k\}$ it holds that $g_i = \bot$, then
    the distribution of $A|_{g_P}$ that is produced by strategy
    $\strat$ in the ``{$\Game_H$ Simulation}'' part of $\Game_S$ when
    query $Q$ is sampled is the same as the distribution of $a_P$ that
    is produced by prover $PV$ in the strategy $\strat^H$ when it
    receives the question $g_P$.
  \item If there exists an $i^* \in \{1,\ldots,k\}$ such that $g_{i^*} \neq
    \bot$, then the distribution of $(A|_{g_P},a_{i^*})$ that is
    produced by strategy $\strat$ in the ``{$\Game_H$ Simulation}''
    part of $\Game_S$ is the same as the distribution of
    $(a_P,a_{i^*})$ that is produced by prover $PV$ and $PP_{i^*}$ in
    the strategy $\strat^H$ when they receive questions $g_P$ and
    $g_{i^*}$ respectively.
  \end{enumerate}
\end{claim}

We defer the proof of the claim to Section~\ref{sec:claimproof} and proceed with the proof of
Theorem~\ref{thm:pauli_sim}.
Since the strategy $\strat$ succeeds with probability at least $1 -
\eps$ in $\Game_S$, it succeeds with probability at least $1 - 2\eps$
in the $\Game_H$ Simulation part of $\Game_S$.

From our assumption on the game $\Game_H$, for a fixed $\Game_H$
question $Q = (g_P,g_1,\ldots,g_k)$ that is sampled in the $\Game_H$
Simulation part of $\cV_{sim}$, the accept or reject decision of
$\Game_H$ does not depend on $a_i$ if $g_i = \bot$.
Combined with the fact that at most one index $i^*$ is such that
$g_{i^*} \neq \bot$, Lemma~\ref{lem:same_dist} implies that the
distribution of ``relevant'' answers to $\Game_H$ are the same in the
following two scenarios when $Q$ is fixed: the strategy $\strat^H$ in
$\Game_H$, and the strategy $\strat$ in the $\Game_H$ Simulation part
of $\Game_S$.

Thus for a fixed $Q$, the probability that the ``relevant'' answers
are accepted by $\Game_H$ are the same in both scenarios.
Since the distribution of $Q$ is the same in both scenarios, this
implies that $\strat^H$ passes $\Game_H$ with probability at least $1
- 2\eps$.
\end{proof}

\subsection{Proof of Lemma~\ref{lem:same_dist}}
\label{sec:claimproof}

Part 1 of the claim follows directly from the fact that $\strat$ is an
honest Stabilizer Check strategy, in which the provers $P_1,\ldots,P_7$
measure the honest Pauli observables corresponding to a random EPR
question $W$ that contains $g_P$, which is identical to $PV$'s action
in the strategy $\strat^H$.

We now argue Part 2.
For an EPR question $W = (W^{(1)},W^{(2)},W^{(3)})$, we write $\sigma_W$
for the product $\sigma_{W^{(1)}} \sigma_{W^{(2)}} \sigma_{W^{(3)}}$.
For a three-bit vector $A = (A^{(1)},A^{(2)},A^{(3)})$, we write
$\sigma_W^A$ for the projector $\prod_{j=1}^3 \frac{\Id +
  (-1)^{A^{(j)}}}{2}$.
This is a projector because the Pauli observables $\sigma_{W^{(j)}}$ all
commute.

Assume without loss of generality that $i^* = 1$, and the string $v \in
\{0,1\}^7$ chosen by the verifier in $\Game_S$ is $v = 0000111$.
Let $W$ be a fixed EPR question that contains $g_P$.
For $j\in\{1,2,3\}$ let
\[
  \cL_{W^{(j)}} \,=\, \sigma_{W^{(j)}}(v)
\]
denote the logical operator corresponding to $W^{(j)}$ which is a tensor product of two
logical operators (since $W^{(j)}$ is the label for a two-qubit Pauli
observable).

For notational clarity we write $g = g_{i^*}$ and $a = a_{i^*}$.
Let $A_i = (A^{(1)}_i,A^{(2)}_i,A^{(3)}_i)$ denote the three bits
returned by prover $P_i$ for its EPR question, and let $A^{(j)} =
A^{(j)}_5 + A^{(j)}_6 + A^{(j)}_7$ denote the $j$-th bit of the answer
vector $A$, as computed by the verifier.

Let $M_{g}^a = \E_{W_1} \sum_{A} M_1((W_1,g),(A,a))$ denote $P_1$'s
measurement on question $g$, where we have marginalized the EPR
question (which was chosen independently of $W$) and the associated
answers.

We compute the probability of the answer pair $(A,a)$ in $\strat$ when
prover $P_1$ gets the question $(W_1,g)$ for a uniformly random EPR
question $W_1$, provers $P_5,P_6,P_7$ get the EPR question $W$, and
each prover gets an independently chosen random $\Game_H$ question.
Since $\strat$ is an honest Stabilizer Check strategy, the measurement
operator each prover applies (when marginalizing over the prover's
answer to its $\Game_H$ question) is given
by~\eqref{eq:honest_stabilizer}.
By our choice of $v$, the outcome $(A,a)$ occurs with probability
\begin{align}
  \label{eq:output_dist}
  &\sum_{A_5 + A_6 + A_7 = A} \Tr_\rho \Paren{ M_{g}^a \otimes \sigma_W^{A_5}
    \otimes \sigma_W^{A_6} \otimes  \sigma_W^{A_7}} \\
  &= \Tr_\rho \Paren{ M_{g}^a \otimes \prod_{j=1}^3
    \Paren{\frac{\Id + (-1)^{A^{(j)}} \cL_{W^{(j)}}}{2} }}\;.
\end{align}
Expanding the product, we obtain eight terms of the form
\[
  \pm \frac{1}{8} \Tr_\rho \Paren{ M_{g}^a \otimes \cL_{D}}\;,
\]
where $\cL_D$ is a product of up to three logical operators $\set{
  \cL_{W^{(j)}}}$.
The label $D$ indicates a collection of up to six Pauli observables
(for example, $\cL_D = \cL_{W^{(1)}}\cL_{W^{(2)}}\cL_{W^{(3)}}$ where
each $W^{(j)}$ is a label for a two-qubit Pauli observable).

Fix one of the possible labels $D$.
Let $U$ be the unitary given by Lemma~\ref{lem:decouple}.
Since $\strat$ is an honest Stabilizer Check strategy, $\rho_{\sE_{1j} \cdots
  \sE_{7j}}$ is in the code space for all $j \in \{1,\ldots,S\}$.
Let $\sF_1, \sF_1'$ be registers isomorphic to $\sE_1$, and let $\sX$
be an ancilla register that is sufficiently large.
Applying part 1.
of Lemma~\ref{lem:decouple} we get
\begin{equation}
  \label{eq:correct}
  U^{\otimes S} \rho \otimes \ketbra{0}{0}_{\sF_1 \sF_1' \sX} (U^{\otimes S})^\dagger =
  \rho_{\sF_1 \sE_2 \cdots \sE_7} \otimes \ketbra{\tau_S}{\tau_S}_{\sE_1 \sF_1' \sX}\;,
\end{equation}
where $\ket{\tau_S}$ is the $S$-fold tensor product of the state
$\ket{\tau}$ given by Lemma~\ref{lem:decouple}.
Here, the $j$-th tensor factor of $U^{\otimes S}$ acts on registers
$\sE_{2j} \cdots \sE_{7j} \sF_{1j} \sF_{1j}' \sX_j$.
Then
\begin{align*}
  \Tr_\rho \Paren{ M_g^a \otimes \cL_D}
  & = \Tr \Paren{  (M_g^a \otimes \cL_D) \paren{ \rho \otimes
    \ketbra{0}{0}_{\sF_1 \sF_1' \sX} } (U^{\otimes S})^\dagger (U^{\otimes S}) } \\
  & = \Tr \Paren{  M_g^a U^{\otimes S} \,\, \cL_D  \paren{ \rho
    \otimes \ketbra{0}{0}_{\sF_1 \sF_1' \sX} } (U^{\otimes S})^\dagger} \\
  & = \Tr \Paren{  (M_g^a \otimes \cL_D) \,\, U^{\otimes S}  \paren{ \rho
    \otimes \ketbra{0}{0}_{\sF_1 \sF_1' \sX} } (U^{\otimes S})^\dagger} \\
  & = \Tr \Paren{  (M_g^a \otimes \cL_D) \Paren{ \rho_{\sA \sF_1 \sE_2 \cdots
    \sE_7} \otimes \ketbra{\tau_S}{\tau_S}_{\sE_1 \sF_1' \sX}}} \\
  &= \Tr \Paren{ (M_g^a \otimes \cL_D)
    \Paren{\rho_{\sA \sF_1 \sE_2 \cdots \sE_7} \otimes \sigma_{\sE_1}}}\;,
\end{align*}
where $\sigma_{\sE_1}$ is the maximally mixed state on $\sE_1$.
The second equality follows from the cyclicity of the trace and the
fact that $U$ and $M_g^a$ act on different registers.
The third equality follows from part 2 of Lemma~\ref{lem:decouple}.
The fourth equality follows from~\eqref{eq:correct}.
The last equality follows from the fact that the reduced density
matrix of $\ket{\tau_S}$ on $\sE_1$ is the maximally mixed state.

Thus the probability of obtaining outcome $(A,a)$ expressed
in~\eqref{eq:output_dist} is the same as
\begin{equation}
  \Tr \Biggl( \big(\rho_{\sA \sF_1 \sE_2 \cdots \sE_7} \otimes
  \sigma_{\sE_1} \big) \biggl( M^a_{g} \otimes \prod_j
  \frac{\Id + (-1)^{A^{(j)}} \cL_{W^{(j)}}}{2} \biggr) \Biggr)\;.
  \label{eq:dist}
\end{equation}
Here the operator $M^a_{g}$ acts on $\sA_1 \sE_1$.
Observe that the state $\rho^H = \rho_{\sA \sE_1 \sE_2 \cdots \sE_7} \otimes
\sigma_{\sF_1}$ and therefore~\eqref{eq:dist} is equal to
\[
  \Tr_{\rho^H} \Paren{M^a_{g} \otimes \prod_j \frac{\Id + (-1)^{A^{(j)}}
      \cL_{W^{(j)}}}{2} } \;
\]
where now we treat the operator $M^a_g$ as acting on registers $\sA_1
\sF_1$.
This quantity is precisely the probability that $(A,a)$ is obtained by
provers $PV$ and $PP_r$ in the strategy $\strat^H$ when given input
$g_P = W$ and $g_{i^*}$, respectively: the prover $PV$ measures the
registers $\sE_5, \sE_6, \sE_7$ using the observables $\cL_{W^{(1)}},
\cL_{W^{(2)}}, \cL_{W^{(3)}}$ and the prover $PP_{i^*}$ measures the
registers $\sA_{i^*} \sF_{i^*}$ with the POVM $\{ M_g^a \}$.
This establishes Part 2 of the claim.
\section{The Compression Theorem}
\label{sec:compression}

In this section we present the proof of our compression result,
informally stated as Theorem~\ref{thm:main-compression} in the
introduction, and formally re-stated here.

\begin{theorem}[Compression Theorem]
  \label{thm:compression}
  Let $k \geq 7$ be an integer, and let $G$ be a GTM for a family of
  $k$-prover ENL games $\{\Game_n\}$. Let $p(n)$ denote the size of $\CKT(G,n)$, the 
$n$-th  protocol circuit specified by $G$.
 There exists a family of $k$-prover ENL games $\{
  \Game^{\compr}_{n} \}$ such that the following holds, %
  for all integer $n$:
  \begin{enumerate}
  \item The verifier of $\Game^{\compr}_{n}$, denoted by $\cV^\compr_{n}$, is uniformly generated from $(1^n,G)$. 
  \item Each prover's answer in $\Game^{\compr}_{n}$ is $4$ bits long.
  \item There are universal constants $\alpha\geq 1,\beta > 0$ such that for $N=2^n$,
    \begin{equation}\label{eq:compression-0}
      1 - \frac{1 - \omega^{*}(\Game_N)}{p(N)+1} \leq
      \omega^*(\Game^\compr_{n}) \leq 1 - \Paren{\frac{1 -
          \omega^{*}(\Game_N)}{\beta \, p(N)}}^\alpha\;.
					\end{equation}
    
  \item There exists universal constants $\mu \geq 1, \nu > 0, C > 0$
    such that any strategy $\strat$ for $\Game_{n}^\compr$ that satisfies
    $\omega^*_\strat(\Game_{n}^\compr) \geq 1 - \eps$ for some $\eps\geq 0$ requires an entangled
    state such that the local dimension of registers associated with at least $7$ of the provers is at least $(1 - C \, p(N)^\mu\, \eps^\nu)
    2^{p(N)}$.
  \end{enumerate}
\end{theorem}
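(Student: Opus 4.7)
The plan is to obtain $\Game^\compr_n$ by chaining the two main constructions from the preceding sections. First I would apply Lemma~\ref{lem:single_pauli_game} to the family $\{\Game_N\}_{N=2^n}$ specified by the GTM $G$, producing a family of $(k+1)$-prover $S$-qubit Honest Pauli Prover games $\{\Game^\compr_{H,n}\}$ with $S=p(N)$. Then I would instantiate the Simulated Pauli Prover construction from Figure~\ref{fig:simpauli_check} on $\Game^\compr_{H,n}$, producing a $k$-prover ENL game $\Game^\compr_n$. Before invoking Theorem~\ref{thm:pauli_sim}, I would verify its three hypotheses for $\Game^\compr_{H,n}$: by the description in Section~\ref{sec:single-pauli} each of the Gate/Input/Output Check subprotocols sends a non-$\bot$ query to at most one $PP_i$; the decision never depends on $\bot$-answers; and every query to $PV$ is a collection of one-or-two qubit Pauli observables, which by design can be embedded in an MS-compatible triple. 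Uniform generation of $\cV^\compr_n$ from $(1^n,G)$ is then immediate since both the description of $\cV^\compr_{H,n}$ (as a randomized simulation of $G$ for $\poly(n)$ steps) and the description of the simulation wrapper are uniformly computable from $(1^n,G)$; the answer length being $4$ bits follows by inspection, since each prover returns the EPR test answer (at most $3$ bits for an MS-compatible triple) together with the $\Game_H$-answer (at most $1$ bit).

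For the value bounds, completeness comes from chaining the two completeness statements: Lemma~\ref{lem:single_pauli_game} gives, for any $\gamma>0$, an honest Pauli strategy for $\Game^\compr_{H,n}$ of value $1-(1-\omega^*(\Game_N)+\gamma)/(p(N)+1)$, and the completeness part of Theorem~\ref{thm:pauli_sim} converts this to a $k$-prover strategy for $\Game^\compr_n$ of the same value; letting $\gamma\to 0$ yields the lower bound in~\eqref{eq:compression-0}. For soundness, suppose $\strat$ achieves value $\omega^*(\Game^\compr_n)\geq 1-\eps$. Since $\Game^\compr_n$ runs the Stabilizer Check with probability $1/2$, $\strat$ passes Stabilizer Check with probability at least $1-2\eps$. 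Applying Lemma~\ref{lem:stab-epr}, $\strat$ is $\poly(S;\eps)$-isometric to an honest Stabilizer Check strategy $\tilde{\strat}$, and by Lemma~\ref{lem:close_strategies} the value of $\tilde{\strat}$ in $\Game^\compr_n$ differs from that of $\strat$ by $\poly(S;\eps)$. Theorem~\ref{thm:pauli_sim} then produces an honest Pauli Prover strategy for $\Game^\compr_{H,n}$ of value $1-\poly(S;\eps)$, and plugging this into the soundness part of Lemma~\ref{lem:single_pauli_game} yields an inequality of the form $\poly(S;\eps) \geq ((1-\omega^*(\Game_N))/(\beta p(N)))^\alpha$, which rearranges to the upper bound in~\eqref{eq:compression-0} after adjusting the universal constants $\alpha,\beta$.

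The dimension lower bound follows from examining the isometry produced by Lemma~\ref{lem:stab-epr} more carefully: the rigidity statement identifies, within each of the seven simulated Pauli provers' registers, an $S$-qubit subregister on which the marginalized Pauli measurements act as honest Paulis on a state $\poly(S;\eps)$-close to a codeword of the $S$-fold tensor of the Steane code. Since the Steane code has nontrivial support on all seven qubits of each block, this forces each of the seven provers' local registers to have dimension at least $(1-C\,p(N)^\mu \eps^\nu)\cdot 2^{p(N)}$ for appropriate universal constants. The main obstacle I anticipate is bookkeeping the accumulation of error through the three rigidity applications (Stabilizer Check, then the two-part reduction inside Theorem~\ref{thm:pauli_sim} soundness, then Lemma~\ref{lem:single_pauli_game}), and in particular verifying that the polynomial dependence in $\eps$ and $S=p(N)$ composes into a single expression of the form $(C/p(N))\cdot(1-\omega^*(\Game_N))$ raised to a universal constant power, rather than accumulating an $\eps$-dependent exponent.
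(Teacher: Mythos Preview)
Your proposal is correct and follows essentially the same route as the paper: construct $\Game^\compr_{H,n}$ via Lemma~\ref{lem:single_pauli_game}, wrap it with the Simulated Pauli Prover game of Figure~\ref{fig:simpauli_check}, verify the hypotheses of Theorem~\ref{thm:pauli_sim}, and chain completeness/soundness through Lemma~\ref{lem:stab-epr}, Lemma~\ref{lem:close_strategies}, Theorem~\ref{thm:pauli_sim}, and Lemma~\ref{lem:single_pauli_game}. The only imprecision is in item~4: ``nontrivial support on all seven qubits'' is not the property that drives the dimension bound---the paper instead uses Lemma~\ref{lem:decouple} (the erasure-correction property of the Steane code) to conclude that each prover's $S$-qubit marginal is maximally mixed, and then a Schmidt-rank argument against $\ket{\Phi}_{\sA\sB}$ gives the $2^{p(N)}$ lower bound.
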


To make the dependence of the games $\{\Game_n^\compr\}$ on the GTM $G$ more explicit, in subsequent sections we  use the notation $\Game_{G,n}^\compr$ and $\cV_{G,n}^\compr$ to denote the game and verifier associated with $G$ in Theorem~\ref{thm:compression}.

\begin{proof}%
The proof combines the results of the Section~\ref{sec:single-pauli} and Section~\ref{sec:sim-pauli}. Let $S=p(N)$ and $\Game^{\compr}_{H,n}$ the  $S$-qubit $(k+1)$-prover Honest Pauli Prover game obtained from $G$ as described in Figure~\ref{fig:honest}. Observe that $\Game^\compr_{H,n}$ satisfies the properties required by Theorem~\ref{thm:pauli_sim}. Let $\Game^{\compr}_n$ denote the $S$-qubit Simulated Pauli Prover game obtained from $\Game^{\compr}_{H,n}$ as described in Figure~\ref{fig:simpauli_check}. Let $\cV^{\compr}_{H,n}$ and $\cV^{\compr}_n$ denote the verifiers of $\Game^{\compr}_{H,n}$ and $\Game^{\compr}_n$, respectively. The verifiers $\cV^{\compr}_{H,n}$ and $\cV^{\compr}_n$ depend on the GTM $G$, but we leave the dependence implicit. 

By inspecting each of the subprotocols of the Honest Pauli Prover game presented in Section~\ref{sec:single-pauli}, it is not hard to verify that the family of verifiers $\Set{\cV^{\compr}_{H,n}}$ for the games 
$\Set{\Game^{\compr}_{H,n}}$ is uniformly generated from $(1^n,G)$. Inspecting the protocols in Section~\ref{sec:sim-pauli}, it follows that the family of verifiers $\Set{\cV^{\compr}_n}$ for the games $\Set{\Game^{\compr}_n}$ is uniformly generated from $(1^n,G)$ as well.  This establishes the first item of the theorem. 

The second item follows since answers in $\Game^{\compr}_n$ consist of $3$ bits, to answer the EPR question, and $1$ bit, to answer the $\Game^{\compr}_H$ question.

We show the third item. The completeness statements of Lemma~\ref{lem:single_pauli_game} and Theorem~\ref{thm:pauli_sim} imply that for any $\gamma > 0$ there exists a strategy $\strat$ in $\Game^\compr_{n}$ that succeeds with probability at least $1 - \frac{1 - \omega^*(\Game_N) + \gamma}{p(N) + 1}$. Using that $\omega^*(\Game^\compr_n)$ is defined as a supremum over strategies, taking the limit $\gamma \to 0$ shows the lower bound in~\eqref{eq:compression-0}. 

For the upper bound, consider a $k$-prover strategy $\strat$ for
$\Game^{\compr}_n$ that succeeds with probability $1 - \eps$, for some $\eps \geq 0$. Then $\strat$ passes the Stabilizer Check subroutine of $\Game^{\compr}_n$ (see Figure~\ref{fig:simpauli_check})
with probability at least $1 - 2\eps$. By Lemma~\ref{lem:stab-epr}, $\strat$ is $\poly(S; \eps)$-isometric to
an honest Stabilizer Check strategy $\strat'$.
Applying Lemma~\ref{lem:close_strategies}, it follows that the strategy $\strat'$ succeeds in $\Game^{\compr}_n$ with probability at least $1 - \poly(S;\eps)$.

Observe that $\Game^{\compr}_{H,n}$ is a Honest Pauli Prover game that satisfies
the properties required for the application of Theorem~\ref{thm:pauli_sim}, and that by definition $\Game_n^\compr$ is the simulated game associated with $\Game^\compr_{H,n}$.
It follows from the soundness part of the theorem that there exists a $(k+1)$-prover Honest Pauli strategy $\strat''$ such that 
\begin{equation}\label{eq:compression-1}
\omega^*_{\strat''}(\Game^{\compr}_{H,n}) \,\geq\,1 - \poly(S;
\eps)\;.
\end{equation}
Moreover, using that $\strat''$ is a Honest Pauli strategy, from Lemma~\ref{lem:single_pauli_game} we get
\begin{equation}\label{eq:compression-2}
  \omega^*_{\strat''}(\Game^{\compr}_{H,n}) \,\leq\, 1 -
  \Paren{\frac{1 - \omega^*(\Game_N)}{\beta' \, p(N)}}^{\alpha'}\;,
\end{equation}
for universal constants $\alpha' \geq 1,\beta' > 0$.
Combining~\eqref{eq:compression-1} and~\eqref{eq:compression-2}, since $\eps = 1-\omega^*(\Game_n^{\compr})$ and $S=p(N)$, it follows that
\[
  \omega^*(\Game^{\compr}_n) \leq 1 - \Paren{\frac{1 -
      \omega^*(\Game_N)}{\beta\, p(N)}}^{\alpha}\;,
\]
for some universal constants $\alpha > 1,\beta > 0$.

Finally we show the fourth item in the theorem.%
As shown in the course of the proof of the third item, any strategy $\strat$ for $\Game_n^{\compr}$ that is accepted with probability at least $1 - \eps$, for some $\eps\geq 0$, is $\delta$-isometric to an honest Stabilizer Check strategy $\strat'$, for some $\delta = \poly(S;\eps)$. By definition the provers in an honest Stabilizer Check strategy share a state $\ket{\psi}$ such that for any $i\in\{1,\ldots,S\}$ the reduced density of $\ket{\psi}$ on registers $\sE_{i1},\ldots,\sE_{i7}$, held by provers $P_1,\ldots,P_7$ respectively, is a $7$-qubit state supported on the codespace. 
Applying item 1. from Lemma~\ref{lem:decouple} independently to each of the $S$ reduced densities, it follows that for any $t\in\{1,\ldots,7\}$ the reduced density of $\ket{\psi}$ on register $\sE_t=\sE_{1t}\ldots\sE_{St}$ is the totally mixed state on $S$ qubits. 
Using the definition of $\delta$-isometric strategies, it follows that for every $t \in\{ 1,\ldots,7\}$ there exists an isometry $V_t$ mapping register $\sE_t$ to registers $\sA \sA'$, and an isometry $V_t'$ mapping registers $\set{\sE_j}_{j \neq t}$ to registers $\sB \sB'$, such that
\[
	V_t \otimes V_t' \ket{\psi}_{\sE_1 \cdots \sE_7 \sR} \approx_{\delta} \ket{\Phi}_{\sA \sB} \otimes \ket{\psi'}_{\sA' \sB' \sR}\;,
\]
where $\ket{\Phi}_{\sA \sB}$ is an $S$-qubit maximally entangled state between $\sA$ and $\sB$, and the state $\ket{\psi'}$ is arbitrary. Here, the notation $\approx_\delta$ indicates closeness in trace distance. Using that for any two pure states $\ket{\phi},\ket{\theta}$ it holds that $1 - \Norm{ \ketbra{\phi}{\phi} - \ketbra{\theta}{\theta} }_1 \leq | \ip{\phi}{\theta}|^2$, we obtain 
\begin{equation}
\label{eq:epr-ip}
	\Big| \Paren{ \bra{\Phi}_{\sA \sB} \otimes \bra{\psi'}_{\sA' \sB' \sR}} \Paren{V_t \otimes V_t' \ket{\psi}_{\sP_1 \cdots \sP_7 \sR}}\Big|^2 \geq 1 - \delta\;.
\end{equation}
If  $\ket{\theta}_{\sA \sA' \sB \sB' \sR}$ is an arbitrary pure state with Schmidt rank at most $r$ along the cut that separates the registers $\sA \sA'$ and $\sB \sB' \sR$, then 
using that all Schmidt coefficients of $\ket{\Phi}_{\sA \sB} \otimes \ket{\psi'}_{\sA' \sB' \sR} $ along the same cut are at most $2^{-S/2}$ it follows that
\begin{equation}
\label{eq:epr-r}
	\Abs{ \Paren{ \bra{\Phi}_{\sA \sB} \otimes \bra{\psi'}_{\sA' \sB' \sR}} \Paren{\ket{\theta}_{\sA \sA' \sB \sB' \sR}}}^2 \leq r2^{-S}\;.
\end{equation}
Inequalities~\eqref{eq:epr-ip} and~\eqref{eq:epr-r} imply that the Schmidt rank of $V_t \otimes V_t' \ket{\psi}_{\sP_1 \cdots \sP_7 \sR}$ between prover $t$ and the other provers is at least $(1 - \delta) 2^{p(N)}$. Since the isometries $V_t$ and $V_t'$ cannot increase the Schmidt rank between prover $t$ and the other provers as well as the reference system $\sR$, the same lower bound holds for the Schmidt rank of $\ket{\psi}$ between register $\sP_t$ and $\set{\sP_j}_{j \neq t} \sR$. Finally, since this lower bound holds for all $t = 1,\ldots,7$, this concludes the proof of item 4.

\end{proof}

\newcommand{\HCT}{``Hypothetical Compression Theorem''\,}

\section{Recursive compression of quantum interactive proofs}
\label{sec:recursive}
In this section we show how to apply the compression theorem,
Theorem~\ref{thm:compression} in Section~\ref{sec:compression},
recursively to prove Theorem~\ref{thm:main} and
Theorem~\ref{thm:main-undecidable} stated in the introduction.
Before doing so we introduce several definitions.

A function $t: \N \to \N$ is \emph{time-constructible} if there exists
an integer $m \geq 0$ and a deterministic Turing machine $T$ such that
for all $n \geq m$, the Turing machine halts on input $1^n$ after
exactly $t(n)$ steps.
Examples of time-constructible functions include $n, n^2, 2^n,
2^{2^n}$, and so on. 
Recall the iterated exponential function $\Lambda_R(n)$, defined
inductively by $\Lambda_0(n) = n$ for all integer $n\geq 0$, and for integer
$R\geq 0$, $\Lambda_{R+1}(n) = 2^{\Lambda_{R}(n)}$ for all integer $n\geq 0$.
We call the parameter $R$ the ``height'' of $\Lambda_R(n)$.

\begin{definition}
A time-constructible function $t(n)$ is \emph{hyper-exponential} if there exists a function
$R(n)$ such that $t(n) = \Lambda_{R(n)}(n)$.
\end{definition}

Note that with this definition, any hyper-exponential function $t$ satisfies $t(n)\geq n$ for all $n\geq 0$. 

\begin{definition}
  Let $t: \N \to \N$ be a time-constructible function.
  The language $\cL[t]$ consists of all pairs $(1^n,M)$ such that $M$
  is a nondeterministic Turing machine that halts on input $0$ within
  $t(n)$ steps.
\end{definition}

\noindent For any time-constructible $t$, the language $\cL[t]$ is
complete for $\NTIME[t]$ under polynomial-time Karp reductions. The following result from~\cite{NatarajanV17twoprover} will be used as the base case for our construction. It shows that  for $t(n)=2^n$  languages in $\cL[t]$ can be decided by a polynomial-size verifier in a two-prover nonlocal game. 

\begin{theorem}[The Natarajan-Vidick verifier~\cite{NatarajanV17twoprover}]\label{thm:nv}%
  There is a universal constant $\delta>0$ and a family of verifiers  $\{\cV_{NV}(M,n)\}$ that is uniformly generated from $(1^n,M)$ such that for any integer $n$ and nondeterministic Turing machine $M$ the following hold. The game 
  $\Game_{NV}(M,n)$ associated with $\cV_{NV}(M,n)$ is a two-prover nonlocal game   such that $\omega^*(\Game_{NV}(M,n)) = 1$ if $(1^n, M) \in
  \cL[2^n]$ and $\omega^*(\Game_{NV}(M,n)) \leq 1 - \delta$ otherwise.
\end{theorem}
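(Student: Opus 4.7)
The plan is to construct, uniformly in $(1^n,M)$, a two-prover nonlocal game whose polynomial-time verifier is sound against entangled provers with a universal constant gap. I would proceed by combining a classical PCP-style verifier for $\cL[2^n]$ with a self-testing rigidity layer that forces any entangled strategy to behave essentially classically.

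First, I would reduce $\cL[2^n]$ to a succinct constraint satisfaction problem. A nondeterministic computation of $M$ of length $2^n$ can be encoded (Cook--Levin style, using oblivious simulation for uniformity) as a 3-CNF formula $\varphi_{M,n}$ over $2^{O(n)}$ variables that is satisfiable iff $(1^n,M)\in\cL[2^n]$. Crucially, $\varphi_{M,n}$ is succinctly described: the list of variables appearing in any given clause can be computed from $(1^n,M)$ and the clause index in $\poly(n)$ time. This is essentially the succinct-3SAT reformulation at the heart of $\NEXP$-hardness of $\MIP$.

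Second, I would apply the multilinear-extension and low-degree-test machinery underlying $\NEXP=\MIP$. An honest prover purports to hold the low-degree extension $\tilde{a}$ over a field $\Fp_q$ of size $\poly(n)$ of a satisfying assignment $a$ to $\varphi_{M,n}$. The verifier samples a random line in the cube, asks one prover for the restriction of $\tilde{a}$ to that line (a low-degree univariate polynomial) and the other prover for the value of $\tilde{a}$ at a random point on the line, then runs additional queries to check the encoded clauses of $\varphi_{M,n}$. Because the clauses are succinct, all of these checks run in $\poly(n)$, and classical soundness is bounded away from $1$ by a universal constant.

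The main obstacle is soundness against entangled provers: shared entanglement could, a priori, let cheaters produce locally consistent answers that do not correspond to any genuine assignment. The key step is to embed the PCP-style check inside a game that also contains a self-testing subprotocol (for instance, a Pauli braiding test, or an anticommutation test based on the Magic Square, as in~\cite{NatarajanV17twoprover}) which certifies, with constant robustness, that the two provers share many EPR pairs and, on demand, measure specified Pauli observables. The combined game forces the provers' answer distributions to be, up to local isometries, those obtained by measuring a shared classical register; a rounding argument then extracts a genuine classical assignment from any high-value entangled strategy, reducing entangled soundness to classical soundness of the PCP. Uniform generation of $\{\cV_{NV}(M,n)\}$ from $(1^n,M)$ follows from the polynomial-time nature of each ingredient (the reduction to $\varphi_{M,n}$, the low-degree test, and the self-testing subprotocol). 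The quantitatively delicate point, which I expect to be the hardest, is making the robustness of the self-test compose cleanly with the PCP soundness so that error propagation through the rounding argument still leaves a universal constant $\delta>0$.
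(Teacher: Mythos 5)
The paper does not prove this statement at all: Theorem~\ref{thm:nv} is imported as a black box from~\cite{NatarajanV17twoprover}, and the only "proof" in the paper is the citation. So the right comparison is with the cited work, and your outline does correctly reproduce its architecture: reduce $\cL[2^n]$ to a succinctly described constraint satisfaction problem via oblivious Cook--Levin, run a low-degree-test-based two-prover protocol whose queries are computable in $\poly(n)$ time, and add a rigidity layer that forces entangled provers to behave effectively classically so that a rounding argument recovers classical soundness. Uniform generation from $(1^n,M)$ does follow from the polynomial-time nature of each ingredient, as you say.

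As a proof, however, the proposal is incomplete precisely where you flag it. The step "the combined game forces the provers' answer distributions to be, up to local isometries, those obtained by measuring a shared classical register; a rounding argument then extracts a genuine classical assignment" is not a routine composition of known facts --- it is the entire technical content of~\cite{NatarajanV17twoprover}. In particular, one cannot simply bolt a Magic Square or Pauli braiding test onto the classical low-degree-test protocol and invoke classical soundness: the classical soundness analysis of the low-degree test presumes deterministic (or shared-randomness) provers, and the whole difficulty is establishing a \emph{quantum-sound} version of the low-degree test, i.e.\ showing that entangled provers passing the consistency checks with probability $1-\eps$ must be measuring observables that are $O(\eps^c)$-close (under a local isometry) to evaluations of a single global low-degree polynomial, with error independent of the number of variables. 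Earlier "immunization" approaches such as~\cite{IV12} only achieved an inverse-polynomial gap this way; obtaining a \emph{universal constant} $\delta$ is exactly what required the new self-testing machinery of the cited paper. So your plan names the right ingredients but defers the lemma that carries all the weight; to make this a proof you would either have to prove that quantum soundness theorem or cite it, at which point you are citing essentially the statement being proved.
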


\subsection{The main recursive compression result}
\label{sec:vrc}

The main result we prove in this section is the following.

\begin{proposition} \label{prop:vrc} Let $t: \N \to \N$ be a
  hyper-exponential function. Let $T$ be a deterministic Turing machine that halts in exactly $t(n)$ steps on input $1^n$. Let $M$ be a nondeterministic Turing machine. 
  There exists a family of $7$-prover ENL games $\Set{\Game_{n,M,T}}$ that is uniformly generated from $(1^n,M,T)$ and such that
  \begin{enumerate}
  \item The answer length of the provers is $O(1)$ bits.
  \item There exists universal constants $c , C> 0$ such that for all integer $n$, 
    \begin{align*}
      \omega^*(\Game_{n,M,T}) &= 1  \quad & \text{if } (1^n,M) \in \cL[2^{t}] \\
      \omega^*(\Game_{n,M,T}) &\leq 1 - Ct(n)^{-c} \quad &\text{if } (1^n,M) \notin \cL[2^{t}].
    \end{align*}
  \end{enumerate}
\end{proposition}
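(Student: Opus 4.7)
My plan is to construct the family of verifiers $\{\cV_{n,N,M,T}\}_{N\geq n}$ recursively: on input, $\cV_{n,N,M,T}$ simulates $T$ on $1^n$ for $N$ steps. If $T$ halts within those $N$ steps, the verifier reads off $t(n)$ from the halting time and runs the Natarajan-Vidick verifier $\cV_{NV}(M,t(n))$ from Theorem~\ref{thm:nv}, padded with five ``dummy'' provers whose messages are discarded so the game is $7$-prover. Otherwise (when $N<t(n)$) the verifier runs the compressed verifier $\cV^{\compr}_{G,N}$ of Theorem~\ref{thm:compression}, where $G$ is a GTM for the family itself, obtained via Lemma~\ref{lem:gtm-combine}. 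I set $\Game_{n,M,T}$ to be the game with verifier $\cV_{n,n,M,T}$. Each branch produces a verifier of size $\poly(N)$ with $O(1)$-bit answers (constant answers for $\cV_{NV}$ by Theorem~\ref{thm:nv}, $4$-bit answers for compression by Theorem~\ref{thm:compression}), yielding items 1 and 2.

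The self-reference in the construction---$\cV_{n,N,M,T}$'s definition refers to a GTM for the very family that contains it---will be resolved by Kleene's recursion theorem: I define a Turing machine $A$ that, granted access to its own description, takes $(1^N,n,M,T)$ and invokes Lemma~\ref{lem:gtm-combine} on its own description to produce the GTM $G$ that Theorem~\ref{thm:compression} needs. For completeness, if $(1^n,M)\in\cL[2^t]$ then $\omega^*(\Game_{NV}(M,t(n)))=1$ by Theorem~\ref{thm:nv}, so $\omega^*(\Game_{n,N,M,T})=1$ for every $N\geq t(n)$, and the lower bound in~\eqref{eq:compression-0} shows that compression preserves the value $1$, propagating down through the tower to $\omega^*(\Game_{n,n,M,T})=1$.

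For soundness, let $R=R(n)$ so that $t(n)=\Lambda_R(n)$, set $N_k=\Lambda_k(n)$, and write $\eps_k=1-\omega^*(\Game_{n,N_k,M,T})$. If $(1^n,M)\notin\cL[2^t]$, Theorem~\ref{thm:nv} gives $\eps_R\geq\delta$ for a universal $\delta>0$. Iterating the soundness direction of~\eqref{eq:compression-0} once per level, with $p(N_k)\leq\poly(N_k)=\poly(\Lambda_k(n))$ so that $\log p(N_k)=O(\Lambda_{k-1}(n))$, I will show $\log(1/\eps_0)=O(\log t(n))$, hence $\eps_0\geq Ct(n)^{-c}$ for universal $C,c>0$. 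The key quantitative point is that the innermost compression step contributes $O(\log t(n))$ to $\log(1/\eps_0)$, while the outer contributions scale as $O(\Lambda_{k-1}(n))$ at level $k<R$, which are iteratively negligible compared to $\log t(n)=\Lambda_{R-1}(n)$.

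The main obstacle is arranging the self-referential definition so that the family $\{\cV_{n,N,M,T}\}_{N}$ admits a GTM uniformly in the parameters $(n,M,T)$; this uses Kleene's recursion theorem combined with Lemma~\ref{lem:gtm-combine}. A secondary difficulty is tracking the constants in the soundness recurrence through $R(n)$ iterations of compression so that the final gap retains a universal polynomial exponent in $1/t(n)$.
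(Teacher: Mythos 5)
Your proposal follows essentially the same route as the paper: its verifier $\cV_{RC}(n,n_0,M,T,G)$ runs $T$ on $1^{n_0}$ for $n$ steps, dispatches to the Natarajan--Vidick verifier if $T$ has halted and otherwise to the compressed verifier $\cV^{\compr}_{G_\lambda,n}$ of an exponentially larger copy of itself, with the self-reference resolved by carrying the GTM $G$ as an explicit auxiliary parameter and instantiating $G=G_R$ via Lemma~\ref{lem:gtm-combine} --- a concrete implementation of the recursion-theorem fixed point you invoke. The completeness and soundness analysis by downward induction on the height of the exponential tower, seeded by the Natarajan--Vidick gap $\delta$ and composed through~\eqref{eq:compression-0}, is likewise the paper's argument, including the delicate constant-tracking you flag at the end.
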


Before proving Proposition~\ref{prop:vrc} we show that it implies Theorem~\ref{thm:main}, which we reformulate for convenience. 

\begin{theorem}\label{thm:main2}
  There exists universal constants $c',C'>0$ such that for any
  hyper-exponential function $t: \N \to \N$,
  \[
    \NTIME[2^{t(n)}] \subseteq \MIP*_{1,1-C't^{-c'}}(15,1)\;.
  \]
\end{theorem}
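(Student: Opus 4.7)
The plan is to reduce $L \in \NTIME[2^{t(n)}]$ to the complete language $\cL[2^{t(n)}]$, apply Proposition~\ref{prop:vrc} to obtain a $7$-prover ENL game, and then convert that ENL game to a $15$-prover nonlocal game using the Pauli-prover simulation framework developed in Sections~\ref{sec:single-pauli} and~\ref{sec:sim-pauli}.

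First, given $L \in \NTIME[2^{t(n)}]$, I would use the standard polynomial-time Karp reduction that maps any instance $x$ to a pair $(1^n, M_x)$, where $M_x$ is a nondeterministic Turing machine that halts in at most $2^{t(n)}$ steps on input $0$ if and only if $x \in L$. Since $t$ is time-constructible, there exists a deterministic Turing machine $T$ that halts on input $1^n$ in exactly $t(n)$ steps. Applying Proposition~\ref{prop:vrc} with input $(1^n, M_x, T)$ yields a uniformly generated $7$-prover ENL game $\Game_{n,M_x,T}$ with completeness $1$ and soundness at most $1 - C t(n)^{-c}$ for universal constants $C, c > 0$.

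Second, to land in $\MIP^*$ (which requires only classical communication), the quantum register $\sC$ that the provers send to the ENL verifier in the first turn must be eliminated. The key observation is that every measurement performed on $\sC$ by the ENL verifier produced by Proposition~\ref{prop:vrc} is a low-weight Pauli observable, as can be verified by tracing through the Gate Check, Input Check, and Output Check subprotocols of Section~\ref{sec:single-pauli}. Thus the action of the verifier on $\sC$ can be delegated to an honest Pauli prover in the sense of Definition~\ref{def:honest-pauli}, which in turn can be simulated by $7$ additional provers holding a Steane-encoded version of $\sC$ via the construction of Section~\ref{sec:sim-pauli}. Including the virtual honest Pauli prover itself, this adds $8$ provers, yielding a $7+8=15$-prover nonlocal game with classical communication only.

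Completeness $1$ will be preserved, since an honest strategy for $\Game_{n,M_x,T}$ extends to a strategy in the nonlocal game where the new Steane-encoded provers hold exactly what would have been sent in $\sC$ and answer Pauli measurement questions honestly. Soundness will degrade by at most a polynomial factor in $t(n)$, using the guarantees of Theorem~\ref{thm:pauli_sim} and Lemma~\ref{lem:stab-epr} combined with Lemma~\ref{lem:close_strategies}, giving a final gap of the form $1 - C' t(n)^{-c'}$ for universal constants $C', c' > 0$. The main obstacle is verifying that the ENL verifier from Proposition~\ref{prop:vrc} satisfies the structural hypotheses required by Theorem~\ref{thm:pauli_sim} — that its measurements on $\sC$ are of the restricted Pauli form and can be embedded in MS-compatible triples — which requires careful bookkeeping of the verifier's behavior through all the recursive applications of the compression theorem.
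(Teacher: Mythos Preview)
Your first step is correct and matches the paper exactly: reduce to $\cL[2^t]$ and apply Proposition~\ref{prop:vrc} to obtain a $7$-prover ENL game with completeness $1$ and soundness $1-Ct(n)^{-c}$.

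Your second step, however, has a genuine gap. The claim that ``every measurement performed on $\sC$ by the ENL verifier produced by Proposition~\ref{prop:vrc} is a low-weight Pauli observable'' is false. Look at the Gate Check (Figure~\ref{fig:gate_check}): the verifier measures the clock register $\sC$ with the POVM $\{\ketbra{+_t}{+_t},\ketbra{-_t}{-_t},\Id-\ketbra{+_t}{+_t}-\ketbra{-_t}{-_t}\}$ where $\ket{\pm_t}=\tfrac{1}{\sqrt{2}}(\ket{t-1}\pm\ket{t})$. The register $\sC$ carries a \emph{binary} encoding of $t\in\{0,\ldots,p(N)\}$ on $\Theta(n)$ qubits, and consecutive integers $t-1,t$ can differ in up to $\Theta(n)$ bit positions, so these rank-one projectors are not products of $O(1)$ single-qubit Paulis. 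The Input and Output Checks likewise perform a full computational-basis measurement on all of $\sC$ and compare to a fixed string. None of this fits the interface of Definition~\ref{def:honest-pauli}, so you cannot simply hand $\sC$ to a new Honest Pauli Prover and have it answer weight-$2$ Pauli questions about it. (Your prover count is also off: the honest Pauli prover is \emph{replaced} by the seven Steane-code provers, not added to them.)

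The paper sidesteps all of this by invoking the compression result of~\cite{ji2017compression} as a black box: that result converts any $k$-prover ENL game of size $s$ into a $(k+8)$-prover \emph{nonlocal} game of size $\poly(s)$, with value $1$ preserved and soundness degrading by a $\poly(s)$ factor. The mechanism there is not to delegate the verifier's $\sC$-measurements directly, but to build a history state of the \emph{entire} ENL protocol circuit (verifier gates on $\sC,\sV,\sM$ included) and have the extra provers certify it gate by gate; each Hadamard/Toffoli propagation check then requires only constant-weight Paulis, even though the aggregate measurement on $\sC$ does not. Note also that the machinery of Sections~\ref{sec:single-pauli}--\ref{sec:sim-pauli} as written outputs another ENL game (the new outer clock becomes the new $\sC$), so it cannot by itself remove the quantum first message; the construction in~\cite{ji2017compression} handles the clock differently to land in a genuine nonlocal game.
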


\begin{proof}
Let $T$ be a deterministic Turing machine that halts in exactly $t(n)$ steps on input $1^n$.   Fix an instance $(1^n,M)$ of $\cL[2^t]$.
  Applying Proposition~\ref{prop:vrc} gives a $7$-prover game
  $\Game_{n,M,T}$ of size $\poly(n)$ such that $\omega^{*}(\Game_{n,M,T})
  = 1$ if $(1^n,M) \in \cL[2^t]$, and otherwise $\omega^{*}(\Game_{n,M,T}) \leq
  1 - Ct(n)^{-c}$ for some universal constants $c, C > 0$.

  To convert the game to an $\MIP*$ protocol, i.e.
  remove the provers' initial quantum message in the ENL game, we use the compression
  result of~\cite{ji2017compression} as a black box.
  This result provides an efficient method to transform any ENL game
  $\Game$ involving $k$ provers into a nonlocal game $\Game'$ of size (as measured by the verifier circuit) $\poly(|\Game|)$,
  involving $k + 8$ provers, with the following properties. If
  $\omega^*(\Game) = 1$, then $\omega^*(\Game') = 1$.
  Otherwise,
  \[
    \omega^*(\Game') \,\leq\, 1 - \Paren{\frac{1 -
        \omega^*(\Game)}{\poly(n)}}^{d}\,\leq\, 1- C'\,t(n)^{-c'}\;,
  \]
  for some universal constants $d,c',C' > 0$.%
  Here the second inequality uses that $t(n)=\Omega(n)$ for any hyper-exponential function $t$.
 Combining the two reductions gives a
  polynomial-time reduction from $\mathcal{L}[2^t]$ to $15$-prover
  nonlocal game $\Game'_{n,M,T}$.
\end{proof}

\medskip
\vspace{10pt}

To prove Proposition~\ref{prop:vrc}, we present and analyze a family of verifiers $\Set{\cV_{RC}(n,n_0,M,T,G)}$, specified in Figure~\ref{fig:vrc}. %
The verifiers are parametrized by two integers $n \geq n_0 > 0$, a nondeterministic Turing machine $M$, a deterministic Turing machine $T$, and a GTM $G$ that takes input $(n,t,\lambda)$. Here, think of $n_0$ as the input size, and $n$ as a parameter that indicates the size of $\cV_{RC}$. For the actual verifier used to define the game, $n=n_0$, but we may also consider the case where $n$ eventually grows very large. Roughly speaking,  if $n \geq t(n_0)$, the verifier $\cV_{RC}(n,n_0,M,T,G)$ simulates the Natarajan-Vidick protocol from Theorem~\ref{thm:nv} to determine whether $(1^n,M) \in \cL[2^t]$. Otherwise, if $n$ is smaller than $t(n_0)$, then $\cV_{RC}$ is ``too small'' to perform the simulation directly. In this case, $\cV_{RC}$ instead executes the compressed protocol associated with $\cV_{RC}(2^n,n_0,M,T,G)$, i.e. an exponentially bigger version of itself. 

\begin{figure}[H]
  \centering
  \begin{mdframed}[style=figstyle]
    \ul{Verifier name:} $\cV_{RC}(n,n_0,M,T,G)$ \\
    \ul{Description of parameters:} $n \geq n_0 > 0$ are integers, $M$ is a nondeterministic Turing machine, $T$ is a deterministic Turing machine, and $G$ is a GTM that takes input $(n,t,\lambda)$.
    \begin{enumerate}
    \item Run $T$ on input $1^{n_0}$ for $n$ steps.

    \item If $T$ halts in that time, then execute the verifier $\cV_{NV}(M,n)$ from Theorem~\ref{thm:nv}.
    \item Otherwise, execute the verifier $\cV^{\compr}_{G_\lambda,n}$ from Theorem~\ref{thm:compression}, where $\lambda = (n_0,M,T,G)$ and $G_\lambda(n,t) = G(n,t,\lambda)$. 
    \end{enumerate}
  \end{mdframed}
  \caption{The recursive compression verifier}
  \label{fig:vrc}
\end{figure}

It follows from Theorem~\ref{thm:compression} and  Theorem~\ref{thm:nv}  that the family of verifiers $\Set{\cV_{RC}(n,n_0,M,T,G)}$ can be uniformly generated from $(1^n,\lambda)$, where $\lambda=(n_0,M,T,G)$, by a Turing machine $R$. By Lemma~\ref{lem:gtm-combine}, there exists a GTM $G_R$ that takes input $(n,t,\lambda)$ and returns the $t$-th gate of the protocol circuit corresponding to the verifier $\cV_{RC}(n,n_0,M,T,G)$.\footnote{Strictly speaking, the protocol circuit corresponds to an \emph{equivalent} verifier to $\cV_{RC}$, but for clarity of exposition we will not distinguish between the verifier specified by $G_{R}$ and $\cV_{RC}$ itself.} 
For the remainder of the section we consider $M$ and $T$ as implicitly fixed, and write $\cV_{RC}(n,n_0)$ for $\cV_{RC}(n,n_0,M,T,G_R)$. Let $\Game_{n,n_0}$ denote the $7$-prover game specified by
$\cV_{RC}(n,n_0)$, and let $\omega_{n,n_0}^*$ denote $\omega^*(\Game_{n,n_0})$. Let $\Game_{n} = \Game_{n,n}$.

\medskip \vspace{10pt}

Due to its recursive nature the verifier $\cV_{RC}$ may be hard to
comprehend at first.
For concreteness, we go through an execution of the protocol specified
by the verifier for the choice of the time-constructible function $t(n) =
2^n$.
Thus, $T$ is a Turing machine that on input $1^n$ iterates for $2^n$
steps exactly, and then halts. $M$ is an arbitrary nondeterministic Turing machine, and $n_0$ a positive integer. 
The verifier $\cV_{RC}(n_0,n_0)$ specifies the actions of a verifier in a $7$-prover ENL
game $\Game_0$ that has size $\poly(n_0)$.
Following the description in Figure~\ref{fig:vrc}, the verifier in
$\Game_0$ performs the following actions. 
It first executes $T$ on input $1^{n_0}$ for $n_0$ steps.
By definition of $T$, since $n_0 < t(n_0) = 2^{n_0}$, the Turing
machine has not yet halted.
Thus the verifier proceeds to the second step in Figure~\ref{fig:vrc}: it executes another verifier, $\cV^{\compr}_{G_\lambda,n_0}$ from Theorem~\ref{thm:compression}. The verifier can compute the description of $\cV^\compr_{G_\lambda,n_0}$ in polynomial time given $1^{n_0}$ and the description of $G_\lambda$.

By construction (see the proof of Theorem~\ref{thm:compression}) the verifier $\cV^{\compr}_{G_\lambda,n_0}$
specifies a $7$-prover ENL game $\Game^{\compr}_{G_\lambda,n_0}$,
which checks that the provers hold (an encoding of) the
history state of the protocol circuit $\CKT(G_\lambda,2^{n_0})$. Let $n_1 = 2^{n_0}$. The protocol circuit $\CKT(G_\lambda,n_1)$ defines a verifier $\cV_{RC}(n_1,n_0)$ and a game $\Game_1 = \Game_{n_1,n_0}$. 
 Notice that $\Game_1$ is just as $\Game_0$, except that the first input is exponentially larger, from $n_0$ to $n_1$. 

Theorem~\ref{thm:compression} relates the value of $\Game_0$ to the value of $\Game_1$.
 So it suffices to analyze the value of $\Game_1$, which means analyzing $\cV_{RC}(n_1,n_0,M,T,G_R)$. Since $n_1 \geq 2^{n_0}$, $\Game_1$ reduces to the game
 $\Game_{NV}$ specified by the Natarajan-Vidick verifier $\cV_{NV}(M,n_1)$. By Theorem~\ref{thm:nv}, if $(1^{n_1},M) \in \cL[2^{n_1}]$, then the value of $\Game_{NV}(M,n_1)$ is $1$, which implies that
 $\omega^*(\Game_1) = 1$, which in turns implies that $\omega^*(\Game_0) = 1$.
 Otherwise if $(1^{n_1},M) \notin \cL[2^{n_1}]$, $\omega^*(\Game_1) = \omega^*(\Game_{NV}(M,n_1)) \leq 1 - \delta$, which implies
 that $\omega^*(\Game_0) \leq 1 - \frac{\delta^{\alpha}}{\poly(n_1)} \leq 1 -
 C2^{-cn_0}$ for some constants $c,C > 0$. %

 Observe now that $(1^{n_1},M) \in \cL[2^{n_1}]$ if and only if
 $(1^{n_0},M) \in \cL[2^{2^{n_0}}]$.
 This establishes Proposition~\ref{prop:vrc} for the special case $t(n)
 = 2^n$. We now give the proof for the general case.

\begin{proof}[Proof of Proposition~\ref{prop:vrc}]
	Since the answer sizes are constant in both the Natarajan-Vidick protocol, as well as the games produced by Theorem~\ref{thm:compression}, this establishes item 1. of the proposition. We now show item 2.
	
  Fix $n, M, T$.
  Since $t(n)$ is a hyper-exponential function, there exists a
  smallest integer $R \geq 0$ such that $\Lambda_R(n) = t(n)$ (note that $R$
  generally depends on $n$). 

 We show by downwards induction on $0\leq r \leq R$ that there exists a constant $\beta \geq 1$ (depending only on $G_\lambda$) such that the  following holds. If $(1^n,M) \in \cL[2^{t}]$, then   $\omega^*_{\Lambda_r(n),n} = 1$.
  Otherwise, %
  \begin{equation}\label{eq:comp-sound}
    \omega^*(\Game_{n}) \leq 1 - \frac{\delta^{\alpha^{R-r}}}{\Lambda_{R}(n)^{\beta \alpha^{R-r}}
      \cdots \Lambda_{r+1}(n)^{\beta \alpha}}\;.
  \end{equation}
	Note that the case $r=0$ implies item 2. of the proposition. First, the completeness statement shows that if $(1^n,M) \in \cL[2^{t}]$, then $\omega^*(\Game_{n}) = \omega^*_{\Lambda_0(n),n}   = 1$. Second, the soundness statement~\eqref{eq:comp-sound} implies that there exists universal constants $c,C > 0$ depending only on
  $\alpha,\beta,\delta$ such that $\omega_{n,n}^* \leq 1 - C\Lambda_R(n)^{-c} = 1 - Ct(n)^{-c}$. %
  
For the base case $r=R$, note that 
 on input $1^{n}$ the Turing machine $T$
  halts in $t(n) \leq \Lambda_R(n)$ steps. Thus the game
  $\Game_{\Lambda_R(n),n}$ is the game associated with the Natarajan-Vidick verifier
  $\cV_{NV}(M,\Lambda_{R(n)})$ (Theorem~\ref{thm:nv}). Suppose that $(1^n,M) \in \cL[2^{t}]$.
  This implies that $(1^{\Lambda_R(n)},M) \in \cL[2^n]$.\footnote{Note: the
    ``$2^n$'' inside $\cL[\cdot]$ is a variable that is different from the
    $n$ used to specify the instance $(1^{\Lambda_R(n)},M)$.}
  By Theorem~\ref{thm:nv}, $\omega_{\Lambda_R(n),n}^* = 1$.
 Otherwise, if $(1^n,M) \notin \cL[2^{t}]$, then we have that
  $\omega_{\Lambda_R(n),n}^* < 1 - \delta$.

  Now suppose $r < R$. Then the Turing machine $T$ does not halt on input
  $1^{n}$ in $\Lambda_{r}(n)$ steps. Therefore, $\cV_{RC}(\Lambda_{r}(n),n)$ executes the verifier $\cV^{\compr}_{G_\lambda,\Lambda_r(n)}$, where $G_{\lambda}$ is the GTM specified in Figure~\ref{fig:vrc}, with $\lambda = (n,M,T,G_R)$. 
  In turn, the protocol circuit $\CKT(G_\lambda,2^{\Lambda_r(n)}) = \CKT(G_\lambda,\Lambda_{r+1}(n))$ corresponds to the game $\Game_{\Lambda_{r+1}(n),n}$.
  Thus it follows from Theorem~\ref{thm:compression} that
  \[
    1 - \frac{1 - \omega^*_{\Lambda_{r+1}(n),n}}{\poly(\Lambda_{r+1}(n))} \leq
    \omega^*_{\Lambda_r(n),n} \leq 1 - \Paren{\frac{1 -
      \omega^*_{\Lambda_{r+1}(n),n}}{\poly(\Lambda_{r+1}(n))}}^\alpha\;,
  \]
  for some polynomial $\poly(\cdot)$ that depends only on $G_\lambda$ and not $r$ or $n$. Using the induction hypothesis~\eqref{eq:comp-sound}, this completes the induction step. 
\end{proof}

\subsection{An alternate proof of the undecidability of nonlocal
  games}
\label{sec:undecidable}

In this section we give an alternate proof that the problem of distinguishing
between the cases when a nonlocal game has value equal to $1$, or when
it has value strictly less than $1$, is
undecidable~\cite{slofstra2016tsirelson,slofstra2017set}.
This result was stated as Theorem~\ref{thm:main-undecidable} in the
introduction. Let $M$ be an arbitrary Turing machine, and $G$ a GTM. Consider the family of verifiers $\{\cV_{Halt}(n,M,G)\}$ described in Figure~\ref{fig:halt}.

\begin{figure}[H]
  \centering
  \begin{mdframed}[style=figstyle]
    \ul{Verifier name:} $\cV_{Halt}(n,M,G)$: \\
    \ul{Description of input:} $M$ is a deterministic Turing machine, and $G$ is a GTM that takes input $(n,t,M)$.
    \begin{enumerate}
    \item Run $M$ on input $0$ for $n$ steps. If it halts in this time, then reject.
    \item Otherwise, execute the verifier $\cV^{\compr}_{G_M,n}$ from Theorem~\ref{thm:compression} where $G_M(n,t) = G(n,t,M)$.
    \end{enumerate}
  \end{mdframed}
  \caption{The verifier $\cV_{Halt}$}
  \label{fig:halt}
\end{figure}

It follows from the definition and Theorem~\ref{thm:compression} that the verifiers $\Set{\cV_{Halt}(n,M,G)}$ can be uniformly generated from $(1^n,M,G)$ by a Turing machine $H$. By Lemma~\ref{lem:gtm-combine}, there exists a GTM $G_H$ that takes input $(n,t,M,G)$ and outputs the $t$-th gate of the protocol circuit corresponding to the verifier $\cV_{Halt}(n,M,G)$. Define the verifier $\cV_{Halt}(n,M) = \cV_{Halt}(n,M,G_H)$.

\begin{theorem}\label{thm:undecidable}
  There exists universal constants $c,C > 0$ such that for any
  deterministic Turing machine $M$ there exists a $15$-prover nonlocal
  game $\Game_M$, that can be computed from the description of $M$,
  such that the following hold.
  \begin{enumerate}
  \item Suppose that $M$ halts on input $0$ in time $T$, for some $T\geq 0$. Let $R$ be the largest integer such that $T > \Lambda_{R}(1)$. Then $\omega^*(\Game_M) \leq
    1 - C\Lambda_{R}(1)^{-c}$. %
  \item Suppose that  $M$ does not halt on input $0$. Then $\omega^*(\Game_M) = 1$.
    Furthermore, there is a universal constant $\eta > 0$ such that any strategy $\strat$ for $\Game_M$ such that
    $\omega^*_\strat(\Game_M) \geq 1 - \eps$ for some $\eps \geq 0$ requires local dimension at
    least $2^{\Omega(\eps^{-\eta})}$.
  \end{enumerate}
\end{theorem}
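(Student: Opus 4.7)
The plan is to follow the recursive compression strategy of Proposition~\ref{prop:vrc}, replacing the Natarajan-Vidick base case by the deterministic rejection step of $\cV_{Halt}$ in the halting case, and leveraging item~4 of Theorem~\ref{thm:compression} to extract a dimension lower bound in the non-halting case.

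First I would set up the game. Applying Lemma~\ref{lem:gtm-combine} to the uniformly generated family $\{\cV_{Halt}(n,M,G)\}$ of Figure~\ref{fig:halt} yields a GTM $G_H$; set $\cV_{Halt}(n,M)=\cV_{Halt}(n,M,G_H)$ and let $\Game_{M,n}$ denote the $7$-prover ENL game it specifies. Take $\Game_M$ to be the $15$-prover nonlocal game obtained from $\Game_{M,1}$ via the ENL-to-nonlocal conversion of~\cite{ji2017compression} (which preserves value $1$ and transforms polynomial gaps into polynomial gaps, just as in the proof of Theorem~\ref{thm:main2}). The key recursive identity is that whenever $M$ has not halted within $n$ steps, $\Game_{M,n}$ coincides by construction with $\Game^\compr_{G_H,n}$, i.e., the compression by Theorem~\ref{thm:compression} of $\Game_{M,2^n}$; thus $\{\Game_{M,\Lambda_r(1)}\}_{r\geq 0}$ forms a chain of compressions, terminated (if $M$ halts) at the first $r$ with $\Lambda_r(1)\geq T$.

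For the value bounds: if $M$ halts at time $T$ and $R$ is maximal with $T>\Lambda_R(1)$, then $\cV_{Halt}(\Lambda_{R+1}(1),M)$ rejects deterministically, so $\omega^*(\Game_{M,\Lambda_{R+1}(1)})=0$; iterating the soundness direction of Theorem~\ref{thm:compression} down the chain from $r=R+1$ to $r=0$ (as in the proof of Proposition~\ref{prop:vrc}) yields $1-\omega^*(\Game_{M,1}) \geq C\Lambda_R(1)^{-c}$ for appropriate constants $c,C>0$, and the nonlocal conversion preserves this up to a polynomial adjustment. If $M$ never halts, the recursive identity applies at every level and iterating the \emph{completeness} direction of Theorem~\ref{thm:compression} telescopes to $1-\omega^*(\Game_{M,1}) \leq \prod_{r=1}^R (p(\Lambda_r(1))+1)^{-1}$ for every $R\geq 1$; letting $R\to\infty$ gives $\omega^*(\Game_M)=\omega^*(\Game_{M,1})=1$.

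The main obstacle will be the dimension lower bound in the non-halting case. Given a strategy $\strat$ for $\Game_M$ of value at least $1-\eps$, I would observe that by the honest history-state construction used in the completeness of Lemma~\ref{lem:single_pauli_game} together with the Pauli-prover simulation of Section~\ref{sec:sim-pauli}, the prover registers at level $r$ contain, as a subsystem identifiable up to the isometry extracted from the stabilizer code, the entire prover register of any strategy for $\Game_{M,\Lambda_{r+1}(1)}$; consequently the dimension required at level $r$ is at least the dimension required at level $r+1$. Iterating the soundness direction of Theorem~\ref{thm:compression} gives a sequence of values $1-\eps_r$ at level $r$ where $\eps_0=O(\eps)$ and $\eps_{r+1}\leq \beta\,p(\Lambda_{r+1}(1))\,\eps_r^{1/\alpha}$. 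Choosing $R^\star$ as the largest $R$ for which $\eps_R$ still meets the threshold $C\,p(\Lambda_{R+1}(1))^\mu \eps_R^\nu < 1/2$ required by item~4, applying item~4 at level $R^\star$ gives that seven provers in $\strat$ have local dimension at least $\Omega(2^{p(\Lambda_{R^\star+1}(1))})$. Unwinding the recursion $-\log\eps_{r+1}\geq \alpha^{-1}(-\log\eps_r) - O(\Lambda_r(1))$, the dominant constraint for staying below the threshold is $\log(1/\eps) \gtrsim \alpha^{R^\star}\Lambda_{R^\star}(1)$, which yields $\Lambda_{R^\star+1}(1) \geq (1/\eps)^{\Omega(1)}$ and hence the promised $2^{\Omega(\eps^{-\eta})}$ bound for some universal $\eta>0$. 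The delicate point, and the most intricate part of the argument, will be carefully tracking these nested error and dimension quantities across the recursion, in particular verifying that the isometries furnished by the soundness analyses of the Stabilizer Check and Gate Check compose coherently across levels so that dimension really does propagate from $\strat_{R^\star}$ back to $\strat$.
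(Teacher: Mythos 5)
Your construction of $\Game_M$ and your analysis of its value are essentially the paper's own proof: the same verifier $\cV_{Halt}$, the same telescoping of the completeness inequality of Theorem~\ref{thm:compression} along the chain $\Game_{M,\Lambda_r(1)}$ in the non-halting case, the same downward induction on the soundness inequality starting from $\omega^*(\Game_{M,\Lambda_{R+1}(1)})=0$ in the halting case, and the same final conversion to a $15$-prover nonlocal game via~\cite{ji2017compression}. That part is correct.

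The gap is in the dimension lower bound. You correctly recognize that a single application of item~4 of Theorem~\ref{thm:compression} to the top-level game only yields the constant $2^{p(2)}$, so one must descend the recursion and invoke item~4 at the deepest level $R^\star$ at which the accumulated error still clears the threshold; this is precisely the step the paper hides behind ``details omitted''. But your concluding arithmetic --- that the constraint $\log(1/\eps)\gtrsim\alpha^{R^\star}\Lambda_{R^\star}(1)$ ``yields $\Lambda_{R^\star+1}(1)\geq(1/\eps)^{\Omega(1)}$'' --- does not follow. The usability constraint gives an \emph{upper} bound on $\Lambda_{R^\star}(1)$; the only \emph{lower} bound available comes from the maximality of $R^\star$, i.e.\ the failure of the constraint one level deeper, which reads $\Lambda_{R^\star+1}(1)\gtrsim\log(1/\eps)/\alpha^{R^\star+1}$. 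Since consecutive values of the tower function differ by an exponential, maximality pins down $\Lambda_{R^\star+1}(1)$ only up to one exponential: in the worst case (when $\log(1/\eps)$ falls just below a tower value) one gets $\Lambda_{R^\star+1}(1)=\Theta(\log(1/\eps))$, hence a dimension bound of only $2^{p(\Lambda_{R^\star+1}(1))}=2^{\poly(\log(1/\eps))}$, which is (quasi-)polynomial in $1/\eps$ and exponentially short of the claimed $2^{\Omega(\eps^{-\eta})}$. To reach $2^{\Omega(\eps^{-\eta})}$ by this route you would need $\Lambda_{R^\star}(1)\geq\Omega(\log(1/\eps))$, i.e.\ the level \emph{below} the deepest usable one must already have tower-size $\Omega(\log(1/\eps))$, and nothing in the recursion guarantees this.

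Everything surrounding that step is fine: the error recursion $\eps_{r+1}\leq\beta\,p(\Lambda_{r+1}(1))\,\eps_r^{1/\alpha}$ is the correct inversion of the soundness inequality, and the claim that local dimension propagates upward through the nested history states (prover $i$ at level $r$ holds, up to the code isometry, the register $\sP_i$ of the level-$(r+1)$ strategy) is the right mechanism and matches what the paper implicitly uses. It is worth noting that the paper's own justification --- ``By part~4 of Theorem~\ref{thm:compression}, the strategy $\strat_1$ must have local dimension at least $2^{\Omega(\delta^{-\eta'})}$'' --- asserts without proof exactly the quantitative statement that your unwinding fails to deliver, so the difficulty you have run into is inherited from the source rather than introduced by your decomposition; but as written your deduction is a non sequitur and the bound it actually establishes is $\poly(1/\eps)$, not $2^{\Omega(\eps^{-\eta})}$.
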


Theorem~\ref{thm:undecidable} implies that if there were a
Turing machine $A$ that when given a description of a nonlocal game
$\Game$, decides if $\omega^*(\Game) = 1$, then $A$ could be used to solve
the Halting Problem. Thus there is no such Turing machine $A$.

\begin{proof}
  Fix a deterministic Turing machine $M$.
  For any integer $n\geq 1$ let $\Game_{n}$ denote the $7$-prover game specified by
  $\cV_{Halt}(n,M)$, and let $\omega^*_{n} = \omega^*(\Game_{n})$. It follows from Theorem~\ref{thm:compression} that
  \begin{equation}\label{eq:halt-1a}
    1 - \frac{1 - \omega^*_{2^n}}{p(2^n)+1} \leq \omega^*_{n} \leq 1 - \Paren{\frac{1
      - \omega^*_{2^n}}{\beta\, p(2^n)}}^\alpha\;,
  \end{equation}
  for some universal constants $\alpha \geq 1$, $\beta >0$ and some polynomial $p$ that depends only on $G_M$. 
  
	We first show the completeness statement, item 2. in the theorem. Suppose that 
	$M$ does not halt on input $0$. By an immediate induction it follows from the first inequality in~\eqref{eq:halt-1a} that for any $r \geq 0$,  
	\[ 1- \omega^*_{1} \,\leq\, \frac{ 1 - \omega^*_{\Lambda_r(1)}}{\big(p(\Lambda_1(1))+1\big)\cdots \big(p(\Lambda_r(1))+ 1\big)}\;,\]
	from which it follows, by taking the limit $r\to\infty$, that necessarily $\omega^*_1 = 1$.

	Next we show the soundness statement, item 1. in the theorem. Suppose that $M$ halts in time $T$, and let $R$ be the largest integer such that $\Lambda_R(1) < T$.
 Then $\omega^*_{\Lambda_{R+1}(1)} = 0$. By downwards induction it
  follows from the second inequality in~\eqref{eq:halt-1a} that there exists constants $c',C'>0$ that depend on $G_M$ such that
  \[
   1-\omega^*_1 \,\geq \,  \frac{1}{p(\Lambda_R(1))^{\alpha^{R}} p(\Lambda_{R-1}(1))^{        \alpha^{R-1}} \cdots p(\Lambda_1(1))^\alpha} \,\geq\,  C' \Lambda_R(1)^{-c'}\;.
  \]

  To conclude, as in the proof of Theorem~\ref{thm:main2} we apply the
  compression result from~\cite{ji2017compression} to $\Game_1$ to obtain a
  $15$-prover nonlocal game $\Game_M$ such that $\omega^*(\Game_M) = 1$ if
  $M$ does not halt, and otherwise $\omega^*(\Game_M) \leq 1 - \Omega ( (1 -
  \omega^*_1)^\alpha) < 1 - C\Lambda_R(1)^{-c}$ for universal constants $c,C >0$.
  Note that the game $\Game_M$ is ``constant sized'' (there is no
  asymptotic parameter here).

  The ``furthermore'' part of the theorem follows from the fact that
  any strategy $\strat$ for $\Game_M$ that is accepted with
  probability at least $1 - \eps$ is $\delta$-isometric to a strategy
  $\strat'$ such that the provers' shared state is a history state of a
  strategy $\strat_1$ in $\Game_1$ that succeeds with probability $1 -
  \delta$ for $\delta = \eps^{c}$.
  (This follows from the analysis of the compression result
  of~\cite{ji2017compression}; details omitted.)
  By part 4 of Theorem~\ref{thm:compression}, the strategy $\strat_1$ must have
  local dimension at least $2^{\Omega(\delta^{-\eta'})}$ for some universal
  constant $\eta' > 0$.
  Thus $\strat$ must have local dimension
  $2^{\Omega(\eps^{-\eta})}$ for some universal constant $\eta >
  0$.
\end{proof}

\newcommand{\hct}{\diamondsuit}

\section{Improving the Compression Theorem?}
\label{sec:improving}

We explore the question of whether our compression theorem,
Theorem~\ref{thm:compression}, is optimal in terms of the trade-off
that it provides between ``compression in game size'' versus
``compression of the game value towards $1$''.
Recall that, given a GTM $G$ for a family of games $\set{\Game_N}$,
the theorem yields a family of games $\set{\Game_{n}^\compr}$ such
that for all $n$ and $N = 2^n$ we have that if $\omega^*(\Game_N) = 1$,
then $\omega^*(\Game_{n}^\compr) = 1$, but otherwise $\omega^*(\Game_{n}^\compr) \leq 1 - \Paren{
  \frac{1 - \omega^*(\Game_N)}{\poly(N)}}^\alpha$.
The compression of the game size is exponential, from $N$ to
$\poly(\log N)$, and the value of $\Game_{n}^\compr$ is closer to $1$ by a
factor $\poly(N)$.
But suppose that there was a \emph{Hypothetical Compression Theorem (HCT)}  with
a better trade-off.

\begin{conjecture}[Hypothetical Compression Theorem]
\label{conj:hct}
Given a GTM $G$ for a family of games
$\set{\Game_N}$, there exists a family of verifiers $\{\cV_n^\hct\}$ that is uniformly generated from $(1^n,G)$, and a monotonically increasing function
$g(n)=2^{o(n)}$, such that the following hold. For any integer $n\geq 0$, the game
$\set{\Game_{n}^{\hct}}$ associated with $\cV_n^\hct$ has constant answer size, and  for  $N = 2^n$ we have that if $\omega^*(\Game_N) = 1$, then $\omega^*(\Game_n^\hct) = 1$, and in all cases,
\begin{equation}\label{eq:better-compression}
	\omega^*(\Game_{n}^{\hct}) \leq 1 - \Paren{ \frac{1 - \omega^*(\Game_N)}{g(n)}}^\alpha\;.
\end{equation}
\end{conjecture}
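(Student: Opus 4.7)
The plan is to identify and attack each source of the $\poly(N) = 2^{\Omega(n)}$ factor in the gap-compression of Theorem~\ref{thm:compression} and reduce it to a subexponential-in-$n$ quantity. Tracing through the proof, the blow-up enters in three places: (i) the $\Omega(1/p(N)^3)$ spectral gap of the Feynman--Kitaev propagation Hamiltonian used in Lemma~\ref{lem:gate_check}; (ii) the $1/(p(N)+1)$ amplitude penalty from conditioning on a specific clock value in Lemma~\ref{lem:input_check} and Lemma~\ref{lem:output_check}; and (iii) the $\poly(S) = \poly(p(N))$ degradation in the rigidity analysis of the EPR/Stabilizer Check (Theorem~\ref{thm:epr-test} and Lemma~\ref{lem:stab-epr}). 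If each factor could be driven down to $2^{o(n)}$, the bound~\eqref{eq:better-compression} would follow by the same chain of lemmas that proves Theorem~\ref{thm:compression}.

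For source (iii), my first step would be to replace the qubit-by-qubit EPR test with a parallel version whose soundness degrades only polylogarithmically in $S$, using the parallel self-testing technology for the Magic Square game rather than the union bound currently implicit in the proof of Lemma~\ref{lem:stab-epr}. For source (ii), rather than measuring the clock register and conditioning, I would design a ``projective boundary'' check that commits the Pauli prover to an in-place ancilla marking time $t=0$ and $t=p(N)$, so that the boundary constraints are enforced with constant amplitude rather than $1/p(N)$. For source (i), I would attempt to use a circuit-to-Hamiltonian construction with a larger spectral gap—either a unary clock augmented with nearest-neighbour propagation as in the Aharonov--Arad--Kitaev--Landau-style constructions, or a spatially-laid-out frustration-free version—while ensuring that the resulting local terms remain implementable by the Pauli prover via low-weight two-qubit Pauli observables compatible with the MS-compatible triple requirement.

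The main obstacle will be source (i). Improving the spectral gap of a generic circuit-to-Hamiltonian construction beyond $1/\poly(N)$ is tightly linked to the quantum PCP conjecture, and the standard gap-amplification gadgets (perturbative, spatial sparsification) either blow up the local term norms or produce terms of weight incompatible with our Pauli-Prover framework. A more promising workaround would be to interleave \emph{partial} compression with hardness amplification \emph{inside} the construction: at each compression step, apply a parallel-repetition boost to the partially compressed game so as to restore a constant gap before feeding it into the next round. This is dual to the boosting used to derive Theorem~\ref{thm:hct-informal} from the conjecture, and it would require a parallel-repetition theorem for entangled games whose soundness guarantees are robust enough to recover a constant gap from an inverse-polynomial gap in $2^{o(n)}$ rounds; current results such as~\cite{bavarian2017hardness} fall just short of this regime, and closing that gap appears to be the cleanest path to Conjecture~\ref{conj:hct}.
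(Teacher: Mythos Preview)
The statement you are addressing is Conjecture~\ref{conj:hct}, not a theorem: the paper does \emph{not} prove it. It is explicitly introduced as a \emph{Hypothetical} Compression Theorem, and the entirety of Section~\ref{sec:improving} is devoted to deriving consequences (Corollaries~\ref{cor:hct_dtime}, \ref{cor:vhaltplus}, \ref{cor:connes} and Theorem~\ref{thm:undecidable3}) \emph{assuming} the conjecture holds. There is therefore no ``paper's own proof'' to compare against.

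What you have written is not a proof but a research plan, and you seem to be aware of this yourself, since you close by describing what ``appears to be the cleanest path to Conjecture~\ref{conj:hct}'' rather than claiming to have established it. Your identification of the three sources of the $\poly(N)$ loss is accurate and matches the structure of the proof of Theorem~\ref{thm:compression}. However, none of your proposed fixes is carried out: for source~(iii) you would need a parallel self-test with polylogarithmic-in-$S$ soundness that is compatible with the MS-triple question format and the Steane-code prover-merging of Section~\ref{sec:sim-pauli}; for source~(ii) your ``projective boundary'' check is undefined and it is unclear how to enforce $t=0$ or $t=p(N)$ with constant amplitude when the clock register has $p(N)+1$ basis states and the honest state has uniform amplitude across them; and for source~(i) you correctly note that improving the propagation spectral gap beyond inverse-polynomial is tied to deep open problems. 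Your final suggestion---interleaving partial compression with parallel-repetition amplification---is precisely the boosting argument the paper already uses in the \emph{reverse} direction (to derive consequences from the conjecture), and as you acknowledge, making it work in the forward direction would require a parallel-repetition theorem strictly stronger than what~\cite{bavarian2017hardness} provides. In short, this is a reasonable survey of the obstacles, but it is not a proof, and the paper does not purport to have one either.
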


(Note that when $g(n) = 2^{O(n)}$~\eqref{eq:better-compression} recovers the upper bound of Theorem~\ref{thm:compression}.)
We show that Conjecture~\ref{conj:hct} implies ``constant-gap analogues'' of
Theorem~\ref{thm:main2} and Theorem~\ref{thm:undecidable}:
first, $\MIP*$ would contain all computable languages.
Second, $\MIP*$ would contain undecidable languages. The undecidability of $\MIP^*$, in turn, implies a negative answer to a multipartite generalization of Tsirelson's problem, which is a open question about the relation between the commuting operator and tensor product models for quantum correlations.

\medskip
\vspace{10pt}

The main tool we need to derive these consequences is a \emph{hardness amplification procedure} for
$k$-prover ENL games.
This is a general transformation on ENL games that preserves the game
value if the original game has value $1$, but otherwise decreases it.

We call an ENL game and its associated verifier \emph{nonadaptive} if the questions to the provers
are chosen before the measurement of the provers' first message.
The ENL games and verifiers obtained from Theorem~\ref{thm:compression}
are nonadaptive.
The following hardness amplification procedure is established
in~\cite{bavarian2017hardness}.

\begin{theorem}[Hardness amplification via
  anchoring~\cite{bavarian2017hardness}]
  \label{thm:anchoring}
  Fix an integer $k \geq 2$.
  For every integer $r$ there exists a transformation $\cA_r$ on verifiers 
  such that for any $k$-prover nonadaptive verifier $\cV$ for an ENL game $\Game$
  the following holds:
  \begin{enumerate}
  \item $\cA_r(\cV)$ is a $k$-prover verifier for a nonadaptive ENL game $\Game'$ such that
   \[
      (\omega^*(\Game))^r \leq \omega^*(\Game') \leq \Paren{ 1 - (1 -
        \omega^*(\Game))^c }^{\nu_\Game r}\;,
    \]
    where $\nu_\Game$ is a positive real that depends on the number of
    provers $k$ and the length of answers in $\Game$, and $c \geq 1$ is
    a universal constant. 
  \item The size of $\cA_r(\cV)$ is $O(r)$ times the size of $\cV$.  

  \end{enumerate}
  Furthermore, if $\{\cV_{n,\lambda}\}$ is a family of verifiers uniformly generated from $(1^n,\lambda)$, the family of verifiers $\{\cA_r(\cV_{n,\lambda})\}$ can be uniformly generated from $(1^n,1^r,\lambda)$.
\end{theorem}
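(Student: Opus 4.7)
The plan is to define the transformation $\cA_r$ as the composition of two standard operations from~\cite{bavarian2017hardness}: \emph{anchoring}, which modifies a single game into a form amenable to parallel repetition, followed by $r$-fold \emph{parallel repetition}. Given a nonadaptive ENL verifier $\cV$, the anchored verifier $\cV_\perp$ behaves as follows. Before issuing questions, the verifier independently tosses a fair coin for each prover; on heads it sends a dedicated ``anchor'' symbol $\perp$ in place of the original question, and on tails it sends the question prescribed by $\cV$. The quantum first message $\reg{C}$ is still transmitted by the provers and measured by the verifier as in $\cV$ -- this step is well defined precisely because $\cV$ is nonadaptive, so the question distribution does not depend on the measurement outcome. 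At the end the verifier accepts automatically if any prover was anchored, and otherwise applies $\cV$'s acceptance rule. The verifier $\cA_r(\cV)$ then runs $r$ independent copies of $\cV_\perp$ in parallel: it expects a tuple $(\reg{C}_1,\ldots,\reg{C}_r)$ of quantum first messages, samples $r$ independent anchored question tuples, receives $r$ classical answer tuples, and accepts iff all $r$ copies accept. The size bound $|\cA_r(\cV)| = O(r \cdot |\cV|)$ and the uniform-generation claim both follow from the explicit construction, and nonadaptivity is preserved by both steps.

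The lower bound $\omega^*(\cA_r(\cV)) \geq \omega^*(\Game)^r$ is routine: any strategy $\strat$ for $\Game$ of value $v$ lifts to a strategy for $\Game_\perp$ of value at least $v$ by ignoring the anchor bit and playing $\strat$ on non-anchored questions, and playing $r$ such independent copies on a tensor-product shared state yields value at least $v^r$. The upper bound is the substantive claim and is obtained by adapting the quantum anchored parallel repetition theorem of~\cite{bavarian2017hardness}. At a high level, the anchoring step forces each prover's question distribution to be a convex combination of the honest marginal and a point mass on $\perp$; this decoupling enables an induction over coordinates that reduces a multi-coordinate winning strategy to a single-coordinate winning strategy, losing only a constant factor in the value gap. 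Accumulating this per-coordinate loss over the $r$ copies yields the exponential decay $(1 - (1 - \omega^*(\Game))^c)^{\nu_\Game r}$, where $c$ is a universal constant arising from the reduction and $\nu_\Game$ absorbs polynomial losses that depend on the number of provers $k$ and the answer alphabet size through the quantum information-theoretic estimates.

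The main obstacle is to verify that the proof of~\cite{bavarian2017hardness}, stated for standard nonlocal games, extends cleanly to the ENL setting where the verifier additionally measures a quantum first message. The plan is to exploit nonadaptivity and fold the measurement of $\reg{C}$ into an augmented acceptance predicate: one may effectively regard $\reg{C}$ as part of the provers' shared entangled resource and the measurement outcome $t$ as an extra ``answer'' produced by a fixed verifier-chosen POVM, yielding an effectively classical-message game to which the argument of~\cite{bavarian2017hardness} applies. The delicate point is that the smooth-entropy and de~Finetti-style bounds on the provers' conditional strategies that drive the parallel repetition induction must remain valid after this augmentation; this should follow from standard quantum chain rules together with the observation that the verifier's POVM on $\reg{C}$ is question-independent and therefore does not create new cross-coordinate correlations, but the bookkeeping of register dependencies and the precise form of $\nu_\Game$ require careful tracking.
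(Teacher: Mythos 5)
Your proposal is correct and follows essentially the same route as the paper, which gives no self-contained proof of this theorem: it simply invokes the anchoring-plus-parallel-repetition machinery of~\cite{bavarian2017hardness} and remarks that the extension to nonadaptive ENL games follows by modeling the verifier's question-independent measurement of the first message $\reg{C}$ as the action of an ``honest'' prover --- the same device as your folding of the POVM outcome into an augmented answer. (A minor quibble: the parallel-repetition argument in~\cite{bavarian2017hardness} is driven by mutual-information and correlated-sampling estimates rather than ``smooth-entropy and de~Finetti-style bounds,'' but this does not affect the correctness of your outline.)
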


Strictly speaking, the hardness amplification result
of~\cite{bavarian2017hardness} is stated for nonlocal games, in which
the verifier is completely classical.
However, the results extend to nonadaptive ENL games because the
verifier's initial measurement can be modeled as the action of an
``honest'' prover.\footnote{We believe that the nonadaptive condition
  can be omitted from the statement of Theorem~\ref{thm:anchoring},
  but we leave this for future work.}

\subsection{Consequence $1$: $\MIP^*$ contains all computable languages}
\label{sec:cons1}

A language $L$ is \emph{computable} if there exists a Turing machine $M$ that, for all inputs $x \in \{0,1\}^*$, accepts if $x \in L$ and otherwise rejects. In particular, $M$ halts on all inputs.

We introduce a verifier $\hat{\cV}_{RC}$, described in Figure~\ref{fig:vrcplus}, and analyze it in a manner similar to the verifier ${\cV}_{RC}$ considered in Section~\ref{sec:vrc}. In this section, we use $c$ and $\nu$ to denote the constants from
Theorem~\ref{thm:anchoring} that correspond to games with at most $7$
provers and the answer length provided by Conjecture~\ref{conj:hct}. We also let $\alpha$ and $g(n)$ be the constant and subexponential function $g(n)$ from Conjecture~\ref{conj:hct}. We let $\cV^{\hct}_{n}$ denote the verifier of the game $\Game^{\hct}_{n}$. 

\begin{figure}[H]
  \centering
  \begin{mdframed}[style=figstyle]
    \ul{VTM name:} $\hat{\cV}_{RC}(n,M,G)$: \\
    \ul{Description of input:} $n > 0$ is an integer, $M$ is a
    deterministic Turing machine, and $G$ is a GTM that takes input $(n,t,\lambda)$.
    \begin{enumerate}
    \item Run $M$ on input $0$ for $n$ steps.
      If $M$ accepts in that time, accept.
      If $M$ rejects in that time, reject. 
    \item Otherwise, if $M$ does not halt in $n$ steps, perform the following. Let $\lambda = (M,G)$, and $G_\lambda(n,t) = G(n,t,\lambda)$. Let $n'$ be the largest integer less than $n$ such that \\ $ (2g(n'))^{\alpha c} \cdot q(n')/\nu \leq n$, where $q(n')$ is the size of $\cV^{\hct}_{G_\lambda,n'}$. Let $r = (2g(n'))^{\alpha c }/\nu$. Execute the verifier $\cA_r(\cV^{\hct}_{G_\lambda,n'})$.
	\end{enumerate}
  \end{mdframed}

  \caption{The verifier $\hat{\cV}_{RC}$}
  \label{fig:vrcplus}
\end{figure}

It follows from the definition that the family of verifiers $\{\hat{\cV}_{RC}(n,M,G)\}$ can be uniformly generated by some Turing machine $R$. (This is the reason for the choice of the parameter $n'$, which guarantees that the size of the verifier $\cA_r(\cV^{\hct}_{G_\lambda,n'})$ is at most $n$.)

Let $R$ be a Turing machine that on input $(1^n,M,G)$ generates the verifier $\hat{\cV}_{RC}(n,M,G)$ in polynomial time. By Lemma~\ref{lem:gtm-combine}, there exists a GTM $G_R$ that takes input $(n,t,M,G)$ and outputs the $t$-th gate of the protocol circuit corresponding to the verifier $\hat{\cV}_{RC}(n,M,G)$. 

\begin{proposition} \label{prop:vrcplus} Suppose Conjecture~\ref{conj:hct} is true.
  Let $M$ be a deterministic Turing machine that halts on input $0$. Then the family of  verifiers $\{\hat{\cV}_{RC}(n,M,G_R)\}$ can be uniformly generated from $(1^n,M,G_R)$. Furthermore, %
  the $7$-prover ENL game $\Game_m$ associated with $\hat{\cV}_{RC}(m,M,G_R)$, where $m$ is the smallest integer larger than $(2g(1))^{\alpha c} q(1)/\nu$,\footnote{The justification for this choice of $m$ is to ensure that for all $n \geq m$, the integer $n'$ chosen in step 2. of the definition of $\hat{\cV}_{RC}(n,M,G)$ (Figure~\ref{fig:vrcplus}) is well-defined and at least $1$.} satisfies
  \begin{align*}
    \omega^*(\Game_{m}) &= 1  \quad &\text{if $M$ accepts on input $0$,} \\
    \omega^*(\Game_{m}) &\leq 1/2 \quad &\text{if $M$ rejects on input $0$.}
  \end{align*}
\end{proposition}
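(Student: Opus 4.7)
The plan is to analyze $\Game_m = \Game_{n_0}$ through the recursive sequence of games that step~2 of $\hat{\cV}_{RC}$ implicitly defines. Set $\lambda = (M,G_R)$, so that $G_\lambda(n,t)$ returns the $t$-th gate of $\hat{\cV}_{RC}(n,M,G_R)$; the associated $n$-th game is exactly $\Game_n$. Define $n_0 = m$ and, inductively, $n_{i+1} = 2^{n'_i}$, where $n'_i$ is the parameter chosen in step~2 of $\hat{\cV}_{RC}(n_i,M,G_R)$ and $r_i = (2g(n'_i))^{\alpha c}/\nu$. Because $\cV^\hct_{G_\lambda,n'_i}$ is the HCT compression of $\Game_{G_\lambda,2^{n'_i}} = \Game_{n_{i+1}}$, the construction in Figure~\ref{fig:vrcplus} gives $\Game_{n_i} = \cA_{r_i}(\Game^\hct_{G_\lambda,n'_i})$, so the value of $\Game_{n_i}$ is controlled by that of $\Game_{n_{i+1}}$. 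Since $g(n) = 2^{o(n)}$ and the size $q(n)$ of $\cV^\hct_{G_\lambda,n}$ is polynomial, the condition $(2g(n'))^{\alpha c}q(n')/\nu \leq n$ forces $n' = \omega(\log n)$, so $n_{i+1} = 2^{n'_i}$ grows superpolynomially in $n_i$. Letting $T$ be the halting time of $M$ on input $0$, there is thus a least index $K$ with $n_K \geq T$, at which the recursion terminates via step~1 of $\hat{\cV}_{RC}$.

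For completeness, if $M$ accepts on input $0$ then $\omega^*(\Game_{n_K}) = 1$. We downward-induct on $i \leq K$: if $\omega^*(\Game_{n_{i+1}}) = 1$ then Conjecture~\ref{conj:hct} yields $\omega^*(\Game^\hct_{G_\lambda,n'_i}) = 1$, and the lower bound $(\omega^*(\Game))^r \leq \omega^*(\cA_r(\Game))$ of Theorem~\ref{thm:anchoring} gives $\omega^*(\Game_{n_i}) = 1$. In particular $\omega^*(\Game_m) = 1$.

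For soundness, if $M$ rejects on input $0$ then $\omega^*(\Game_{n_K}) = 0$. We downward-induct on the hypothesis $\omega^*(\Game_{n_{i+1}}) \leq 1/2$. Under this hypothesis, Conjecture~\ref{conj:hct} gives
\[
1 - \omega^*(\Game^\hct_{G_\lambda,n'_i}) \,\geq\, \Paren{\frac{1 - \omega^*(\Game_{n_{i+1}})}{g(n'_i)}}^\alpha \,\geq\, (2g(n'_i))^{-\alpha}\;,
\]
and the upper bound of Theorem~\ref{thm:anchoring} with $r_i = (2g(n'_i))^{\alpha c}/\nu$ yields
\[
\omega^*(\Game_{n_i}) \,\leq\, \Paren{1 - (2g(n'_i))^{-\alpha c}}^{\nu r_i} \,=\, \Paren{1 - (2g(n'_i))^{-\alpha c}}^{(2g(n'_i))^{\alpha c}} \,\leq\, e^{-1} \,<\, 1/2\;,
\]
closing the induction and giving $\omega^*(\Game_m) \leq e^{-1} \leq 1/2$.

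Uniformity follows because step~1 is a direct simulation of $M$, while in step~2 the integer $n'$ and the repetition parameter $r$ are polynomial-time computable from $n$ (using that $g$ and $q$ are computable and monotone), after which the HCT verifier is generated in polynomial time by Conjecture~\ref{conj:hct} and the anchoring transformation preserves uniformity by Theorem~\ref{thm:anchoring}. The main conceptual point, and the only real obstacle, is the balancing of parameters: the gap loss coming from a single application of HCT is of order $g(n'_i)^{-\alpha}$, and the anchoring repetition parameter $r_i$ must be chosen large enough to amplify this back to a constant, yet small enough that the resulting verifier still fits inside the size budget $n$ at level $n_i$. The choice $r_i = (2g(n'_i))^{\alpha c}/\nu$ is precisely what turns $(1 - x)^{1/x} \leq e^{-1}$ into a non-accumulating constant gap, and the definition of $n'_i$ as the largest index satisfying $(2g(n'))^{\alpha c}q(n')/\nu \leq n$ ensures exactly this fit. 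Checking that these parameter choices are simultaneously consistent along the entire recursion, and that the sequence $(n_i)$ provably escapes to infinity, are the calculations that deserve the most care; everything else is a direct combination of Conjecture~\ref{conj:hct} and Theorem~\ref{thm:anchoring}.
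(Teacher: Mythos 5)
Your proposal is correct and follows essentially the same route as the paper: a downward induction along the chain of games defined by step~2 of $\hat{\cV}_{RC}$, combining the gap bound from Conjecture~\ref{conj:hct} with the anchoring amplification of Theorem~\ref{thm:anchoring} and the choice $r = (2g(n'))^{\alpha c}/\nu$ to obtain $(1-x)^{1/x}\leq e^{-1}<1/2$ at every level. The only cosmetic difference is that the paper phrases the induction over all integers $N\geq m$ below $\Lambda_R(m)$ (using $N_1 = 2^{N_0'} > N_0$) rather than over the explicit sequence $n_0,n_1,\ldots,n_K$, but the underlying argument, including the use of $g(n)=2^{o(n)}$ to guarantee $n'=\omega(\log n)$ and hence termination, is the same.
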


\begin{proof}
  Let $M$ and $m$ be as in the theorem statement. For any integer $n\geq 1$, define the verifier $\hat{\cV}_{RC}(n) = \hat{\cV}_{RC}(n,M,G_R)$. Let $\Game_{n}$ denote the $7$-prover ENL game specified by $\hat{\cV}_{RC}(n)$, and let $\omega^*_{n} = \omega^*(\Game_{n})$. 
  Let $R$ be the smallest integer such that $\Lambda_R(m)$ is greater than the running time of $M$ (which is well-defined since $M$ halts on input $0$). 

  If $M$ accepts on input $0$, then $\omega^*(\Game_{m})
  = 1$; this follows by induction on $R$, using similar reasoning as in the proof of
  Proposition~\ref{prop:vrc}.
  The remaining case is that $M$ does not accept on input
  $0$.
  By definition, for all $N \geq \Lambda_R(m)$, we have that $\omega_{N}^* = 0$.
  We show by downwards induction that $\omega_{N}^* \leq 1/2$ for all integers $N \geq
  m$.
  Assume the inductive hypothesis holds for all $N \geq N_0 + 1$ for
  some $N_0 < \Lambda_R(m)$. 
  Since $N_0 < \Lambda_R(m)$, $M$ does not halt on input $0$ in $N_0$
  steps.
  Therefore, the verifier in the game $\Game_{N_0}$ executes
  $\cA_r(\cV^{\hct}_{G_\lambda,N_0'})$ where $\lambda$, $G_\lambda$, $N_0'$, and $r$ are defined in Figure~\ref{fig:vrcplus}. Let $N_1 = 2^{N_0'}$. Since $g$ is a monotonically increasing but subexponential function, we have $N_0' = \omega(\log N_0)$ and therefore $N_1 > N_0$. Therefore by the induction hypothesis it follows that $\omega^*_{N_1} \leq 1/2$. Using Conjecture~\ref{conj:hct} and Theorem~\ref{thm:anchoring} together, %
  \[
    \omega^*_{N_0} \,\leq\, \Paren{ 1 - \Paren{ \frac{1 -
          \omega^*_{N_1}}{g(N_0')}}^{\alpha c}}^{\nu r}\;.
  \]
 Using that $\omega^*_{N_1} \leq 1/2$ and the choice of $r$ made in Figure~\ref{fig:vrcplus}, 
we get that $\omega^*_{N_0} \leq 1/e \leq 1/2$. This completes the induction and shows that $\omega^*_{m} \leq 1/2$, as
  desired.
\end{proof}

\begin{corollary}
  \label{cor:hct_dtime}
  Suppose Conjecture~\ref{conj:hct} is true. Then $\MIP^*$ with constant completeness-soundness gap contains all computable languages. In other words, we have $\class{R} \subseteq \MIP^*$ where $\class{R}$ is the set of all recursive languages.
\end{corollary}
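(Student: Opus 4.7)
The plan is to reduce membership in an arbitrary computable language to the task of distinguishing perfect entangled value from entangled value at most $1/2$ in a uniformly constructed family of nonlocal games, via Proposition~\ref{prop:vrcplus}. Given $L \in \class{R}$, fix a deterministic Turing machine $M_L$ that halts on every input and accepts exactly on inputs in $L$. For each $x \in \{0,1\}^*$, define a deterministic Turing machine $M_x$ that on input $0$ simulates $M_L$ on $x$ and inherits its accept/reject decision; in particular $M_x$ halts on input $0$, its description has size $O(|x|)$, and the map $x \mapsto M_x$ is polynomial-time computable.

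Next I would apply Proposition~\ref{prop:vrcplus} with $M = M_x$ and the fixed universal GTM $G_R$ associated with $\hat{\cV}_{RC}$. Crucially, the integer $m$ produced by that proposition is the smallest integer larger than $(2g(1))^{\alpha c}\, q(1)/\nu$, which depends only on the universal constants $\alpha, c, \nu$ and on the constants $g(1)$ and $q(1)$; it is independent of the plugged-in machine $M_x$. The proposition therefore yields a $7$-prover ENL game $\Game^{(x)}$ whose verifier $\hat{\cV}_{RC}(m, M_x, G_R)$ has size $\poly(|x|)$ and is uniformly generated from $x$ (using Lemma~\ref{lem:gtm-combine} applied to the Turing machine constructing $\hat{\cV}_{RC}$), such that $\omega^*(\Game^{(x)}) = 1$ if $x \in L$ and $\omega^*(\Game^{(x)}) \leq 1/2$ otherwise.

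Finally, to land inside $\MIP^*$ rather than the ENL model, I would invoke the black-box compression of~\cite{ji2017compression}, exactly as in the proof of Theorem~\ref{thm:main2}, to eliminate the provers' initial quantum message at the cost of $8$ extra provers. This yields a $15$-prover nonlocal game $\Game'^{(x)}$ of size $\poly(|x|)$ that is still uniformly generated from $x$, satisfies $\omega^*(\Game'^{(x)}) = 1$ whenever $x \in L$, and satisfies $\omega^*(\Game'^{(x)}) \leq 1 - \Omega\bigl((1/2)^d\bigr) = 1 - \Omega(1)$ when $x \notin L$, since the polynomial loss in the transformation is applied to a constant gap $1 - \omega^*(\Game^{(x)}) \geq 1/2$. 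Thus $L \in \MIP^*_{1, 1-\Omega(1)}(15,1)$, proving the corollary.

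I do not expect any serious obstacle: essentially everything is already packaged in Proposition~\ref{prop:vrcplus}. The only bookkeeping point worth emphasising is that $m$ must be verified to be a universal constant independent of $M_x$, so that the verifier size scales polynomially in $|x|$ rather than growing with the recursion depth of the hypothetical compression; this is immediate from the explicit definition of $m$ in Figure~\ref{fig:vrcplus}, which involves only $g(1)$, $q(1)$, and the compression-theorem constants.
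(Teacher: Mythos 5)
Your construction follows the paper's proof almost exactly --- hardwiring $x$ into $M$, invoking Proposition~\ref{prop:vrcplus} with the universal constant $m$, and converting the ENL game to a nonlocal game via the black-box compression of~\cite{ji2017compression} --- but the final step contains a genuine error. The cited transformation does \emph{not} preserve a constant gap: its soundness guarantee is
\[
  \omega^*(\Game'^{(x)}) \,\leq\, 1 - \Paren{\frac{1 - \omega^*(\Game^{(x)})}{\poly(n)}}^{d}\;,
\]
with a $\poly(n)$ factor in the denominator that does not disappear when $1-\omega^*(\Game^{(x)}) \geq 1/2$. Your claimed bound $1 - \Omega\bigl((1/2)^d\bigr)$ silently drops this factor; the correct conclusion is only $\omega^*(\Game'^{(x)}) \leq 1 - \bigl(1/(2\poly(n))\bigr)^d = 1 - 1/\poly(|x|)$, i.e.\ an inverse-polynomial completeness-soundness gap, not a constant one. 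Since the corollary explicitly asserts containment in $\MIP^*$ \emph{with constant gap}, the argument as written does not establish the statement.

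The missing ingredient is a final hardness-amplification step, which is exactly how the paper closes this gap: after the conversion to a $15$-prover nonlocal game, apply the anchoring transformation $\cA_s$ of Theorem~\ref{thm:anchoring} with $s = \poly(n)$ repetitions. This keeps the verifier polynomial-size, preserves value $1$ in the completeness case, and drives the soundness from $1 - 1/\poly(n)$ down to $1/2$. With that one additional paragraph your proof coincides with the paper's.
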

\begin{proof}
  Let $L$ denote a computable language. This means that there exists a deterministic Turing machine $M$ such that for all inputs $x \in \{0,1\}^*$, $M(x)$ accepts if $x \in L$, otherwise $M(x)$ rejects. Let $M_x$ denote the Turing machine $M$ with input $x$ hardwired and
  otherwise ignores its input tape. Observe that $M_x$ halts in finite time.
	
  There exists a polynomial time deterministic Turing machine $A$ that on input $x$
  performs the following. First, $A$ computes a description of the
  $7$-player ENL game $\Game_{m,M_x}$ given by
  Proposition~\ref{prop:vrcplus}, with $m$ chosen as in the proposition statement. Let $n = |x|$.
  This game has the property that if $M_x$ accepts,
  then $\omega^*(\Game_{m,M_x}) = 1$, otherwise $\omega^*(\Game_{m,M_x}) \leq
  1/2$.
  Furthermore the size of the verifier of $\Game_{m,M_x}$ is
  $\poly(n,|M|)$. Next, the ENL game $\Game_{m,M_x}$ is converted to a nonlocal game by using the compression result of~\cite{ji2017compression}; this result gives an efficient reduction from the description of the verifier of $\Game_{m,M_x}$ to the verifier of a $15$-player nonlocal game $\Game_{m,M_x}'$ whose value satisfies
  \[
    \omega^*(\Game_{m,M_x}) \leq \omega^*(\Game_{m,M_x}') \leq 1 - \Paren{ \frac{1 -
        \omega^*(\Game_{m,M_x})}{\poly(n)}}^\alpha. %
  \]
  Finally, $A$ computes a description of the game $\Game_{m,M_x}''$, in which the hardness amplification procedure $\cA_s$ of Theorem~\ref{thm:anchoring} is applied to the verifier of $\Game_{m,M_x}'$ for some $s = \poly(n)$. The verifier of $\Game_{m,M_x}''$ still has $\poly(n)$ size, but now if $\omega^*(\Game_{m,M_x}') \leq 1 - 1/\poly(n)$, then $\omega^*(\Game_{m,M_x}'') \leq 1/2$ (provided that $s$ is a large enough polynomial). 
  
  Thus on input $x$ the Turing machine $A$ returns the description of a nonlocal game with a $\poly(n)$-sized verifier, such that if $x$ is accepted by $M$, the value of the game is $1$; otherwise, the value is at most $1/2$.
  This shows that $L$ has a one-round $\MIP*$ proof system with $15$
  provers and constant completeness-soundness gap.
\end{proof}

\subsection{Consequence $2$: $\MIP^*$ contains undecidable languages}
\label{sec:cons2}

In this section we show that Conjecture~\ref{conj:hct} implies that $\MIP*$ contains undecidable languages. We show this directly: instead of reducing the halting problem to the problem of
approximating the value of a nonlocal game, we show that
there is no Turing machine that can approximate the value of a nonlocal
game to within constant additive error.
Thus $\MIP*$ contains undecidable languages: namely, the
(promise) language $L_{c,s}$ whose YES instances consist of all
nonlocal games whose value is at least $c$, and whose NO instances
consists of all nonlocal games whose value is at most $s$, for some
constants $0 \leq s < c \leq 1$.

In Figure~\ref{fig:vhaltplus} we define a VTM $\hat{\cV}_{undec}$ that
is  differs slightly from the VTM $\cV_{Halt}$ analyzed in
Section~\ref{sec:undecidable}.
Whereas the games $\set{\Game_{n,M}}$ specified by $\cV_{Halt}$ have
value $1$ or less than $1/2$ depending on whether $M$ halts or not,
the games $\set{\Game_{n,M}}$ specified by $\hat{\cV}_{undec}$ have
value $1$ or less than $1/2$ depending on whether $M$ accepts or
rejects (when given its own description as input).
There is no guarantee on the value of the
game $\Game_{n,M}$ if $M$ does not halt.

In Figure~\ref{fig:vhaltplus}, $c$, $\nu$ and $\alpha$ are the constants introduced in
Section~\ref{sec:cons1}.

\begin{figure}[H]
  \centering
  \begin{mdframed}[style=figstyle]
    \ul{VTM name:} $\hat{\cV}_{undec}(n,M,G)$: \\
    \ul{Description of input:} $M$ is a deterministic Turing machine.
    \begin{enumerate}
    \item Run $M$ on input $M$ (i.e.
      the input to $M$ is the description of $M$ itself) for $n$ steps.
      If $M$ halts and accepts, then accept.
      If $M$ halts and rejects, then reject.
    \item If $M$ does not halt within $n$ steps, then perform the following. Let $\lambda = (M,G)$ and $G_\lambda(n,t) = G(n,t,\lambda)$. Let $n'$ be the largest integer such that $(2g(n'))^{\alpha c} q(n')/\nu \leq n$, where $q(n')$ is the size of $\cV^{\hct}_{G_\lambda,n'}$. Let $r = (2g(n'))^{\alpha c}/\nu$.
    Execute $\cA_r(\cV^{\hct}_{G_\lambda,n'})$.
    \end{enumerate}
  \end{mdframed}

  \caption{The verifier $\hat{\cV}_{undec}$}
  \label{fig:vhaltplus}
\end{figure}

It follows from the definition that the family of verifiers $\{\hat{\cV}_{undec}(n,M,G)\}$ can be uniformly generated by a Turing machine $H$. By Lemma~\ref{lem:gtm-combine}, there exists a GTM $G_H$ that takes input $(n,t,M,G)$ and returns the $t$-th gate of the protocol circuit corresponding to the verifier $\hat{\cV}_{undec}(n,M,G)$. Define the verifier $\hat{\cV}_{undec}(n,M) = \hat{\cV}_{undec}(n,M,G_R)$. Let $\Game_{n,M}$ denote the $7$-prover ENL game executed by $\hat{\cV}_{undec}(n,M)$.

\begin{proposition} \label{prop:vhaltplus} Suppose Conjecture~\ref{conj:hct} is true.
  Let $M$ be a deterministic Turing machine.
  Then for all $n$,
  \begin{align*}
    \omega^*(\Game_{n,M}) &= 1  \quad & \text{if $M$ accepts on input $M$,} \\
    \omega^*(\Game_{n,M}) &\leq 1/2 \quad &\text{if $M$ rejects on input $M$}.
  \end{align*}
\end{proposition}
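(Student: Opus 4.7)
The plan is to mirror the proof of Proposition~\ref{prop:vrcplus} almost verbatim, adjusting for two differences: (i) the decision in Step 1 of Figure~\ref{fig:vhaltplus} is based on whether $M$ accepts or rejects on input $M$ (rather than input $0$), and (ii) we must handle both the completeness case $\omega^*(\Game_{n,M}) = 1$ when $M$ accepts, and the soundness case $\omega^*(\Game_{n,M}) \leq 1/2$ when $M$ rejects. Fix $M$, and let $T$ denote the number of steps $M$ takes on input $M$ before halting. I will use downward induction on $n$, with the trivial base case $n \geq T$: in that regime $\hat{\cV}_{undec}(n,M)$ unconditionally accepts (respectively rejects) in Step 1, so $\omega^*(\Game_{n,M}) = 1$ (respectively $= 0$).

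For the inductive step at some $N_0 < T$, the verifier executes $\cA_r(\cV^\hct_{G_\lambda,N_0'})$ with $N_0'$ and $r$ as in Figure~\ref{fig:vhaltplus}. I would first observe that, since the family $\{\hat{\cV}_{undec}(n,M)\}_n$ is uniformly generated, Lemma~\ref{lem:gtm-combine} provides a GTM $G_\lambda$ such that $\cV^\hct_{G_\lambda,N_0'}$ is precisely the compressed verifier attached to $\Game_{N_1,M}$ with $N_1 = 2^{N_0'}$. Since $g(n) = 2^{o(n)}$ and $r,q(n')$ are polynomial in $g(n')$, the defining condition for $N_0'$ forces $N_0' = \omega(\log N_0)$, and hence $N_1 > N_0$, so the inductive hypothesis applies to $\Game_{N_1,M}$.

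In the completeness case, Conjecture~\ref{conj:hct} preserves value $1$, and Theorem~\ref{thm:anchoring} gives $(\omega^*(\Game))^r \leq \omega^*(\cA_r(\Game))$, so value $1$ is preserved through both steps and the induction closes with $\omega^*(\Game_{N_0,M}) = 1$. In the soundness case, the inductive hypothesis $\omega^*(\Game_{N_1,M}) \leq 1/2$ together with Conjecture~\ref{conj:hct} yields $\omega^*(\Game^\hct_{G_\lambda,N_0'}) \leq 1 - \bigl(2g(N_0')\bigr)^{-\alpha}$; Theorem~\ref{thm:anchoring} then gives $\omega^*(\Game_{N_0,M}) \leq \bigl(1 - (2g(N_0'))^{-\alpha c}\bigr)^{\nu r}$, and the specific choice $r = (2g(N_0'))^{\alpha c}/\nu$ makes the exponent equal to $(2g(N_0'))^{\alpha c}$, so that the bound is at most $1/e \leq 1/2$, closing the induction.

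The argument is essentially a rerun of the one for Proposition~\ref{prop:vrcplus}, so no fundamentally new idea is needed. The main technical point to verify is the calibration of the amplification parameter $r$ against the subexponential slack $g(n')$ permitted by the conjecture: $r$ must be just large enough to drive the amplified value below $1/e$, while simultaneously the total size of $\cA_r(\cV^\hct_{G_\lambda,n'})$ must remain below $n$ so that $\hat{\cV}_{undec}(n,M)$ has room to run it. The remaining work is purely bookkeeping to confirm that $N_0'$ is well-defined and at least $1$ for all $n$ in the relevant range, which is immediate from the growth properties of $g$ together with the assumption that $n$ is sufficiently large.
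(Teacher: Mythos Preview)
Your proposal is correct and takes essentially the same approach as the paper: the paper itself omits the proof, saying only that it ``is essentially the same as the proof of Proposition~\ref{prop:vrcplus}'', and your argument is precisely a rerun of that proof with the obvious substitutions (input $M$ instead of input $0$, and running-time bound $T$ in place of $\Lambda_R(m)$). Your treatment of both the completeness and soundness directions via downward induction, the use of $N_1 = 2^{N_0'} > N_0$ from the subexponential growth of $g$, and the calibration of $r$ to push the amplified value below $1/e$ all match the paper's reasoning in Proposition~\ref{prop:vrcplus}.
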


Note that Proposition~\ref{prop:vhaltplus} does not specify
the value of $\Game_{n,M}$ in the case that $M$ does not halt on input
$M$.
An ideal version of
Proposition~\ref{prop:vhaltplus} would state that $\omega^*(\Game_{n,M}) = 1$ if $M$
does not halt, and $\omega^*(\Game_{n,M}) \leq 1/2$ if $M$ halts, similarly
to the conclusion of Theorem~\ref{thm:undecidable}.
We are able to obtain a guarantee on the value of $\Game_{n,M}$ when $M$
does not halt in Theorem~\ref{thm:undecidable} because of special
properties of the games specified by $\cV^{\compr}$ (namely, when the
size $N$ of the verifier increases, the value of the game goes to $1$,
no matter what game is being compressed).
However, the games specified by Conjecture~\ref{conj:hct} may not satisfy this property; the only guarantee is
that $\omega^*(\Game_{G_\lambda,n}^{\hct}) = 1$ if $\omega^*(\Game_{2^n,M}) = 1$, and
otherwise $\omega^*(\Game_{G_\lambda,n}^{\hct})$ is upper-bounded by some function of
$\omega^*(\Game_{2^n,M})$.

  The proof of Proposition~\ref{prop:vhaltplus}  is essentially the same as the proof of
  Proposition~\ref{prop:vrcplus}, and we omit it. 
We state a corollary showing  that it is possible to construct a family of nonlocal games with similar properties as the ENL games from
Proposition~\ref{prop:vhaltplus}.

\begin{corollary}\label{cor:vhaltplus}
  Suppose Conjecture~\ref{conj:hct} is true.
  Let $M$ be a deterministic Turing machine.
  There exists a $15$-prover nonlocal game $\Game_M$ such that
  \begin{align*}
    \omega^*(\Game_{M}) &= 1  \quad & \text{if $M$ accepts on input $M$,} \\
    \omega^*(\Game_{M}) &\leq 1/2 \quad &\text{if $M$ rejects on input $M$}.
  \end{align*}
  Furthermore, the description of the verifier of $\Game_M$
  is computable from $M$.
\end{corollary}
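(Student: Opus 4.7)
The plan is to follow the same three-step template as the proof of Corollary~\ref{cor:hct_dtime}, applied to the ENL game of Proposition~\ref{prop:vhaltplus}. First, I would fix any integer $n \geq 1$ (say $n=1$) and consider the $7$-prover ENL game $\Game_{n,M}$ of Proposition~\ref{prop:vhaltplus}. Its verifier has constant size and is computable from the description of $M$, and by the proposition $\omega^*(\Game_{n,M}) = 1$ if $M$ accepts its own description, while $\omega^*(\Game_{n,M}) \leq 1/2$ if $M$ rejects its own description.

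Second, to convert $\Game_{n,M}$ into a nonlocal game (removing the initial quantum message), I would apply the same black-box compression result from~\cite{ji2017compression} that was invoked in the proof of Theorem~\ref{thm:main2}. This takes any $k$-prover ENL game $\Game$ into a $(k+8)$-prover nonlocal game $\Game'$, with verifier polynomial-time computable from that of $\Game$, such that $\omega^*(\Game) = 1$ implies $\omega^*(\Game') = 1$, and otherwise
\[
\omega^*(\Game') \,\leq\, 1 - \Paren{\frac{1 - \omega^*(\Game)}{q}}^{d}
\]
for universal constants $q, d > 0$. Applied to $\Game_{n,M}$, this produces a $15$-prover nonlocal game $\Game'_{n,M}$ of constant size. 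In the YES case $\omega^*(\Game'_{n,M}) = 1$; in the NO case $\omega^*(\Game'_{n,M}) \leq 1 - \eta$ for some absolute constant $\eta > 0$ depending only on $q, d$.

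Third, I would apply the anchored hardness amplification transformation $\cA_s$ from Theorem~\ref{thm:anchoring} for a sufficiently large universal constant $s$, defining $\Game_M := \cA_s(\Game'_{n,M})$. This remains a $15$-prover nonlocal game whose verifier is computable from $M$. By Theorem~\ref{thm:anchoring}, $\omega^*(\Game'_{n,M}) = 1$ yields $\omega^*(\Game_M) \geq (\omega^*(\Game'_{n,M}))^s = 1$, while $\omega^*(\Game'_{n,M}) \leq 1 - \eta$ yields $\omega^*(\Game_M) \leq (1 - \eta^c)^{\nu s}$, which can be driven below $1/2$ by taking $s$ large enough in terms of the universal constants $\eta, c, \nu$.

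There is no substantive technical obstacle: all three ingredients (Proposition~\ref{prop:vhaltplus}, the ENL-to-nonlocal conversion of~\cite{ji2017compression}, and Theorem~\ref{thm:anchoring}) have already been assembled, and the proof is essentially a repetition of the strategy used for Corollary~\ref{cor:hct_dtime}. The only point worth verifying carefully is that every constant appearing along the way is truly independent of $M$; this holds because $n$ is fixed once and for all and each reduction operates in a black-box manner on verifiers of a fixed constant size, so $s$ can be chosen as a universal constant before $M$ is revealed.
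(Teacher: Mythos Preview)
Your three-step outline (Proposition~\ref{prop:vhaltplus} $\to$ Ji's ENL-to-nonlocal conversion $\to$ anchored amplification) is exactly the paper's route. However, the very point you single out as ``worth verifying carefully'' is where your argument slips. The verifier of $\Game_{n,M}$ does \emph{not} have constant size: $\hat\cV_{undec}(n,M,G_R)$ must contain a description of $M$ (it simulates $M$ on input $M$), so for fixed $n$ its size is $\poly(|M|)$. When you then apply the compression result of~\cite{ji2017compression}, the denominator in the gap bound is not a universal constant $q$ but polynomial in the size of the input game, i.e.\ $\poly(|M|)$; in the NO case you only obtain $\omega^*(\Game'_{n,M}) \leq 1 - 1/\poly(|M|)$. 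Hence the amplification parameter $s$ cannot be fixed in advance of $M$: it must be taken $\poly(|M|)$, which is precisely what the paper does (it writes $r=\poly(|M|)$). The corollary is unaffected, since it only requires that $\Game_M$ be computable from $M$, and a $\poly(|M|)$-size verifier followed by $\poly(|M|)$ rounds of amplification is certainly so computable; but your assertion that ``$s$ can be chosen as a universal constant before $M$ is revealed'' is false.

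A minor secondary issue: taking $n=1$ need not make the integer $n'$ in step~2 of Figure~\ref{fig:vhaltplus} well-defined. As in Proposition~\ref{prop:vrcplus}, one should take $n$ at least the smallest integer exceeding $(2g(1))^{\alpha c} q(1)/\nu$; this is still a constant independent of $M$, so it does not affect the rest of the argument.
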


\begin{proof}
  The proof is essentially the same as the proof of
  Corollary~\ref{cor:hct_dtime}. The only additional step is to apply the hardness
  amplification procedure $\cA_r$ from Theorem~\ref{thm:anchoring} to the game returned by the Turing machine $A$, for $r = \poly(|M|)$, to amplify the gap from $1$ vs.
  $1 - 1/\poly(|M|)$ to $1$ vs.
  $1/2$.
\end{proof}

\begin{theorem}\label{thm:undecidable3}
  Suppose Conjecture~\ref{conj:hct} is true.
  Then there is no deterministic Turing machine $A$ that, given as input the
  description of the verifier circuits of a nonlocal game $\Game$,
  decides whether $\Game$ has value at least $2/3$ or less than $1/3$,
  promised that one is the case.
\end{theorem}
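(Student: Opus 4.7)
The plan is to reduce from a halting-related self-acceptance problem by a standard diagonalization argument, using Corollary~\ref{cor:vhaltplus} as the key ingredient. Suppose for contradiction that such a Turing machine $A$ exists. My first step would be to strengthen the $1$ versus $1/2$ gap provided by Corollary~\ref{cor:vhaltplus} up to the $2/3$ versus $1/3$ gap needed to match the promise of $A$. This is a routine application of the hardness amplification procedure $\cA_r$ of Theorem~\ref{thm:anchoring} with a sufficiently large constant $r$: applied to the game $\Game_M$ from Corollary~\ref{cor:vhaltplus}, this yields a nonlocal game $\Game_M'$, with verifier efficiently computable from the description of $M$, such that $\omega^*(\Game_M') = 1$ when $M$ accepts on input $M$, and $\omega^*(\Game_M') \leq 1/3$ when $M$ rejects on input $M$.

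Next I would construct a deterministic Turing machine $B$ that on input $M$ computes the verifier of $\Game_M'$ and returns the output of $A$ on that verifier. Since $A$ is assumed to always halt, so does $B$. By the guarantee on $\Game_M'$, the machine $B$ accepts whenever $M$ accepts on input $M$ and rejects whenever $M$ rejects on input $M$; if $M$ fails to halt on input $M$, Corollary~\ref{cor:vhaltplus} gives no bound on $\omega^*(\Game_M)$ and $B$ may return an arbitrary answer, but this will not matter.

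The final step is the classical diagonalization: define $D$ to run $B$ on its input and then output the negation. Since $B$ always halts, so does $D$. Now execute $D$ on its own description. If $D$ accepts on $D$, then by the above $B(D)$ must accept, and hence by construction of $D$ we have that $D(D)$ rejects, a contradiction; if instead $D$ rejects on $D$, then $B(D)$ rejects and hence $D(D)$ accepts, again a contradiction. Since $D$ always halts, one of these two cases must occur, so we obtain the desired contradiction. No step is delicate: the amplification is a direct invocation of Theorem~\ref{thm:anchoring} and the diagonalization is standard. The only subtlety worth flagging is that the missing guarantee on $\omega^*(\Game_M)$ when $M$ does not halt is harmless here, precisely because $D$ is built to always halt and so the only relevant cases when evaluating $D$ on $D$ are the accept and reject ones.
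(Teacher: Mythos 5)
Your argument is a valid diagonalization and is close in spirit to the paper's, but it differs from the paper's proof in one essential point, and that point is exactly where the paper spends its only nontrivial idea. You assume that $A$ halts on \emph{every} input, including descriptions of games whose value lies in the forbidden interval $(1/3,2/3)$; this is what lets you conclude that $B$, and hence $D$, always halts, which in turn is what lets you invoke Corollary~\ref{cor:vhaltplus} on the self-application $D(D)$ (the corollary is silent when the machine diverges). Under the standard convention that a machine ``deciding'' a promise problem must halt on all inputs, your proof is complete and is simpler than the paper's: a single amplified game $\Game_M'$ plus textbook diagonalization.

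The paper, however, only assumes that $A$ halts (with the correct answer) on inputs satisfying the promise, and its proof is built to survive under that weaker hypothesis. This is why the paper's diagonal machine computes \emph{two} games, $\Game_X$ and its amplification $\Game_X^r$, and runs $A$ on both in parallel, accepting or rejecting according to whichever instance halts first: since at most one of the two games can have value in $(1/3,2/3)$ (if $\omega^*(\Game_X)<2/3$ then the choice of $r$ forces $\omega^*(\Game_X^r)\le 1/3$), at least one instance of $A$ is run on a promise-satisfying input and therefore halts, so the diagonal machine halts on all inputs without any assumption about $A$'s behavior off the promise. If one adopts that weaker reading of the theorem, your step ``Since $A$ is assumed to always halt, so does $B$'' is the gap: $D(D)$ could diverge, Corollary~\ref{cor:vhaltplus} would then give no bound on $\omega^*(\Game_D')$, $A$ run on $\Game_D'$ could legitimately diverge as well, and no contradiction would arise. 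To make your proof robust to that reading, import the paper's two-game parallel-execution device (or otherwise guarantee that the game handed to $A$ satisfies the promise before $A$ is invoked); the amplification step and the final diagonalization in your write-up are otherwise fine.
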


\begin{proof}%
  Suppose for contradiction that there exists such a
  Turing machine $A$.
  Consider the following deterministic Turing machine $M$. $M$ expects as input an $X$, which is the description of a deterministic
  Turing machine.
  The Turing machine $M$ first computes the descriptions of verifier
  circuits for two nonlocal games $\Game_X$ and $\Game_X^r$.
  The first game, $\Game_X$, is the game given by
  Corollary~\ref{cor:vhaltplus}.
  The second game, $\Game_X^r$, is the nonlocal game that results from
  applying the hardness amplification procedure $\cA_r$ from Theorem~\ref{thm:anchoring} to $\Game_X$,
  where $r$ is an integer such that $(1 - (1/3)^c)^{\nu r} \leq
  1/3$. 
  Here, $c$ and $\nu$ are the constants given by
  Theorem~\ref{thm:anchoring}.
  Thus
  \begin{equation}
    \label{eq:parrep}
    \omega^*(\Game_X^r) \leq (1 - (1 - \omega^*(\Game_X))^c)^{\nu r}\;.
  \end{equation}
  Furtherore, if $\Game_X$ has value $1$, then $\Game_X^r$ has value
  $1$.

  Having computed the descriptions of the games $\Game_X$ and
  $\Game_X^r$, the Turing machine $M$ executes two instances of $A$ in
  parallel (for example, by interleaving the executions of $A$), where
  one instance is executed on the description of $\Game_X$, and the other on
  $\Game_X^r$.
  If one of the instances halts first with output bit $a$, then $M$
   rejects if $a = 1$ and accepts if $a = 0$.
  However, $M$ may not halt (if both instances of $A$ don't halt).

Observe that at most one of the games $\Game_X$, $\Game_X^r$ has value that is
  greater than $1/3$ and less than $2/3$. Indeed, suppose the value of both games were in that range.
  In particular, we have $1/3 < \omega^*(\Game_X) < 2/3$.
  However, by~\eqref{eq:parrep} and our choice of $r$, this implies
  that $\omega^*(\Game_X^r) \leq 1/3$, a contradiction.

  Thus at least one instance of $A$ halts, because by definition
  $A$ correctly decides whether a given input game $\Game$ has
  value at least $2/3$ or at most $1/3$.
  Therefore, $M$ always halts, on all inputs $X$.

  Now we analyze $M$, when given input $M$.
  By definition of $\Game_M$, if $M$ accepts input $M$, then
  $\omega^*(\Game_M) = \omega^*(\Game_M^r) = 1$.
  In this case, both instances of $A$ accept, in which case $M$
  rejects, which is a contradiction.

  On the other hand, if $M$ rejects input $M$, then both
  $\omega^*(\Game_M)$ and $\omega^*(\Game_M^r)$ have value at most $1/3$, in
  which case both instances of $A$ reject, in which case $M$
  accepts, which is a contradiction.\footnote{The reader would be
    justified in asking why we needed to consider two games in
    the first place.
    If we only considered $\Game_M$, then we wouldn't be able to
    conclude that $\Game_M$ has value either greater than $2/3$ or at
    most $1/3$, and thus $M$ could in principle run forever.
    By defining $M$ in this way we force the resulting game
    $\Game_M$ to satisfy the promise of $A$.}

  Therefore such a Turing machine $A$ does not exist.
\end{proof}

Thus Theorem~\ref{thm:undecidable3} implies that the language
$L_{c,s}$ for $c = 2/3$ and $s = 1/3$ is undecidable, which implies
that $\MIP*$ contains undecidable languages. We end by formulating the following corollary, that relates Conjecture~\ref{conj:hct} to a famous problem in quantum information, \emph{Tsirelson's problem}. 
To state the corollary, we introduce the notion of a \emph{$k$-partite, $n$-input, $m$-output correlation}, which is a $k$-tensor $C$ of complex numbers, with size $nm\times \cdots \times nm = (nm)^k$, where $k,n,m$ are arbitrary integers. We say that a correlation $C$ is \emph{achievable in the tensor product model} if there exists finite-dimensional Hilbert spaces $\cH_1,\ldots,\cH_k$, a state $\ket{\psi} \in \cH_1\otimes \cdots \otimes \cH_k$, and for every $\ell\in\{1,\ldots,k\}$ and $i\in\{1,\ldots,n\}$ a POVM $\{A_{\ell,i}^a \}_{a\in\{1,\ldots,m\}}$ acting on $\cH_\ell$, such that for all $i_1,\ldots,i_k\in\{1,\ldots,n\}$ and $a_1,\ldots,a_k\in\{1,\ldots,m\}$, we have
$$C(i_1,a_1,\ldots,i_k,a_k) = \bra{\psi} A_{1,i_1}^{a_1}\otimes \cdots \otimes A_{k,i_k}^{a_k} \ket{\psi}\;.$$
Similarly, we say that $C$ is \emph{achievable in the commuting operator model} if there exists a (possibly infinite-dimensional) Hilbert space $\cH$, a state $\ket{\psi} \in \cH$, and for every $\ell\in\{1,\ldots,k\}$ and $i\in\{1,\ldots,n\}$ a POVM $\{A_{\ell,i}^a \}_{a\in\{1,\ldots,m\}}$ acting on $\cH$ satisfying the commutativity condition $[A_{\ell,i}^a,A_{\ell',i'}^{a'}]=0$ for all $\ell\neq \ell'$ and $i,i',a,a'$, such that for all $i_1,\ldots,i_k\in\{1,\ldots,n\}$ and $a_1,\ldots,a_k\in\{1,\ldots,m\}$, we have
$$C(i_1,a_1,\ldots,i_k,a_k) = \bra{\psi} A_{1,i_1}^{a_1} \cdots  A_{k,i_k}^{a_k} \ket{\psi}\;.$$
We also measure the \emph{distance} between two correlations $C, C'$ as the sum of the absolute differences of their entries:
\[
	| C - C' | = \sum_{\substack{i_1,\ldots,i_k \\ a_1, \ldots, a_k}} \Abs{C(i_1,a_1,\ldots,i_k,a_k) - C'(i_1,a_1,\ldots,i_k,a_k)}.
\]

Tsirelson's problem (more precisely, the multipartite version of it) asks whether for every $k$, $n$, and $m$, for every $k$-partite, $n$-input, $m$-output correlation $C$ achievable in the commuting operator model, for every $\eps > 0$, there exists a $k$-partite, $n$-input, $m$-output correlation $C'$ achievable in the tensor product model such that $|C - C'| \leq \eps$. In other words, a positive answer to Tsirelson's problem would establish that correlations in the commuting operator model can be approximated arbitrarily well by correlations in the tensor product model. The next corollary shows that Conjecture~\ref{conj:hct} would yield a negative resolution of Tsirelson's problem.

\begin{corollary}\label{cor:connes}
  Suppose Conjecture~\ref{conj:hct} is true. Then there exists $\eps > 0$, integers $n,m > 0$, and a $15$-partite, $n$-input, $m$-output correlation $C$ that is achievable in the commuting operator model that has distance at least $\eps$ from any correlation $C'$ achievable in the tensor product model. 
\end{corollary}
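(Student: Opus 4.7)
The plan is to argue the contrapositive: I will assume the multipartite version of Tsirelson's problem has a positive answer, in the sense that for every $k,n,m$ the set of commuting-operator $(k,n,m)$-correlations equals the closure of the tensor-product $(k,n,m)$-correlations, and derive a contradiction with Theorem~\ref{thm:undecidable3}. Since the value $\omega^*(\Game)$ of a nonlocal game $\Game$ (in either model) is a fixed linear functional of the correlation produced by the provers, density of tensor-product correlations inside the commuting-operator correlations implies that for every nonlocal game $\Game$ we have $\omega^*(\Game) = \omega^{co}(\Game)$, where $\omega^{co}$ denotes the commuting-operator value.

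Next I would invoke two classical semi-algorithms. For upper bounds, the Navascués--Pironio--Acín hierarchy provides a computable sequence of semidefinite programs $\{\SDP_\ell(\Game)\}_{\ell \in \N}$, each solvable to arbitrary precision in finite time, whose values converge monotonically from above to $\omega^{co}(\Game)$; this hierarchy extends straightforwardly to the multipartite setting. For lower bounds, by enumerating rational tensor-product strategies of growing local dimension and computing their acceptance probabilities (which are polynomials in the strategy parameters), one produces a nondecreasing sequence $\{L_d(\Game)\}_{d \in \N}$ of computable reals converging to $\omega^*(\Game)$. Under the equality $\omega^*(\Game) = \omega^{co}(\Game)$, dovetailing these two procedures yields, for any $\Game$, a sequence of shrinking intervals $[L_{d_t}(\Game), \SDP_{\ell_t}(\Game)]$ converging to a single point.

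In particular, given any game $\Game$ promised to satisfy either $\omega^*(\Game) \geq 2/3$ or $\omega^*(\Game) \leq 1/3$, the dovetailed procedure halts in finite time once either $L_{d_t}(\Game) > 1/3$ or $\SDP_{\ell_t}(\Game) < 2/3$, and outputs the correct answer. This gives a deterministic Turing machine that decides the promise problem of Theorem~\ref{thm:undecidable3} on $15$-prover nonlocal games, contradicting the conclusion of that theorem (which relies on Conjecture~\ref{conj:hct}). Hence for some $(n,m)$ there must exist a commuting-operator $15$-partite $(n,m)$-correlation $C$ not in the closure of the tensor-product correlation set. Since $(n,m)$-correlations live in the compact finite-dimensional space $[-1,1]^{(nm)^{15}}$ and the tensor-product correlation set has a well-defined closure there, $C$ lies at some positive distance $\eps > 0$ from every tensor-product correlation, giving the conclusion.

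The main obstacle is making sure the multipartite NPA hierarchy converges to $\omega^{co}$ and is uniformly computable to arbitrary precision; this is standard but needs to be cited carefully, since the commuting-operator model for more than two provers requires specifying that measurement operators from distinct provers pairwise commute. Everything else — continuity of the game value in the correlation, enumeration of finite-dimensional strategies, and the dovetailing procedure — is routine.
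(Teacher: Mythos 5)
Your proposal is correct and follows essentially the same route as the paper: assume the tensor-product correlations approximate the commuting-operator ones, deduce equality of the two game values, and dovetail the finite-dimensional strategy search (lower bounds) with the SDP hierarchy (upper bounds) to decide the promise problem of Theorem~\ref{thm:undecidable3}, a contradiction. The only cosmetic difference is that you spell out the final compactness/closure step for extracting the distance $\eps$, which the paper leaves implicit.
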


\begin{proof}
Suppose not, i.e. any $15$-partite correlation $C$ achievable in the commuting operator model, for all $\eps>0$ there exists a correlation $C'$ achievable in the tensor product model such that $|C - C'| \leq \eps$. This implies that for any $15$-prover nonlocal game $\Game$, the entangled value of $\Game$ in the tensor product (denoted by $\omega^*_{tp}(\Game)$) and commuting operator models (denoted by $\omega^*_c(\Game)$) are equal: for every $\delta > 0$, let $\strat_c$ be a commuting operator strategy in a $15$-prover nonlocal game $\Game$ such that $\omega^*_c(\Game) \leq \omega^*_{\strat_c}(\Game) + \delta$. Then by our assumption, for all $\eps > 0$ there is a tensor product model strategy $\strat_{tp}$ such that 
\[
	\Abs{\omega^*_{\strat_{tp}}(\Game) - \omega^*_{\strat_c}(\Game)} \leq \eps.
\]
By taking $\eps = \delta$, for every $\delta > 0$ we have that there is a strategy $\strat_{tp}$ in the tensor product model  such that $\Abs{ \omega^*_{\strat_{tp}}(\Game) - \omega^*_c(\Game)} \leq 2\delta$. Since the entangled value in the tensor product model is defined as the supremum over tensor product model strategies, by taking $\delta \to 0$ we get that $\omega^*_c(\Game) = \omega^*_{tp}(\Game)$.

We provide an algorithm that decides if the value of a $15$-prover nonlocal game is larger than $2/3$, or at most $1/3$, promised that one is the case. The algorithm interleaves two procedures. The first procedure exhaustively searches for strategies in the tensor product model of increasing dimension, and with increasing accuracy. If this procedure returns a value that is larger than $1/2$, the algorithm halts and returns YES. A second procedure computes a non-increasing sequence of upper bounds by solving semidefinite programs obtained at increasing levels of the hierarchy introduced in~\cite{doherty2008quantum,navascues2008convergent}.  If this procedure returns a value that is smaller than $1/2$, the algorithm halts and returns NO. 

We show that this algorithm always halts, and always returns the correct decision. It is clear that the first procedure provides a non-decreasing sequence that converges to the value of the game in the tensor product model from below. Conversely, it is known that the second procedure provides a non-increasing sequence that converges to the value of the game in the commuting operator model from above. Since the values in both models coincide, this implies that the algorithm described in the previous paragraph always halts with the correct decision.

However, Conjecture~\ref{conj:hct} and Theorem~\ref{thm:undecidable3} implies that there is no such algorithm, a contradiction. Thus there is a correlation $C$ achievable in the commuting operator model that cannot be approximated arbitrarily well by correlations in the tensor product model.
\end{proof}

\appendix

\section{Succinct representation of uniform circuit families}
\label{app:gtm}

In this appendix we show that any uniformly generated family of circuits has a succinct description, in the sense of Section~\ref{sec:gtm}. First we introduce a generic method for constructing a circuit that implements the same computation as a Turing machine. Then, we show that any such circuit can be written in a regular form, that has a succinct description. Finally, we apply these two steps for the case of a Turing machine that specifies a family of circuits. 

\subsection{Simulation of a Turing machine with a quantum circuit}

A universal Turing machine simulator circuit is a quantum circuit $\TMSIM$ that, given as input the description of a Turing machine $M$, a positive integer time $T$, and a designated output tape for $M$, computes the
contents of the output tape after $M$ has been executed for $T$ steps.

\begin{lemma}\label{lem:tmsim}
For any integer $k\geq 1$ there exists a family of quantum circuits $\{\TMSIM_k(T)\}_{T\in\N}$ of
  size $\poly(T)$ such that the following hold for all $T\geq 1$. 
  \begin{enumerate}
  \item $\TMSIM_k(T)$ acts on registers $\sS$ (the \emph{Turing machine state register}), $\sM$ (the \emph{Turing machine specification register}), and $\sA_1, \ldots
    \sA_k$(the \emph{Turing machine tape registers}).
  \item Let $M$ be the classical description of a $k$-tape Turing machine and $a=(a_1,\ldots,a_k)$ be a $k$-tuple of strings of symbols for the $k$ tapes of $M$, such that each $a_i$ has length at most the size of $\sA_i$. Let $a'=(a'_1,\ldots,a'_k)$ be the contents of $M$'s tapes after it has been executed for $T$ steps, starting from the tape values specified by $a$. Then after the circuit $\TMSIM_k(T)$ has been executed on input
    $\ket{0}_{\sS} \otimes \ket{M}_{\sM} \otimes \ket{a}_{\sA_1 \cdots
      \sA_k}$, the registers $\sA_1,\ldots,\sA_k$ are in state $\ket{a'}_{\sA_1 \cdots \sA_k}$.
  \end{enumerate}
  Furthermore, there exists a deterministic Turing machine $\TMSIM$-$\DESC_k$ that
  on input $T$ and an integer $t$ in binary runs in polynomial time and returns a description of the $t$-th gate of $\TMSIM_k(T)$ when it exists, and a special failure symbol when it does not.
\end{lemma}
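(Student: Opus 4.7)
The plan is to construct $\TMSIM_k(T)$ by quantizing a textbook universal multitape Turing machine simulator, then arrange it to have a highly repetitive structure so that individual gates are indexable by a simple arithmetic decomposition of $t$. Since after $T$ steps each tape head can move at most $T$ cells, I will allocate tape registers $\sA_1,\ldots,\sA_k$ of exactly $2T+1$ cells centered at the starting head position (which suffices and is the standard size); the state register $\sS$ holds the current internal control state together with $k$ head-position counters, each of $O(\log T)$ bits. The circuit is then the serial composition of $T$ identical copies of a one-step sub-circuit $U$, where $U$ performs: (i) for each tape $i$, use the head-position counter to ``swap'' the active cell onto a fixed scratch location (a log-depth conditional-swap tree indexed by the counter bits suffices); (ii) using the tape-head symbols, the current control state, and the transition table stored in $\sM$, compute the new state, the new symbols, and the $\pm 1$ head movements by iterating a controlled operation over all tuples $(q,\sigma_1,\ldots,\sigma_k)$ in the transition table stored in $\sM$; (iii) write the updated symbols back and swap the active cells back into their tape positions, then update the head counters. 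Each of these operations is implemented using Toffoli/Hadamard gates from our fixed universal set. The total size is $O(T \cdot \poly(T,|M|,k,\log T)) = \poly(T)$, using that $|M|\leq T$ (pad with identity gates otherwise).

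For correctness of part 2, a straightforward induction on the number of simulated steps shows that after $s$ applications of $U$ the tape registers, control state, and head positions encode exactly the configuration of $M$ after $s$ computation steps starting from the initial tape contents $a$; this uses that the controlled swap trees and the controlled lookup of the transition rule are reversible and act as identity on ancillae that begin and end in $\ket{0}$. After $T$ rounds the tape registers contain $a'$, as required.

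The main technical point --- and the part most likely to need care --- is producing the deterministic Turing machine $\TMSIM\text{-}\DESC_k$ that on input $(T,t)$ (both in binary, so input length $O(\log T + \log t)$) outputs the $t$-th gate in time polynomial in the input length. For this I will fix, once and for all, an explicit laying-out convention: the $T$ copies of $U$ are placed consecutively, and within $U$ the sub-blocks implementing (i), (ii), (iii) are placed in a fixed order, with each sub-block itself a regularly indexed collection (e.g.\ the conditional-swap tree on $2T+1$ cells has a recursive balanced-binary structure whose $j$-th gate is computable from $j$ and $T$ by simple bit arithmetic, and the lookup in step (ii) is a fixed enumeration over tuples of transition-table entries). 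On input $(T,t)$, the describer computes $s = \lfloor t/|U|\rfloor$ and the intra-step offset $t' = t \bmod |U|$, both in polynomial time since $|U|$ is a fixed polynomial in $T$ whose value is computable in $\poly(\log T)$ time; it then dispatches on which sub-block of $U$ the index $t'$ falls into and computes the wires and type of that gate by evaluating the relevant recursive formula. Each of these sub-computations is a straightforward arithmetic manipulation on $O(\log T)$-bit numbers and returns the pair $(\type(g),\wire(g))$; if $t$ exceeds the total size, it returns the failure symbol. The key obstacle, and the reason the explicit layout matters, is that ``ordinary'' circuit constructions only produce gate descriptions in time polynomial in the circuit size $\poly(T)$, which is exponentially too slow; avoiding this is precisely why the sub-blocks must be chosen to have a recursively/arithmetically describable structure such as balanced-swap trees and enumerations of the transition-table alphabet, rather than arbitrary circuits produced by a black-box compiler.
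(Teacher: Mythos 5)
Your construction is correct, but it takes a genuinely different route from the paper. The paper does not simulate $M$ directly with head-position counters and multiplexers; instead it invokes a classical result (oblivious simulation of Turing machines, \`a la Pippenger--Fischer) to obtain a universal $(k+1)$-tape machine $U$ whose head movements are \emph{input-independent} --- each head sweeps left for $T$ steps and then right for $T$ steps, with the heads moving in a fixed sequence. Each head movement of $U$ becomes one layer of $\TMSIM_k(T)$, and because the head location at layer $t$ depends only on $t$, the gates crossing between $\sS$ and the tape registers at that layer are determined by simple arithmetic on $t$; the polylogarithmic-time gate describer then falls out for free. You instead keep the head positions as explicit $O(\log T)$-bit counters in $\sS$ and access the active cell by a reversible conditional-swap tree, recovering regularity by hand through an explicit arithmetic layout of $T$ identical one-step blocks. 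Both approaches are valid: the paper's offloads the combinatorial difficulty to obliviousness and yields a shorter argument, while yours is more self-contained but must carry the addressing machinery explicitly.

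One point in your write-up is overclaimed, though it does not break the proof: you assert that the controlled lookup of the transition rule ``acts as identity on ancillae that begin and end in $\ket{0}$.'' The transition function of a deterministic Turing machine, as a map on (state, scanned symbols), need not be injective, so it cannot in general be implemented reversibly in place with clean ancillas; the standard implementation writes the new configuration into fresh ancillas and leaves the old one behind as garbage (or requires a Bennett-style uncomputation you have not described). This is harmless here --- the computation is classical on basis states, the lemma only constrains the final contents of $\sA_1,\ldots,\sA_k$, and $O(T\cdot\mathrm{polylog}(T))$ garbage qubits still give a $\poly(T)$-size circuit --- but the correctness induction should be phrased as tracking the configuration of $M$ together with accumulated garbage, rather than claiming the ancillas are restored.
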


\begin{proof}
Fix an integer $k\geq 1$ and let $U$ be a universal $(k+1)$-tape Turing machine. When provided as input the description of a $k$-tape Turing machine $M$ and a number of steps $1^T$ on its first
  tape, and some values $a$ on the remaining $k$ tapes, $U$ performs the computation of $M$ on input $a$ for $T$ steps. Furthermore, $U$ runs in polynomial time, and we assume without loss of generality~\cite{pippenger1979relations} that $U$ is \emph{oblivious}: the movements of the head of $U$ are independent of its input.  Without loss of generality, each tape head of $U$ alternates between weeping left for $T$ steps and then right for $T$ steps, and the
  heads move in sequence (i.e., the first tape's head moves first,
  then the second tape's head moves, and so on).

The circuit $\TMSIM_k(T)$ is defined as follows. The
  register $\sS$ stores the state of the universal Turing machine $U$. The register $\sM$ stores the description of the $k$-tape Turing
  machine $M$. The registers $\{\sA_j\}$ store the contents of the work tapes
  of $M$.
	Each movement of the heads of $U$ is implemented by a layer in the
  circuit. The computation of the head transition function is
  computed in register $\sS$, which is connected via two-qubit gates
  to the corresponding locations in the registers $\sA_j$. (Due to the assumption that $U$ is oblivious, these locations only depend on the index of the layer in the circuit.)

The number of gates of $\TMSIM_k(T)$ is clearly polynomial, establishing item 1. in the lemma. Furthermore, item 2. holds by construction. 

For the ``Furthermore'' part of the lemma, note that the structure of each layer is identical, with the only difference being that the
  gates that cross between $\sS$ and the registers $\{\sA_j\} \cup
  \{\sM \}$ are different depending on which cells of the tapes are
  supposed to be read/written to at that layer. Using that $U$ is oblivious, the location of the $t$-th gate of $\TMSIM_k(T)$ can be computed in time polynomial in $t$.
\end{proof}

\subsection{Simulating regular circuits}

Analogously to the circuit $\TMSIM$ that simulates a Turing Machine, we
introduce the notion of a universal circuit $\CKTSIM$ that simulates an
arbitrary quantum circuit. For purposes of efficient description it is convenient to consider  \emph{regular circuits}, which are defined as follows. 

\begin{definition}
An $n$-qubit \emph{regular circuit} of size 
$s$ is specified by a sequence of gates $g_1,\ldots,g_s$ where each $g_i \in \{H,T\}$, and the set of qubits that the gate $g_i$ acts on only depends on the triple $(i,n,s)$, and can be computed in polynomial time from the triple $(i,n,s)$ specified in binary. (For consistency, the Hadamard gate is interpreted as a $3$-qubit gate $I\otimes H \otimes I$.) 
\end{definition}

We record the easy observation that every $n$-qubit circuit of size $s$ has an equivalent
regular circuit of size $\poly(n,s)$ as the following lemma. 

\begin{lemma}
  \label{lem:regularize}
  There exists a deterministic polynomial-time Turing machine that takes as input the
  description of a quantum circuit $C$ and outputs a regular quantum
  circuit $C'$ that implements the same unitary transformation as $C$ does. 
\end{lemma}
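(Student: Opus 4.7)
My plan is to exhibit the desired Turing machine by constructing $C'$ as a universal quantum simulator with an oblivious (fixed-position) gate layout, where the specific sequence of $H$ versus $T$ choices encodes the gates of $C$. This is the quantum analogue of the oblivious Turing machine simulation technique used to build $\TMSIM$ in Lemma~\ref{lem:tmsim}.

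Given $C$ on $n$ qubits with $s$ gates in the $\{H,T\}$ basis, the regular circuit $C'$ will act on $n' = n + \poly(n,s)$ qubits (the extra qubits forming a description register, scratch workspace, and a small fixed reservoir of $|1\rangle$ ancillas) and have $s' = \poly(n,s)$ slots. I will organize $C'$ into two phases. The first phase writes the classical description of $C$ into the description register: at each description-bit slot, choosing $T$ writes a $1$ (via a Toffoli whose two controls are permanent $|1\rangle$ ancillas, so the Toffoli acts as $X$ on the target), while choosing $H$ writes a $0$ by having the slot's alternative position point at a dump ancilla whose final state is irrelevant. The second phase is a fixed universal-interpretation sub-circuit whose gate positions depend only on $n'$ and which repeatedly reads a chunk of the description register and dispatches the encoded gate onto the correct data qubits via a multiplexer tree built from Toffolis. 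This dispatcher is a standard universal-quantum-circuit construction, composed entirely of gates at fixed positions determined by $n'$ and the slot index; only the choices of $H$ versus $T$ in the loading phase depend on $C$.

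The polynomial-time compilation Turing machine then (i) computes $n',s'$ and the slot-position functions from the size of $C$, (ii) encodes $C$ as a bit string of length $\poly(n,s)$, and (iii) outputs the gate-type sequence for $C'$: bits of the encoding select $H$ or $T$ at the corresponding loading slots, while the dispatcher-phase gate types are fixed. Regularity follows because every gate position in both phases is a function purely of $(i,n',s')$ computable in polynomial time (slot positions, multiplexer-tree addresses, and the paired dump-ancilla addresses are all given by explicit formulas in $n'$ and $i$).

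The main obstacle is reconciling the need for arbitrary classical bit patterns in the description register with the restriction that each slot offers only the binary choice between $H$ and $T$ at positions fixed in advance. I plan to address this by the paired-position trick sketched above --- allowing each slot a pre-declared $H$-position (a dump ancilla) and a pre-declared $T$-position (the description bit plus two $|1\rangle$ controls from the reservoir), so that the choice of gate type at the fixed slot is exactly the choice of whether to set the corresponding description bit --- which depends on having a fixed $|1\rangle$ reservoir in the initial ancilla state (a standard circuit-compilation convention) since $\{H,T\}$ applied to all-zeros ancillas cannot produce $|1\rangle$ states. Once the description register is loaded and remains classical (maintained by using only Toffolis, never Hadamards, on description bits in the dispatcher phase), the correctness of the universal dispatcher follows from standard reasoning about universal quantum circuits and the polynomial-time, polynomial-size bounds are immediate from the tree-multiplexer construction.
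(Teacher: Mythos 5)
The paper records this lemma as an ``easy observation'' and supplies no proof, so there is nothing to compare against line by line; I am evaluating your construction on its own merits. Your overall architecture is the right one, and is in fact essentially forced by the definition: since the wire positions of a regular circuit are a universal function of $(i,n,s)$, the only circuit-dependent data is the $\{H,T\}$ type string, so the description of $C$ must be encoded in the gate-type choices of a loading phase and then executed by a fixed dispatcher (a universal-circuit interpreter in the spirit of $\CKTSIM$). Your identification of the two delicate points --- loading arbitrary classical bits through a binary $H$/$T$ choice at positions fixed in advance, and the need for $\ket{1}$ ancillas --- is exactly where the content of the lemma lies. The $\ket{1}$-reservoir assumption is consistent with the paper's own implicit conventions: the circuit built in Lemma~\ref{lem:succinct} likewise begins by ``initializing'' ancilla registers to nonzero classical strings such as $\desc{M}$ and $1^n$.

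The one step that does not survive scrutiny as written is the paired-position trick. You give each loading slot ``a pre-declared $H$-position (a dump ancilla) and a pre-declared $T$-position (the description bit plus two $\ket{1}$ controls)'', i.e.\ the set of wires the gate acts on depends on whether the gate is $H$ or $T$. That is precisely what the definition of a regular circuit forbids: the wire set must be a function of $(i,n,s)$ alone, independent of the type sequence --- this is what allows $\CKTSIM$ to interpret a bare $s$-bit string as a circuit. The repair is available inside the paper's own conventions: $H$ is formally the three-wire gate $I\otimes H\otimes I$, so give every loading slot the \emph{same} three wires --- a reservoir $\ket{1}$ on the first wire (a Toffoli control), a fresh disposable $\ket{1}$ on the middle wire (the other Toffoli control, and the only wire a Hadamard touches), and the description bit on the third wire (the Toffoli target). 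Choosing $T$ then sets the bit to $\ket{1}$, while choosing $H$ leaves it at $\ket{0}$ and merely sends the disposable middle ancilla to $\ket{-}$, which remains unentangled junk. With that modification (and the routine bookkeeping that makes $(n,s)$ recoverable from $(n',s')$, so that the layout really is a polynomial-time function of $(i,n',s')$ alone), your argument goes through.
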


The next lemma establishes the existence of a simulation procedure for circuits analogous to the one shown for Turing machines in Lemma~\ref{lem:tmsim}. 

\begin{lemma}\label{lem:cktsim}
There is a family of quantum circuits $\{\CKTSIM_{n,s}\}_{n,s\geq 1}$ of size $\poly(n,s)$ such that the following hold. For any $n,s\geq 1$ the circuit $\CKTSIM_{n,s}$ acts on two registers $\sA$ (the \emph{circuit specification register}) and $\sB$ (the \emph{target register}), where $\sB$ has $n$ qubits. For any $C \in \{0,1\}^s$ and state $\ket{\theta}_{\reg{B}}$
  \[
    \CKTSIM_{n,s} \big(\ket{C}_{\sA} \otimes \ket{\theta}_{\sB}\big) \,=\,
    \ket{C}_{\sA} \otimes C \ket{\theta}_{\sB}\;,
  \]
  where $C$ is interpreted as the description of a regular $n$-qubit quantum circuit of size $s$.\\
  Furthermore, there exists a deterministic Turing machine $\CKTSIM$-$\DESC$ that
  on input $(n,s,t)$ runs in polynomial time and returns a description of the $t$-th gate of
  $\CKTSIM_{n,s}$ when it exists, and a special failure symbol when it does not. 
\end{lemma}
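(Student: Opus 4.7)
The plan is to mimic the construction of $\TMSIM_k$ from Lemma~\ref{lem:tmsim}, exploiting the regularity of the circuits we are simulating. Since a regular $n$-qubit circuit of size $s$ has the property that the wires of its $i$-th gate depend only on the triple $(i,n,s)$ and are polynomial-time computable from it, the positional information does not need to be read out of the description register $\sA$ at all. Only the \emph{type} of each gate (Hadamard or Toffoli) needs to be read from $\sA$. Thus I propose to encode $C \in \{0,1\}^s$ by letting the $i$-th bit of $\sA$ indicate whether gate $g_i$ is $H$ or $T$, so $|\sA| = s$.

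The circuit $\CKTSIM_{n,s}$ is then built as a concatenation $L_s \cdots L_1$ of $s$ layers, where layer $L_i$ is a constant-size sub-circuit that, controlled on the qubit $\sA_i$, applies either $(I\otimes H\otimes I)$ (when $\sA_i=0$) or the Toffoli gate $T$ (when $\sA_i=1$) to the three qubits of $\sB$ indexed by $\wire(i,n,s)$. Concretely, one can implement $L_i$ as the composition of a controlled-Hadamard (on the middle qubit, controlled on $\sA_i=0$) followed by a controlled-Toffoli (on the three wire qubits, controlled on $\sA_i=1$); both of these controlled operations admit standard constant-size decompositions into $\{H,T\}$ (possibly with one borrowed ancilla). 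Correctness of the lemma's main equation follows by induction on $i$: on the computational-basis component $\ket{C}_{\sA}\otimes \ket{\theta}_{\sB}$, each layer $L_i$ leaves $\sA$ unchanged and applies the $i$-th gate $g_i$ prescribed by $C$ to $\sB$, so after all $s$ layers the register $\sB$ holds $C\ket{\theta}$, as required. The total size is $O(s)$, and in particular $\poly(n,s)$.

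For the ``furthermore'' part, I design $\CKTSIM\text{-}\DESC$ as follows: on input $(n,s,t)$, it first computes $i = \lceil t/c \rceil$ (where $c$ is the constant number of gates per layer) to identify which layer $t$ belongs to, then sets $t' = t - c(i-1)$ to identify the position within the layer. If $i>s$ it returns the failure symbol. Otherwise it invokes the polynomial-time regularity procedure to compute the three wire indices $\wire(i,n,s)$ in $\sB$, and returns the $t'$-th gate of the fixed constant-size template for $L_i$, with the wires instantiated as either $\sA_i$ or the computed wires in $\sB$. Because $c$ is a universal constant, all the arithmetic is polynomial in the bit length of $(n,s,t)$, and the regularity lookup is polynomial by assumption, giving overall polynomial running time.

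The only mildly delicate point is the explicit constant-size decomposition of controlled-$H$ and controlled-Toffoli over the gate set $\{H,T\}$; this is a standard exercise (controlled-$H$ via $H = e^{i\pi/4}\sqrt{X}\cdot e^{-i\pi/4}$ style identities rewritten to fit $\{H,T\}$, and controlled-Toffoli via well-known $C^3X$ decompositions, using the ancilla provided by an extra workspace qubit appended to $\sA$ if needed). Since these templates are fixed independently of $n$ and $s$, they are hard-coded into $\CKTSIM\text{-}\DESC$ and do not affect asymptotic complexity. I expect no significant obstacle beyond fixing a clean template for $L_i$ and verifying that its gate-by-gate description can be emitted in polynomial time; the rest of the lemma follows immediately from regularity of the input circuits.
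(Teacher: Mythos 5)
Your proposal is correct and matches the paper's construction: the paper likewise builds $\CKTSIM_{n,s}$ as $s$ layers, the $i$-th of which applies a Hadamard or Toffoli, conditioned on the $i$-th description bit, to wires computed from $(i,n,s)$ via the regularity assumption, with the gate-description machine exploiting the uniform layer structure. You simply supply more detail (the controlled-$H$/controlled-Toffoli templates over $\{H,T\}$) than the paper's one-line argument does.
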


\begin{proof}
For $n,s\geq 1$ the circuit $\CKTSIM_{n,s}$ has $s$ layers, where
  the $i$-th layer applies either a Hadamard or a Toffoli gate, depending on $g_i$, on the appropriate qubits. The indices of those qubits can be
  computed in $\poly(\log n,\log s)$ time.
 \end{proof}

\subsection{Succinct representation of uniform families of circuits}
\label{sec:succinct}

\begin{lemma}\label{lem:succinct}
Let $\{C_n\}_{n\geq 1}$ be family of circuits that is uniformly generated by the Turing machine $M$. Then there exists a deterministic Turing machine $G$, that is computable from $M$, such that on input $(n,t)$, where both $n$ and $t$ are integer written in binary, $G$ runs in polynomial time and returns a description of the $t$-th gate of a regular circuit $C'_n$ that implements the same unitary transformation as $C_n$ (but uses additional ancilla registers). 

\end{lemma}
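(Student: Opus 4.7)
The plan is to construct $C'_n$ as a composition of the two universal simulator circuits of Lemmas~\ref{lem:tmsim} and~\ref{lem:cktsim}, preceded by a short classical preparation step. Since each component already admits a polynomial-time description of its $t$-th gate, the GTM $G$ will simply compute which component contains the requested position $t$ and dispatch to the appropriate sub-routine.

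First, I would apply Lemma~\ref{lem:regularize} to $M$ to obtain a polynomial-time Turing machine $M^{\mathrm{reg}}$ that on input $1^n$ outputs a regular circuit $D_n$ with the same unitary action as $C_n$. Fix explicit polynomial upper bounds $T_{\mathrm{reg}}(n)$ on the running time of $M^{\mathrm{reg}}$, and $n_q(n)$ and $s(n)$ on respectively the number of qubits and the size of $D_n$; all three depend only on $M$. Next, I would define $C'_n$, acting on the $n_q(n)$-qubit input register of $C_n$ together with fresh ancilla registers $\sS$, $\sM$, $\sA_1,\ldots,\sA_k$ playing the roles of the state, specification, and tape registers of $\TMSIM$, as the composition of five blocks:
\begin{enumerate}
\item a prefix $P_n$ consisting of $X$-type gates that writes the fixed description of $M^{\mathrm{reg}}$ onto $\sM$ and $1^n$ onto the appropriate input-tape register;
\item $\TMSIM_k(T_{\mathrm{reg}}(n))$, which computes the description of $D_n$ onto a designated output tape register;
\item $\CKTSIM_{n_q(n),\,s(n)}$, with its specification register identified with the output tape of block~(ii) and its target register identified with the input register of $C_n$, which applies $D_n = C_n$ to that register;
\item the inverse of block~(ii), restoring $\sS$, $\sM$ and the tape registers to $\ket{0}$ (reversing the gate sequence suffices because $H$ and $T$ are self-inverse);
\item the inverse of $P_n$.
\end{enumerate}
By construction, the ancillae are left in $\ket{0}$ and $C'_n$ acts as $C_n$ on the input register.

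Second, the Turing machine $G$ on input $(n,t)$ computes the cumulative sizes $s_1 < s_2 < \cdots < s_5$ of the five blocks above from $T_{\mathrm{reg}}(n)$, $n_q(n)$, and $s(n)$; since these are fixed polynomials in $n$ depending only on $M$, each $s_i$ has an $O(\log n)$-bit representation computable in $\poly(\log n)$ time. The machine $G$ then locates the block containing $t$ and dispatches accordingly: for blocks~(i) and~(v) it outputs the $X$-gate (or its $\{H,T\}$-compiled equivalent) determined by the relevant bit of the fixed string $M^{\mathrm{reg}}$ or of $1^n$, which is computable in $O(\log n)$ time; for blocks~(ii) and~(iv) it calls $\TMSIM\text{-}\DESC_k$ with the relative time index $t - s_{i-1}$ (reversed in block~(iv)); for block~(iii) it calls $\CKTSIM\text{-}\DESC$ with an analogous offset. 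Each of these subroutines runs in time polynomial in the binary lengths of its inputs, which are $\poly(\log n, \log t)$, by Lemmas~\ref{lem:tmsim} and~\ref{lem:cktsim}.

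The main subtlety will be to ensure that $C'_n$ is \emph{regular} in the strict sense of the definition, i.e., built only from $\{H,T\}$ gates whose wire sets depend on nothing beyond the position and the circuit parameters. Both $\TMSIM$ and $\CKTSIM$ can be taken to use the universal gate set $\{H,T\}$ (as stated in the preliminaries), and the layered, oblivious structure used in the proofs of Lemmas~\ref{lem:tmsim} and~\ref{lem:cktsim} makes their wire sets at each layer a function of the layer index alone, as required. The $X$-preparations in block~(i) can be implemented through a fixed, constant-sized $\{H,T\}$ gadget using a dedicated ancilla qubit prepared and cleanly uncomputed at the endpoints of the block, so the routing at that interface is also regular. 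The remaining bookkeeping is routine and adds only a polynomial overhead in $\log n$ and $\log t$, yielding the desired GTM $G$ computable from $M$.
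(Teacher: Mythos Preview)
Your proposal is correct and follows essentially the same approach as the paper: compose an input-preparation block, $\TMSIM$, and $\CKTSIM$, and have $G$ determine which block $t$ lies in before dispatching to the appropriate $\DESC$ subroutine. The paper's proof is terser---it records only three phases and does not include your uncomputation blocks (iv) and (v), nor does it dwell on the regularity of $C'_n$---but the underlying construction and the argument for $G$ are the same.
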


\begin{proof}
Without loss of generality assume the number of tapes used by $M$ is $k=3$, with an input tape, a work tape and an output tape. Let $p_M$ be a polynomial that bounds the running time of $M$. 
Let $n\geq 1$. We describe the circuit $C'_n$. The circuit first initializes ancilla registers for $\TMSIM$ (see Lemma~\ref{lem:tmsim}) as follows. The register $\sS$ contains the initial state of $M$. The register $\sM$ contains a description of $M$. The registers $\sA_1,\sA_2,\sA_3$ are empty, except that the register $\sA_1$ associated with the input tape contains the input $1^n$. The next step in the circuit $C'_n$ is to execute the circuit $\TMSIM_k(p_M(n))$ on these registers to obtain a description of $C_n$. Using Lemma~\ref{lem:regularize} we may without loss of generality assume that $C_n$ is regular. Finally, the last step in the circuit $C'_n$ it to execute the circuit $\CKTSIM$ on the register $\sA_3$ associated with the output tape of $M$, that contains the description of $C_n$ and plays the role of the circuit specification register, and the target register, that is identified with the register containing the input state to $C_n$. 

It is clear that $C'_n$ implements the same transformation as $C_n$. The existence of the Turing machine $G$ follows directly from the description of $C'_n$ and the existence of the Turing machines $\TMSIM$-$\DESC$ and $\CKTSIM$-$\DESC$ from Lemma~\ref{lem:tmsim} and Lemma~\ref{lem:cktsim} respectively. Specifically, from its input $(n,t)$, $G$ may efficiently determine which of its three phases (input preparation, $\TMSIM$, $\CKTSIM$) the $t$-th gate of $C'_n$ is associated with, and then compute the gate itself using the appropriate succinct description Turing machine. 
\end{proof}

\bibliography{recursive,quantum_pcp}
\bibliographystyle{alpha}

\end{document}